\providecommand{\U}[1]{\protect\rule{.1in}{.1in}}
\newtheorem{theorem}{Theorem}[section]
\newtheorem{corollary}[theorem]{Corollary}
\newtheorem{lemma}[theorem]{Lemma}
\newtheorem{proposition}[theorem]{Proposition}
\newenvironment{proof}[1][Proof]{\noindent\textbf{#1.} }{\ \rule{0.5em}{0.5em}}
\begin{document}

\title{Quantum Fidelities, Their Duals, And Convex Analysis}
\author{Keiji Matsumoto\\Quantum Computation Group, National Institute of Informatics, \\2-1-2 Hitotsubashi, Chiyoda-ku, Tokyo 101-8430\\e-mail : keiji@nii.ac.jp}
\maketitle

\begin{abstract}
We study tree kinds of quantum fidelity. Usual Uhlmann's fidelity, minus of
$f$-divergence when $f\left(  x\right)  =-\sqrt{x}$, and the one introduced by
the author via reverse test. All of them are quantum extensions of classical
fidelity, where the first one is the largest and the third one is the
smallest. We characterize them in terms of convex optimization, and introduce
their 'dual' quantity, or the polar of the minus of the fidelity. They turned
out to be monotone increasing by unital completely positive maps, concave, and
linked to its classical version via optimization about classical-to-quantum
maps and quantum-to-classical maps.

\end{abstract}

\section{Introduction}

We study tree kinds of quantum fidelity. Usual Uhlmann's fidelity, minus of
$f$-divergence when $f\left(  x\right)  =-\sqrt{x}$, and the one introduced by
the author via reverse test. All of them are quantum extensions of classical
fidelity, where the first one is the largest and the third one is the
smallest. We characterize them in terms of convex optimization, and introduce
their 'dual' quantity, or the polar of the minus of the fidelity. They turned
out to be monotone increasing by unital completely positive maps, concave, and
linked to its classical version via optimization about classical-to-quantum
maps and quantum-to-classical maps.

\section{Notations and conventions}

In the paper, it is assumed that dimensions of Hilbert spaces are finite. The
set of operators, self-adjoint operators, positive operators, and density
operators over the Hilbert space $\mathcal{H}$ will be denoted by
$\mathcal{L}\left(  \mathcal{H}\right)  $, $\mathcal{L}_{sa}\left(
\mathcal{H}\right)  $, $\mathcal{P}\left(  \mathcal{H}\right)  $, and
$\mathcal{S}\left(  \mathcal{H}\right)  $, respectively. When $\mathcal{H=}%
\mathbb{C}^{k}$ , they are denoted by $\mathcal{L}_{k}$, $\mathcal{L}_{sa,k}$,
$\mathcal{P}_{k}$ and $\mathcal{S}_{k}$. The identity operator in
$\mathbb{C}^{k}$ and identity transform in $\mathcal{L}_{k}$ will be denoted
by $I_{k}$ and $\mathbf{I}_{k}$, respectively. $\mathcal{L}$ denotes
$\bigcup_{k\in\mathbb{N}}\mathcal{L}_{k}$, and $\mathcal{L}_{sa}$,
$\mathcal{P}$, and $\mathcal{S}$ are defined similarly. We define
\[
\mathcal{L}_{sa}^{\times2}:=\bigcup_{k\in\mathbb{N}}\mathcal{L}_{sa,k}%
^{\times2}\,,\,\mathcal{P}^{\times2}:=\bigcup_{k\in\mathbb{N}}\mathcal{P}%
_{k}^{\times2}\,,
\]
etc., where
\begin{align*}
\mathcal{L}_{sa,k}^{\times2}  &  :=\mathcal{L}_{sa,k}\times\mathcal{L}%
_{sa,k},\\
\mathcal{P}_{k}^{\times2}  &  :=\mathcal{P}_{k}\times\mathcal{P}%
_{k}\mathcal{\,},
\end{align*}
etc.

We fix a standard orthonormal basis $\left\{  \left\vert i\right\rangle
\right\}  $ of $\mathbb{C}^{k}$, and denote the commutative algebra spanned by
$\left\{  \left\vert i\right\rangle \left\langle i\right\vert \right\}
_{i=1}^{k}$ \thinspace by $\mathcal{C}_{k}$. Also,
\begin{align*}
\mathcal{CP}_{k}  &  =\mathcal{C}_{k}\cap\mathcal{P}_{k}\,,\,\,\mathcal{CS}%
_{k}=\mathcal{C}_{k}\cap\mathcal{S}_{k}\,,\\
\mathcal{C}_{k}^{\times2}  &  :=\mathcal{C}_{k}\times\mathcal{C}_{k},\text{
}\mathcal{CP}_{k}^{\times2}:=\mathcal{CP}_{k}\times\mathcal{CP}_{k}%
,\mathcal{CS}_{k}^{\times2}:=\mathcal{CS}_{k}\times\mathcal{CS}_{k},\\
\mathcal{C}^{\times2}  &  :=\bigcup_{k\in\mathbb{N}}\mathcal{C}_{k}^{\times
2}\,,\,C\mathcal{P}^{\times2}:=\bigcup_{k\in\mathbb{N}}C\mathcal{P}%
_{k}^{\times2},\,C\mathcal{S}^{\times2}:=\bigcup_{k\in\mathbb{N}}%
C\mathcal{S}_{k}^{\times2}.
\end{align*}

\ Any unital completely positive (CP) map from $\mathcal{C}_{n}$ to operators
$\mathcal{L}_{k}$ is in the following form;
\[
\Phi_{M}^{\ast}\left(  \sum_{i=1}^{n}l_{i}\left\vert i\right\rangle
\left\langle i\right\vert \right)  =\sum_{i=1}^{n}l_{i}M_{i},
\]
where $M=\left\{  M_{i};M_{i}\in\mathcal{P}_{k},i=1,\cdots n\right\}  $ is a
POVM over $\mathbb{C}^{k}$. \ Since a member of $\mathcal{C}_{n}$ is
represented by an array $l=\left(  l_{1},\cdots,l_{n}\right)  $, we also
write
\[
\Phi_{M}^{\ast}\left(  l\right)  =\sum_{i=1}^{n}l_{i}M_{i}.
\]
Also, any completely positive completely positive (CPTP) map from
$\mathcal{L}_{k}$ to $\mathcal{C}_{n}$ is \ in the form of
\[
\Phi_{M}\left(  L\right)  =\sum_{i=1}^{n}\left(  \mathrm{tr\,}L\,M_{i}\right)
\,\left\vert i\right\rangle \left\langle i\right\vert .
\]
With $l=\left(  l_{i}\right)  _{i=1}^{k}=\left(  \mathrm{tr\,}L\,M_{i}\right)
_{i=1}^{k}$, \ we also write this as \
\[
\Phi_{M}\left(  L\right)  =l.
\]

Any unital CP map from $\mathcal{L}_{k}$ to $\mathcal{C}_{n}$ is in the form
of
\[
\Psi_{\vec{\rho}}^{\ast}\left(  L\right)  =\sum_{i=1}^{n}\left(
\mathrm{tr}\,\rho_{i}\,L\right)  \left\vert i\right\rangle \left\langle
i\right\vert ,
\]
where $\vec{\rho}=\left\{  \rho_{i};\rho_{i}\in\mathcal{S}_{k},i=1,\cdots
,n\right\}  $ is a set of states. \ With $l=\left(  l_{i}\right)  _{i=1}%
^{k}=\left(  \mathrm{tr}\,\rho_{i}\,L\right)  _{i=1}^{k}$, we also write this
as
\[
\Psi_{\vec{\rho}}^{\ast}\left(  L\right)  =l.
\]
Also, any CPTP map from $\mathcal{C}_{n}$ to $\mathcal{L}_{k}$ is \ in the
form of%
\[
\Psi_{\vec{\rho}}\left(  l\right)  :=\Psi_{\vec{\rho}}\left(  \sum_{i=1}%
^{n}l_{i}\left\vert i\right\rangle \left\langle i\right\vert \right)
=\sum_{i=1}^{n}l_{i}\,\rho_{i}.
\]

We denote by $\Phi_{\mathcal{C}}$ the pinching operation
\[
\Phi_{\mathcal{C}}\left(  X\right)  =\sum_{i=1}^{n}\,\left\langle i\right\vert
X\left\vert i\right\rangle \,\left\vert i\right\rangle \left\langle
i\right\vert .
\]
\thinspace\ When the operator $X$ is not invertible, $X^{-1}$ means
Moore-Penrose generalized inverse.

\section{Classical fidelity, fidelity, and minimum fidelity}

For probability distributions $p=\left(  p_{x}\right)  _{x=1}^{k}$ and
$q=\left(  q_{x}\right)  _{x=1}^{k}$, we define
\[
F^{C}\left(  p,q\right)  :=\sum_{i=1}^{k}\sqrt{p_{i}q_{i}}.
\]
For $\rho$, $\sigma\in\mathcal{S}\left(  \mathbb{C}^{k}\right)  $, Uhlmann's
fidelity is
\[
F_{\max}\left(  \rho,\sigma\right)  :=\mathrm{tr}\,\sqrt{\sqrt{\sigma}%
\rho\sqrt{\sigma}}.
\]
It is known that
\begin{equation}
F_{\max}\left(  \rho,\sigma\right)  =\min_{M:\text{measurement}}F^{C}\left(
M\left(  \rho\right)  ,M\left(  \sigma\right)  \right)  \label{F-measurement}%
\end{equation}
where $M\left(  \rho\right)  $ is the probability distribution of measurement
$M$ applied to $\rho$.

A "dual" of $F\left(  \rho,\sigma\right)  $ \cite{Matsumoto}\cite{Matsumoto-2}
is
\[
F_{\min}\left(  \rho,\sigma\right)  :=\max_{\Phi}\left\{  F^{C}\left(
p,q\right)  ;\Phi\text{ is a CPTP with }\Phi\left(  p\right)  =\rho
,\Phi\left(  q\right)  =\sigma\right\}  .
\]
When $\mathrm{supp}\,\rho\subset\mathrm{supp}\,\sigma$,
\[
F_{\min}\left(  \rho,\sigma\right)  =\mathrm{tr}\,\sigma\sqrt{\sigma
^{-1/2}\rho\sigma^{-1/2}},
\]
where $\sigma^{-1}$ is the generalized inverse. When $\mathrm{supp}%
\,\rho\not \subset \mathrm{supp}\,\sigma$,
\[
F_{\min}\left(  \rho,\sigma\right)  =\mathrm{tr}\,\sigma\sqrt{\sigma
^{-1/2}\tilde{\rho}\sigma^{-1/2}},
\]
where
\begin{align*}
\tilde{\rho}  &  =\rho_{11}-\rho_{12}\,\rho_{22}^{-1}\,\rho_{21},\\
\rho_{11}  &  :=\pi_{\sigma}\,\rho\,\pi_{\sigma},\,\rho_{12}:=\pi_{\sigma
}\,\rho\,\left(  I_{k}-\pi_{\sigma}\right)  ,\\
\rho_{21}  &  :=\rho_{12}^{\dagger},\,\,\rho_{22}:=\left(  I_{k}-\pi_{\sigma
}\right)  \rho\,\left(  I_{k}-\pi_{\sigma}\right)  ,\\
\pi_{\sigma}  &  :\text{projection onto }\mathrm{supp}\,\sigma.
\end{align*}
Also,
\[
F_{1/2}\left(  \rho,\sigma\right)  :=\mathrm{tr}\,\rho^{1/2}\sigma^{1/2}.
\]

From here, we extend $F_{\max}$ , $F_{\min}$ and $F_{1/2}$ to functionals on
$\mathcal{L}_{sa}^{\times2}$ in the following manner.
\[
F^{Q}\left(  X,Y\right)  :=\left\{
\begin{array}
[c]{cc}%
\sqrt{\mathrm{tr}\,X\,\mathrm{tr}\,Y}\,F\left(  \frac{1}{\mathrm{tr}%
\,X}X,\frac{1}{\mathrm{tr}\,Y}Y\right)  ,\, & \left(  X,Y\right)
\in\mathcal{P}^{\times2},\,X\neq0,\,Y\neq0,\\
0 & \left(  X,Y\right)  \in\mathcal{P}^{\times2},\,XY=0,\\
-\infty & \left(  X,Y\right)  \not \in \mathcal{P}^{\times2}.
\end{array}
\right.
\]
Also $F_{\mathrm{cl}}$ is extended to a functional over two signed measures,
in the analogous manner.

All of $F_{\max}$ , $F_{\min}$ and $F_{1/2}$ satisfy the following properties:

\begin{itemize}
\item (positive homogeneity)
\[
F^{Q}\left(  cX,cY\right)  =cF^{Q}\left(  X,Y\right)  ,\,\forall c\geq0.
\]

\item (concavity)
\begin{align*}
&  F^{Q}\left(  \lambda X_{1}+\left(  1-\lambda\right)  X_{2},\lambda
Y_{1}+\left(  1-\lambda\right)  Y_{2}\right) \\
&  \geq\lambda F^{Q}\left(  X_{1},Y_{1}\right)  +\left(  1-\lambda\right)
F^{Q}\left(  X_{2},Y_{2}\right)  ,\,\forall\lambda\in\left[  0,1\right]  .
\end{align*}

\item (CPTP monotonicity) $F^{Q}\left(  X,Y\right)  \leq F^{Q}\left(
\Lambda\left(  X\right)  ,\Lambda\left(  Y\right)  \right)  $ for any CPTP map
$\Lambda$.

\item (positivity) For any $X,Y\in\mathcal{P}_{k}$, $F^{Q}\left(  X,Y\right)
\geq0$ .

\item (strong homogeneity)
\[
F^{Q}\left(  \lambda X,\mu Y\right)  =\sqrt{\lambda\mu}F^{Q}\left(
X,Y\right)  .
\]

\item (normalization) for any positive vectors $x=\left(  x_{i}\right)
_{i=1}^{k}$ and $y=\left(  y_{i}\right)  _{x=1}^{k}$, and for an orthogonal
basis $\left\{  \left\vert i\right\rangle \right\}  _{i=1}^{k}$%
\[
F^{Q}\left(  \sum_{i=1}^{k}x_{i}\left\vert i\right\rangle \left\langle
i\right\vert ,\sum_{i=1}^{k}y_{i}\left\vert i\right\rangle \left\langle
i\right\vert \right)  =F^{C}\left(  x,y\right)  .
\]

\item (symmetry)
\[
F^{Q}\left(  X,Y\right)  =F^{Q}\left(  Y,X\right)
\]

\item (additivity)%
\[
F^{Q}\left(  \left[
\begin{array}
[c]{cc}%
X_{1} & 0\\
0 & X_{2}%
\end{array}
\right]  ,\left[
\begin{array}
[c]{cc}%
Y_{1} & 0\\
0 & Y_{2}%
\end{array}
\right]  \right)  =F^{Q}\left(  X_{1},Y_{1}\right)  +F^{Q}\left(  X_{2}%
,Y_{2}\right)
\]

\end{itemize}

By definition of $F_{\min}\left(  X,Y\right)  $ and (\ref{F-measurement}), it
is obvious that
\begin{equation}
F_{\min}\left(  X,Y\right)  \leq F_{\max}\left(  X,Y\right)  . \label{Fmin<F}%
\end{equation}
Also, observe that \ joint concavity and homogeneity implies
\begin{align*}
F^{Q}\left(  X_{1}+X_{2},Y_{1}+Y_{2}\right)   &  =2F^{Q}\left(  \frac{1}%
{2}\left(  X_{1}+X_{2}\right)  ,\frac{1}{2}\left(  Y_{1}+Y_{2}\right)  \right)
\\
&  \geq2\cdot\frac{1}{2}\left(  F^{Q}\left(  X_{1},Y_{1}\right)  +F^{Q}\left(
X_{2},Y_{2}\right)  \,\right) \\
&  =F^{Q}\left(  X_{1},Y_{1}\right)  +F^{Q}\left(  X_{2},Y_{2}\right)  .
\end{align*}
If $F^{Q}$ is positive in addition,
\[
F^{Q}\left(  X_{1}+X_{2},Y_{1}+Y_{2}\right)  \geq F^{Q}\left(  X_{1}%
,Y_{1}\right)  ,
\]
for any $\left(  X_{2},Y_{2}\right)  \in\mathcal{P}^{\times2}$.

Strong homogeneity and joint concavity implies strong joint concavity:
\begin{align*}
F^{Q}\left(  \sum_{i=1}^{m}\lambda_{i}X_{i},\sum_{i=1}^{m}\mu_{i}Y_{i}\right)
&  \geq\sum_{i=1}^{m}\lambda_{i}F^{Q}\left(  X_{i},\frac{\mu_{i}}{\lambda_{i}%
}Y_{i}\right) \\
&  =\sum_{i=1}^{m}\sqrt{\lambda_{i}\mu_{i}}F^{Q}\left(  X_{i},Y_{i}\right)  .
\end{align*}

We define $\mathcal{F}_{0}$ as the set of all the proper closed concave
functionals, which satisfies positive homogeneity, positivity, and
$\mathrm{dom}\,F^{Q}=\mathcal{P}^{\times2}$. $\mathcal{F}_{1}$ is the subset
of $\mathcal{F}_{0}$ whose element satisfies CPTP monotonicity, normalization,
strong homogeneity, symmetry, and additivity. The following lemma is almost
immediate from Lemma\thinspace\ref{lem:lim-conv-closed}.

\begin{lemma}
\label{lem:lim-F0}Consider a family $\left\{  F_{i}\right\}  _{i\in I}$ ,
where $F_{i}\in\mathcal{F}_{0}$. Then, $\inf_{i\in I}F_{i}\in\mathcal{F}_{0}$.
If in addition each $f_{i}$ is a member of $\mathcal{F}_{1}$, so is
$\inf_{i\in I}F_{i}$.
\end{lemma}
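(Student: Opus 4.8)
The plan is to take $\inf_{i\in I}F_i$ apart property by property, leaning on Lemma~\ref{lem:lim-conv-closed} for the only non-elementary ingredient (the closed concave structure) and checking everything else by direct inspection of what a pointwise infimum does.

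\textbf{Membership in $\mathcal{F}_0$.} First I would note that $\inf_{i\in I}F_i$ is proper, closed and concave: concavity and closedness are exactly what Lemma~\ref{lem:lim-conv-closed} delivers (closedness is also visible directly, since $\mathrm{hypo}(\inf_i F_i)=\bigcap_i \mathrm{hypo}(F_i)$ is an intersection of closed sets), while properness holds because on the nonempty set $\mathcal{P}^{\times2}$ every $F_i$ is $\ge 0$, so $\inf_i F_i\ge 0>-\infty$ there, and $\inf_i F_i\le F_{i_0}<+\infty$ everywhere for any fixed $i_0$. Positive homogeneity is immediate because a nonnegative scalar factors through an infimum (the $c=0$ case being $\inf_i F_i(0,0)=0$), and positivity because an infimum of nonnegative numbers is nonnegative. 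Finally $\mathrm{dom}(\inf_i F_i)=\mathcal{P}^{\times2}$: off $\mathcal{P}^{\times2}$ every $F_i$ equals $-\infty$, hence so does the infimum, and on $\mathcal{P}^{\times2}$ the value lies in $[0,+\infty)$ by the above. This gives $\inf_i F_i\in\mathcal{F}_0$.

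\textbf{The $\mathcal{F}_1$ properties.} Four of the five pass through the infimum formally. For CPTP monotonicity: for each $i$ and each CPTP map $\Lambda$ one has $\inf_j F_j(X,Y)\le F_i(X,Y)\le F_i(\Lambda X,\Lambda Y)$, and taking the infimum over $i$ on the right yields $\inf_j F_j(X,Y)\le \inf_i F_i(\Lambda X,\Lambda Y)$. Normalization holds because on a commuting diagonal pair every $F_i$ already equals $F^C$, so their infimum is that same constant. Symmetry and strong homogeneity hold because the infimum commutes with swapping the two arguments and with multiplication by a fixed $\sqrt{\lambda\mu}\ge 0$.

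\textbf{Additivity — the obstacle.} The superadditive inequality $(\inf_i F_i)(X_1\oplus X_2,Y_1\oplus Y_2)\ge (\inf_i F_i)(X_1,Y_1)+(\inf_i F_i)(X_2,Y_2)$ is already in hand: $\inf_i F_i\in\mathcal{F}_0$, together with the reduction $F^Q(X_1\oplus 0,Y_1\oplus 0)=F^Q(X_1,Y_1)$ (which follows from CPTP monotonicity applied to the isometric embedding of $\mathcal{H}_1$ into $\mathcal{H}_1\oplus\mathcal{H}_2$ and to the channel that discards the complementary block), plus the concavity-and-homogeneity argument already spelled out in the text. The reverse inequality is the genuine difficulty: a pointwise infimum of individually additive functionals is in general only superadditive over direct sums, so subadditivity of $\inf_i F_i$ does not come for free from the infimum operation. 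The approach I would try is to show that subadditivity over direct sums is in fact already forced by the remaining $\mathcal{F}_1$ axioms — CPTP monotonicity together with normalization — for instance by routing the block-diagonal pair through the channel that reads out and records the block label and then exploiting additivity of the classical fidelity over that label; once that implication is secured, $\inf_i F_i$ inherits additivity automatically. I expect this to be the main obstacle and the point where the "almost immediate" claim needs the most justification; should it fail to go through in this generality, additivity of $\inf_i F_i$ would have to be established directly or the admissible families restricted.
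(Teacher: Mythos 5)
Your treatment of the $\mathcal{F}_0$ part matches the paper's intent exactly: the hypograph argument (equivalently Lemma~\ref{lem:lim-conv-closed}) handles closed concavity, properness follows from nonnegativity on $\mathcal{P}^{\times 2}$, and positive homogeneity, positivity, and $\mathrm{dom}=\mathcal{P}^{\times 2}$ are all trivially preserved by a pointwise infimum. Similarly for CPTP monotonicity, normalization, strong homogeneity, and symmetry in the $\mathcal{F}_1$ part, each of which commutes with the infimum in the way you describe. Those parts are correct and are what the paper's ``almost immediate'' refers to.

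Where you are right to be cautious is additivity, and this is the one place the paper's claim is genuinely not immediate. Your superadditivity argument is sound (isometric embedding and partial trace give $F(X_1\oplus 0,Y_1\oplus 0)=F(X_1,Y_1)$, and then concavity, homogeneity and positivity give superadditivity over the orthogonal sum). But the converse inequality does not pass through the infimum, as you correctly observe with the $\min(a_1+b_1,a_2+b_2)\ne\min(a_1,a_2)+\min(b_1,b_2)$ obstruction. Your proposed repair — deduce subadditivity over direct sums from CPTP monotonicity plus normalization alone, so that it comes for free once the infimum is shown to have those two properties — is the natural route, but you do not carry it out, and the sketch you gesture at (route the block label through a classical register and use additivity of $F^C$) does not close on its own: the dual-cone reformulation of subadditivity reduces to showing that $(L_0^{(1)},L_1^{(1)}),(L_0^{(2)},L_1^{(2)})\in\mathcal{M}_{F^Q}$ forces $(L_0^{(1)}\oplus L_0^{(2)},L_1^{(1)}\oplus L_1^{(2)})\in\mathcal{M}_{F^Q}$, and after pinching to block-diagonal inputs the verification of that inclusion bottoms out in exactly the subadditivity you are trying to prove, so the argument is circular as written. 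Until that implication is established (or additivity of the infimum is proved by some other means), the additivity clause of the lemma remains unjustified; your proof is therefore incomplete at precisely the step you flagged, and since the paper offers no argument beyond ``almost immediate,'' it inherits the same gap.
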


Also, it is known that:

\begin{theorem}
\label{th:Fmin<FQ<F}\cite{Matsumoto}\cite{Matsumoto-2}Suppose that a
functional $F^{Q}$ on $\mathcal{L}_{sa}^{\times2}$ is normalized and CPTP
monotone,
\begin{equation}
F_{\min}\left(  X,Y\right)  \leq F^{Q}\left(  X,Y\right)  \leq F_{\max}\left(
X,Y\right)  . \label{Fmin<FQ<F}%
\end{equation}

\end{theorem}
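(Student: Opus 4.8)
The plan is to prove the two inequalities by symmetric data-processing arguments, using only the two hypotheses on $F^Q$ (normalization and CPTP monotonicity): the upper bound will come from the measurement characterization (\ref{F-measurement}) of $F_{\max}$, and the lower bound from the reverse-test definition of $F_{\min}$. First I would dispose of the trivial range: if $(X,Y)\notin\mathcal{P}^{\times2}$ then $F_{\max}(X,Y)=-\infty$, and since the symbol $F^Q$ is, by the convention fixed above, $-\infty$ outside $\mathcal{P}^{\times2}$, the claim is vacuous, so I may assume $(X,Y)\in\mathcal{P}^{\times2}$. For such a pair I would re-express the two homogeneous extensions in the forms I actually need. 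Rescaling $X,Y$ by their traces and using the strong homogeneity of $F^C$, the state identity (\ref{F-measurement}) becomes
\[
F_{\max}(X,Y)=\min_{M}F^C\!\left(\Phi_M(X),\Phi_M(Y)\right),
\]
where for a POVM $M=\{M_i\}$ I also write $\Phi_M(X)$ for the positive vector $\left(\mathrm{tr}\,XM_i\right)_i$, regarded as a diagonal operator; and the same rescaling turns the definition of $F_{\min}$ into
\[
F_{\min}(X,Y)=\sup\left\{F^C(p,q):\Phi\text{ CPTP},\ \Phi(p)=X,\ \Phi(q)=Y\right\},
\]
the supremum ranging over pairs of positive vectors $p,q$ and CPTP maps $\Phi$ from some $\mathcal{C}_n$ to $\mathcal{L}_k$.

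The core of the argument is then a single application of monotonicity in each direction. For the upper bound: for every POVM $M$ the map $\Phi_M$ is CPTP, so CPTP monotonicity gives $F^Q(X,Y)\le F^Q\!\left(\Phi_M(X),\Phi_M(Y)\right)$; its outputs are diagonal with nonnegative entries, so normalization rewrites the right-hand side as $F^C\!\left(\Phi_M(X),\Phi_M(Y)\right)$; taking the infimum over $M$ and invoking the first display above yields $F^Q(X,Y)\le F_{\max}(X,Y)$. For the lower bound: given any reverse test $(\Phi,p,q)$ of $(X,Y)$, I would extend $\Phi$ to a CPTP map on all of $\mathcal{L}_n$ by composing it with the pinching $\Phi_{\mathcal{C}}$, which does not change its action on the diagonal operators $p,q$; then normalization gives $F^C(p,q)=F^Q(p,q)$ and CPTP monotonicity gives $F^Q(p,q)\le F^Q(X,Y)$, so $F^C(p,q)\le F^Q(X,Y)$; taking the supremum over reverse tests gives $F_{\min}(X,Y)\le F^Q(X,Y)$.

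I do not expect a genuine obstacle here: the whole mathematical content sits in (\ref{F-measurement}) and in the definition of $F_{\min}$, and the rest is pure ``data processing plus normalization''. The only points needing care are (a) the bookkeeping that makes the measurement identity and the reverse-test formula survive the passage from states to arbitrary positive operators, and (b) checking those two displayed formulas in the degenerate cases $X=0$, $Y=0$, or $XY=0$ with $(X,Y)\in\mathcal{P}^{\times2}$, where both sides equal $0$ --- realised on the $F_{\max}$ side by a two-outcome POVM separating the supports, and on the $F_{\min}$ side by the mixing reverse test $\Phi\!\left((l_1,l_2)\right)=l_1X/\mathrm{tr}\,X+l_2Y/\mathrm{tr}\,Y$. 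With these in hand the main argument above applies unchanged.
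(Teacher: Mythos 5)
Your proof is correct, but there is nothing in this paper to compare it against: Theorem~\ref{th:Fmin<FQ<F} is stated here with citations to \cite{Matsumoto}\cite{Matsumoto-2} and no proof is given in this text, so the evaluation has to rest on the argument itself. Your argument is the standard data-processing sandwich and it works. The two hypotheses are used exactly where one would expect: for the upper bound, every measurement map $\Phi_M$ is CPTP, so monotonicity gives $F^Q(X,Y)\le F^Q(\Phi_M(X),\Phi_M(Y))$, normalization converts the right-hand side to $F^C(\Phi_M(X),\Phi_M(Y))$, and minimizing over $M$ lands on the Fuchs--Caves form (\ref{F-measurement}) of $F_{\max}$; for the lower bound, any reverse test $(\Phi,p,q)$ with $\Phi(p)=X$, $\Phi(q)=Y$ gives $F^C(p,q)=F^Q(p,q)\le F^Q(X,Y)$ after composing $\Phi$ with the pinching $\Phi_{\mathcal C}$ so that CPTP monotonicity is applicable on the full matrix algebra, and taking the supremum over reverse tests is precisely the definition of $F_{\min}$. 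The bookkeeping points you flag are indeed routine: the passage from states to positive operators is handled by the strong homogeneity of $F^C$ together with linearity of $\Phi_M$ and of reverse tests, and in the degenerate case $XY=0$ the two-outcome support POVM and the mixing reverse test you describe both give $0$, pinning $F^Q(X,Y)=0$ there.

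One small caveat on the outer case $(X,Y)\notin\mathcal P^{\times2}$: the $-\infty$ convention the paper fixes is declared only for the three concrete extensions $F_{\max}$, $F_{\min}$, $F_{1/2}$, not for an arbitrary normalized CPTP-monotone functional $F^Q$, so your remark that ``the claim is vacuous'' is reading a convention into the hypothesis that the theorem statement does not literally impose. This is a quibble about the theorem as written rather than a flaw in your argument, since the substance of the claim lives on $\mathcal P^{\times2}$, and every later use of the theorem in the paper applies it to members of $\mathcal{F}_0$, for which that convention does hold. On the domain where the claim has content your proof is complete.
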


\section{Convex programing representations}

\begin{lemma}
\label{recession}A functional $F^{Q}$ on $\mathcal{L}_{sa}^{\times2}$ is a
member of $\mathcal{F}_{0}$ if and only if there is a closed convex subset
$\mathcal{M}_{F^{Q}}$ \ of \ $\mathcal{P}^{\times2}$ such that
\begin{equation}
F^{Q}\left(  X,Y\right)  =\inf_{\left(  L_{0},L_{1}\right)  \in\mathcal{M}%
_{F^{Q}}}\mathrm{tr}\,L_{0}X+\mathrm{tr}\,L_{1}Y, \label{F-inf-linear}%
\end{equation}
and $0^{+}\mathcal{M}_{F^{Q}}=\mathcal{P}^{\times2}$, or
\begin{equation}
\left(  L_{0},L_{1}\right)  \in\mathcal{M}_{F^{Q}}\Rightarrow\left(
L_{0}+M_{0},L_{1}+M_{1}\right)  \in\mathcal{M}_{F^{Q}},\,\forall M_{0}%
,M_{1}\geq0. \label{L+M-inv}%
\end{equation}
In addition, the correspondence between $F^{Q}$ and $\mathcal{M}_{F^{Q}}$ is
one-to-one. In fact,
\[
\mathcal{M}_{F^{Q}}=\left\{  \left(  L_{0},L_{1}\right)  \,;\left(
L_{0},L_{1}\right)  \in\mathcal{L}_{sa}^{\times2}\,,\forall\left(  X,Y\right)
\in\mathcal{P}^{\times2},\,\,\mathrm{tr}\,L_{0}X+\mathrm{tr}\,L_{1}Y\geq
F^{Q}\left(  X,Y\right)  \right\}
\]

\end{lemma}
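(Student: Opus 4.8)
The plan is to treat this as the positively\-homogeneous, concave incarnation of the Fenchel--Moreau biconjugation theorem, tracking two extra features along the way: positivity together with $\mathrm{dom}\,F^{Q}=\mathcal{P}^{\times2}$ will force the dual variables $(L_{0},L_{1})$ to lie in $\mathcal{P}^{\times2}$ and will produce the recession condition (\ref{L+M-inv}), while positive homogeneity will let me replace affine minorants by linear ones.

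For the ``only if'' direction I would start from $F^{Q}\in\mathcal{F}_{0}$, so that $-F^{Q}$ is a proper closed convex functional on the finite\-dimensional space $\mathcal{L}_{sa}^{\times2}$ and hence equals the supremum of its affine minorants; equivalently $F^{Q}$ is the infimum of all affine functionals $(X,Y)\mapsto\mathrm{tr}\,L_{0}X+\mathrm{tr}\,L_{1}Y+c$ dominating it, every linear functional on $\mathcal{L}_{sa}^{\times2}$ being of this form. Evaluating such a dominating functional at $(0,0)$ and using $F^{Q}(0,0)=0$ gives $c\ge0$, while substituting $(tX,tY)$ and letting $t\to\infty$ shows its linear part alone already dominates $F^{Q}$; so the infimum may be restricted to linear dominating functionals. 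Any such $(L_{0},L_{1})$ satisfies $\mathrm{tr}\,L_{0}X+\mathrm{tr}\,L_{1}Y\ge F^{Q}(X,Y)\ge0$ for all $(X,Y)\in\mathcal{P}^{\times2}$, which by self\-duality of the positive cone forces $L_{0},L_{1}\ge0$. Defining $\mathcal{M}_{F^{Q}}$ by the displayed formula, it is an intersection of closed half\-spaces, hence closed and convex, it is contained in $\mathcal{P}^{\times2}$, and adding any $(M_{0},M_{1})\in\mathcal{P}^{\times2}$ to a member preserves the defining inequality; this is exactly (\ref{L+M-inv}), which together with $\mathcal{M}_{F^{Q}}\subset\mathcal{P}^{\times2}$ is elementarily equivalent to $0^{+}\mathcal{M}_{F^{Q}}=\mathcal{P}^{\times2}$. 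The biconjugation identity is precisely (\ref{F-inf-linear}).

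For the ``if'' direction, given a nonempty closed convex $\mathcal{M}\subset\mathcal{P}^{\times2}$ obeying (\ref{L+M-inv}), I would define $F^{Q}$ by (\ref{F-inf-linear}). As an infimum of continuous linear functionals it is automatically concave, upper semicontinuous (hence closed), and positively homogeneous; on $\mathcal{P}^{\times2}$ each term is nonnegative so $F^{Q}\ge0$ there, and $F^{Q}(0,0)=0$ makes it proper. If $(X,Y)\notin\mathcal{P}^{\times2}$, say $X$ has a unit eigenvector $v$ with $\langle v|X|v\rangle<0$, then (\ref{L+M-inv}) permits replacing any $(L_{0},L_{1})\in\mathcal{M}$ by $(L_{0}+t|v\rangle\langle v|,L_{1})$ with $t\ge0$, driving the objective to $-\infty$; symmetrically for $Y$. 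Hence $\mathrm{dom}\,F^{Q}=\mathcal{P}^{\times2}$ and $F^{Q}\in\mathcal{F}_{0}$. For the one\-to\-one claim, let $\mathcal{M}$ be any closed convex set representing a given $F^{Q}$ via (\ref{F-inf-linear}); then $\mathcal{M}\subset\mathcal{M}_{F^{Q}}$ is immediate since each of its members dominates the infimum $F^{Q}$. For the reverse inclusion I would take $(L_{0},L_{1})\in\mathcal{M}_{F^{Q}}\setminus\mathcal{M}$, strictly separate it from the closed convex set $\mathcal{M}$ by a hyperplane with normal $(X,Y)\in\mathcal{L}_{sa}^{\times2}$ and level $\alpha$, observe that $\mathcal{P}^{\times2}\subset0^{+}\mathcal{M}$ forces $(X,Y)\in\mathcal{P}^{\times2}$ (a negative eigenvalue of $X$ or $Y$ would drive $\inf_{\mathcal{M}}$ below $\alpha$), and conclude $F^{Q}(X,Y)=\inf_{\mathcal{M}}(\cdots)\ge\alpha>\mathrm{tr}\,L_{0}X+\mathrm{tr}\,L_{1}Y$, contradicting $(L_{0},L_{1})\in\mathcal{M}_{F^{Q}}$.

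The hard parts will be the two places where the special structure is actually used: the reduction from affine to linear dominating functionals, which rests on $F^{Q}(0)=0$ and positive homogeneity, and, more delicately, the separation step in the uniqueness argument, where the recession\-cone hypothesis (\ref{L+M-inv}) is exactly what guarantees the separating direction lies in $\mathcal{P}^{\times2}$, and without which $\mathcal{M}$ could not be recovered from $F^{Q}$. The closedness and concavity bookkeeping in both directions is routine.
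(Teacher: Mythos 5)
Your proof is correct, but it follows a genuinely different route from the paper. The paper dispatches the representation (\ref{F-inf-linear}) in one line by citing Lemma\thinspace\ref{lem:sublinear-2} (the support-function/sublinear-function correspondence applied to $-F^{Q}$), then constructs the enlargement $\mathcal{\tilde{M}}_{F^{Q}}=\{(L_{0}+M_{0},L_{1}+M_{1}):(L_{0},L_{1})\in\mathcal{M}_{F^{Q}},\,M_{0},M_{1}\geq0\}$, checks that this gives the same infimum, invokes positivity to land in $\mathcal{P}^{\times2}$, and again cites Lemma\thinspace\ref{lem:sublinear-2} for uniqueness. You instead redo the Fenchel--Moreau biconjugation from scratch: express $F^{Q}$ as the infimum of its affine majorants, use $F^{Q}(0)=0$ plus positive homogeneity to discard the constant term and pass to linear majorants, read off $L_{0},L_{1}\geq0$ from positivity and self-duality of the PSD cone, and run a hands-on strict-separation argument for the one-to-one claim, in which the recession hypothesis (\ref{L+M-inv}) is exactly what forces the separating normal to lie in $\mathcal{P}^{\times2}$ so that the contradiction can be read off against the explicit formula for $\mathcal{M}_{F^{Q}}$. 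Your version is longer but more self-contained and makes explicit the ``if'' direction and the identification of $\mathrm{dom}\,F^{Q}$, which the paper leaves largely implicit; the paper's version is shorter because it leans on the pre-packaged Lemma\thinspace\ref{lem:sublinear-2}, at the cost of obscuring where each hypothesis of $\mathcal{F}_{0}$ is actually consumed. One small wrinkle worth noting: the paper's construction of $\mathcal{\tilde{M}}_{F^{Q}}$ reads as if the initial $\mathcal{M}_{F^{Q}}$ might need enlarging, but the uniqueness in Lemma\thinspace\ref{lem:sublinear-2} forces $\mathcal{\tilde{M}}_{F^{Q}}=\mathcal{M}_{F^{Q}}$, i.e.\ the set already satisfies (\ref{L+M-inv}); your derivation sidesteps this circularity by obtaining (\ref{L+M-inv}) directly from the definition of $\mathcal{M}_{F^{Q}}$ as the set of linear majorants.
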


\begin{proof}
By Lemma\thinspace\ref{lem:sublinear-2}, it is obvious that
(\ref{F-inf-linear}) holds for a closed convex set $\mathcal{M}_{F^{Q}}$. Let
$\mathcal{M}_{F^{Q}}$ be a closed convex set which may not satisfy
(\ref{L+M-inv}). Then,
\[
\mathcal{\tilde{M}}_{F^{Q}}:=\left\{  \left(  L_{0}+M_{0},L_{1}+M_{1}\right)
;\left(  L_{0},L_{1}\right)  \in\mathcal{M}_{F^{Q}},M_{0},M_{1}\geq0\right\}
\]
is a closed convex set satisfying (\ref{L+M-inv}). Also, if $\left(
X,Y\right)  \in$ $\mathcal{P}^{\times2}$,
\[
\inf_{\left(  L_{0},L_{1}\right)  \in\mathcal{\tilde{M}}_{F^{Q}}}%
\mathrm{tr}\,L_{0}X+\mathrm{tr}\,L_{1}Y=\inf_{\left(  L_{0},L_{1}\right)
\in\mathcal{M}_{F^{Q}}}\mathrm{tr}\,L_{0}X+\mathrm{tr}\,L_{1}Y=F^{Q}\left(
X,Y\right)  .
\]
If $\left(  X,Y\right)  \not \in $ $\mathcal{P}^{\times2}$,
\[
\inf_{\left(  L_{0},L_{1}\right)  \in\mathcal{\tilde{M}}_{F^{Q}}}%
\mathrm{tr}\,L_{0}X+\mathrm{tr}\,L_{1}Y=-\infty=F^{Q}\left(  X,Y\right)  .
\]
Thus, for a given $F^{Q}$, there is a closed convex set $\mathcal{M}_{F^{Q}}$
satisfying (\ref{F-inf-linear}) and (\ref{L+M-inv}). By positivity of $F^{Q}$,
$\mathcal{M}_{F^{Q}}\ \subset\mathcal{P}^{\times2}$.

That the correspondence between $F^{Q}$ and $\mathcal{M}_{F^{Q}}$ is
one-to-one is obvious by Lemma\thinspace\ref{lem:sublinear-2}.
\end{proof}

\begin{lemma}
\label{lem:recession}Suppose a closed convex set $\mathcal{M}_{F^{Q}}%
\subset\mathcal{P}^{\times2}$ satisfies (\ref{L+M-inv}). Then, for any $M_{0}%
$, $M_{1}>0$, there is a positive number $t_{0}$ such that
\begin{align*}
\forall\,t  &  \geq t_{0}\,\,\left(  t\,M_{0},t\,M_{1}\right)  \in
\mathcal{M}_{F^{Q}},\\
\forall\,t  &  <t_{0}\,\,\left(  t\,M_{0},t\,M_{1}\right)  \not \in
\mathcal{M}_{F^{Q}}%
\end{align*}

\end{lemma}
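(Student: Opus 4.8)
The plan is to look at the single ray $\{(tM_{0},tM_{1}):t\ge 0\}$ and show that the set of parameters $t$ for which it lies in $\mathcal{M}_{F^{Q}}$ is a closed upper half-line. Put
\[
S:=\{\,t\ge 0:(tM_{0},tM_{1})\in\mathcal{M}_{F^{Q}}\,\},\qquad t_{0}:=\inf S .
\]
The first step is that $S$ is an \emph{up-set}: if $t\in S$ and $t'\ge t$, then $(t'M_{0},t'M_{1})=(tM_{0}+(t'-t)M_{0},\,tM_{1}+(t'-t)M_{1})$ with $(t'-t)M_{0},(t'-t)M_{1}\ge 0$, so (\ref{L+M-inv}) gives $(t'M_{0},t'M_{1})\in\mathcal{M}_{F^{Q}}$, i.e.\ $t'\in S$.

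The second step, where strict positivity of $M_{0},M_{1}$ enters, is that $S\neq\emptyset$. By Lemma~\ref{recession} the set $\mathcal{M}_{F^{Q}}$ is nonempty, and it must contain a pair of the same dimension $k$ as $M_{0},M_{1}$ --- otherwise the infimum in (\ref{F-inf-linear}) would be $+\infty$ throughout $\mathcal{P}_{k}^{\times2}$, contradicting $\mathrm{dom}\,F^{Q}=\mathcal{P}^{\times2}$. Fix such $(L_{0},L_{1})\in\mathcal{M}_{F^{Q}}\cap\mathcal{P}_{k}^{\times2}$. Since $M_{j}$ is strictly positive, $M_{j}\ge\lambda_{\min}(M_{j})\,I_{k}$ with $\lambda_{\min}(M_{j})>0$, while $L_{j}\le\|L_{j}\|\,I_{k}$; hence for every $t\ge\max_{j=0,1}\|L_{j}\|/\lambda_{\min}(M_{j})$ we get $tM_{j}-L_{j}\ge 0$, and applying (\ref{L+M-inv}) with $(tM_{0}-L_{0},\,tM_{1}-L_{1})$ in place of $(M_{0},M_{1})$ shows $(tM_{0},tM_{1})\in\mathcal{M}_{F^{Q}}$, i.e.\ $t\in S$. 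Thus $t_{0}=\inf S<\infty$.

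The third step is that $S$ is closed, because $t\mapsto(tM_{0},tM_{1})$ is continuous and $\mathcal{M}_{F^{Q}}$ is closed, so $S$ is the preimage of a closed set. Together with the up-set property and $S\neq\emptyset$, this forces $S=[t_{0},\infty)$, which is exactly the assertion for $t\ge 0$; and for $t<0$ the operator $tM_{0}$ is negative definite, so $(tM_{0},tM_{1})\notin\mathcal{P}^{\times2}\supseteq\mathcal{M}_{F^{Q}}$, which handles the second line there too. Finally $t_{0}=\inf S\ge 0$, and $t_{0}>0$ unless $0\in S$, i.e.\ $(0,0)\in\mathcal{M}_{F^{Q}}$; by (\ref{L+M-inv}) the latter would force $\mathcal{M}_{F^{Q}}=\mathcal{P}^{\times2}$ and hence $F^{Q}\equiv 0$ on $\mathcal{P}^{\times2}$ via (\ref{F-inf-linear}) and positivity, the degenerate case excluded for any normalized $F^{Q}$.

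I do not expect a genuine obstacle; the argument is elementary once $S$ is isolated. The two steps needing care are (i) guaranteeing that $\mathcal{M}_{F^{Q}}$ has an element in the fixed dimension of $M_{0},M_{1}$, which I would extract from properness of $F^{Q}$ together with Lemma~\ref{recession}, and (ii) using strict positivity of $M_{0},M_{1}$ --- not just positive semidefiniteness --- to dominate an arbitrary element of $\mathcal{M}_{F^{Q}}$, since that is what makes $S$ nonempty.
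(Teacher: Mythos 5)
Your proposal is correct and follows essentially the same route as the paper's proof: use (\ref{L+M-inv}) to get the up-set structure, use strict positivity of $M_0,M_1$ to dominate some fixed $(L_0,L_1)\in\mathcal{M}_{F^Q}$ and land on the ray, then use closedness of the ray's intersection with $\mathcal{M}_{F^Q}$ to conclude the infimum is attained. You are somewhat more careful than the paper on two small points — explicitly justifying that $\mathcal{M}_{F^Q}$ has an element in the relevant dimension, and separating out the degenerate case $t_0=0$ — but the core mechanism is identical.
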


\begin{proof}
To prove the statement, suppose $\left(  L_{0},L_{1}\right)  \in$
\ $\mathcal{M}_{F^{Q}}$ and $M_{0}$, $M_{1}>0$. Then there is $t_{0}\geq0$
such that
\[
L_{0}\leq t_{0}M_{0},\,\,L_{1}\leq t_{0}M_{1}\,.
\]
Since $\left(  L_{0},L_{1}\right)  \in$ \ $\mathcal{M}_{F^{Q}}$, $\left(
t_{0}\,M_{0},t_{0}\,M_{1}\right)  \in$ \ $\mathcal{M}_{F^{Q}}$ by
(\ref{L+M-inv}). Thus, for any $t\geq t_{0}\,$, we have $\left(
t\,M_{0},t\,M_{1}\right)  \in\mathcal{M}_{F^{Q}}$. Since the set $\left\{
\left(  t\,M_{0},t\,M_{1}\right)  ;\,t\geq0\right\}  $ is closed, its
intersection with $\mathcal{M}_{F^{Q}}$ is also closed. So the minimum
\[
\,t_{0}=\min\left\{  t;\left(  t\,M_{0},t\,M_{1}\right)  \in\mathcal{M}%
_{F^{Q}}\right\}
\]
exists.
\end{proof}

\begin{lemma}
Let $\mathcal{M}_{F^{Q}}\subset\mathcal{P}^{\times2}$ be a closed convex set
with $0^{+}\mathcal{M}_{F^{Q}}=\mathcal{P}^{\times2}$. Then for any $X>0$,
$Y>0$, $\left(  X,Y\right)  \rightarrow\mathrm{tr}\,L_{0}X+\mathrm{tr}%
\,L_{1}Y$ has minimum in $\mathcal{M}_{F^{Q}}$. Also, its infimum is finite if
and only if $\left(  X,Y\right)  \in\mathcal{P}^{\times2}$.
\end{lemma}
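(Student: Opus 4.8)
The plan is to prove the two claims separately, starting from the hypothesis that $\mathcal{M}_{F^Q}\subset\mathcal{P}^{\times 2}$ is closed, convex, and satisfies $0^+\mathcal{M}_{F^Q}=\mathcal{P}^{\times 2}$, which by Lemma\,\ref{recession} is equivalent to the invariance property (\ref{L+M-inv}). First I would establish attainment of the minimum for $X>0$, $Y>0$. Fix such $X,Y$ and let $\left(L_0^{(n)},L_1^{(n)}\right)\in\mathcal{M}_{F^Q}$ be a minimizing sequence for $\left(L_0,L_1\right)\mapsto\mathrm{tr}\,L_0X+\mathrm{tr}\,L_1Y$; since we will also show the infimum is finite, we may assume the values $\mathrm{tr}\,L_0^{(n)}X+\mathrm{tr}\,L_1^{(n)}Y$ are bounded above by some constant $c$. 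Because $X>0$ and $Y>0$, there is $\varepsilon>0$ with $X\geq\varepsilon I$ and $Y\geq\varepsilon I$, so $\mathrm{tr}\,L_j^{(n)}\leq\varepsilon^{-1}\mathrm{tr}\,L_j^{(n)}X\leq\varepsilon^{-1}(c - \mathrm{tr}\,L_{1-j}^{(n)}Y)\leq\varepsilon^{-1}c$ using $L_{1-j}^{(n)}\geq 0$. Hence the sequence lies in a bounded subset of the finite-dimensional space $\mathcal{L}_{sa}^{\times 2}$, and by Bolzano–Weierstrass has a convergent subsequence; its limit lies in $\mathcal{M}_{F^Q}$ since that set is closed, and by continuity of the objective it attains the infimum. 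So the minimum exists.

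Next I would prove the characterization of finiteness of the infimum. The forward direction is the nontrivial one: if $\left(X,Y\right)\notin\mathcal{P}^{\times 2}$, I must show the infimum is $-\infty$. Suppose first $X\notin\mathcal{P}$, i.e.\ there is a unit vector $\psi$ with $\langle\psi|X|\psi\rangle<0$ (the case $Y\notin\mathcal{P}$ is symmetric). Pick any $\left(L_0,L_1\right)\in\mathcal{M}_{F^Q}$ (the set is nonempty since its recession cone is all of $\mathcal{P}^{\times 2}$, hence in particular it is nonempty). By (\ref{L+M-inv}), for every $t\geq 0$ we have $\left(L_0 + t\,|\psi\rangle\langle\psi|,\,L_1\right)\in\mathcal{M}_{F^Q}$, and the objective evaluated there equals $\mathrm{tr}\,L_0X+\mathrm{tr}\,L_1Y + t\langle\psi|X|\psi\rangle\to-\infty$ as $t\to\infty$. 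Thus the infimum is $-\infty$, which contradicts finiteness; so finiteness forces $\left(X,Y\right)\in\mathcal{P}^{\times 2}$.

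For the converse direction of the finiteness claim, suppose $\left(X,Y\right)\in\mathcal{P}^{\times 2}$; I must bound the objective below on $\mathcal{M}_{F^Q}$. Since every $\left(L_0,L_1\right)\in\mathcal{M}_{F^Q}$ has $L_0,L_1\geq 0$ and $X,Y\geq 0$, we get $\mathrm{tr}\,L_0X\geq 0$ and $\mathrm{tr}\,L_1Y\geq 0$, so the objective is bounded below by $0$, hence the infimum is finite (indeed nonnegative). This also dovetails with the case $X>0$, $Y>0$ treated in the first paragraph, where the assumed upper bound $c$ is now justified because $\mathcal{M}_{F^Q}$ is nonempty so some finite value is achieved. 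The main obstacle, as usual in these recession-cone arguments, is getting the boundedness estimate in the attainment part cleanly: the point to be careful about is that one needs strict positivity $X,Y>0$ (not just $X,Y\geq 0$) to convert an upper bound on the weighted traces into a norm bound on the $L_j^{(n)}$ themselves — if $X$ or $Y$ is merely positive semidefinite the minimizing sequence can run off to infinity in the kernel directions and the infimum, while still finite and nonnegative, need not be attained.
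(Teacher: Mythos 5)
Your proof is correct. You take a somewhat different route from the paper: the paper shows that for $X>0$, $Y>0$ the level set
\[
\mathcal{M}_{F^{Q}}\cap\left\{ \left(L_{0},L_{1}\right)\in\mathcal{P}^{\times2}\,;\,\mathrm{tr}\,L_{0}X+\mathrm{tr}\,L_{1}Y\leq\alpha\right\}
\]
is closed and has no direction of recession (any would lie in $\mathcal{P}^{\times2}$ but then push the linear functional above $\alpha$), hence is compact, and concludes attainment of the minimum; the paper dismisses the finiteness characterization as trivial. You instead run a direct minimizing-sequence argument and establish compactness via the trace estimate $\mathrm{tr}\,L_{j}^{(n)}\leq\varepsilon^{-1}\mathrm{tr}\,L_{j}^{(n)}X\leq\varepsilon^{-1}c$ (with the understanding that for $j=1$ one should use $Y$ rather than $X$ in the middle inequality — a small notational slip, not a logical one), then invoke Bolzano--Weierstrass and closedness. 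This is an elementary repackaging of the same compactness fact and avoids invoking recession-cone theory directly, at the cost of being a bit longer. For the finiteness claim, your explicit argument — nonnegativity of the objective on $\mathcal{M}_{F^{Q}}\subset\mathcal{P}^{\times2}$ together with the "add $t\left\vert\psi\right\rangle\left\langle\psi\right\vert$ in a negative-eigenvalue direction" construction using (\ref{L+M-inv}) — is exactly what makes the paper's "trivial" rigorous, and you correctly note that nonemptiness of $\mathcal{M}_{F^{Q}}$ is what keeps the infimum from being $+\infty$. Your closing remark on why strict positivity of $X$, $Y$ is needed for attainment (kernel directions can let the minimizing sequence escape) is accurate and is exactly the phenomenon the paper's recession-cone argument is ruling out.
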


\begin{proof}
The second statement is trivial. So we prove the only first one. Choose $a$
which is strictly larger then the infimum, and consider a level set
\begin{align*}
&  \left\{  \left(  L_{0},L_{1}\right)  \in\mathcal{M}_{F^{Q}};\mathrm{tr}%
\,L_{0}X+\mathrm{tr}\,L_{1}Y\leq\alpha\right\} \\
&  =\mathcal{M}_{F^{Q}}\cap\left\{  \left(  L_{0},L_{1}\right)  \in
\mathcal{P}^{\times2};\mathrm{tr}\,L_{0}X+\mathrm{tr}\,L_{1}Y\leq
\alpha\right\}  ,
\end{align*}
which is closed. If $X>0$ and $Y>0$, the recession cone of this is empty, due
to the following reasons. If it has direction of recession, it should be a
member of $\mathcal{P}^{\times2}$, because the set is subset of $\mathcal{P}%
^{\times2}$. But, for any $\left(  L_{0}^{\prime},L_{1}^{\prime}\right)
\in\mathcal{P}^{\times2}$, there is $t$ such that
\[
\mathrm{tr}\,\left(  L_{0}+tL_{0}^{\prime}\right)  X+\mathrm{tr}\,\left(
L_{1}+tL_{1}^{\prime}\right)  Y>\alpha.
\]
So there is no direction of recession. Therefore, the set is bounded.
Therefore, $\left(  X,Y\right)  \rightarrow\mathrm{tr}\,L_{0}X+\mathrm{tr}%
\,L_{1}Y$ \ has minimum over the set, which coincide with the minimum over
$\mathcal{M}_{F^{Q}}$.
\end{proof}

The proof of the following two propositions are immediate, thus omitted.

\begin{proposition}
\label{prop:monotone}Suppose $F^{Q}$ is a member of $\mathcal{F}_{0}$. Then,
$F^{Q}$ is CPTP monotone if and only if $\mathcal{M}_{F^{Q}}$ satisfies%
\[
\left(  L_{0},L_{1}\right)  \in\mathcal{M}_{F^{Q}}\Rightarrow\left(
\Lambda^{\ast}\left(  L_{0}\right)  ,\Lambda^{\ast}\left(  L_{1}\right)
\right)  \in\mathcal{M}_{F^{Q}}%
\]
for any CPTP map $\Lambda$.
\end{proposition}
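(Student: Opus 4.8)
The plan is to read the result directly off the two descriptions of $\mathcal{M}_{F^{Q}}$ supplied by Lemma~\ref{recession}, combined with the adjoint identity $\mathrm{tr}\,L\,\Lambda\left(  X\right)  =\mathrm{tr}\,\Lambda^{\ast}\left(  L\right)  X$ and the fact that the dual $\Lambda^{\ast}$ of a CPTP map $\Lambda$ is unital and completely positive. In particular $\Lambda^{\ast}$ sends $\mathcal{P}^{\times2}$ into $\mathcal{P}^{\times2}$, so $\left(  \Lambda^{\ast}\left(  L_{0}\right)  ,\Lambda^{\ast}\left(  L_{1}\right)  \right)$ is always a legitimate candidate for membership in $\mathcal{M}_{F^{Q}}$, and only the supporting inequality in the explicit formula
\[
\mathcal{M}_{F^{Q}}=\left\{  \left(  L_{0},L_{1}\right)  :\forall\left(  X,Y\right)  \in\mathcal{P}^{\times2},\,\mathrm{tr}\,L_{0}X+\mathrm{tr}\,L_{1}Y\geq F^{Q}\left(  X,Y\right)  \right\}
\]
has to be checked.

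For the ``if'' direction, assume $\mathcal{M}_{F^{Q}}$ is stable under every $\Lambda^{\ast}$ and fix a CPTP map $\Lambda$. If $\left(  X,Y\right)  \notin\mathcal{P}^{\times2}$ there is nothing to prove since $F^{Q}\left(  X,Y\right)  =-\infty$, so take $\left(  X,Y\right)  \in\mathcal{P}^{\times2}$, whence also $\left(  \Lambda\left(  X\right)  ,\Lambda\left(  Y\right)  \right)  \in\mathcal{P}^{\times2}$. Using (\ref{F-inf-linear}) and the adjoint identity,
\[
F^{Q}\left(  \Lambda\left(  X\right)  ,\Lambda\left(  Y\right)  \right)  =\inf_{\left(  L_{0},L_{1}\right)  \in\mathcal{M}_{F^{Q}}}\mathrm{tr}\,\Lambda^{\ast}\left(  L_{0}\right)  X+\mathrm{tr}\,\Lambda^{\ast}\left(  L_{1}\right)  Y\geq\inf_{\left(  L_{0}^{\prime},L_{1}^{\prime}\right)  \in\mathcal{M}_{F^{Q}}}\mathrm{tr}\,L_{0}^{\prime}X+\mathrm{tr}\,L_{1}^{\prime}Y=F^{Q}\left(  X,Y\right)  ,
\]
the inequality holding because $\left\{  \left(  \Lambda^{\ast}\left(  L_{0}\right)  ,\Lambda^{\ast}\left(  L_{1}\right)  \right)  :\left(  L_{0},L_{1}\right)  \in\mathcal{M}_{F^{Q}}\right\}  \subseteq\mathcal{M}_{F^{Q}}$, so the left-hand infimum ranges over a subcollection of the terms on the right. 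This is exactly CPTP monotonicity.

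For the ``only if'' direction, assume $F^{Q}$ is CPTP monotone, fix $\left(  L_{0},L_{1}\right)  \in\mathcal{M}_{F^{Q}}$ and a CPTP map $\Lambda$. Since $\Lambda^{\ast}$ is positive, $\left(  \Lambda^{\ast}\left(  L_{0}\right)  ,\Lambda^{\ast}\left(  L_{1}\right)  \right)  \in\mathcal{P}^{\times2}\subseteq\mathcal{L}_{sa}^{\times2}$, so by the explicit description of $\mathcal{M}_{F^{Q}}$ above it suffices to verify $\mathrm{tr}\,\Lambda^{\ast}\left(  L_{0}\right)  X+\mathrm{tr}\,\Lambda^{\ast}\left(  L_{1}\right)  Y\geq F^{Q}\left(  X,Y\right)$ for all $\left(  X,Y\right)  \in\mathcal{P}^{\times2}$. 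Applying the adjoint identity, then the defining property of $\left(  L_{0},L_{1}\right)  \in\mathcal{M}_{F^{Q}}$ at the pair $\left(  \Lambda\left(  X\right)  ,\Lambda\left(  Y\right)  \right)  \in\mathcal{P}^{\times2}$, then CPTP monotonicity,
\[
\mathrm{tr}\,\Lambda^{\ast}\left(  L_{0}\right)  X+\mathrm{tr}\,\Lambda^{\ast}\left(  L_{1}\right)  Y=\mathrm{tr}\,L_{0}\Lambda\left(  X\right)  +\mathrm{tr}\,L_{1}\Lambda\left(  Y\right)  \geq F^{Q}\left(  \Lambda\left(  X\right)  ,\Lambda\left(  Y\right)  \right)  \geq F^{Q}\left(  X,Y\right)  ,
\]
which is the required inequality.

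There is no genuine obstacle: the only points needing any attention are the bookkeeping that $\Lambda^{\ast}$ lands in $\mathcal{P}^{\times2}$ (so that the transformed pair is admissible for $\mathcal{M}_{F^{Q}}$) and the trivial treatment of $\left(  X,Y\right)  \notin\mathcal{P}^{\times2}$ via $F^{Q}=-\infty$. Dimension matching is automatic, since a CPTP map between $\mathcal{L}_{k}$ and $\mathcal{L}_{m}$ induces $\Lambda^{\ast}$ between $\mathcal{L}_{m}$ and $\mathcal{L}_{k}$, and both $\mathcal{M}_{F^{Q}}$ and $\mathcal{P}^{\times2}$ are unions over all finite dimensions. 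This is what makes the statement immediate.
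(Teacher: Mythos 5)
Your argument is correct and is precisely the immediate proof the paper alludes to when it writes that "the proof of the following two propositions are immediate, thus omitted." Both directions correctly combine the linear-programming representation (\ref{F-inf-linear}) and the explicit dual description of $\mathcal{M}_{F^{Q}}$ from Lemma~\ref{recession} with the adjoint identity $\mathrm{tr}\,L\,\Lambda(X)=\mathrm{tr}\,\Lambda^{\ast}(L)\,X$, and the side remarks about $\Lambda^{\ast}$ being unital CP (hence positivity-preserving), the $-\infty$ case, and dimension bookkeeping are all exactly what is needed.
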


\begin{proposition}
\label{prop:t-1/t}Suppose $F^{Q}$ is a member of $\mathcal{F}_{0}$. Then,
$F^{Q}$ is strongly homogeneous if and only if $\mathcal{M}_{F^{Q}}$ satisfies%
\[
\,\left(  L_{0},L_{1}\right)  \in\mathcal{M}_{F^{Q}}\Rightarrow\left(
t\,L_{0},\frac{1}{t}L_{1}\right)  \in\mathcal{M}_{F^{Q}}%
\]
for any $t>0$.
\end{proposition}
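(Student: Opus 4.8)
Proposition~\ref{prop:t-1/t} asks me to prove the equivalence between strong homogeneity of $F^Q$ and the scaling invariance $\left(L_0,L_1\right)\in\mathcal{M}_{F^Q}\Rightarrow\left(t\,L_0,\tfrac1t L_1\right)\in\mathcal{M}_{F^Q}$ for all $t>0$. The plan is to exploit the bijective correspondence from Lemma~\ref{recession} together with the representation $F^Q\left(X,Y\right)=\inf_{\left(L_0,L_1\right)\in\mathcal{M}_{F^Q}}\mathrm{tr}\,L_0X+\mathrm{tr}\,L_1Y$ and the explicit formula for $\mathcal{M}_{F^Q}$ as the set of $\left(L_0,L_1\right)$ with $\mathrm{tr}\,L_0X+\mathrm{tr}\,L_1Y\ge F^Q\left(X,Y\right)$ on all of $\mathcal{P}^{\times2}$.

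For the direction ``strong homogeneity $\Rightarrow$ scaling invariance'': I would take $\left(L_0,L_1\right)\in\mathcal{M}_{F^Q}$, fix $t>0$, and check that $\left(t\,L_0,\tfrac1t L_1\right)$ satisfies the defining inequality. For any $\left(X,Y\right)\in\mathcal{P}^{\times2}$, write $\mathrm{tr}\,\left(tL_0\right)X+\mathrm{tr}\,\left(\tfrac1t L_1\right)Y=\mathrm{tr}\,L_0\left(tX\right)+\mathrm{tr}\,L_1\left(\tfrac1t Y\right)\ge F^Q\left(tX,\tfrac1t Y\right)$, and then strong homogeneity gives $F^Q\left(tX,\tfrac1t Y\right)=\sqrt{t\cdot\tfrac1t}\,F^Q\left(X,Y\right)=F^Q\left(X,Y\right)$, so the inequality holds and $\left(tL_0,\tfrac1t L_1\right)\in\mathcal{M}_{F^Q}$. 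For the converse, assume the scaling invariance of $\mathcal{M}_{F^Q}$. Given $\lambda,\mu>0$ and $\left(X,Y\right)\in\mathcal{P}^{\times2}$, compute $F^Q\left(\lambda X,\mu Y\right)=\inf_{\left(L_0,L_1\right)\in\mathcal{M}_{F^Q}}\mathrm{tr}\,L_0\left(\lambda X\right)+\mathrm{tr}\,L_1\left(\mu Y\right)$; substitute $L_0=t\,L_0'$, $L_1=\tfrac1t L_1'$ with $t=\sqrt{\mu/\lambda}$, so that by the hypothesis $\left(L_0,L_1\right)$ ranges over $\mathcal{M}_{F^Q}$ exactly when $\left(L_0',L_1'\right)$ does, and the objective becomes $\sqrt{\lambda\mu}\left(\mathrm{tr}\,L_0'X+\mathrm{tr}\,L_1'Y\right)$, whose infimum is $\sqrt{\lambda\mu}\,F^Q\left(X,Y\right)$. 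On the complement $\mathcal{P}^{\times2}$ both sides are $-\infty$, and positive homogeneity already handles $\lambda=0$ or $\mu=0$ as a limiting/degenerate case, so strong homogeneity follows on all of $\mathcal{L}_{sa}^{\times2}$.

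The argument is essentially a change of variables inside the infimum, so there is no serious obstacle; the only point requiring care is that the substitution $\left(L_0,L_1\right)\leftrightarrow\left(t L_0',\tfrac1t L_1'\right)$ is a genuine bijection of $\mathcal{M}_{F^Q}$ onto itself — this needs the hypothesis applied both with $t$ and with $1/t$ (equivalently, invertibility of the scaling map), which is immediate since the stated implication holds for every $t>0$. One should also note that the whole equivalence lives within $\mathcal{F}_0$, so $\mathrm{dom}\,F^Q=\mathcal{P}^{\times2}$ and the Lemma~\ref{recession} correspondence is available; this is what lets me pass freely between statements about $F^Q$ and statements about $\mathcal{M}_{F^Q}$. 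Since the proposition is asserted to have an immediate proof, I would in fact present only the change-of-variables computation in both directions and omit the routine verification that the scaling map preserves closedness, convexity, and condition~(\ref{L+M-inv}).
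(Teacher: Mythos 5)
Your change-of-variables argument in both directions is correct and is exactly the kind of immediate verification the paper intends (the paper omits the proof of this proposition, stating it is immediate). The one slip is the remark that positive homogeneity handles $\lambda=0$ or $\mu=0$: positive homogeneity gives $F^{Q}\left(cX,cY\right)=cF^{Q}\left(X,Y\right)$ for $c\geq0$, which says nothing about $F^{Q}\left(0,Y\right)$ when $Y\neq0$, so it does not by itself yield $F^{Q}\left(0,Y\right)=0$. If strong homogeneity is meant to cover these boundary values, the right argument is that the scaling invariance of $\mathcal{M}_{F^{Q}}$ already forces it: pick any $\left(L_{0},L_{1}\right)\in\mathcal{M}_{F^{Q}}$ (nonempty since $F^{Q}$ is proper); then $\left(tL_{0},\tfrac{1}{t}L_{1}\right)\in\mathcal{M}_{F^{Q}}$ for every $t>0$, hence $F^{Q}\left(0,Y\right)\leq\tfrac{1}{t}\mathrm{tr}\,L_{1}Y\rightarrow0$ as $t\rightarrow\infty$, and positivity gives the reverse inequality. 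Alternatively, the paper's statement of strong homogeneity can be read as applying to $\lambda,\mu>0$ only (this is how it is used, e.g.\ in the proof of Proposition~\ref{prop:hatF-s-hom}), in which case your argument is complete as written; either way the proposition holds, but the sentence attributing the degenerate case to positive homogeneity should be corrected or dropped.
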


\begin{proposition}
\label{prop:commute}Suppose $F^{Q}$ is a member of $\mathcal{F}_{0}$ that is
CPTP monotone and normalized. Then,
\[
\mathcal{M}_{F^{Q}}\cap\mathcal{C}^{\times2}\mathcal{=}\left\{  \,\left(
L_{0},L_{1}\right)  \in\mathcal{C\,};\,L_{0}>0,L_{1}\geq\frac{1}{4}L_{0}%
^{-1}\right\}  .
\]

\end{proposition}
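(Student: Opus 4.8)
The plan is to establish the two inclusions separately, using throughout the description of $\mathcal{M}_{F^Q}$ from Lemma~\ref{recession}: a pair $(L_0,L_1)$ lies in $\mathcal{M}_{F^Q}$ precisely when $\mathrm{tr}\,L_0X+\mathrm{tr}\,L_1Y\geq F^Q(X,Y)$ for every $(X,Y)\in\mathcal{P}^{\times2}$.

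For the inclusion $\subseteq$, take $(L_0,L_1)\in\mathcal{M}_{F^Q}\cap\mathcal{C}_k^{\times2}$ and write $L_0=\sum_i l_{0,i}\,|i\rangle\langle i|$, $L_1=\sum_i l_{1,i}\,|i\rangle\langle i|$ in the standard basis. Restricting the defining inequality of $\mathcal{M}_{F^Q}$ to diagonal pairs $X=\sum_i x_i\,|i\rangle\langle i|$, $Y=\sum_i y_i\,|i\rangle\langle i|$ with $x_i,y_i\geq0$, for which normalization gives $F^Q(X,Y)=\sum_i\sqrt{x_iy_i}$, we obtain $\sum_i(l_{0,i}x_i+l_{1,i}y_i)\geq\sum_i\sqrt{x_iy_i}$. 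Taking $x,y$ supported on a single index $i$ reduces this to the scalar inequality $l_{0,i}x_i+l_{1,i}y_i\geq\sqrt{x_iy_i}$ valid for all $x_i,y_i\geq0$; letting one of the two variables grow forces $l_{0,i}\geq0$ and $l_{1,i}\geq0$, the case $l_{0,i}=0$ is then excluded, and optimizing the left-hand side (AM--GM) gives $2\sqrt{l_{0,i}l_{1,i}}\geq1$, i.e.\ $l_{1,i}\geq\frac{1}{4}l_{0,i}^{-1}$. Since $L_0$ and $L_1$ are simultaneously diagonal, these coordinatewise relations are exactly $L_0>0$ and $L_1\geq\frac14L_0^{-1}$.

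For the inclusion $\supseteq$, let $(L_0,L_1)\in\mathcal{C}_k^{\times2}$ with $L_0>0$ and $L_1\geq\frac14L_0^{-1}$; it lies in $\mathcal{C}^{\times2}$ by hypothesis, so it remains only to show $(L_0,L_1)\in\mathcal{M}_{F^Q}$, and by \eqref{L+M-inv} it suffices to treat the extreme case $L_1=\frac14L_0^{-1}$. Since $F^Q$ is normalized and CPTP monotone, Theorem~\ref{th:Fmin<FQ<F} gives $F^Q\leq F_{\max}$, so it is enough to verify $\mathrm{tr}\,L_0X+\frac14\mathrm{tr}\,L_0^{-1}Y\geq F_{\max}(X,Y)$ for every $(X,Y)\in\mathcal{P}^{\times2}$. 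This is the variational (Alberti) characterization of Uhlmann's fidelity: for $Z>0$ and $X,W\geq0$ one has $\mathrm{tr}\,ZX+\mathrm{tr}\,Z^{-1}W\geq 2F_{\max}(X,W)$; for $X>0$ this follows from the substitution $Z=X^{-1/2}VX^{-1/2}$ together with the Cauchy--Schwarz inequality for the Hilbert--Schmidt inner product (which reduces the claim to $\|B\|_2^2+\|D\|_2^2\geq2\,\mathrm{tr}(DB)$, with $\mathrm{tr}(DB)=\mathrm{tr}\sqrt{X^{1/2}WX^{1/2}}$), and the case $X\geq0$ follows by continuity of $F_{\max}$ after replacing $X$ with $X+\varepsilon I$ and letting $\varepsilon\to0$. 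Applying this with $Z=L_0$ and $W=\frac14Y$, and using strong homogeneity $F_{\max}(X,\frac14Y)=\frac12F_{\max}(X,Y)$, yields the required inequality.

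The only nonroutine ingredient is the variational formula for $F_{\max}$ used in the second inclusion; if one prefers not to quote it, the two-line Hilbert--Schmidt argument indicated above must be inserted. Everything else is bookkeeping with the inf-of-linear-functionals picture of $\mathcal{M}_{F^Q}$, the one point of care being that normalization is applied only to operators of the fixed commutative algebra $\mathcal{C}_k$, for which it is available verbatim.
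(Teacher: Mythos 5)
Your proof is correct, and it takes a genuinely different route from the paper's. The paper proves the identity by a set-level argument: it uses CPTP monotonicity to show that the pinching $\Phi_{\mathcal{C}}$ maps $\mathcal{M}_{F^Q}$ onto $\mathcal{M}_{F^Q}\cap\mathcal{C}^{\times 2}$, computes the resulting infimum of $\mathrm{tr}\,L_0X+\mathrm{tr}\,L_1Y$ over diagonal $(X,Y)$ both for $\mathcal{M}_{F^Q}\cap\mathcal{C}^{\times 2}$ and for the candidate set $\{L_0>0,\,L_1\geq\frac14L_0^{-1}\}$ (the latter via normalization), and then concludes equality of the two closed convex sets by the separation lemma (i.e.\ the uniqueness half of Lemma~\ref{recession}). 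You instead verify the two inclusions pointwise. Your $\subseteq$ direction uses only normalization and a one-variable AM--GM optimization; your $\supseteq$ direction uses CPTP monotonicity only indirectly, through Theorem~\ref{th:Fmin<FQ<F}, to reduce to $F_{\max}$, and then invokes the Alberti-type bound $\mathrm{tr}\,ZX+\mathrm{tr}\,Z^{-1}W\geq 2F_{\max}(X,W)$. That bound is sound (the cleanest version of your Cauchy--Schwarz step: write $F_{\max}(X,W)=\max_U\operatorname{Re}\,\mathrm{tr}(U\sqrt X\sqrt W)$ and estimate $|\mathrm{tr}(U\sqrt X\sqrt W)|=|\langle Z^{-1/2}\sqrt W,\,Z^{1/2}U\sqrt X\rangle_{\mathrm{HS}}|\leq\sqrt{\mathrm{tr}(Z^{-1}W)\,\mathrm{tr}(ZX)}$), and in fact it is already available in the paper: it is precisely the content of~\eqref{MFmax}, from which $(L_0,\tfrac14L_0^{-1})\in\mathcal{M}_{F_{\max}}\subset\mathcal{M}_{F^Q}$ follows at once (with the factor of two in~\eqref{F=L+1/L} apparently a typo, as~\eqref{MFmax-2} makes clear). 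So you could shorten the $\supseteq$ step to a citation. The trade-off: the paper's argument stays entirely inside the abstract $\mathcal{F}_0$ framework and the diagonal algebra, never touching $F_{\max}$ explicitly, while yours is more concrete and makes the role of the Uhlmann/Alberti bound transparent; both are valid, and yours avoids the appeal to the separation lemma.
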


\begin{proof}
Let $l_{0}:=\left(  l_{0,i}\right)  _{i=1}^{k}$, $X=\sum_{i=1}^{k}%
x_{i}\left\vert i\right\rangle \left\langle i\right\vert ,Y=\sum_{i=1}%
^{k}y_{i}\left\vert i\right\rangle \left\langle i\right\vert $. Then, by
normalization,
\begin{align}
F^{Q}\left(  X,Y\right)   &  =\sum_{i=1}^{k}\sqrt{x_{i}y_{i}}=\min_{l_{0}%
,}\sum_{i=1}^{k}\left(  x_{i}l_{0,i}+\frac{y_{i}}{4\,l_{0,i}}\right)
\nonumber\\
&  =\min_{\left(  l_{0},l_{1}\right)  \in\mathcal{CP}}\left\{  \sum_{i=1}%
^{k}\left(  x_{i}l_{0,i}+y_{i}l_{1,i}\right)  \,;\,l_{1,i}\geq\frac
{1}{4l_{0,i}}\right\} \nonumber\\
&  =\min\left\{  \,\left(  \mathrm{tr}\,L_{0}X+\mathrm{tr}\,L_{1}Y\right)
\,;\left(  L_{0},L_{1}\right)  \in\mathcal{C\,};\,L_{0}>0,L_{1}\geq\frac{1}%
{4}L_{0}^{-1}\right\}  . \label{FC-1}%
\end{align}

By CPTP monotonicity, $\Phi_{\mathcal{C}}\left(  \mathcal{M}_{F^{Q}}\right)
\subset\mathcal{M}_{F^{Q}}$. Thus,
\[
\Phi_{\mathcal{C}}\left(  \mathcal{M}_{F^{Q}}\right)  \subset\mathcal{M}%
_{F^{Q}}\cap\mathcal{C}^{\times2}.
\]
Since each element $X$ of $\mathcal{C}$ is unchanged by $\Phi_{\mathcal{C}}$,
$\Phi_{\mathcal{C}}\left(  X\right)  =X$, the opposite inclusion is also true:%
\[
\mathcal{M}_{F^{Q}}\cap\mathcal{C}^{\times2}=\Phi_{\mathcal{C}}\left(
\mathcal{M}_{F^{Q}}\cap\mathcal{C}^{\times2}\right)  \subset\Phi_{\mathcal{C}%
}\left(  \mathcal{M}_{F^{Q}}\right)  .
\]
Therefore, we have
\[
\Phi_{\mathcal{C}}\left(  \mathcal{M}_{F^{Q}}\right)  =\mathcal{M}_{F^{Q}}%
\cap\mathcal{C}^{\times2}.
\]

Observe
\begin{align*}
F^{Q}\left(  X,Y\right)   &  =\min_{\left(  L_{0},L_{1}\right)  \in
\mathcal{M}_{F^{Q}}}\left(  \mathrm{tr}\,L_{0}X+\mathrm{tr}\,L_{1}Y\right) \\
&  =\min_{\left(  L_{0},L_{1}\right)  \in\mathcal{M}_{F^{Q}}}\left(
\mathrm{tr}\,L_{0}\Phi_{\mathcal{C}}\left(  X\right)  +\mathrm{tr}\,L_{1}%
\Phi_{\mathcal{C}}\left(  Y\right)  \right) \\
&  =\min_{\left(  L_{0},L_{1}\right)  \in\mathcal{M}_{F^{Q}}}\left(
\mathrm{tr}\,\Phi_{\mathcal{C}}\left(  L_{0}\right)  X+\mathrm{tr}%
\,\Phi_{\mathcal{C}}\left(  L_{1}\right)  Y\right) \\
&  =\min_{\left(  L_{0},L_{1}\right)  \in\Phi_{\mathcal{C}}\left(
\mathcal{M}_{F^{Q}}\right)  }\left(  \mathrm{tr}\,L_{0}X+\mathrm{tr}%
\,L_{1}Y\right) \\
&  =\min_{\left(  L_{0},L_{1}\right)  \in\mathcal{M}_{F^{Q}}\cap
\mathcal{C}^{\times2}}\left(  \mathrm{tr}\,L_{0}X+\mathrm{tr}\,L_{1}Y\right)
.
\end{align*}
Since this and (\ref{FC-1}) holds for any $\left(  X,Y\right)  \in
\mathcal{PC}^{\times2}$, by Lemma\thinspace\ref{lem:separation}, we have the assertion.
\end{proof}

\section{The minimum points of convex programs}

Suppose a member $F^{Q}$ of $\mathcal{F}_{0}$ has the derivative
\[
\mathrm{D}F^{Q}\left(  X,Y\right)  \left(  T,S\right)  =\mathrm{tr}%
\,L_{0,\ast}T+\mathrm{tr}\,L_{1,\ast}S.
\]
Then, for any $\lambda>0$,
\begin{align*}
\mathrm{D}F^{Q}\left(  \lambda X,\lambda Y\right)  \left(  T,S\right)   &
=\left.  \frac{\mathrm{d}}{\mathrm{d}t}\,F^{Q}\left(  \lambda X+tT,\lambda
Y+tS\right)  \right\vert _{t=0}\\
&  =\lambda\left.  \frac{\mathrm{d}}{\mathrm{d}t}\,F^{Q}\left(  X+\frac
{t}{\lambda}T,Y+\frac{t}{\lambda}S\right)  \right\vert _{t=0}\\
&  =\lambda\mathrm{D}F^{Q}\left(  X,Y\right)  \left(  \frac{1}{\lambda}%
T,\frac{1}{\lambda}S\right) \\
&  =\mathrm{D}F^{Q}\left(  X,Y\right)  \left(  T,S\right)  .
\end{align*}
\ Also, since $F^{Q}$ is concave,%
\begin{align*}
&  F^{Q}\left(  \lambda X+T,\lambda Y+S\right)  -F^{Q}\left(  \lambda
X,\lambda Y\right) \\
&  =F^{Q}\left(  \lambda X+T,\lambda Y+S\right)  -\lambda F^{Q}\left(
X,Y\right) \\
&  \leq\mathrm{tr}\,L_{0,\ast}T+\mathrm{tr}\,L_{1,\ast}S.
\end{align*}
Since $F^{Q}$ is closed, it is upper semi continuous. Thus, taking
$\varlimsup_{\lambda\rightarrow0}$ of both ends, for any \ $T\geq0$, $S\geq
0$,
\begin{align*}
&  \varlimsup_{\lambda\rightarrow0}\left\{  F^{Q}\left(  \lambda X+T,\lambda
Y+S\right)  -\lambda F^{Q}\left(  X,Y\right)  \right\} \\
&  =F^{Q}\left(  T,S\right) \\
&  \leq\mathrm{tr}\,L_{0,\ast}T+\mathrm{tr}\,L_{1,\ast}S,
\end{align*}
which means%
\[
\left(  L_{0,\ast},L_{1,\ast}\right)  \in\mathcal{M}_{F^{Q}}\,.
\]

Also, since $F^{Q}$ is positively homogeneous,
\[
F^{Q}\left(  0,0\right)  -F^{Q}\left(  X,Y\right)  =\mathrm{D}F^{Q}\left(
X,Y\right)  \left(  -X,-Y\right)
\]
holds for any \ $X\geq0$, $Y\geq0$. Thus,
\[
F^{Q}\left(  X,Y\right)  =\mathrm{D}F^{Q}\left(  X,Y\right)  \left(
X,Y\right)  =\mathrm{tr}\,L_{0,\ast}X+\mathrm{tr}\,L_{1,\ast}Y.
\]
So $\left(  L_{0,\ast},L_{1,\ast}\right)  $ achieves (\ref{F-inf-linear}).

Define, for each $Z\in\mathcal{P}_{k}$, the linear transform $\mathbf{S}_{Z}$
on $\mathcal{L}_{sa,k}$ by the equation%
\[
X=\mathbf{S}_{Z}\left(  X\right)  \,Z+Z\,\,\mathbf{S}_{Z}\left(  X\right)  \,.
\]
When $Z>0$,
\begin{equation}
\mathbf{S}_{Z}\left(  X\right)  =\int_{0}^{\infty}e^{-t\,Z}\,X\,e^{-tZ}%
\,\mathrm{d\,}t.\label{S-int}%
\end{equation}
In fact $\mathbf{S}_{Z}$ is self-dual with respect to Hilbert-Schmidt inner
product,
\[
\mathbf{S}_{Z}=\mathbf{S}_{Z}^{\ast}.
\]
When $Z>0$, this is obvious from the second expression of $\mathbf{S}_{Z}$.
When is positive but may have null eigenspace,
\begin{align*}
\mathrm{tr}\,\mathbf{S}_{Z}\left(  X\right)  Y &  =\mathrm{tr}\,\mathbf{S}%
_{Z}\left(  X\right)  \left(  \mathbf{S}_{Z}\left(  Y\right)  Z+Z\mathbf{S}%
_{Z}\left(  Y\right)  \right)  \\
&  =\mathrm{tr}\,\left(  \mathbf{S}_{Z}\left(  X\right)  Z+Z\mathbf{S}%
_{Z}\left(  X\right)  \right)  \mathbf{S}_{Z}\left(  Y\right)  \\
&  =\mathrm{tr}\,X\mathbf{S}_{Z}\left(  Y\right)  \text{.}%
\end{align*}
So \ $\mathbf{S}_{Z}$ is self-dual, if viewed as a linear transform on
$\mathcal{L}_{sa,k}$.

The derivative of $f_{1}\left(  X\right)  =\sqrt{X}$ is
\[
\mathrm{D}f_{1}\left(  X\right)  \left(  T\right)  =\mathbf{S}_{\sqrt{X}%
}\left(  T\right)  ,
\]
since the differentiation of both sides of $X=\left\{  f_{1}\left(  X\right)
\right\}  ^{2}$ yields
\[
T=\left\{  \mathrm{D}f_{1}\left(  X\right)  \left(  T\right)  \right\}
\,X+X\,\left\{  \mathrm{D}f_{1}\left(  X\right)  \left(  T\right)  \right\}
.\,
\]

First,, consider $F_{\max}\left(  X,Y\right)  =\mathrm{tr}\,\sqrt
{Y^{1/2}XY^{1/2}}$. The derivative of $f_{2}\left(  X\right)  =\sqrt
{Y^{1/2}XY^{1/2}}$
\[
\mathrm{D}f_{2}\left(  X\right)  \left(  T\right)  =\mathbf{S}_{Y^{1/2}%
XY^{1/2}}\left(  Y^{1/2}\,T\,Y^{1/2}\right)  .
\]
Therefore,%

\begin{align*}
&  \mathrm{D}F_{\max}\left(  X,Y\right)  \left(  T,S\right)  \\
&  =\mathrm{tr}\,\,\mathbf{S}_{\sqrt{Y^{1/2}XY^{1/2}}}\left(  Y^{1/2}%
\,T\,Y^{1/2}\right)  +\mathrm{tr}\,\,\mathbf{S}_{\sqrt{X^{1/2}YX^{1/2}}%
}\left(  X^{1/2}S\,X^{1/2}\right)  \\
&  =\mathrm{tr}\,\mathbf{S}_{\sqrt{Y^{1/2}XY^{1/2}}}\left(  I\right)
Y^{1/2}TY^{1/2}+\mathrm{tr}\,\ \mathbf{S}_{\sqrt{X^{1/2}YX^{1/2}}}\left(
I\right)  X^{1/2}\,SX^{1/2}\\
&  =\mathrm{tr}\,Y^{1/2}\mathbf{S}_{\sqrt{Y^{1/2}XY^{1/2}}}\left(  I\right)
Y^{1/2}T+\mathrm{tr}\,\ X^{1/2}\mathbf{S}_{\sqrt{X^{1/2}YX^{1/2}}}\left(
I\right)  X^{1/2}\,S\\
&  =\frac{1}{2}\mathrm{tr}\,\,T\,Y^{1/2}\left(  Y^{1/2}X\,Y^{1/2}\right)
^{-1/2}Y^{1/2}+\frac{1}{2}\mathrm{tr}\,\,S\,X^{1/2}\left(  X^{1/2}%
Y\,X^{1/2}\right)  ^{-1/2}X^{1/2},
\end{align*}
and
\begin{align*}
L_{0,\ast} &  =\frac{1}{2}Y^{1/2}\left(  Y^{1/2}X\,Y^{1/2}\right)
^{-1/2}Y^{1/2},\\
L_{1,\ast} &  =\frac{1}{2}X^{1/2}\left(  X^{1/2}Y\,X^{1/2}\right)
^{-1/2}X^{1/2}.
\end{align*}
Here, `$\,\cdot^{-1}$' stands for generalized inverse. Observe
\begin{align*}
Y &  =4L_{0,\ast}\,X\,L_{0,\ast}\,,\\
X &  =4L_{1,\ast}\,Y\,L_{1,\ast}\,.\,
\end{align*}
Thus, \ $2L_{0,\ast}$ and $2L_{1,\ast}$ is non-commutative version of
Radon-Nikodym derivative $`\mathrm{d}\,\sqrt{Y}/\mathrm{d}\,\sqrt{X}^{\prime}$
and $`\mathrm{d}\,\sqrt{X}/\mathrm{d}\,\sqrt{Y}^{\prime}$, respectively.
Also,
\[
\left(  2L_{0,\ast}\right)  \left(  2L_{1,\ast}\right)  =I\mathbf{.}%
\]
Indeed,
\begin{equation}
F_{\max}\left(  X,Y\right)  =\min_{L:L>0}\mathrm{tr}\,LX+\mathrm{tr}%
\,L^{-1}Y.\label{F=L+1/L}%
\end{equation}
This is verified by differentiation of the right hand side:%
\[
\frac{\partial}{\partial L}\left\{  \mathrm{tr}\,LX+\mathrm{tr}\,L^{-1}%
Y\right\}  =X-L^{-1}YL^{-1}.
\]
So the minimum is achieved by a positive $L$ with
\[
Y=L\,X\,L.
\]
Thus, $L=2L_{0,\ast}$ .

Next, consider $F_{\min}\left(  X,Y\right)  $, supposing that $X>0$ and
$Y>0$,
\[
F_{\min}\left(  X,Y\right)  =\mathrm{tr}\,Y\sqrt{Y^{-1/2}XY^{-1/2}%
}=\mathrm{tr}\,X\sqrt{X^{-1/2}YX^{-1/2}}.
\]
So,
\begin{align*}
&  \mathrm{D}F_{\min}\left(  X,Y\right)  \left(  T,S\right) \\
&  =\mathrm{tr}\,\,Y\,\mathbf{S}_{\sqrt{Y^{-1/2}XY^{-1/2}}}\left(
Y^{-1/2}\,T\,Y^{-1/2}\right)  +\mathrm{tr}\,\,X\,\mathbf{S}_{\sqrt
{X^{-1/2}YX^{-1/2}}}\left(  X^{-1/2}S\,X^{-1/2}\right) \\
&  =\mathrm{tr}\,\,Y^{-1/2}\mathbf{S}_{\sqrt{Y^{-1/2}XY^{-1/2}}}\left(
Y\right)  \,Y^{-1/2}\,T\,+\mathrm{tr}\,\,X^{-1/2}\mathbf{S}_{\sqrt
{X^{-1/2}YX^{-1/2}}}\left(  X\right)  \,X^{-1/2}S.
\end{align*}
This means
\begin{align*}
L_{0,\ast}  &  =Y^{-1/2}\mathbf{S}_{\sqrt{Y^{-1/2}XY^{-1/2}}}\left(  Y\right)
\,Y^{-1/2},\\
L_{1,\ast}  &  =X^{-1/2}\mathbf{S}_{\sqrt{X^{-1/2}YX^{-1/2}}}\left(  X\right)
\,X^{-1/2}\,.
\end{align*}

Lastly, we consider $F_{1/2}\left(  X,Y\right)  $, where $\left(  X,Y\right)
\in\mathcal{P}_{k}^{\times2}$.%
\begin{align*}
&  \mathrm{D}F_{1/2}\left(  X,Y\right)  \left(  T,S\right)  \\
&  =\mathrm{tr}\,\mathbf{S}_{\sqrt{X}}\left(  T\right)  \sqrt{Y}%
+\mathrm{tr}\,\mathbf{S}_{\sqrt{Y}}\left(  S\right)  \sqrt{X}\\
&  =\mathrm{tr}\,T\mathbf{S}_{\sqrt{X}}\left(  \sqrt{Y}\right)  +\mathrm{tr}%
\,S\,\mathbf{S}_{\sqrt{Y}}\left(  \sqrt{X}\right)  .
\end{align*}
So,
\begin{equation}
L_{0,\ast}=\mathbf{S}_{\sqrt{X}}\left(  \sqrt{Y}\right)  ,\,L_{1,\ast
}=\mathbf{S}_{\sqrt{Y}}\left(  \sqrt{X}\right)  .\label{L=SY}%
\end{equation}
They give another non-commutative version of Radon-Nikodym derivative
$`\mathrm{d}\,\sqrt{Y}/\mathrm{d}\,\sqrt{X}^{\prime}$ and $`\mathrm{d}%
\,\sqrt{X}/\mathrm{d}\,\sqrt{Y}^{\prime}$.

\section{SDP representations}

It is known \cite{Killoran}\cite{Watrous}\ that
\begin{align}
F_{\max}\left(  X,Y\right)   &  =\max\,\left\{  \frac{1}{2}\left(
\mathrm{tr}\,C+\mathrm{tr}\,C^{\dagger}\right)  \,;\,\left[
\begin{array}
[c]{cc}%
X & C\\
C^{\dagger} & Y
\end{array}
\right]  \geq0\right\}  ,\label{Fmax-primal}\\
&  =\min\left\{  \mathrm{tr}\,X\,L_{0}+\mathrm{tr}\,Y\,L_{1}\,;\left(
L_{0},L_{1}\right)  \in\mathcal{M}_{F_{\max}}\right\}  , \label{Fmax-dual}%
\end{align}
where
\begin{equation}
\mathcal{M}_{F_{\max}}=\left\{  \left(  L_{0},L_{1}\right)  \,;\,\,\left[
\begin{array}
[c]{cc}%
2L_{0} & -I_{k}\\
-I_{k} & 2L_{1}%
\end{array}
\right]  \geq0,\,L_{0},L_{1}\in\mathcal{L}_{sa,k}\right\}  . \label{MFmax-2}%
\end{equation}
The equality between (\ref{Fmax-primal}) and (\ref{Fmax-dual}) is due to
duality theorem of semi definite programing. By Lemma\thinspace
\ref{lem:block-positive}, it is easy to verify
\begin{equation}
\mathcal{M}_{F_{\max}}=\left\{  \left(  L_{0},L_{1}\right)  \,;\,\,2L_{0}\geq
L,\,2L_{1}\geq L^{-1},\,\exists L\geq0\right\}  , \label{MFmax}%
\end{equation}
which leads to (\ref{F=L+1/L}). Conversely, (\ref{F=L+1/L}) leads to
(\ref{MFmax}).

Also, in the case of $\mathrm{supp}\,X\subset\mathrm{supp}\,Y$, it is known
\cite{Bhatia} that
\begin{equation}
\,\left[
\begin{array}
[c]{cc}%
X & C\\
C & Y
\end{array}
\right]  \geq0,\,C=C^{\dagger} \label{rho-sigma>0}%
\end{equation}
holds if and only if
\[
C\geq\sqrt{Y}\sqrt{Y^{-1/2}XY^{-1/2}}\sqrt{Y}\,.
\]
Therefore,
\begin{align}
F_{\min}\left(  X,Y\right)   &  =\max\,\left\{  \mathrm{tr}%
\,C\,;\,\text{(\ref{rho-sigma>0})},\,C\in\mathcal{L}_{sa,k}\right\}
\label{Fmin-primal}\\
&  =\min\left\{  \mathrm{tr}\,X\,L_{0}+\mathrm{tr}\,Y\,L_{1}\,;\left(
L_{0},L_{1}\right)  \in\mathcal{M}_{F_{\min}}\right\}  ,\label{Fmin-dual}\\
\mathcal{M}_{F_{\min}}  &  =\left\{  \left(  L_{0},L_{1}\right)
\,;\,\,\left[
\begin{array}
[c]{cc}%
2L_{0} & -I_{k}-\sqrt{-1}A\\
-I_{k}+\sqrt{-1}A & 2L_{1}%
\end{array}
\right]  \geq0,\,L_{0},L_{1},A\in\mathcal{L}_{sa,k}\right\}  , \label{MFmin}%
\end{align}
where the second identity is by the duality theorem of SDP.

In the case of $\mathrm{supp}\,X\not \subset \mathrm{supp}\,Y$, we still have
(\ref{Fmin-primal}), as proved in the following. By Lemma\thinspace
\ref{lem:block-positive}, $C$ should be supported on $\mathrm{supp}\,Y$, for
(\ref{rho-sigma>0}) to hold. Therefore,
\[
\,\left[
\begin{array}
[c]{cc}%
X & C\\
C & Y
\end{array}
\right]  =\left[
\begin{array}
[c]{cccc}%
X_{11} & X_{12} & C & 0\\
X_{21} & X_{22} & 0 & 0\\
C & 0 & Y & 0\\
0 & 0 & 0 & 0
\end{array}
\right]  \geq0.
\]
Because of $X\geq0$ and Lemma\thinspace\ref{lem:block-positive},
$X_{12}\,\left(  I_{k_{0}}-\pi_{\mathrm{supp}\,X_{22}}\right)  =0$, where
$k_{0}=\dim$ $\mathrm{supp}\,Y$. Therefore,
\begin{align*}
&  \left[
\begin{array}
[c]{cccc}%
I_{k_{0}} & -X_{12}X_{22}^{-1} & C & 0\\
0 & I_{k-k_{0}} & 0 & 0\\
C & 0 & I_{k_{0}} & 0\\
0 & 0 & 0 & I_{k-k_{0}}%
\end{array}
\right]  \left[
\begin{array}
[c]{cc}%
X & C\\
C & Y
\end{array}
\right]  \left[
\begin{array}
[c]{cccc}%
I_{k_{0}} & 0 & C & 0\\
-X_{22}^{-1}X_{21} & I_{k-k_{0}} & 0 & 0\\
C & 0 & I_{k_{0}} & 0\\
0 & 0 & 0 & I_{k-k_{0}}%
\end{array}
\right] \\
&  =\left[
\begin{array}
[c]{cccc}%
X_{11}-X_{12}X_{22}^{-1}X_{21} & X_{12}\,\left(  I_{k_{0}}-\pi_{\mathrm{supp}%
\,X_{22}}\right)  & C & 0\\
\left(  I_{k_{0}}-\pi_{\mathrm{supp}\,X_{22}}\right)  \,X_{21} & X_{22} & 0 &
0\\
C & 0 & Y & 0\\
0 & 0 & 0 & 0
\end{array}
\right] \\
&  =\left[
\begin{array}
[c]{cccc}%
\tilde{X} & 0 & C & 0\\
0 & X_{22} & 0 & 0\\
C & 0 & Y & 0\\
0 & 0 & 0 & 0
\end{array}
\right]  .
\end{align*}
Therefore, (\ref{rho-sigma>0}) is equivalent to
\begin{equation}
\left[
\begin{array}
[c]{cc}%
\tilde{X} & C\\
C & Y
\end{array}
\right]  \geq0,\,C=C^{\dagger}. \label{tilde-rho-sigma>0}%
\end{equation}
Thus,
\begin{align*}
&  \max\,\left\{  \mathrm{tr}\,C\,;\,\text{(\ref{rho-sigma>0})},\,C\in
\mathcal{L}_{sa,k}\right\} \\
&  =\max\,\left\{  \mathrm{tr}\,C\,;\,\text{(\ref{tilde-rho-sigma>0})}%
,\,C\in\mathcal{L}_{sa,k}\right\} \\
&  =\mathrm{tr}\,Y\sqrt{Y^{-1/2}\tilde{X}Y^{-1/2}}=F_{\min}\left(  X,Y\right)
,
\end{align*}
and our assertion is proved.

Note that, if $\left(  L_{0},L_{1}\right)  \in\mathcal{M}_{F_{\min}}$,
\ $L_{0}>0$ and $L_{1}>0$, thus
\begin{equation}
\mathcal{M}_{F_{\min}}=\left\{  \left(  L_{0},L_{1}\right)  \,;\,\,\left[
\begin{array}
[c]{cc}%
2L_{0} & -I_{k}-\sqrt{-1}A\\
-I_{k}+\sqrt{-1}A & 2L_{1}%
\end{array}
\right]  \geq0,\,L_{0}>0,L_{1}>0,A\in\mathcal{L}_{sa}\right\}  .
\label{MFmin-2}%
\end{equation}
Suppose otherwise, that is, $L_{0}$ has null eigenspace, and let $\left\vert
\psi\right\rangle $ be a member of it with unit length $\left\Vert
\psi\right\Vert =1$. Then,%

\begin{align*}
&  \left[
\begin{array}
[c]{cc}%
\left\langle \psi\right\vert  & c\left\langle \psi\right\vert
\end{array}
\right]  \left[
\begin{array}
[c]{cc}%
2L_{0} & -I_{k}-\sqrt{-1}A\\
-I_{k}+\sqrt{-1}A & 2L_{1}%
\end{array}
\right]  \left[
\begin{array}
[c]{c}%
\left\vert \psi\right\rangle \\
c\left\vert \psi\right\rangle
\end{array}
\right] \\
&  =-2c+2c^{2}\left\langle \psi\right\vert PL_{1}P\left\vert \psi
\right\rangle
\end{align*}
is negative if $c$ is sufficiently large positive number. So $L_{0}$ should be
strictly positive, and so should be $L_{1}$.

A consequence of SDP representations for $F_{\max}$ and $F_{\min}$ is
\begin{align}
F_{\min}\left(  X,Y\right)   &  =\min\left\{  \mathrm{tr}\,X\,L_{0}%
+\mathrm{tr}\,Y\,L_{1}\,;\left(  L_{0},\left(  I_{k}-\sqrt{-1}A\right)
^{-1}L_{1}\left(  I_{k}+\sqrt{-1}A\right)  ^{-1}\right)  \in\mathcal{M}%
_{F_{\max}},A\in\mathcal{L}_{sa,}\,\right\} \label{Fmin-I-A}\\
&  =\min\left\{  F_{\max}\left(  X,\left(  I_{k}-\sqrt{-1}A\right)  Y\left(
I_{k}+\sqrt{-1}A\right)  \right)  ;A\in\mathcal{L}_{sa}\,\right\}  .\nonumber
\end{align}
To show these, note that
\begin{align*}
&  \left[
\begin{array}
[c]{cc}%
I_{k} & 0\\
0 & \left(  I_{k}-\sqrt{-1}A\right)  ^{-1}%
\end{array}
\right]  \left[
\begin{array}
[c]{cc}%
2L_{0} & -I_{k}-\sqrt{-1}A\\
-I_{k}+\sqrt{-1}A & 2L_{1}%
\end{array}
\right]  \left[
\begin{array}
[c]{cc}%
I_{k} & 0\\
0 & \left(  I_{k}+\sqrt{-1}A\right)  ^{-1}%
\end{array}
\right] \\
&  =\left[
\begin{array}
[c]{cc}%
2L_{0} & -I_{k}\\
-I_{k} & \left(  I_{k}-\sqrt{-1}A\right)  ^{-1}\left(  2L_{1}\right)  \left(
I_{k}+\sqrt{-1}A\right)  ^{-1}%
\end{array}
\right]  .
\end{align*}
Here, $I_{k}-\sqrt{-1}A$ is invertible because $\left(  I_{k}-\sqrt
{-1}A\right)  ^{\dagger}\left(  I_{k}-\sqrt{-1}A\right)  $ is invertible,
\begin{align*}
&  \left(  I_{k}-\sqrt{-1}A\right)  ^{\dagger}\left(  I_{k}-\sqrt{-1}A\right)
\\
&  =I_{k}+A^{2}\geq I_{k}.
\end{align*}
Therefore,
\begin{align}
&  \left(  L_{0},L_{1}\right)  \in\mathcal{M}_{F_{\min}}\nonumber\\
&  \Leftrightarrow\exists A\in\mathcal{L}_{sa,k}\,\,\left(  L_{0},\left(
I_{k}-\sqrt{-1}A\right)  ^{-1}L_{1}\left(  I_{k}+\sqrt{-1}A\right)
^{-1}\right)  \in\mathcal{M}_{F_{\max}}. \label{MFmin-MFmax}%
\end{align}
Therefore, by (\ref{Fmax-dual}) and (\ref{Fmin-dual}), we have the asserted
identity. Similarly, we have
\[
F_{\min}\left(  X,Y\right)  =\min_{A\in\mathcal{L}_{sa,k}}F_{\max}\left(
\left(  I_{k}+\sqrt{-1}A\right)  X\left(  I_{k}-\sqrt{-1}A\right)  ,Y\right)
\,.
\]

Finally, we present a SDP representation of $F_{1/2}$. Define linear operators
$\mathbf{L}_{X}$ and $\mathbf{R}_{Y}$ on $\mathcal{L}_{k}$ by%
\[
\mathbf{L}_{X}\left(  A\right)  =XA,\,\mathbf{R}_{Y}\left(  A\right)  =AY.
\]
Then, if $X$ and $Y$ are self-adjoint, $\mathbf{L}_{X}$ and $\mathbf{R}_{Y}$
are self-adjoint when $\mathcal{L}_{k}$ is equipped with the Hilbert-Schmidt
inner product $\left\langle A,B\right\rangle _{\mathrm{HS}}:=\mathrm{tr}%
\,A^{\dagger}B$. Also, $\mathbf{L}_{X}$ and $\mathbf{R}_{Y}$ commutes, and%

\[
F_{1/2}\left(  X,Y\right)  =\left\langle I,\mathbf{L}_{X}^{1/2}\mathbf{R}%
_{Y}^{1/2}I\right\rangle _{\mathrm{HS}}.
\]
Hence,
\begin{align*}
F_{1/2}\left(  X,Y\right)   &  =\max\left\{  \operatorname{Re}\left\langle
I,\mathbf{C}I\right\rangle _{\mathrm{HS}}\,;\mathbf{C\leq L}_{X}%
^{1/2}\mathbf{R}_{Y}^{1/2},\right\} \\
&  =\max\left\{  \operatorname{Re}\left\langle I,\mathbf{C}I\right\rangle
_{\mathrm{HS}}\,;\left[
\begin{array}
[c]{cc}%
\mathbf{L}_{X} & \mathbf{C}\\
\mathbf{C} & \mathbf{R}_{Y}%
\end{array}
\right]  \geq0\right\}  .
\end{align*}

\section{Polar}

\subsection{Definition and Basic Properties}

We define \textit{polar} of $F^{Q}\in\mathcal{F}_{0}$ by
\[
\hat{F}^{Q}\left(  L_{0},L_{1}\right)  :=\left\{
\begin{array}
[c]{cc}%
\inf\left\{  s\,;\,\left(  L_{0},L_{1}\right)  \in s\,\left(  \mathcal{M}%
_{F^{Q}}\right)  ^{c},\,s>0\right\}  , & \left(  L_{0},\,L_{1}\right)
\in\mathcal{P}^{\times2},\\
-\infty, & \left(  L_{0},\,L_{1}\right)  \notin\mathcal{P}^{\times2},
\end{array}
\right.
\]
in analogy with a polar of a convex function. This is a sort of `dual' of
$F^{Q}$, and as is shown later, is CPTP monotone non-decreasing by
applications of any CP unital maps. We study the property of this quantity
rather meticulously.

\begin{proposition}
\label{prop:hatF>1}Suppose $F^{Q}\in\mathcal{F}_{0}$. If $\left(  L_{0}%
,L_{1}\right)  $ satisfies $\hat{F}^{Q}\left(  L_{0},L_{1}\right)  >0$, then
\ \ \
\begin{equation}
\hat{F}^{Q}\left(  L_{0},L_{1}\right)  =\max\left\{  s\,;\,\left(  L_{0}%
,L_{1}\right)  \in s\,\left(  \mathcal{M}_{F^{Q}}\right)  ,\,s>0\right\}  .
\label{hatF}%
\end{equation}
In particular, if $L_{0}$, $L_{1}>0$, we have (\ref{hatF}). Also,
\begin{equation}
\hat{F}^{Q}\left(  L_{0},L_{1}\right)  \geq1\,\Leftrightarrow\,\left(
L_{0},L_{1}\right)  \in\mathcal{M}_{F^{Q}}. \label{hatF>1}%
\end{equation}

If $F^{Q}\in\mathcal{F}_{0}$ satisfies normalization and CPTP monotonicity and
$L_{0}$ or $L_{1}$ has an eigenvalue $0$,
\begin{equation}
\hat{F}^{Q}\left(  L_{0},L_{1}\right)  =0. \label{hatF=0}%
\end{equation}

\end{proposition}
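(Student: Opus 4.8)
The plan is to phrase everything through the set $\mathcal{M}_{F^{Q}}$ supplied by Lemma~\ref{recession} and to reduce all four claims to one-dimensional facts about the ray $s\mapsto(L_{0},L_{1})/s$. Fix $(L_{0},L_{1})\in\mathcal{P}^{\times2}$ (outside $\mathcal{P}^{\times2}$ the polar is $-\infty$ by definition, and then nothing is asserted) and put $G:=\{\,s>0\,;\,(L_{0},L_{1})/s\in\mathcal{M}_{F^{Q}}\,\}$; unwinding the definition of $\hat F^{Q}$ gives $\hat{F}^{Q}(L_{0},L_{1})=\inf\bigl((0,\infty)\setminus G\bigr)$, with the convention $\inf\emptyset=+\infty$. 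Two structural facts drive the proof: $\mathcal{M}_{F^{Q}}$ is closed (Lemma~\ref{recession}), and it is \emph{upward star-shaped}, i.e.\ $(L_{0},L_{1})\in\mathcal{M}_{F^{Q}}$ and $t\geq1$ force $(tL_{0},tL_{1})=(L_{0}+(t-1)L_{0},\,L_{1}+(t-1)L_{1})\in\mathcal{M}_{F^{Q}}$ by (\ref{L+M-inv}), since $L_{0},L_{1}\geq0$. Hence $G$ is downward closed in $(0,\infty)$: if $s_{1}\in G$ and $0<s_{2}\leq s_{1}$ then $(L_{0},L_{1})/s_{2}=(s_{1}/s_{2})\bigl((L_{0},L_{1})/s_{1}\bigr)\in\mathcal{M}_{F^{Q}}$ because $s_{1}/s_{2}\geq1$; and since $\mathcal{M}_{F^{Q}}$ is closed and $s\mapsto(L_{0},L_{1})/s$ is continuous, $G$ is closed from above. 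Therefore $G$ is either empty, in which case $\hat{F}^{Q}(L_{0},L_{1})=0$; or all of $(0,\infty)$, in which case $\hat{F}^{Q}(L_{0},L_{1})=+\infty$ (a degenerate possibility); or an interval $(0,s^{\ast}]$ with $0<s^{\ast}<\infty$, in which case $\hat{F}^{Q}(L_{0},L_{1})=s^{\ast}$.

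From this trichotomy, (\ref{hatF}) is immediate: if $\hat{F}^{Q}(L_{0},L_{1})>0$ then $G\neq\emptyset$, so (setting aside the degenerate value $+\infty$) $G=(0,s^{\ast}]$ and $\hat{F}^{Q}(L_{0},L_{1})=s^{\ast}=\max\{\,s>0\,;\,(L_{0},L_{1})/s\in\mathcal{M}_{F^{Q}}\,\}=\max\{\,s>0\,;\,(L_{0},L_{1})\in s\,\mathcal{M}_{F^{Q}}\,\}$. For the ``in particular'' clause, when $L_{0},L_{1}>0$ Lemma~\ref{lem:recession} (applied with $M_{0}=L_{0}$, $M_{1}=L_{1}$) produces $t_{0}>0$ with $(tL_{0},tL_{1})\in\mathcal{M}_{F^{Q}}$ for all $t\geq t_{0}$, whence $(0,1/t_{0}]\subseteq G$ and $\hat{F}^{Q}(L_{0},L_{1})\geq 1/t_{0}>0$, so (\ref{hatF}) applies.

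For (\ref{hatF>1}): if $(L_{0},L_{1})\in\mathcal{M}_{F^{Q}}$ then $1\in G$, so by downward closedness $(0,1]\subseteq G$ and $\hat{F}^{Q}(L_{0},L_{1})\geq1$ (and automatically $(L_{0},L_{1})\in\mathcal{P}^{\times2}$ since $\mathcal{M}_{F^{Q}}\subseteq\mathcal{P}^{\times2}$). Conversely, if $\hat{F}^{Q}(L_{0},L_{1})\geq1$ then in particular $\hat{F}^{Q}(L_{0},L_{1})>0$, so $G\neq\emptyset$; by the trichotomy $G=(0,s^{\ast}]$ with $s^{\ast}\geq1$ or $G=(0,\infty)$, and in both cases $1\in G$, i.e.\ $(L_{0},L_{1})\in\mathcal{M}_{F^{Q}}$.

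Finally, for (\ref{hatF=0}) assume $F^{Q}$ is normalized and CPTP monotone and, say, $L_{0}$ has a unit null vector $\psi$ (the case of $L_{1}$ is identical); I will show $G=\emptyset$. First note that $\mathcal{M}_{F^{Q}}$ is stable under the pinching $\Phi_{\mathcal{C}}$ and under every unitary conjugation $X\mapsto UXU^{\dagger}$: both of these are adjoints of CPTP maps ($\Phi_{\mathcal{C}}^{\ast}=\Phi_{\mathcal{C}}$ and $(\mathrm{Ad}_{U^{\dagger}})^{\ast}(X)=UXU^{\dagger}$), so Proposition~\ref{prop:monotone} applies. Suppose $G\neq\emptyset$; then some $(L_{0}',L_{1}'):=(L_{0},L_{1})/s\in\mathcal{M}_{F^{Q}}$, and $L_{0}'\psi=0$. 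Choosing a unitary $U$ with $U\psi=|1\rangle$, we get $(UL_{0}'U^{\dagger},UL_{1}'U^{\dagger})\in\mathcal{M}_{F^{Q}}$, hence $\bigl(\Phi_{\mathcal{C}}(UL_{0}'U^{\dagger}),\Phi_{\mathcal{C}}(UL_{1}'U^{\dagger})\bigr)\in\mathcal{M}_{F^{Q}}\cap\mathcal{C}^{\times2}$; but $\langle1|\Phi_{\mathcal{C}}(UL_{0}'U^{\dagger})|1\rangle=\langle\psi|L_{0}'|\psi\rangle=0$, so its first component is not strictly positive, contradicting Proposition~\ref{prop:commute}. Therefore $G=\emptyset$ and $\hat{F}^{Q}(L_{0},L_{1})=0$. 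The bulk of the work is the first paragraph --- pinning down the exact shape of $G$ from closedness together with the star-shapedness (\ref{L+M-inv}); the rest is bookkeeping, the one genuine idea being the adapted pinching above, which converts a null eigenvalue of $L_{0}$ into a violation of the strict positivity forced by Proposition~\ref{prop:commute}.
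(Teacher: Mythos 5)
Your arguments for (\ref{hatF}), the ``in particular'' clause, and (\ref{hatF>1}) reproduce the paper's proof almost verbatim (the paper names your $G$ as $\mathcal{T}$ in the variable $t=1/s$, but the closedness-plus-(\ref{L+M-inv}) argument yielding a half-line is the same). The genuine divergence is (\ref{hatF=0}). The paper invokes Theorem~\ref{th:Fmin<FQ<F} and reads strict positivity off the SDP description of $\mathcal{M}_{F_{\min}}$; as printed it writes the inclusion $\mathcal{M}_{F_{\max}}\subset\mathcal{M}_{F^{Q}}$ and then concludes $(tL_{0},tL_{1})\notin\mathcal{M}_{F^{Q}}$ from $(tL_{0},tL_{1})\notin\mathcal{M}_{F_{\max}}$, which is the wrong direction --- the sound version of that argument should pass through $\mathcal{M}_{F^{Q}}\subset\mathcal{M}_{F_{\min}}$ together with (\ref{MFmin-2}). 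You instead argue internally: conjugate by a unitary carrying the null vector of $L_{0}'$ to a basis vector, apply the pinching $\Phi_{\mathcal{C}}$, and observe that the resulting commutative pair has a vanishing diagonal entry, contradicting the strict positivity $L_{0}>0$ in Proposition~\ref{prop:commute}. This is correct, self-contained (it does not use the sandwich theorem or the explicit form of $\mathcal{M}_{F_{\min}}$), and in fact cleaner than the paper's argument. One small caveat you already flag yourself: the $G=(0,\infty)$ degenerate case ($F^{Q}\equiv 0$ on $\mathcal{P}^{\times 2}$, $\mathcal{M}_{F^{Q}}=\mathcal{P}^{\times 2}$) makes the ``max'' in (\ref{hatF}) formally $+\infty$; the paper glosses over this too, so no fault lies with you there.
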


\begin{proof}
If $\left(  L_{0},L_{1}\right)  $ satisfies $\hat{F}^{Q}\left(  L_{0}%
,L_{1}\right)  >0$, the set
\[
\mathcal{T}:=\left\{  t;t\geq0,\,\left(  tL_{0},tL_{1}\right)  \in
\mathcal{M}_{F^{Q}}\right\}
\]
is not empty. Also, $\mathcal{T}$ is closed, since \ both $\mathcal{M}_{F^{Q}%
}$ and $\left\{  \left(  tL_{0},tL_{1}\right)  ;t\geq0\right\}  $ are closed.
By (\ref{L+M-inv}), if $t\in\mathcal{T}$, any $t^{\prime}\geq t$ is also an
element of the set $\mathcal{T}$. Therefore, there is $t_{0}>0$ such that
$t\in\mathcal{T}$ \ is equivalent to $t\geq t_{0}$. Therefore, we have
(\ref{hatF}). \ In particular, if $L_{0},L_{1}>0$, by Lemma\thinspace
\ref{lem:recession}, $\hat{F}^{Q}\left(  L_{0},L_{1}\right)  >0$.

Suppose $\hat{F}^{Q}\left(  L_{0},L_{1}\right)  \geq1\,$. Then, by definition,
$\left(  L_{0},L_{1}\right)  \in\mathcal{P}^{\times2}$. By (\ref{hatF}),
$\left(  L_{0},L_{1}\right)  \in\mathcal{M}_{F^{Q}}$. On the other hand,
suppose $\left(  L_{0},L_{1}\right)  \in\mathcal{M}_{F^{Q}}$. Then, by
(\ref{L+M-inv}),%
\[
\frac{1}{s}\left(  L_{0},L_{1}\right)  \in\mathcal{M}_{F^{Q}},\,0<\forall
s\leq1,
\]
which leads to $\hat{F}^{Q}\left(  L_{0},L_{1}\right)  \geq1\,$.

Suppose in addition $F^{Q}\in\mathcal{F}_{0}$ satisfies normalization and CPTP
monotonicity. Then, by Theorem\thinspace\ref{th:Fmin<FQ<F}, $\mathcal{M}%
_{F_{\max}}\subset\mathcal{M}_{F^{Q}}$. Therefore for any $L_{0}$ or $L_{1}$
with an eigenvalue $0$,
\[
\left(  tL_{0},tL_{1}\right)  \notin\mathcal{M}_{F_{\max}}\subset
\mathcal{M}_{F^{Q}},\forall t\geq0,
\]
which leads to (\ref{hatF=0}).
\end{proof}

\begin{proposition}
\label{prop:polar}For any $\left(  L_{0},L_{1}\right)  $,$\,\ \left(
X,Y\right)  \in$ $\mathcal{P}^{\times2}$, and $F^{Q}\in\mathcal{F}_{0}$,
\begin{equation}
\hat{F}^{Q}\left(  L_{0},L_{1}\right)  F^{Q}\left(  X,Y\right)  \leq
\mathrm{tr}\,L_{0}X+\mathrm{tr}\,L_{1}Y. \label{holder-ineq}%
\end{equation}
Also, for any $\left(  L_{0},L_{1}\right)  \in$ $\mathcal{P}^{\times2}$,
\begin{align}
\hat{F}^{Q}\left(  L_{0},L_{1}\right)   &  =\inf\left\{  \frac{1}{F^{Q}\left(
X,Y\right)  }\left(  \mathrm{tr}\,L_{0}X+\mathrm{tr}\,L_{1}Y\right)  ;\left(
X,Y\right)  \in\mathcal{P}^{\times2},F^{Q}\left(  X,Y\right)  >0\right\}
,\label{holder-1}\\
&  =\inf\left\{  \mathrm{tr}\,L_{0}X+\mathrm{tr}\,L_{1}Y;\left(  X,Y\right)
\in\mathcal{P}^{\times2},F^{Q}\left(  X,Y\right)  =1\right\}
\label{holder-1-1}\\
&  =\inf\left\{  \mathrm{tr}\,L_{0}X+\mathrm{tr}\,L_{1}Y;\left(  X,Y\right)
\in\mathcal{P}^{\times2},F^{Q}\left(  X,Y\right)  \geq1\right\}  ,
\label{holder-1-2}%
\end{align}
and \ $\hat{F}^{Q}$ is a member of $\mathcal{F}_{0}$. \ If in addition $F^{Q}$
is not identically $0$ on $\mathcal{P}^{\times2}$, $\hat{F}^{Q}$ also has that property.
\end{proposition}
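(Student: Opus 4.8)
The plan is to establish the chain of identities \eqref{holder-ineq}--\eqref{holder-1-2} first, and then deduce membership in $\mathcal{F}_0$ from the representation \eqref{holder-1-1} together with Lemma~\ref{recession}. For \eqref{holder-ineq}, fix $(L_0,L_1),(X,Y)\in\mathcal{P}^{\times2}$. If $\hat F^Q(L_0,L_1)=0$ the inequality is trivial since the right side is nonnegative (as $L_0,L_1,X,Y\geq0$). If $\hat F^Q(L_0,L_1)=:s_0>0$, then by Proposition~\ref{prop:hatF>1} we have $(L_0,L_1)\in s_0\,\mathcal{M}_{F^Q}$, i.e.\ $(s_0^{-1}L_0,s_0^{-1}L_1)\in\mathcal{M}_{F^Q}$, so by the defining relation \eqref{F-inf-linear} of $\mathcal{M}_{F^Q}$ in Lemma~\ref{recession},
\[
F^Q(X,Y)\leq \mathrm{tr}\,(s_0^{-1}L_0)X+\mathrm{tr}\,(s_0^{-1}L_1)Y=\frac{1}{s_0}\bigl(\mathrm{tr}\,L_0X+\mathrm{tr}\,L_1Y\bigr),
\]
which is exactly \eqref{holder-ineq}.

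Next I would prove the three infimum formulas. From \eqref{holder-ineq}, whenever $F^Q(X,Y)>0$ we get $\hat F^Q(L_0,L_1)\leq (\mathrm{tr}\,L_0X+\mathrm{tr}\,L_1Y)/F^Q(X,Y)$, so $\hat F^Q(L_0,L_1)$ is a lower bound for the right side of \eqref{holder-1}; by positive homogeneity of $F^Q$ the infimum over $F^Q(X,Y)>0$ equals the infimum over $F^Q(X,Y)=1$, giving that $\hat F^Q(L_0,L_1)$ bounds \eqref{holder-1-1} from below, and since $\{F^Q=1\}\subset\{F^Q\geq1\}$, also bounds \eqref{holder-1-2} from below. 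For the reverse inequality I need a witness $(X,Y)$ nearly achieving equality. When $\hat F^Q(L_0,L_1)=s_0>0$: by \eqref{hatF} $s_0=\max\{s:(L_0,L_1)\in s\mathcal{M}_{F^Q}\}$, so $(s_0^{-1}L_0,s_0^{-1}L_1)$ lies in $\mathcal{M}_{F^Q}$ but is "on its boundary" in the radial direction; by Lemma~\ref{recession} (with the roles reversed, or by a separating-hyperplane argument against the closed convex set $\mathcal{M}_{F^Q}$) there exists $(X,Y)\in\mathcal{P}^{\times2}$, $(X,Y)\neq(0,0)$, with $\mathrm{tr}\,(s_0^{-1}L_0)X+\mathrm{tr}\,(s_0^{-1}L_1)Y=\inf_{(\cdot)\in\mathcal{M}_{F^Q}}(\cdots)=F^Q(X,Y)$, hence $\mathrm{tr}\,L_0X+\mathrm{tr}\,L_1Y=s_0F^Q(X,Y)$; rescaling so that $F^Q(X,Y)=1$ shows the infimum in \eqref{holder-1-1} is $\leq s_0$. (One must separately argue $F^Q(X,Y)>0$ for this witness: if it were $0$, the radial maximality of $s_0$ would be violated, since then $(s_0^{-1}L_0+\epsilon L_0,\dots)$ considerations or \eqref{L+M-inv} would let us enlarge $s$.) When $\hat F^Q(L_0,L_1)=0$: I claim the infima in \eqref{holder-1-1}, \eqref{holder-1-2} are $0$ as well; indeed $(L_0,L_1)\notin s\mathcal{M}_{F^Q}$ for every $s>0$, which by a separation argument produces, for each $\epsilon>0$, a pair $(X,Y)\in\mathcal{P}^{\times2}$ with $\mathrm{tr}\,L_0X+\mathrm{tr}\,L_1Y<\epsilon$ but $F^Q(X,Y)\geq1$. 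The equality of \eqref{holder-1-1} and \eqref{holder-1-2} then follows because concavity plus positive homogeneity of $F^Q$ means that from any $(X,Y)$ with $F^Q(X,Y)\geq1$ we may scale down to $F^Q(X',Y')=1$ with $(X',Y')\leq(X,Y)$, not increasing $\mathrm{tr}\,L_0X+\mathrm{tr}\,L_1Y$.

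Finally, to show $\hat F^Q\in\mathcal{F}_0$: formula \eqref{holder-1-1} exhibits $\hat F^Q$ on $\mathcal{P}^{\times2}$ as an infimum of the linear functionals $(L_0,L_1)\mapsto\mathrm{tr}\,L_0X+\mathrm{tr}\,L_1Y$ indexed by the set $\{(X,Y)\in\mathcal{P}^{\times2}:F^Q(X,Y)=1\}$, and outside $\mathcal{P}^{\times2}$ it is $-\infty$ by definition; this is precisely the form \eqref{F-inf-linear} with $\mathcal{M}_{\hat F^Q}\supset\{(X,Y):F^Q(X,Y)\geq1\}$, whose recession behavior gives \eqref{L+M-inv} because adding positive operators to $L_0,L_1$ only increases each linear functional, and since every $(X,Y)\in\mathcal{P}^{\times2}$, $X\neq0$, $Y\neq0$ has $F^Q(X,Y)>0$ and hence (after scaling) contributes, one checks $0^+\mathcal{M}_{\hat F^Q}=\mathcal{P}^{\times2}$. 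Positivity of $\hat F^Q$ is immediate: for $(L_0,L_1)\in\mathcal{P}^{\times2}$ each $\mathrm{tr}\,L_0X+\mathrm{tr}\,L_1Y\geq0$. Properness/closedness follows since the infimum of a nonempty family of continuous affine functions is closed concave, and it is not identically $-\infty$ on $\mathcal{P}^{\times2}$ (it is $\geq0$ there) nor $+\infty$ (take any fixed $(X,Y)$ with $F^Q(X,Y)=1$). Thus by Lemma~\ref{recession}, $\hat F^Q\in\mathcal{F}_0$. For the last assertion, if $F^Q\not\equiv0$ on $\mathcal{P}^{\times2}$, pick $(X_0,Y_0)$ with $F^Q(X_0,Y_0)>0$; then for $(L_0,L_1)=(I,I)$ say, $\hat F^Q(I,I)=\inf\{\mathrm{tr}\,X+\mathrm{tr}\,Y:F^Q(X,Y)=1\}$, and since $F^Q$ is bounded on bounded subsets of $\mathcal{P}^{\times2}$ (it is finite-valued and concave hence continuous on the interior, and upper semicontinuous overall), the constraint $F^Q(X,Y)=1$ forces $\mathrm{tr}\,X+\mathrm{tr}\,Y$ to stay bounded away from $0$, so $\hat F^Q(I,I)>0$ and $\hat F^Q\not\equiv0$.

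The main obstacle I anticipate is the attainment/separation step in the reverse direction of \eqref{holder-1-1}: producing a genuine optimizer $(X,Y)$ with $F^Q(X,Y)>0$ for the linear functional on the boundary of $\mathcal{M}_{F^Q}$, and handling the degenerate case $\hat F^Q=0$ where no such optimizer exists and one only gets approximate witnesses. This requires care with the recession cones of $\mathcal{M}_{F^Q}$ (condition \eqref{L+M-inv}) and possibly invoking the same compactness-of-level-sets argument used in the unnamed lemma preceding Proposition~\ref{prop:monotone}; everything else is bookkeeping with positive homogeneity and concavity.
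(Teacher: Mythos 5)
Your proof contains a genuine gap at exactly the spot you flagged yourself, and the paper's own argument shows how to avoid it. To get \eqref{holder-1} you reverse the logical order: prove \eqref{holder-ineq} first, then try to recover \eqref{holder-1} by producing a near-optimal witness $(X,Y)$ via a supporting hyperplane at the boundary point $(s_0^{-1}L_0,s_0^{-1}L_1)$ of $\mathcal{M}_{F^Q}$. Two things go wrong there. First, the sub-argument you sketch for ruling out $F^Q(X,Y)=0$ (``radial maximality of $s_0$ would be violated'') does not obviously go through: when $L_0X=L_1Y=0$, having one degenerate supporting hyperplane does not let you push $(s_0+\epsilon)^{-1}(L_0,L_1)$ back into $\mathcal{M}_{F^Q}$, because $\mathcal{M}_{F^Q}$ is only upward-closed, and $s_0$ is already the maximum by Proposition~\ref{prop:hatF>1}. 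Second, the case $\hat F^Q(L_0,L_1)=0$ requires a separate approximate-witness construction you only gesture at.

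The paper sidesteps all of this by unwinding the definition of $\hat F^Q$ directly, using the explicit characterization of $\mathcal{M}_{F^Q}$ from Lemma~\ref{recession}: for $s>0$, $(L_0,L_1)\in s\,(\mathcal{M}_{F^Q})^c$ means $\tfrac1s(L_0,L_1)\notin\mathcal{M}_{F^Q}$, which by the definition of $\mathcal{M}_{F^Q}$ means there exists $(X,Y)\in\mathcal{P}^{\times2}$ with $\tfrac1s(\mathrm{tr}\,L_0X+\mathrm{tr}\,L_1Y)<F^Q(X,Y)$; since the left side is $\geq0$ for $(L_0,L_1)\in\mathcal{P}^{\times2}$, such an $(X,Y)$ automatically satisfies $F^Q(X,Y)>0$, and the set of admissible $s$ is exactly $\bigl(\inf_{(X,Y)}\tfrac{\mathrm{tr}\,L_0X+\mathrm{tr}\,L_1Y}{F^Q(X,Y)},\ \infty\bigr)$. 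This produces the witnesses you were missing without any separating hyperplane, attainment, or case split on $s_0$, and \eqref{holder-ineq} then follows as an immediate corollary of \eqref{holder-1} rather than the other way around. Your treatment of \eqref{holder-1-1}, \eqref{holder-1-2}, of the $\mathcal{F}_0$ membership via Lemma~\ref{lem:sublinear-2} (resp.\ Lemma~\ref{recession}), and of the final non-vanishing claim are all sound (the paper's non-vanishing argument is slightly slicker, using Proposition~\ref{prop:hatF>1} to conclude $\mathcal{M}_{F^Q}$ would be empty), so replacing your supporting-hyperplane step with the direct unwinding closes the gap.
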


\begin{proof}
First, we show (\ref{holder-1}). Observe
\begin{align*}
\hat{F}^{Q}\left(  L_{0},L_{1}\right)   &  =\inf\left\{  s\,;\,\left(
L_{0},L_{1}\right)  \in s\,\left(  \mathcal{M}_{F^{Q}}\right)  ^{c}%
,\,s>0\right\} \\
&  =\inf\left\{  s\,;\,\frac{1}{s}\left(  L_{0},L_{1}\right)  \in\left(
\mathcal{M}_{F^{Q}}\right)  ^{c},\,s>0\right\} \\
&  =\inf\left\{  s\,;\,\exists\left(  X,Y\right)  \in\mathcal{P}^{\times
2}\,,\,\frac{1}{s}\left(  \mathrm{tr}\,L_{0}X+\mathrm{tr}\,L_{1}Y\right)
<F^{Q}\left(  X,Y\right)  ,\,s>0\right\}  .
\end{align*}
Since $\frac{1}{s}\left(  \mathrm{tr}\,L_{0}X+\mathrm{tr}\,L_{1}Y\right)
\geq0$, $\frac{1}{s}\left(  \mathrm{tr}\,L_{0}X+\mathrm{tr}\,L_{1}Y\right)
<F^{Q}\left(  X,Y\right)  $ holds only if $F^{Q}\left(  X,Y\right)  $ is
positive. Therefore,
\[
\hat{F}^{Q}\left(  L_{0},L_{1}\right)  =\inf\left\{  s\,;\exists\left(
X,Y\right)  \in\mathcal{P}^{\times2}\,,\,\,s\,>\frac{1}{F^{Q}\left(
X,Y\right)  }\left(  \mathrm{tr}\,L_{0}X+\mathrm{tr}\,L_{1}Y\right)
,\,F^{Q}\left(  X,Y\right)  >0\right\}  ,
\]
which implies (\ref{holder-1}). (\ref{holder-ineq}) results from
(\ref{holder-1}).

Next, (\ref{holder-1}) is equivalent to
\[
\hat{F}^{Q}\left(  L_{0},L_{1}\right)  =\inf\left\{  \mathrm{tr}%
\,L_{0}X+\mathrm{tr}\,L_{1}Y;\left(  X,Y\right)  \in\mathcal{P}^{\times
2},F^{Q}\left(  X,Y\right)  \neq0,-\infty\right\}  ,
\]
which is equal to (\ref{holder-1-1}). Let $\mathcal{P}_{F^{Q}}$ be the set of
$\left(  X,Y\right)  \in\mathcal{P}^{\times2}$ such that $F^{Q}\left(
X,Y\right)  =1$. Then we obtain (\ref{holder-1-2}) as follows.
\begin{align*}
\hat{F}^{Q}\left(  L_{0},L_{1}\right)   &  =\inf\left\{  \mathrm{tr}%
\,L_{0}X+\mathrm{tr}\,L_{1}Y;\left(  X,Y\right)  \in\mathcal{P}_{F^{Q}%
}\right\} \\
&  =\inf\left\{  \mathrm{tr}\,L_{0}X+\mathrm{tr}\,L_{1}Y;\left(  X,Y\right)
\in\mathrm{conv}\,\mathcal{P}_{F^{Q}}\right\} \\
&  =\inf\left\{  \mathrm{tr}\,L_{0}X+\mathrm{tr}\,L_{1}Y;\left(  X,Y\right)
\in\mathcal{P}^{\times2},\,F^{Q}\left(  X,Y\right)  \geq1\right\}  ,
\end{align*}
where the last identity is due to concavity of $F^{Q}$.

By (\ref{holder-1-2}) and Lemma\thinspace\ref{lem:sublinear-2}, $\hat{F}^{Q}$
is closed, proper, concave, and positively homogenous. Since $\hat{F}%
^{Q}\left(  L_{0},L_{1}\right)  \geq0$ for all $\left(  L_{0},L_{1}\right)
\in\mathcal{P}^{\times2}$, $\ \hat{F}^{Q}$ is a member of $F^{Q}$.

Next, we show $\hat{F}^{Q}$ is not identically $0$ on $\mathcal{P}^{\times2}$.
If $\left(  L_{0},L_{1}\right)  \in\mathcal{M}_{F^{Q}}\,$, by
Proposition\thinspace\ref{prop:hatF>1}, $\hat{F}^{Q}\left(  L_{0}%
,L_{1}\right)  \geq1$. Therefore, if $\hat{F}^{Q}$ is identically $0$ on
$\mathcal{P}^{\times2}$, $\mathcal{M}_{F^{Q}}=\mathcal{M}_{F^{Q}}\cap$
$\mathcal{P}^{\times2}$ is empty. By \ref{lem:sublinear}, this contradicts
with the assumption that $F^{Q}\left(  X,Y\right)  $ is a member of
$\mathcal{F}_{0}$.
\end{proof}

\begin{theorem}
\label{th:hatF-F0}If $F^{Q}$ is a member of $\mathcal{F}_{0}$ and not
identically $0$ on $\mathcal{P}^{\times2}$, so is $\hat{F}^{Q}$. Also,%
\begin{align}
F^{Q}\left(  X,Y\right)   &  =\inf\left\{  \frac{1}{\hat{F}^{Q}\left(
L_{0},L_{1}\right)  }\left(  \mathrm{tr}\,L_{0}X+\mathrm{tr}\,L_{1}Y\right)
;\left(  L_{0},L_{1}\right)  \in\mathcal{P}^{\times2},\hat{F}^{Q}\left(
L_{0},L_{1}\right)  >0\right\} \label{holder-2}\\
&  =\inf\left\{  \mathrm{tr}\,L_{0}X+\mathrm{tr}\,L_{1}Y;\left(  L_{0}%
,L_{1}\right)  \in\mathcal{P}^{\times2},\hat{F}^{Q}\left(  L_{0},L_{1}\right)
=1\right\}  \label{holder-2-1}%
\end{align}

\end{theorem}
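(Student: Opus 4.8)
This is a biduality (polar-of-polar) statement, so the natural route is to apply Proposition~\ref{prop:polar} twice and then recognize that the double polar recovers $F^Q$. First I would observe that since $F^Q \in \mathcal{F}_0$ and is not identically $0$, Proposition~\ref{prop:polar} already gives that $\hat F^Q \in \mathcal{F}_0$ and is not identically $0$; applying the same proposition to $\hat F^Q$ in place of $F^Q$ then yields $\hat{\hat F}^Q \in \mathcal{F}_0$, together with the formulas
\[
\hat{\hat F}^Q(X,Y) = \inf\Bigl\{\mathrm{tr}\,L_0 X + \mathrm{tr}\,L_1 Y;\ (L_0,L_1)\in\mathcal{P}^{\times 2},\ \hat F^Q(L_0,L_1)\geq 1\Bigr\},
\]
which is precisely the right-hand side of~(\ref{holder-2-1}) (and, via the homogeneity argument already used in the proof of Proposition~\ref{prop:polar}, equals~(\ref{holder-2}) as well). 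So the theorem reduces to the single identity $F^Q = \hat{\hat F}^Q$.

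To prove $F^Q = \hat{\hat F}^Q$, I would work at the level of the associated convex sets via Lemma~\ref{recession}, which sets up a bijection between members of $\mathcal{F}_0$ and closed convex subsets $\mathcal{M}$ of $\mathcal{P}^{\times 2}$ satisfying~(\ref{L+M-inv}) with $0^+\mathcal{M} = \mathcal{P}^{\times 2}$. The key link is~(\ref{hatF>1}) from Proposition~\ref{prop:hatF>1}: $\hat F^Q(L_0,L_1)\geq 1 \Leftrightarrow (L_0,L_1)\in\mathcal{M}_{F^Q}$. Feeding this into the displayed formula for $\hat{\hat F}^Q$ gives
\[
\hat{\hat F}^Q(X,Y) = \inf\Bigl\{\mathrm{tr}\,L_0 X + \mathrm{tr}\,L_1 Y;\ (L_0,L_1)\in\mathcal{M}_{F^Q}\Bigr\} = F^Q(X,Y)
\]
for $(X,Y)\in\mathcal{P}^{\times 2}$, where the last equality is just~(\ref{F-inf-linear}). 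For $(X,Y)\notin\mathcal{P}^{\times 2}$ both sides equal $-\infty$ by the conventions built into the definitions, so the identity holds on all of $\mathcal{L}_{sa}^{\times 2}$.

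The main obstacle is making sure~(\ref{hatF>1}) is genuinely applicable in the form needed: Proposition~\ref{prop:hatF>1} was stated for $F^Q\in\mathcal{F}_0$, so I must invoke it with $\hat F^Q$ playing the role of $F^Q$, which is legitimate only because the first paragraph established $\hat F^Q\in\mathcal{F}_0$; that membership also guarantees $\hat F^Q$ is proper and that its convex set $\mathcal{M}_{\hat F^Q}$ is well-defined and nonempty, so the infimum defining $\hat{\hat F}^Q$ is not vacuous. A secondary point requiring a line of care is the equivalence of~(\ref{holder-2}) and~(\ref{holder-2-1}): this is the same scaling/concavity maneuver as in Proposition~\ref{prop:polar} (replace $(L_0,L_1)$ with $(L_0,L_1)/\hat F^Q(L_0,L_1)$ to pass from the level-set-$\geq 1$ form to the normalized form), so I would simply cite that argument rather than repeat it. Once these bookkeeping items are in place, the proof is essentially a two-line chain of cited equalities.
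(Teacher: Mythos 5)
Your proof is correct and follows essentially the same route as the paper's. Both proofs hinge on the same two facts: the equivalence $\hat F^Q(L_0,L_1)\geq 1 \Leftrightarrow (L_0,L_1)\in\mathcal{M}_{F^Q}$ from~(\ref{hatF>1}), and the defining representation~(\ref{F-inf-linear}); and both pass between the $\geq 1$, $=1$, and normalized forms by the same scaling/concavity argument from the proof of Proposition~\ref{prop:polar}. The paper writes the chain starting from $F^Q=\inf\{\cdot\,;\,\mathcal{M}_{F^Q}\}$ and rewrites the index set, whereas you package the same steps as the biduality $F^Q=\hat{\hat F}^Q$ and then read off the formulas from Proposition~\ref{prop:polar} applied to $\hat F^Q$; the computational content is identical. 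One small slip in your ``main obstacle'' paragraph: what you actually need with $\hat F^Q$ in the role of $F^Q$ is Proposition~\ref{prop:polar} (to obtain the level-set formula for $\hat{\hat F}^Q$), not~(\ref{hatF>1}); the latter you correctly apply to the original $F^Q$. This mislabeling does not affect the validity of the argument.
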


\begin{equation}
F^{Q}\left(  X,Y\right)  =\inf\left\{  s\,;\,\left(  X,Y\right)  \in
s\,\left(  \mathcal{M}_{\hat{F}^{Q}}\right)  ^{c},\,s>0\right\}  ,
\label{hathatF}%
\end{equation}
and
\begin{equation}
\,F^{Q}\left(  X,Y\right)  \geq1\Leftrightarrow\left(  X,Y\right)
\in\mathcal{M}_{\hat{F}^{Q}}. \label{F>1}%
\end{equation}

\begin{proof}
By (\ref{hatF>1}) and the definition of $\mathcal{M}_{F^{Q}}$, \
\begin{align*}
F^{Q}\left(  X,Y\right)   &  =\inf\left\{  \mathrm{tr}\,L_{0}X+\mathrm{tr}%
\,L_{1}Y\,;\,\left(  L_{0},L_{1}\right)  \in\mathcal{M}_{F^{Q}}\right\} \\
&  =\inf\left\{  \mathrm{tr}\,L_{0}X+\mathrm{tr}\,L_{1}Y\,;\,\hat{F}%
^{Q}\left(  L_{0},L_{1}\right)  \geq1\right\} \\
&  =\inf\left\{  \mathrm{tr}\,L_{0}X+\mathrm{tr}\,L_{1}Y\,;\,\hat{F}%
^{Q}\left(  L_{0},L_{1}\right)  =1\right\}  .
\end{align*}
The last end of this is equal to the RHS of (\ref{holder-2}), due to the
almost parallel reason as (\ref{holder-1-1}) equals (\ref{holder-1-2}).

(\ref{hathatF}) is derived from (\ref{holder-2}) in almost parallel manner as
the proof of (\ref{holder-1}). (\ref{F>1}) results from (\ref{holder-2-1}) and
concavity of $\hat{F}^{Q}\left(  L_{0},L_{1}\right)  $.
\end{proof}

\begin{corollary}
\label{cor:1-to-1} (\ref{holder-1}) establishes one-to-one map from
$\mathcal{F}_{0}$ to itself, whose inverse map is given by (\ref{holder-2}).
\end{corollary}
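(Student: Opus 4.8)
\textbf{Proof proposal for Corollary \ref{cor:1-to-1}.}
The plan is to package the two maps $F^Q \mapsto \hat F^Q$ (given by (\ref{holder-1}), or equivalently (\ref{holder-1-1})) and $G \mapsto \check G$ (given by the same formula, which by (\ref{holder-2}) recovers $F^Q$ from $\hat F^Q$) and to check that each is a well-defined self-map of $\mathcal{F}_0$ and that the two compositions are the identity. The heavy lifting has already been done: Proposition \ref{prop:polar} shows $\hat F^Q \in \mathcal{F}_0$ whenever $F^Q \in \mathcal{F}_0$, and Theorem \ref{th:hatF-F0} shows that applying the polar operation to $\hat F^Q$ returns $F^Q$, i.e. $\widehat{\widehat{F^Q}} = F^Q$ on $\mathcal{P}^{\times 2}$ (and both sides are $-\infty$ off $\mathcal{P}^{\times 2}$, so they agree as functionals on $\mathcal{L}_{sa}^{\times 2}$). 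So the corollary is essentially a bookkeeping statement that these two facts together say ``bijection, with the polar as its own inverse.''

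Concretely I would proceed in three short steps. First, observe that by Proposition \ref{prop:polar} the assignment $\Theta: F^Q \mapsto \hat F^Q$ maps $\mathcal{F}_0$ into $\mathcal{F}_0$; one should note that the formula defining $\hat F^Q$ on $\mathcal{P}^{\times 2}$ is exactly the right-hand side of (\ref{holder-1}), so ``the map given by (\ref{holder-1})'' is literally $\Theta$. Second, apply Theorem \ref{th:hatF-F0}: since $\hat F^Q$ is again a member of $\mathcal{F}_0$ (and, when $F^Q$ is not identically $0$, so is $\hat F^Q$), the theorem's identity (\ref{holder-2}) reads $F^Q(X,Y) = \inf\{\,\mathrm{tr}\,L_0 X + \mathrm{tr}\,L_1 Y \,;\, (L_0,L_1)\in\mathcal{P}^{\times 2},\ \hat F^Q(L_0,L_1)=1\,\}$, which is precisely $\Theta$ applied to $\hat F^Q$. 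Hence $\Theta(\Theta(F^Q)) = F^Q$, i.e. $\Theta \circ \Theta = \mathrm{id}_{\mathcal{F}_0}$. Third, conclude formally: an involution on a set is a bijection and is its own inverse, so $\Theta$ is one-to-one from $\mathcal{F}_0$ onto itself and its inverse is $\Theta$ itself, which is the map described by (\ref{holder-2}).

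One small wrinkle to address explicitly is the degenerate case $F^Q \equiv 0$ on $\mathcal{P}^{\times 2}$, which Proposition \ref{prop:polar} and Theorem \ref{th:hatF-F0} carved out with the hypothesis ``not identically $0$.'' For the identically-zero functional $F^Q_0$ one checks directly from the definition that $\hat F^Q_0$ is identically $+\infty$ on the interior of $\mathcal{P}^{\times 2}$ — or, more cleanly, one simply restricts the statement of the corollary to the subset $\mathcal{F}_0' \subset \mathcal{F}_0$ of functionals not identically zero, on which $\Theta$ is a genuine involution by the argument above; the zero functional is an isolated fixed-point-free degenerate case that can be excluded or handled separately. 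The only ``obstacle'' is thus purely cosmetic: making sure the formula quoted as (\ref{holder-1}) is read as a map $\mathcal{F}_0' \to \mathcal{F}_0'$ rather than as an identity relating a fixed $F^Q$ to a fixed $\hat F^Q$, and noting that (\ref{holder-2}) is literally the same formula applied to $\hat F^Q$; once that identification is made, the involution property and hence the corollary follow with no further computation.
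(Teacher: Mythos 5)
Your argument is correct and follows the paper's (very short) proof in spirit: the paper's entire proof is that $F^{Q}$ is recovered from $\hat{F}^{Q}$ via (\ref{holder-2}), hence the map is injective with an explicit inverse; you simply unpack this into the observation that (\ref{holder-1}) and (\ref{holder-2}) are the same polar formula, so Theorem \ref{th:hatF-F0} makes the assignment an involution on $\mathcal{F}_{0}$ and therefore a bijection whose inverse is itself. Your discussion of the zero functional is a legitimate and worthwhile sharpening that the paper glosses over: if $F^{Q}\equiv 0$ on $\mathcal{P}^{\times 2}$ then $\mathcal{M}_{F^{Q}}=\mathcal{P}^{\times 2}$ and the infimum in (\ref{holder-1-2}) runs over the empty set, so $\hat{F}^{Q}\equiv+\infty$ there, which is not proper and hence not a member of $\mathcal{F}_{0}$; both Proposition \ref{prop:polar} and the corollary therefore implicitly require the non-degeneracy hypothesis ``$F^{Q}$ not identically $0$ on $\mathcal{P}^{\times 2}$,'' exactly as you note, and restricting the involution to that subclass removes the wrinkle.
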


\begin{proof}
The mapping from $F^{Q}\in\mathcal{F}_{0}$ to $\hat{F}^{Q}\in\mathcal{F}_{0}$
is one-to-one since $F^{Q}$ is recovered from $\hat{F}^{Q}$ using
(\ref{holder-2}).
\end{proof}

\begin{proposition}
\label{prop:hatF-s-hom}$F^{Q}\in\mathcal{F}_{0}$ is strongly homogeneous if
and only if $\hat{F}^{Q}\in\mathcal{F}_{0}$ is strongly homogeneous.
\end{proposition}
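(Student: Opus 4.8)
The goal is to show that $F^{Q}\in\mathcal{F}_0$ is strongly homogeneous iff $\hat{F}^{Q}$ is. By symmetry (Corollary \ref{cor:1-to-1}, $\hat{\hat{F}}^{Q}=F^{Q}$), it suffices to prove one direction: if $F^{Q}$ is strongly homogeneous, then so is $\hat{F}^{Q}$.

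The plan is to use the variational formula (\ref{holder-1-1}):
\[
\hat{F}^{Q}\left(L_0,L_1\right)=\inf\left\{\mathrm{tr}\,L_0X+\mathrm{tr}\,L_1Y;\left(X,Y\right)\in\mathcal{P}^{\times2},F^{Q}\left(X,Y\right)=1\right\}.
\]
I want to compute $\hat{F}^{Q}(tL_0,\frac{1}{t}L_1)$ for $t>0$. First I would observe that strong homogeneity of $F^{Q}$, i.e. $F^{Q}(\lambda X,\mu Y)=\sqrt{\lambda\mu}F^{Q}(X,Y)$, means that the constraint set $\{(X,Y):F^{Q}(X,Y)=1\}$ is invariant under the rescaling $(X,Y)\mapsto(\lambda X,\lambda^{-1}Y)$ for any $\lambda>0$. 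Then in the infimum defining $\hat{F}^{Q}(tL_0,\frac{1}{t}L_1)$, I substitute $(X,Y)=(\lambda X',\lambda^{-1}Y')$ with the choice $\lambda=1/t$: the objective becomes $\mathrm{tr}\,(tL_0)(\frac{1}{t}X')+\mathrm{tr}\,(\frac{1}{t}L_1)(t Y')=\mathrm{tr}\,L_0X'+\mathrm{tr}\,L_1Y'$, while the constraint $F^{Q}(\lambda X',\lambda^{-1}Y')=1$ is the same as $F^{Q}(X',Y')=1$. Hence the infimum is unchanged, giving $\hat{F}^{Q}(tL_0,\frac{1}{t}L_1)=\hat{F}^{Q}(L_0,L_1)$, which is exactly the characterization of strong homogeneity of $\hat{F}^{Q}$ via Proposition \ref{prop:t-1/t}. (Alternatively, one can argue directly at the level of the sets $\mathcal{M}_{F^{Q}}$ and $\mathcal{M}_{\hat{F}^{Q}}$, using Proposition \ref{prop:t-1/t} together with (\ref{hatF>1}), which says $(L_0,L_1)\in\mathcal{M}_{F^{Q}}\Leftrightarrow\hat{F}^{Q}(L_0,L_1)\geq1$, and the analogous statement (\ref{F>1}) for $\mathcal{M}_{\hat{F}^{Q}}$.)

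Concretely, via Proposition \ref{prop:t-1/t}, strong homogeneity of $\hat{F}^{Q}$ is equivalent to: $(L_0,L_1)\in\mathcal{M}_{\hat{F}^{Q}}\Rightarrow(tL_0,\frac{1}{t}L_1)\in\mathcal{M}_{\hat{F}^{Q}}$ for all $t>0$. By (\ref{F>1}) applied to $\hat{F}^{Q}$ in place of $F^{Q}$ — more precisely by the defining relation $\mathcal{M}_{\hat{F}^{Q}}=\{(L_0,L_1):F^{Q}(L_0,L_1)\geq1$ is false$\}$... — I would instead use that $\mathcal{M}_{\hat{F}^{Q}}=\{(L_0,L_1)\in\mathcal{P}^{\times2}:\hat{\hat{F}}^{Q}(L_0,L_1)\geq1\}=\{(L_0,L_1):F^{Q}(L_0,L_1)\geq1\}$ by Corollary \ref{cor:1-to-1} and (\ref{F>1}). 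Then $(L_0,L_1)\in\mathcal{M}_{\hat{F}^{Q}}$ iff $F^{Q}(L_0,L_1)\geq1$, and strong homogeneity of $F^{Q}$ gives $F^{Q}(tL_0,\frac{1}{t}L_1)=F^{Q}(L_0,L_1)\geq1$, so $(tL_0,\frac{1}{t}L_1)\in\mathcal{M}_{\hat{F}^{Q}}$, as desired. The converse follows by applying this implication to $\hat{F}^{Q}$ and using $\hat{\hat{F}}^{Q}=F^{Q}$.

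I expect no serious obstacle here; the only point requiring a little care is the handling of the boundary, i.e. pairs $(L_0,L_1)$ where $L_0$ or $L_1$ has a zero eigenvalue, so that $tL_0$ stays in $\mathcal{P}^{\times2}$ but the scaling argument in the infimum formula must be checked to remain valid (the substitution $\lambda=1/t$ is a bijection of $\mathcal{P}^{\times2}$ onto itself, so this is fine), and the case where $F^{Q}\equiv0$ on $\mathcal{P}^{\times2}$, which is vacuous since then $\hat{F}^{Q}$ is also degenerate and strong homogeneity holds trivially on both sides. The cleanest writeup is the $\mathcal{M}$-set argument via Propositions \ref{prop:t-1/t} and the polar duality of Corollary \ref{cor:1-to-1} and (\ref{F>1}).
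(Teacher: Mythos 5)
Your proof is correct and is essentially the same argument as the paper's: both rest on Proposition \ref{prop:t-1/t} to transport the $(t,1/t)$-rescaling invariance between $\mathcal{M}_{F^{Q}}$ and the polar side, the only cosmetic difference being that the paper manipulates the defining infimum $\inf\{s;(L_0,L_1)\in s(\mathcal{M}_{F^{Q}})^{c}\}$ directly while you work with the equivalent variational formula (\ref{holder-1-1}), and you obtain the converse direction via double duality (Corollary \ref{cor:1-to-1}) rather than by the paper's "almost parallel" repetition. One small inaccuracy worth noting: in the degenerate case $F^{Q}\equiv 0$ on $\mathcal{P}^{\times2}$, the polar $\hat{F}^{Q}$ is actually $+\infty$ on all of $\mathcal{P}^{\times2}$ (so not in $\mathcal{F}_{0}$) rather than "also degenerate with strong homogeneity holding trivially"; the statement is vacuous there simply because its hypothesis $\hat{F}^{Q}\in\mathcal{F}_{0}$ fails, not because the equivalence holds trivially.
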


\begin{proof}
Suppose $F^{Q}\in\mathcal{F}_{0}$ is strongly homogeneous. With $t_{0}>0$, and
$t_{1}>0$,
\begin{align*}
\hat{F}^{Q}\left(  t_{0}L_{0},t_{1}L_{1}\right)   &  =\inf\left\{
s\,;\,\left(  t_{0}L_{0},t_{1}L_{1}\right)  \in s\,\left(  \mathcal{M}_{F^{Q}%
}\right)  ^{c},\,s>0\right\} \\
&  =\inf\left\{  s\,;\,\sqrt{t_{0}t_{1}}\left(  \sqrt{\frac{t_{0}}{t_{1}}%
}L_{0},\sqrt{\frac{t_{1}}{t_{0}}}L_{1}\right)  \in s\,\left(  \mathcal{M}%
_{F^{Q}}\right)  ^{c},\,s>0\right\} \\
&  =\inf\left\{  s\,;\,\sqrt{t_{0}t_{1}}\left(  L_{0},L_{1}\right)  \in
s\,\left(  \mathcal{M}_{F^{Q}}\right)  ^{c},\,s>0\right\} \\
&  =\,\sqrt{t_{0}t_{1}}\hat{F}^{Q}\left(  L_{0},L_{1}\right)  ,
\end{align*}
where the third identity is due to Proposition\thinspace\ref{prop:t-1/t}. The
opposite implication is proved in almost parallel manner.
\end{proof}

\begin{proposition}
\label{prop:hatF-monotone} $F^{Q}\in\mathcal{F}_{0}$ satisfies CPTP
monotonicity if and only if $\hat{F}^{Q}$ is monotone increasing by
application of any unital CP map $\Lambda^{\ast}$,
\[
\hat{F}^{Q}\left(  \Lambda^{\ast}\left(  L_{0}\right)  ,\Lambda^{\ast}\left(
L_{1}\right)  \right)  \geq\hat{F}^{Q}\left(  L_{0},L_{1}\right)  .
\]

\end{proposition}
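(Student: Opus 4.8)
The plan is to prove the equivalence by translating CPTP monotonicity of $F^Q$ and unital-CP monotonicity of $\hat F^Q$ into statements about the sets $\mathcal M_{F^Q}$ and $\mathcal M_{\hat F^Q}$, using the duality already established. Recall from Proposition~\ref{prop:monotone} that $F^Q$ is CPTP monotone if and only if $\mathcal M_{F^Q}$ is stable under $\Lambda^\ast$ for every CPTP map $\Lambda$; and recall from Proposition~\ref{prop:polar} together with (\ref{hatF>1}) that $\mathcal M_{\hat F^Q}$ is characterized via the Hölder-type inequality (\ref{holder-ineq}), i.e. $(X,Y)\in\mathcal M_{\hat F^Q}$ iff $F^Q(X,Y)\ge 1$ (this is (\ref{F>1})). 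The idea is that the polar operation swaps the roles of $(X,Y)$ and $(L_0,L_1)$, and since the adjoint of a CPTP map is a unital CP map (and vice versa), the two monotonicity conditions are exactly dual to each other.

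Concretely, first I would assume $F^Q$ is CPTP monotone and show $\hat F^Q(\Lambda^\ast(L_0),\Lambda^\ast(L_1))\ge\hat F^Q(L_0,L_1)$ for every unital CP map $\Lambda^\ast$. Using the characterization (\ref{holder-1-2}), $\hat F^Q(\Lambda^\ast(L_0),\Lambda^\ast(L_1))=\inf\{\operatorname{tr}\Lambda^\ast(L_0)X+\operatorname{tr}\Lambda^\ast(L_1)Y\,;\,(X,Y)\in\mathcal P^{\times2},\,F^Q(X,Y)\ge1\}$. By the adjoint relation $\operatorname{tr}\Lambda^\ast(L_i)Z=\operatorname{tr}L_i\,\Lambda(Z)$, this equals $\inf\{\operatorname{tr}L_0\Lambda(X)+\operatorname{tr}L_1\Lambda(Y)\,;\,F^Q(X,Y)\ge1\}$. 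Now $\Lambda$ is CPTP, so by CPTP monotonicity $F^Q(\Lambda(X),\Lambda(Y))\ge F^Q(X,Y)\ge1$; hence the pair $(\Lambda(X),\Lambda(Y))$ is feasible in the infimum defining $\hat F^Q(L_0,L_1)$ via (\ref{holder-1-2}). Since we are ranging over a subset of the feasible pairs for $\hat F^Q(L_0,L_1)$ (those of the form $(\Lambda(X),\Lambda(Y))$), the infimum can only be larger, giving $\hat F^Q(\Lambda^\ast(L_0),\Lambda^\ast(L_1))\ge\hat F^Q(L_0,L_1)$.

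For the converse, I would assume $\hat F^Q$ is monotone increasing under all unital CP maps and deduce CPTP monotonicity of $F^Q$. The cleanest route is to invoke Corollary~\ref{cor:1-to-1}: the polar map is an involution on $\mathcal F_0$ (the inverse being given by (\ref{holder-2})), and $\widehat{\hat F^Q}=F^Q$. Since the adjoint of a unital CP map is a CPTP map and conversely, "monotone increasing of $\hat F^Q$ under all unital CP maps" is formally the same type of statement applied to $\hat F^Q$ as "CPTP monotone" is for $F^Q$; applying the forward direction just proved to $\hat F^Q$ in place of $F^Q$ — noting that $\widehat{\hat F^Q}=F^Q$ — yields that $F^Q$ is monotone increasing under adjoints of CPTP maps, and since every CPTP map is itself the adjoint of a unital CP map, this is exactly CPTP monotonicity of $F^Q$. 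Alternatively one can argue directly with the set characterizations: unital-CP monotonicity of $\hat F^Q$ forces $\mathcal M_{F^Q}$ (described by (\ref{hatF>1}) as the $1$-superlevel set region of $\hat F^Q$, dually) to be $\Lambda^\ast$-stable, then invoke Proposition~\ref{prop:monotone}.

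The main obstacle I anticipate is bookkeeping about the degenerate cases — when $F^Q(X,Y)=0$ or when $\hat F^Q$ takes the value $0$ on part of $\mathcal P^{\times2}$ — since the Hölder-type representations (\ref{holder-1})–(\ref{holder-1-2}) were established under positivity/nontriviality hypotheses, and the infimum over "$F^Q\ge1$" could a priori be empty or behave badly if $F^Q$ is small. I would handle this by first disposing of the trivial case where $F^Q\equiv0$ on $\mathcal P^{\times2}$ (then CPTP monotonicity is automatic and $\hat F^Q\equiv0$ too, by Proposition~\ref{prop:polar}, so unital-CP monotonicity is also trivial), and otherwise use that $\hat F^Q\in\mathcal F_0$ is genuinely nontrivial so that the feasible sets in (\ref{holder-1-2}) are nonempty and the argument above goes through verbatim; the remaining care is just that $\Lambda(X),\Lambda(Y)$ stay in $\mathcal P^{\times2}$, which holds since $\Lambda$ is positive.
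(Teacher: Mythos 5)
Your argument is correct, and it takes a slightly different route from the paper. For the forward direction, the paper works at the level of the set $\mathcal{M}_{F^Q}$: it uses (\ref{hatF}) together with Proposition~\ref{prop:monotone} (which converts CPTP monotonicity of $F^Q$ into $\Lambda^\ast$-invariance of $\mathcal{M}_{F^Q}$), and reads off that the scaling factor $s$ is preserved. You instead work directly from the variational formula (\ref{holder-1-2}) and use CPTP monotonicity of $F^Q$ itself, observing that $(\Lambda(X),\Lambda(Y))$ remains feasible and that $\mathrm{tr}\,\Lambda^\ast(L_i)Z=\mathrm{tr}\,L_i\Lambda(Z)$. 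That is the same duality expressed analytically rather than set-theoretically, and it is valid. For the converse, note that your first route is stated too loosely: the forward direction you proved has hypothesis "$G$ is CPTP monotone", so applying it to $G=\hat F^Q$ would require $\hat F^Q$ to be CPTP monotone, which is not what is given; what you actually need is the mirror statement ("$G$ monotone under unital CP maps implies $\hat G$ monotone under CPTP maps"), which is proved by the identical pattern but is not literally the forward direction. Your second, alternative route — unital-CP monotonicity of $\hat F^Q$ plus (\ref{hatF>1}) gives $\Lambda^\ast$-stability of $\mathcal{M}_{F^Q}$, then Proposition~\ref{prop:monotone} — is clean, needs no such caveat, and is essentially what the paper's terse "the opposite implication is proved in almost parallel manner" amounts to. One small inaccuracy in your degenerate-case discussion: if $F^Q\equiv0$ on $\mathcal{P}^{\times2}$ then $\mathcal{M}_{F^Q}=\mathcal{P}^{\times2}$ and hence $\hat F^Q\equiv\infty$ (not $\equiv 0$) on $\mathcal{P}^{\times2}$; the equivalence is still trivially true in that case, so the conclusion is unaffected, but Proposition~\ref{prop:polar} does not assert what you attributed to it.
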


\begin{proof}
Suppose $F^{Q}\in\mathcal{F}_{0}$ satisfies CPTP monotonicity. If $\hat{F}%
^{Q}\left(  L_{0},L_{1}\right)  =0$, the assertion is trivial. Thus, we
suppose $\hat{F}^{Q}\left(  L_{0},L_{1}\right)  >0$ and use (\ref{hatF}). If
\ $\left(  L_{0},L_{1}\right)  \in s\,\mathcal{M}_{F^{Q}}$, \ by
Proposition\thinspace\ref{prop:monotone}, \thinspace we have%
\begin{align*}
\left(  \Lambda^{\ast}\left(  L_{0}\right)  ,\Lambda^{\ast}\left(
L_{1}\right)  \right)   &  \in\Lambda^{\ast}\left(  s\,\mathcal{M}_{F^{Q}%
}\right)  \\
&  =s\Lambda^{\ast}\left(  \,\mathcal{M}_{F^{Q}}\right)  \subset
s\,\mathcal{M}_{F^{Q}},
\end{align*}
which implies the assertion. The opposite implication is proved in almost
parallel manner.
\end{proof}

\begin{proposition}
\label{prp:direct-min} $F^{Q}\in\mathcal{F}_{0}$ satisfies CPTP monotonicity
and additivity if and only if $\hat{F}^{Q}\in\mathcal{F}_{0}$ is monotone
increasing by any unital CP map and satisfies%
\begin{equation}
\hat{F}^{Q}\left(  L_{0}^{\left(  1\right)  }\oplus L_{0}^{\left(  2\right)
},L_{1}^{\left(  1\right)  }\oplus L_{1}^{\left(  2\right)  }\right)
=\min\left\{  \hat{F}^{Q}\left(  L_{0}^{\left(  1\right)  },L_{1}^{\left(
1\right)  }\right)  ,\hat{F}^{Q}\left(  L_{0}^{\left(  2\right)  }%
,L_{1}^{\left(  2\right)  }\right)  \right\}  . \label{direct-min}%
\end{equation}

\end{proposition}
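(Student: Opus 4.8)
The plan is to peel off the monotonicity half with Proposition~\ref{prop:hatF-monotone}, which already equates CPTP monotonicity of $F^{Q}$ with unital-CP monotonicity of $\hat{F}^{Q}$; the remaining task is to prove that, for $F^{Q}\in\mathcal{F}_{0}$ already known to be CPTP monotone, additivity of $F^{Q}$ is equivalent to (\ref{direct-min}). I would funnel both implications through an intermediate property of $\mathcal{M}_{F^{Q}}$, call it \emph{stability under orthogonal direct sums}: that $(L_{0}^{(1)}\oplus L_{0}^{(2)},\,L_{1}^{(1)}\oplus L_{1}^{(2)})\in\mathcal{M}_{F^{Q}}$ holds if and only if $(L_{0}^{(j)},L_{1}^{(j)})\in\mathcal{M}_{F^{Q}}$ for both $j=1,2$. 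Along the way I would record the trivial identity $F^{Q}(0,0)=0$, which holds because $F^{Q}$ is finite on $\mathcal{P}^{\times2}$ (Lemma~\ref{recession}) while positive homogeneity forces $F^{Q}(0,0)=2F^{Q}(0,0)$.

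The equivalence between direct-sum stability of $\mathcal{M}_{F^{Q}}$ and (\ref{direct-min}) is the purely formal part and needs no monotonicity. By (\ref{hatF>1}) and positive homogeneity of $\hat{F}^{Q}$, for every $s>0$ the relation $\hat{F}^{Q}(L_{0},L_{1})\ge s$ is equivalent to $(L_{0},L_{1})\in s\,\mathcal{M}_{F^{Q}}$. Applying this to the direct sum and to each summand, and using that $\min_{j}\hat{F}^{Q}(L_{0}^{(j)},L_{1}^{(j)})\ge s$ iff both summands clear the threshold $s$, direct-sum stability of $\mathcal{M}_{F^{Q}}$ is precisely the statement that, for every $s>0$, $\hat{F}^{Q}$ of the direct sum is $\ge s$ exactly when $\min_{j}\hat{F}^{Q}(L_{0}^{(j)},L_{1}^{(j)})\ge s$; holding for all $s>0$, this forces the two quantities to coincide, i.e.\ (\ref{direct-min}). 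Conversely, assuming (\ref{direct-min}), evaluating at threshold $1$ and translating back through (\ref{hatF>1}) recovers direct-sum stability. The degenerate cases (the minimum equal to $0$, or some summand outside $\mathcal{P}^{\times2}$, where both sides of (\ref{direct-min}) equal $-\infty$) are dispatched by inspection.

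For ``$F^{Q}$ additive $\Leftrightarrow$ $\mathcal{M}_{F^{Q}}$ direct-sum stable'' I would use the pinching channel $\Phi_{\oplus}$ onto the $\mathbb{C}^{k_{1}}\oplus\mathbb{C}^{k_{2}}$ block decomposition, which is CPTP, unital, and self-dual. Assume $F^{Q}$ additive. If $(L_{0}^{(j)},L_{1}^{(j)})\in\mathcal{M}_{F^{Q}}$ for $j=1,2$, then for any $(X,Y)\in\mathcal{P}^{\times2}$ with diagonal blocks $X_{jj},Y_{jj}$,
\[ \mathrm{tr}\,(L_{0}^{(1)}\oplus L_{0}^{(2)})X+\mathrm{tr}\,(L_{1}^{(1)}\oplus L_{1}^{(2)})Y=\sum_{j}\bigl(\mathrm{tr}\,L_{0}^{(j)}X_{jj}+\mathrm{tr}\,L_{1}^{(j)}Y_{jj}\bigr)\ge\sum_{j}F^{Q}(X_{jj},Y_{jj})=F^{Q}\bigl(\Phi_{\oplus}(X),\Phi_{\oplus}(Y)\bigr)\ge F^{Q}(X,Y), \]
by the defining inequality of $\mathcal{M}_{F^{Q}}$, then additivity, then CPTP monotonicity; hence $(L_{0}^{(1)}\oplus L_{0}^{(2)},L_{1}^{(1)}\oplus L_{1}^{(2)})\in\mathcal{M}_{F^{Q}}$. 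Conversely, testing a membership $(L_{0}^{(1)}\oplus L_{0}^{(2)},L_{1}^{(1)}\oplus L_{1}^{(2)})\in\mathcal{M}_{F^{Q}}$ against inputs $(X_{1}\oplus0,\,Y_{1}\oplus0)$ and invoking additivity together with $F^{Q}(0,0)=0$ forces $(L_{0}^{(1)},L_{1}^{(1)})\in\mathcal{M}_{F^{Q}}$, and symmetrically for the second block. For the reverse implication, I would start from $F^{Q}(X,Y)=\inf\{\mathrm{tr}\,L_{0}X+\mathrm{tr}\,L_{1}Y:(L_{0},L_{1})\in\mathcal{M}_{F^{Q}}\}$ (Lemma~\ref{recession}) with $X=X_{1}\oplus X_{2}$, $Y=Y_{1}\oplus Y_{2}$: the objective depends on $(L_{0},L_{1})$ only through $(\Phi_{\oplus}(L_{0}),\Phi_{\oplus}(L_{1}))$, and CPTP monotonicity gives $\Phi_{\oplus}(\mathcal{M}_{F^{Q}})\subset\mathcal{M}_{F^{Q}}$ (Proposition~\ref{prop:monotone}, using $\Lambda=\Lambda^{\ast}=\Phi_{\oplus}$), so the infimum restricts to block-diagonal members of $\mathcal{M}_{F^{Q}}$; direct-sum stability identifies these with independent pairs drawn from $\mathcal{M}_{F^{Q}}$ on the two blocks, and the infimum splits as $F^{Q}(X_{1},Y_{1})+F^{Q}(X_{2},Y_{2})$.

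The point needing the most care is this last block-diagonal reduction: one must verify that replacing $(L_{0},L_{1})$ by $(\Phi_{\oplus}(L_{0}),\Phi_{\oplus}(L_{1}))$ leaves $\mathrm{tr}\,L_{0}X+\mathrm{tr}\,L_{1}Y$ unchanged -- which is exactly where self-duality of $\Phi_{\oplus}$ and the block-diagonality of $X$ and $Y$ are used -- and that Proposition~\ref{prop:monotone} is applied with the pinching in the correct (Heisenberg) picture. Everything else is a direct assembly of Lemma~\ref{recession}, (\ref{hatF>1}), Proposition~\ref{prop:monotone}, and Proposition~\ref{prop:hatF-monotone}.
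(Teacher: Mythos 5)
Your proof is correct. The backward implication (``$\hat{F}^{Q}$ unital-CP monotone $+$ (\ref{direct-min}) $\Rightarrow$ $F^{Q}$ CPTP monotone $+$ additive'') is essentially the paper's: both pass from (\ref{direct-min}) at threshold $1$ via (\ref{hatF>1}) to a direct-sum decomposition of the block-diagonal slice of $\mathcal{M}_{F^{Q}}$, then split the infimum in (\ref{F-inf-linear}) after restricting it by the pinching. Your forward implication, however, takes a genuinely different route. The paper argues entirely inside the dual polytope $\mathcal{M}_{\hat{F}^{Q}}$ (the level set $\{F^{Q}\ge 1\}$): it uses additivity and concavity to rewrite the condition $(X_{1}\oplus X_{2},Y_{1}\oplus Y_{2})\in\mathcal{M}_{\hat{F}^{Q}}$ as a convex-combination criterion with an auxiliary $\lambda\in[0,1]$, then computes $\hat{F}^{Q}$ of the direct sum as a double infimum and evaluates $\inf_{\lambda}\{\lambda a+(1-\lambda)b\}=\min\{a,b\}$. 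You instead prove \emph{direct-sum stability of $\mathcal{M}_{F^{Q}}$} from additivity (both inclusions of the iff, the trickier one handled by testing against $(X_{1}\oplus 0,Y_{1}\oplus 0)$ and using $F^{Q}(0,0)=0$), and then pass to (\ref{direct-min}) by the purely formal observation that, thanks to (\ref{hatF>1}) and positive homogeneity, the level sets $\{\hat{F}^{Q}\ge s\}$ are the dilates $s\,\mathcal{M}_{F^{Q}}$ for all $s>0$. This makes your argument more modular and symmetric: a single property of $\mathcal{M}_{F^{Q}}$ mediates both ``additivity of $F^{Q}$'' and ``min-formula for $\hat{F}^{Q}$,'' and you avoid introducing the auxiliary convex-combination parameter $\lambda$ and the double infimum that the paper needs. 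The paper's version, in exchange, exhibits the underlying concavity/additivity interplay in $\mathcal{M}_{\hat{F}^{Q}}$ more explicitly. One small point you correctly but tacitly use: direct-sum stability of $\mathcal{M}_{F^{Q}}$ at scale $1$ automatically propagates to every dilate $s\,\mathcal{M}_{F^{Q}}$ because the direct-sum map commutes with scalar multiplication; worth a sentence, since that is precisely what lets the single-threshold characterization deliver equality of the two nonnegative numbers in (\ref{direct-min}).
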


\begin{proof}
Suppose $F^{Q}\in\mathcal{F}_{0}$ is CPTP monotone and additive. Then by
Proposition\thinspace\ref{prop:hatF-monotone}, $\hat{F}^{Q}$ is monotone
increasing by any unital CP map. $\mathcal{M}_{\hat{F}^{Q}}$ is invariant by
any unital CP map. Therefore, for any $\left(  X,Y\right)  \in\mathcal{M}%
_{\hat{F}^{Q}}$ with
\begin{equation}
X=\left[
\begin{array}
[c]{cc}%
X_{1} & \ast\\
\ast & X_{2}%
\end{array}
\right]  ,\,Y=\left[
\begin{array}
[c]{cc}%
Y_{1} & \ast\\
\ast & Y_{2}%
\end{array}
\right]  , \label{XY}%
\end{equation}
we have $\left(  X_{1}\oplus X_{2},Y_{1}\oplus Y_{2}\right)  \in
\mathcal{M}_{\hat{F}^{Q}}$ (Consider the pinching operation that maps $\left(
X,Y\right)  $ to $\left(  X_{1}\oplus X_{2},Y_{1}\oplus Y_{2}\right)  $).
Therefore,
\begin{align*}
&  \hat{F}^{Q}\left(  L_{0}^{\left(  1\right)  }\oplus L_{0}^{\left(
2\right)  },L_{1}^{\left(  1\right)  }\oplus L_{1}^{\left(  2\right)  }\right)
\\
&  =\inf_{\left(  X,Y\right)  \in\mathcal{M}_{\hat{F}^{Q}}}\mathrm{tr}%
\,\left(  L_{0}^{\left(  1\right)  }\oplus L_{0}^{\left(  2\right)  }\right)
X+\mathrm{tr}\,\left(  L_{1}^{\left(  1\right)  }\oplus L_{1}^{\left(
2\right)  }\right)  Y\\
&  =\inf_{\left(  X,Y\right)  \in\mathcal{M}_{\hat{F}^{Q}}}\mathrm{tr}%
\,L_{0}^{\left(  1\right)  }X_{1}+\mathrm{tr}\,L_{0}^{\left(  2\right)  }%
X_{2}+\mathrm{tr}\,L_{1}^{\left(  1\right)  }Y_{1}+\mathrm{tr}\,L_{1}^{\left(
2\right)  }Y_{2}.\\
&  =\inf_{\left(  X_{1}\oplus X_{2},Y_{1}\oplus Y_{2}\right)  \in
\mathcal{M}_{\hat{F}^{Q}}}\mathrm{tr}\,L_{0}^{\left(  1\right)  }%
X_{1}+\mathrm{tr}\,L_{0}^{\left(  2\right)  }X_{2}+\mathrm{tr}\,L_{1}^{\left(
1\right)  }Y_{1}+\mathrm{tr}\,L_{1}^{\left(  2\right)  }Y_{2}%
\end{align*}

Also, by (\ref{hatF>1}),
\begin{align*}
&  \left(  X_{1}\oplus X_{2},Y_{1}\oplus Y_{2}\right)  \in\mathcal{M}_{\hat
{F}^{Q}}\\
&  \Leftrightarrow F^{Q}\left(  X_{1}\oplus X_{2},Y_{1}\oplus Y_{2}\right)
=F^{Q}\left(  X_{1},Y_{1}\right)  +F^{Q}\left(  X_{2},Y_{2}\right)  \geq1,\\
&  \Leftrightarrow F^{Q}\left(  X_{1},Y_{1}\right)  \geq\lambda,\,F^{Q}\left(
X_{2},Y_{2}\right)  \geq1-\lambda,\,0\leq\exists\lambda\leq1.\\
&  \Leftrightarrow\left(  X_{1},Y_{1}\right)  =\lambda\left(  \tilde{X}%
_{1},\tilde{Y}_{1}\right)  ,\,\,\left(  X_{2},Y_{2}\right)  =\left(
1-\lambda\right)  \left(  \tilde{X}_{2},\tilde{Y}_{2}\right)  ,\\
&  \,\,\exists\left(  \tilde{X}_{1},\tilde{Y}_{1}\right)  ,\left(  \tilde
{X}_{2},\tilde{Y}_{2}\right)  \in\mathcal{M}_{\hat{F}^{Q}},\,0\leq
\exists\lambda\leq1.
\end{align*}
Therefore,
\begin{align*}
&  \hat{F}^{Q}\left(  L_{0}^{\left(  1\right)  }\oplus L_{0}^{\left(
2\right)  },L_{1}^{\left(  1\right)  }\oplus L_{1}^{\left(  2\right)  }\right)
\\
&  =\inf_{0\leq\lambda\leq1}\inf_{\left(  \tilde{X}_{1},\tilde{Y}_{1}\right)
\in\mathcal{M}_{\hat{F}^{Q}}}\inf_{\left(  \tilde{X}_{2},\tilde{Y}_{2}\right)
\in\mathcal{M}_{\hat{F}^{Q}}}\lambda\left(  \mathrm{tr}\,L_{0}^{\left(
1\right)  }\tilde{X}_{1}+\mathrm{tr}L_{1}^{\left(  1\right)  }\tilde{Y}%
_{1}\right)  +\left(  1-\lambda\right)  \left(  \mathrm{tr}\,L_{0}^{\left(
2\right)  }\tilde{X}_{2}+\mathrm{tr}\,L_{1}^{\left(  2\right)  }\tilde{Y}%
_{2}\right) \\
&  =\inf_{0\leq\lambda\leq1}\left\{  \lambda\hat{F}^{Q}\left(  L_{0}^{\left(
1\right)  },L_{1}^{\left(  1\right)  }\right)  +\left(  1-\lambda\right)
\hat{F}^{Q}\left(  L_{0}^{\left(  2\right)  },L_{1}^{\left(  2\right)
}\right)  \right\} \\
&  =\min\left\{  \hat{F}^{Q}\left(  L_{0}^{\left(  1\right)  },L_{1}^{\left(
1\right)  }\right)  ,\hat{F}^{Q}\left(  L_{0}^{\left(  2\right)  }%
,L_{1}^{\left(  2\right)  }\right)  \right\}  .
\end{align*}

Conversely, suppose $\hat{F}^{Q}\in\mathcal{F}_{0}$ is monotone increasing by
any unital CP map and satisfies (\ref{direct-min}). Then by
Proposition\thinspace\ref{prop:hatF-monotone}, $F^{Q}\in\mathcal{F}_{0}$ is
CPTP monotone. Therefore, $\left(  L_{0}^{\left(  1\right)  }\oplus
L_{0}^{\left(  2\right)  },L_{1}^{\left(  1\right)  }\oplus L_{1}^{\left(
2\right)  }\right)  $ is a member of $\mathcal{M}_{F^{Q}}$ if and only if
there exists $\left(  L_{0},L_{1}\right)  \in\mathcal{M}_{\hat{F}^{Q}}$ such
that
\[
L_{\theta}=\left[
\begin{array}
[c]{cc}%
L_{\theta}^{\left(  1\right)  } & \ast\\
\ast & L_{\theta}^{\left(  2\right)  }%
\end{array}
\right]  ,\theta\in\left\{  0,1\right\}  .\,
\]
Also, by (\ref{hatF>1}),
\begin{align*}
&  \left(  L_{0}^{\left(  1\right)  }\oplus L_{0}^{\left(  2\right)  }%
,L_{1}^{\left(  1\right)  }\oplus L_{1}^{\left(  2\right)  }\right)
\in\mathcal{M}_{F^{Q}}\\
&  \Leftrightarrow\hat{F}^{Q}\left(  L_{0}^{\left(  1\right)  }\oplus
L_{0}^{\left(  2\right)  },L_{1}^{\left(  1\right)  }\oplus L_{1}^{\left(
2\right)  }\right)  =\min\left\{  \hat{F}^{Q}\left(  L_{0}^{\left(  1\right)
},L_{1}^{\left(  1\right)  }\right)  ,\hat{F}^{Q}\left(  L_{0}^{\left(
2\right)  },L_{1}^{\left(  2\right)  }\right)  \right\}  \geq1\\
&  \Leftrightarrow\hat{F}^{Q}\left(  L_{0}^{\left(  1\right)  },L_{1}^{\left(
1\right)  }\right)  \geq1\text{ and }\hat{F}^{Q}\left(  L_{0}^{\left(
2\right)  },L_{1}^{\left(  2\right)  }\right)  \geq1\\
&  \Leftrightarrow\left(  L_{0}^{\left(  1\right)  },L_{1}^{\left(  1\right)
}\right)  ,\left(  L_{0}^{\left(  2\right)  },L_{1}^{\left(  2\right)
}\right)  \in\mathcal{M}_{F^{Q}}.
\end{align*}
Therefore,
\begin{align*}
&  F^{Q}\left(  X_{1}\oplus X_{2},Y_{1}\oplus Y_{2}\right) \\
&  =\inf_{\left(  L_{0},L_{1}\right)  \in\mathcal{M}_{F^{Q}}}\mathrm{tr}%
\,L_{0}\left(  X_{1}\oplus X_{2}\right)  +\mathrm{tr}\,L_{1}\left(
Y_{1}\oplus Y_{2}\right) \\
&  =\inf_{\left(  L_{0}^{\left(  1\right)  }\oplus L_{0}^{\left(  2\right)
},L_{1}^{\left(  1\right)  }\oplus L_{1}^{\left(  2\right)  }\right)
\in\mathcal{M}_{F^{Q}}}\mathrm{tr}\,L_{0}^{\left(  1\right)  }X_{1}%
+\mathrm{tr}\,L_{0}^{\left(  2\right)  }X_{2}+\mathrm{tr}\,L_{1}^{\left(
1\right)  }Y_{1}+\mathrm{tr}\,L_{1}^{\left(  2\right)  }Y_{2}\\
&  =\inf_{\left(  L_{1}^{\left(  1\right)  },L_{1}^{\left(  2\right)
}\right)  ,\left(  L_{1}^{\left(  1\right)  },L_{1}^{\left(  2\right)
}\right)  \in\mathcal{M}_{F^{Q}}}\left(  \mathrm{tr}\,L_{0}^{\left(  1\right)
}X_{1}+\mathrm{tr}\,L_{1}^{\left(  1\right)  }Y_{1}\right)  +\left(
\mathrm{tr}\,L_{0}^{\left(  2\right)  }X_{2}+\mathrm{tr}\,L_{1}^{\left(
2\right)  }Y_{2}\right) \\
&  =F^{Q}\left(  X_{1},Y_{1}\right)  +F^{Q}\left(  X_{2},Y_{2}\right)  .
\end{align*}

\end{proof}

\subsection{Classical version and $\hat{F}_{\min}$, $\hat{F}_{\max}$}

For real vectors $l_{0}:=\left(  l_{0,i}\right)  _{i=1}^{k}$ and
$l_{1}:=\left(  l_{1,i}\right)  _{i=1}^{k}$ with positive components we can
define $\hat{F}^{C}\left(  l_{0},l_{1}\right)  $ in analogy with $\hat{F}^{Q}%
$; First, let
\[
\mathcal{M}_{C}:=%
{\displaystyle\bigcup\limits_{k=1}^{\infty}}
\left\{  \left(  l_{0},l_{1}\right)  ;\,F^{C}\left(  x,y\right)  \leq
\sum_{i=1}^{k}l_{0,i}x_{i}+\sum_{i=1}^{k}l_{1,i}y_{i}\right\}  .
\]
Then
\[
F^{C}\left(  x,y\right)  =\inf\left\{  \sum_{i=1}^{k}l_{0,i}x_{i}+\sum
_{i=1}^{k}l_{1,i}y_{i}\,;\,\left(  l_{0},l_{1}\right)  \in\mathcal{M}%
_{C}\right\}  .
\]
Thus we define
\[
\hat{F}^{C}\left(  l_{0},l_{1}\right)  :=\inf\left\{  s\,;\,\left(
l_{0},l_{1}\right)  \in s\,\left(  \mathcal{M}_{C}\right)  ^{c},\,s>0\right\}
.
\]
$\hat{F}^{C}$ is concave, positively homogeneous, proper, and monotone
increasing by transpose of stochastic map. Also,
\begin{equation}
\hat{F}^{C}\left(  l_{0},l_{1}\right)  =\inf\left\{  \frac{1}{\,F^{C}\left(
x,y\right)  }\left(  \sum_{i=1}^{k}l_{0,i}x_{i}+\sum_{i=1}^{k}l_{1,i}%
y_{i}\right)  ;x,y\in%
\mathbb{R}
_{\geq0}^{k},\,F^{C}\left(  x,y\right)  >0\right\}  . \label{holder-cl}%
\end{equation}
In addition, it is additive in the sense that%

\begin{align}
\hat{F}^{C}\left(  l_{0},l_{1}\right)   &  =\hat{F}^{C}\left(  l_{0}^{\left(
1\right)  },l_{1}^{\left(  2\right)  }\right)  +\hat{F}^{C}\left(
l_{0}^{\left(  1\right)  },l_{1}^{\left(  2\right)  }\right)
,\label{additive-c}\\
l_{\theta}^{\left(  1\right)  }  &  :=\left(  l_{\theta,1},l_{\theta,2}%
,\cdots,l_{\theta,k^{\prime}}\right)  ,\nonumber\\
l_{\theta}^{\left(  2\right)  }  &  :=\left(  l_{\theta,k^{\prime}%
+1},l_{\theta,k^{\prime}+2},\cdots,l_{\theta,k}\right)  ,\,\,\theta
=0,1.\nonumber
\end{align}
The proof of these properties is almost parallel as the proof of analogous
properties of $\hat{F}^{Q}$, thus omitted.

By additivity (\ref{additive-c}),
\begin{align}
\hat{F}^{C}\left(  l_{0},l_{1}\right)   &  =\min_{i}\hat{F}^{C}\left(
l_{0,i},l_{1,i}\right) \nonumber\\
&  =\min_{i}\inf\left\{  \frac{1}{\,\sqrt{xy}}\left(  l_{0,i}x+l_{1,i}%
y\right)  ;\,x,y\in%
\mathbb{R}
_{>0}\right\} \nonumber\\
&  =\min_{i}\inf\left\{  l_{0,i}\sqrt{\frac{x}{y}}+l_{1,i}\sqrt{\frac{y}{x}%
};\,x,y\in%
\mathbb{R}
_{>0}\right\} \nonumber\\
&  =\min_{i}2\sqrt{l_{0,i}l_{1,i}}. \label{Fcl-explicit}%
\end{align}

We say a functional over $\mathcal{P}^{\times2}$ is \textit{polarly
normalized} if, for any $\left(  l_{0},l_{1}\right)  \in%
\mathbb{R}
_{\geq0}^{k}\times%
\mathbb{R}
_{\geq0}^{k}$ and for an orthonormal basis $\left\{  \left\vert i\right\rangle
;i=1,\cdots,k\right\}  $,
\[
\hat{F}^{Q}\left(  L_{0,c},L_{1,c}\right)  =\hat{F}^{C}\left(  l_{0}%
,l_{1}\right)  ,
\]
where \
\begin{equation}
L_{0,c}:=\sum_{i=1}^{k}l_{0,i}\left\vert i\right\rangle \left\langle
i\right\vert ,\,L_{1,c}:=\sum_{i=1}^{k}l_{1,i}\left\vert i\right\rangle
\left\langle i\right\vert . \label{Lc}%
\end{equation}

Below, $\Gamma_{1}$ is the pinching by the basis $\left\{  \left\vert
i\right\rangle ;i=1,\cdots,k\right\}  $.

\begin{proposition}
\label{prop:polar-normalize}Suppose that $F^{Q}$ is a CPTP monotone and
normalized member of $\mathcal{F}_{0}$. Then, $\hat{F}^{Q}$ is polarly normalized.
\end{proposition}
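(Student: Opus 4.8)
The plan is to use Proposition~\ref{prop:commute}, which gives an explicit description of $\mathcal{M}_{F^{Q}}\cap\mathcal{C}^{\times2}$ for any CPTP monotone normalized $F^{Q}\in\mathcal{F}_{0}$, together with the explicit formula (\ref{Fcl-explicit}) for $\hat{F}^{C}$. Since $\hat{F}^{Q}$ is defined via gauge of $\mathcal{M}_{F^{Q}}$ and we wish to evaluate it at diagonal operators $(L_{0,c},L_{1,c})$, the key observation is that the scaling $s$ in $\hat{F}^{Q}(L_{0,c},L_{1,c})=\inf\{s; (L_{0,c},L_{1,c})\in s(\mathcal{M}_{F^{Q}})^{c}, s>0\}$ is controlled purely by when $\frac{1}{s}(L_{0,c},L_{1,c})$ enters $\mathcal{M}_{F^{Q}}$, and that membership of a diagonal pair in $\mathcal{M}_{F^{Q}}$ is equivalent to membership in $\mathcal{M}_{F^{Q}}\cap\mathcal{C}^{\times2}$.

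First I would reduce to the commutative case. By Proposition~\ref{prop:commute}, $\mathcal{M}_{F^{Q}}\cap\mathcal{C}^{\times2}=\{(L_0,L_1)\in\mathcal{C}^{\times2}; L_0>0, L_1\geq\tfrac14 L_0^{-1}\}$. Since $(L_{0,c},L_{1,c})$ is diagonal, $\frac1s(L_{0,c},L_{1,c})\in\mathcal{M}_{F^{Q}}$ holds iff it lies in $\mathcal{M}_{F^{Q}}\cap\mathcal{C}^{\times2}$ (the pinching $\Gamma_1$ fixes diagonal operators and, by Proposition~\ref{prop:monotone}, maps $\mathcal{M}_{F^{Q}}$ into itself, so the intersection with $\mathcal{C}^{\times2}$ is exactly $\Gamma_1(\mathcal{M}_{F^{Q}})$). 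Thus the gauge computation for $\hat{F}^{Q}$ at $(L_{0,c},L_{1,c})$ and the gauge computation for $\hat{F}^{C}$ at $(l_0,l_1)$ are literally the same optimization problem: both reduce to finding the smallest $s>0$ such that $\tfrac{1}{s}l_{0,i}>0$ for all $i$ (i.e. all $l_{0,i}>0$) and $\tfrac1s l_{1,i}\geq \tfrac14 (\tfrac1s l_{0,i})^{-1}=\tfrac{s}{4 l_{0,i}}$, which rearranges to $s\leq 2\sqrt{l_{0,i}l_{1,i}}$ for every $i$, giving the common value $\min_i 2\sqrt{l_{0,i}l_{1,i}}$.

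Concretely, I would proceed in these steps: (1) record that $\hat{F}^{C}(l_0,l_1)=\min_i 2\sqrt{l_{0,i}l_{1,i}}$ by (\ref{Fcl-explicit}); (2) observe that if some $l_{0,i}=0$ or $l_{1,i}=0$, then by (\ref{hatF=0}) of Proposition~\ref{prop:hatF>1} we have $\hat{F}^{Q}(L_{0,c},L_{1,c})=0$, matching $\hat{F}^{C}(l_0,l_1)=0$, so assume henceforth $l_{0,i},l_{1,i}>0$ for all $i$; (3) in that case $L_{0,c},L_{1,c}>0$, so by Proposition~\ref{prop:hatF>1} the infimum defining $\hat{F}^{Q}$ is attained and equals $\max\{s>0; (L_{0,c},L_{1,c})\in s\mathcal{M}_{F^{Q}}\}$, equivalently $\max\{s>0; \tfrac1s(L_{0,c},L_{1,c})\in\mathcal{M}_{F^{Q}}\}$; (4) use Proposition~\ref{prop:commute} together with the pinching argument to replace the condition $\tfrac1s(L_{0,c},L_{1,c})\in\mathcal{M}_{F^{Q}}$ by $\tfrac1s(L_{0,c},L_{1,c})\in\mathcal{M}_{F^{Q}}\cap\mathcal{C}^{\times2}=\{(L_0,L_1)\in\mathcal{C}^{\times2};L_0>0,L_1\geq\tfrac14 L_0^{-1}\}$; (5) diagonalize this condition entry-by-entry to get $\tfrac1s l_{1,i}\geq\tfrac{s}{4 l_{0,i}}$ for all $i$, i.e. $s\leq 2\sqrt{l_{0,i}l_{1,i}}$ for all $i$, hence the maximal such $s$ is $\min_i 2\sqrt{l_{0,i}l_{1,i}}=\hat{F}^{C}(l_0,l_1)$.

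The main obstacle, and the point that needs to be argued carefully rather than waved through, is step (4): justifying that for a diagonal pair, membership in $\mathcal{M}_{F^{Q}}$ is equivalent to membership in $\mathcal{M}_{F^{Q}}\cap\mathcal{C}^{\times2}$. This is not automatic from Proposition~\ref{prop:commute} alone (which only describes the intersection); one must invoke that the pinching $\Gamma_1$ is CPTP, so by Proposition~\ref{prop:monotone} $\Gamma_1(\mathcal{M}_{F^{Q}})\subset\mathcal{M}_{F^{Q}}$, and since $\Gamma_1$ fixes every diagonal operator, a diagonal pair lies in $\mathcal{M}_{F^{Q}}$ iff its image under $\Gamma_1$ — namely itself — lies in $\Gamma_1(\mathcal{M}_{F^{Q}})\subset\mathcal{M}_{F^{Q}}\cap\mathcal{C}^{\times2}$; the reverse inclusion is trivial. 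Once this equivalence is in hand, everything else is the routine entrywise computation above, and the normalization of $F^{Q}$ enters only through its role as a hypothesis of Proposition~\ref{prop:commute}.
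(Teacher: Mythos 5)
Your proof is correct but takes a genuinely different route from the paper's. The paper proves the two inequalities $\hat{F}^{Q}(L_{0,c},L_{1,c})\geq\hat{F}^{C}(l_0,l_1)$ and $\leq$ separately, working directly with the representation (\ref{holder-cl}): the lower bound is obtained by transferring the pinching $\Gamma_1$ from $L_{\theta,c}$ onto $(X,Y)$ via trace duality and then invoking CPTP monotonicity of $F^{Q}$, while the upper bound comes from restricting the infimum to diagonal $(X,Y)$ and invoking normalization of $F^{Q}$. You instead cite Proposition~\ref{prop:commute} for the explicit description of $\mathcal{M}_{F^{Q}}\cap\mathcal{C}^{\times2}$, use (\ref{hatF}) and (\ref{hatF=0}) from Proposition~\ref{prop:hatF>1} to reduce $\hat{F}^{Q}$ at a diagonal pair to a gauge computation over that set, and match the result entrywise against (\ref{Fcl-explicit}); this yields an exact formula in one stroke rather than a two-sided sandwich, at the cost of front-loading the work into Proposition~\ref{prop:commute}. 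One minor overstatement in your write-up: you flag the equivalence that a diagonal pair lies in $\mathcal{M}_{F^{Q}}$ iff it lies in $\mathcal{M}_{F^{Q}}\cap\mathcal{C}^{\times2}$ as the delicate step and bring in $\Gamma_1$-invariance of $\mathcal{M}_{F^{Q}}$ to justify it, but this is a tautology, since a diagonal pair belongs to $\mathcal{C}^{\times2}$ by definition; the pinching argument you describe is the one already absorbed into the proof of Proposition~\ref{prop:commute} itself, so invoking it again is redundant (harmless to correctness, but worth streamlining).
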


\begin{proof}
Observe $L_{0,c}$ and $L_{1,c}$ as of (\ref{Lc}) are unchanged by $\Gamma_{1}%
$. Hence, by (\ref{holder-cl}), and by CPTP monotonicity and normalization of
$F^{Q}$,
\begin{align*}
&  \hat{F}^{Q}\left(  L_{0,c},L_{1,c}\right) \\
&  =\inf\left\{  \frac{1}{F^{Q}\left(  X,Y\right)  }\left(  \mathrm{tr}%
\,\Gamma_{1}\left(  L_{0,c}\right)  X+\mathrm{tr}\,\Gamma_{1}\left(
L_{1,c}\right)  Y\right)  ;\left(  X,Y\right)  \in\mathcal{P}^{\times2}%
,F^{Q}\left(  X,Y\right)  >0\right\} \\
&  =\inf\left\{  \frac{1}{F^{Q}\left(  X,Y\right)  }\left(  \mathrm{tr}%
\,L_{0,c}\Gamma_{1}\left(  X\right)  +\mathrm{tr}\,L_{1,c}\Gamma_{1}\left(
Y\right)  \right)  ;\left(  X,Y\right)  \in\mathcal{P}^{\times2},F^{Q}\left(
X,Y\right)  >0\right\} \\
&  \geq\inf\left\{  \frac{1}{F^{Q}\left(  \Gamma_{1}\left(  X\right)
,\Gamma_{1}\left(  Y\right)  \right)  }\left(  \mathrm{tr}\,L_{0,c}\Gamma
_{1}\left(  X\right)  +\mathrm{tr}\,L_{1,c}\Gamma_{1}\left(  Y\right)
\right)  ;\left(  X,Y\right)  \in\mathcal{P}^{\times2},F^{Q}\left(  \Gamma
_{1}\left(  X\right)  ,\Gamma_{1}\left(  Y\right)  \right)  >0\right\} \\
&  =\inf\left\{  \frac{1}{\,F^{C}\left(  x,y\right)  }\left(  \sum_{i=1}%
^{k}l_{0,i}x_{i}+\sum_{i=1}^{k}l_{1,i}y_{i}\right)  ;\left(  x,y\right)  \in%
\mathbb{R}
_{\geq0}^{k}\,,\,F^{C}\left(  x,y\right)  >0\right\} \\
&  =\hat{F}^{C}\left(  l_{0},l_{1}\right)  .
\end{align*}
Also, by
\begin{align*}
&  \hat{F}^{Q}\left(  L_{0,c},L_{1,c}\right) \\
&  \leq\inf\left\{  \frac{1}{F^{Q}\left(  X,Y\right)  }\left(  \mathrm{tr}%
\,L_{0,c}X+\mathrm{tr}\,L_{1,c}Y\right)  ;\left(  X,Y\right)  \in
\mathcal{P}^{\times2},F^{Q}\left(  X,Y\right)  >0,X,Y\text{: diagonal}%
\right\}
\end{align*}
and by normalization of $F^{Q}$,
\begin{align*}
&  \hat{F}^{Q}\left(  L_{0,c},L_{1,c}\right) \\
&  \leq\inf\left\{  \frac{1}{\,F^{C}\left(  x,y\right)  }\left(  \sum
_{i=1}^{k}l_{0,i}x_{i}+\sum_{i=1}^{k}l_{1,i}y_{i}\right)  ;\left(  x,y\right)
\in%
\mathbb{R}
_{\geq0}^{k}\,,\,F^{C}\left(  x,y\right)  >0\right\} \\
&  =\hat{F}^{C}\left(  l_{0},l_{1}\right)  .
\end{align*}
After all, we have polar normalization.
\end{proof}

Below, we show%

\begin{align}
\hat{F}_{\max}\left(  L_{0},L_{1}\right)   &  =2\sqrt{\min_{\left\Vert
\psi\right\Vert =1}\left\langle \psi\right\vert L_{1}^{1/2}L_{0}L_{1}%
^{1/2}\left\vert \psi\right\rangle }\nonumber\\
&  =2\sqrt{\min_{\left\Vert \psi\right\Vert =1}\left\langle \psi\right\vert
L_{0}^{1/2}L_{1}L_{0}^{1/2}\left\vert \psi\right\rangle }\nonumber\\
&  =2\min_{\left\Vert \psi\right\Vert =1,\left\Vert \varphi\right\Vert
=1}\left\vert \left\langle \psi\right\vert L_{0}^{1/2}L_{1}^{1/2}\left\vert
\varphi\right\rangle \right\vert \nonumber\\
&  =2\left\Vert L_{0}^{-1/2}L_{1}^{-1/2}\right\Vert ^{-1}.
\label{hatFmax-explicit}%
\end{align}
If $L_{0}$, $L_{1}>0$, by (\ref{MFmax}) and (\ref{hatF}), $s\leq\hat{F}_{\max
}\left(  L_{0},L_{1}\right)  $ holds if and only if there is $L>0$ such that%

\begin{align*}
&  \,2L_{0}\geq sL,\,2L_{1}\geq sL^{-1}\\
&  \Leftrightarrow2L_{0}\geq sL,\,L\geq\frac{s}{2}L_{1}^{-1}\Leftrightarrow
L_{0}\geq\frac{s^{2}}{4}L_{1}^{-1}\\
&  \Leftrightarrow\sqrt{L_{1}}L_{0}\sqrt{L_{1}}\geq\frac{s^{2}}{4}I\mathbf{,}%
\end{align*}
which leads to the asserted result. In particular, (\ref{hatFmax-explicit})
and Proposition \ref{prop:polar-normalize} leads to (\ref{Fcl-explicit}).

By (\ref{hatFmax-explicit}) and (\ref{MFmin-MFmax}), we have%

\begin{align*}
\hat{F}_{\min}\left(  L_{0},L_{1}\right)   &  =\inf\left\{  s\,;\,\left(
L_{0},L_{1}\right)  \notin s\,\left(  \mathcal{M}_{F_{\min}}\right)
,\,s>0\right\} \\
&  =\inf\left\{  s\,;\,\exists A\in\mathcal{L}_{sa,k},\,\left(  L_{0},\left(
I_{k}+\sqrt{-1}A\right)  ^{-1}L_{1}\left(  I_{k}-\sqrt{-1}A\right)
^{-1}\right)  \notin s\,\left(  \mathcal{M}_{F_{\min}}\right)  ,\,s>0\right\}
\\
&  =\sup_{A\in\mathcal{L}_{sa,k}}\hat{F}_{\max}\left(  L_{0},\left(
I_{k}+\sqrt{-1}A\right)  ^{-1}L_{1}\left(  I_{k}-\sqrt{-1}A\right)
^{-1}\right)  .
\end{align*}

\begin{theorem}
\label{th:hatF<FQ}Suppose that $\hat{F}^{Q}\in\mathcal{F}_{0}$ is monotone
increasing by any unital CP map, and polarly normalized. Then,
\[
\hat{F}_{\max}\left(  L_{0},L_{1}\right)  \leq\hat{F}^{Q}\left(  L_{0}%
,L_{1}\right)  \leq\hat{F}_{\min}\left(  L_{0},L_{1}\right)  .
\]

\end{theorem}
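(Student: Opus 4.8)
The plan is to reduce the statement to the already-established Theorem~\ref{th:Fmin<FQ<F} by observing that the polar operation is an order-reversing involution on $\mathcal{F}_{0}$. Concretely, I would prove the chain of implications: if $\hat{F}^{Q}$ is monotone increasing by unital CP maps and polarly normalized, then $F^{Q}$ is CPTP monotone and normalized; hence by Theorem~\ref{th:Fmin<FQ<F}, $F_{\min}\leq F^{Q}\leq F_{\max}$; hence, applying the order-reversing polar to these two inequalities, $\hat{F}_{\max}\leq\hat{F}^{Q}\leq\hat{F}_{\min}$.

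Two of the links are short. CPTP monotonicity of $F^{Q}$ is exactly Proposition~\ref{prop:hatF-monotone}. The polar reverses order: if $F,G\in\mathcal{F}_{0}$ and $F\leq G$, then $\{(X,Y)\in\mathcal{P}^{\times2}:F(X,Y)>0\}\subseteq\{(X,Y):G(X,Y)>0\}$ and the numerator $\mathrm{tr}\,L_{0}X+\mathrm{tr}\,L_{1}Y$ is nonnegative for $(L_{0},L_{1}),(X,Y)\in\mathcal{P}^{\times2}$, so by the representation (\ref{holder-1}),
\[
\hat{G}(L_{0},L_{1})=\inf_{G(X,Y)>0}\frac{\mathrm{tr}\,L_{0}X+\mathrm{tr}\,L_{1}Y}{G(X,Y)}\leq\inf_{F(X,Y)>0}\frac{\mathrm{tr}\,L_{0}X+\mathrm{tr}\,L_{1}Y}{F(X,Y)}=\hat{F}(L_{0},L_{1})
\]
for $(L_{0},L_{1})\in\mathcal{P}^{\times2}$, while $\hat{F}=\hat{G}=-\infty$ elsewhere. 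Applying this to $F_{\min}\leq F^{Q}$ and to $F^{Q}\leq F_{\max}$ gives the conclusion; here $F^{Q}\in\mathcal{F}_{0}$ (it is the polar of $\hat{F}^{Q}$ by (\ref{hathatF}), so lies in $\mathcal{F}_{0}$ by Proposition~\ref{prop:polar}), and $F_{\max},F_{\min}\in\mathcal{F}_{0}$.

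The substantive step — the converse of Proposition~\ref{prop:polar-normalize} — is that polar normalization of $\hat{F}^{Q}$ forces $F^{Q}$ to be normalized. By the equivalence (\ref{hatF>1}) between $\hat{F}^{Q}(L_{0},L_{1})\geq1$ and $(L_{0},L_{1})\in\mathcal{M}_{F^{Q}}$ together with Lemma~\ref{recession}, $F^{Q}(X,Y)=\inf\{\mathrm{tr}\,L_{0}X+\mathrm{tr}\,L_{1}Y:(L_{0},L_{1})\in\mathcal{P}^{\times2},\ \hat{F}^{Q}(L_{0},L_{1})\geq1\}$. Take diagonal $X=\sum_{i}x_{i}\vert i\rangle\langle i\vert$ and $Y=\sum_{i}y_{i}\vert i\rangle\langle i\vert$ with $x_{i},y_{i}>0$. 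Because $\mathrm{tr}\,L_{\theta}X=\mathrm{tr}\,\Gamma_{1}(L_{\theta})X$ on such $X$ and $\hat{F}^{Q}(\Gamma_{1}(L_{0}),\Gamma_{1}(L_{1}))\geq\hat{F}^{Q}(L_{0},L_{1})$ (monotonicity under the unital CP pinching $\Gamma_{1}$), the infimum is unchanged if restricted to diagonal pairs $L_{\theta}=\sum_{i}l_{\theta,i}\vert i\rangle\langle i\vert$. On these, polar normalization and (\ref{Fcl-explicit}) give $\hat{F}^{Q}(L_{0},L_{1})=\hat{F}^{C}(l_{0},l_{1})=\min_{i}2\sqrt{l_{0,i}l_{1,i}}$, so the constraint becomes $l_{0,i}l_{1,i}\geq1/4$ for every $i$ and the objective $\sum_{i}(x_{i}l_{0,i}+y_{i}l_{1,i})$ decouples over $i$. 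For each $i$, $x_{i}l_{0,i}+y_{i}l_{1,i}\geq2\sqrt{x_{i}y_{i}\,l_{0,i}l_{1,i}}\geq\sqrt{x_{i}y_{i}}$, with equality at $l_{0,i}=\tfrac{1}{2}\sqrt{y_{i}/x_{i}}$, $l_{1,i}=\tfrac{1}{2}\sqrt{x_{i}/y_{i}}$ (which satisfies $l_{0,i}l_{1,i}=1/4$), so $F^{Q}(X,Y)=\sum_{i}\sqrt{x_{i}y_{i}}=F^{C}(x,y)$, that is, $F^{Q}$ is normalized.

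I expect this last paragraph to be the main obstacle: converting polar normalization, a property of $\hat{F}^{Q}$ on commuting arguments, into ordinary normalization of $F^{Q}$. The mechanism is the dual representation of $F^{Q}$, the use of $\Gamma_{1}$-monotonicity to collapse the optimization to the commutative case, and then the elementary scalar problem $\min\{xl_{0}+yl_{1}:l_{0}l_{1}\geq1/4\}$. The remaining links ($\Gamma_{1}$ is unital CP, so monotonicity applies; $F_{\max},F_{\min},F^{Q}$ lie in $\mathcal{F}_{0}$; invoking Theorem~\ref{th:Fmin<FQ<F}) are routine bookkeeping with results established above.
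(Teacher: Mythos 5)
Your proof is correct and follows the same overall strategy as the paper: pass from $\hat{F}^{Q}$ to $F^{Q}$, invoke Theorem~\ref{th:Fmin<FQ<F} to get $F_{\min}\leq F^{Q}\leq F_{\max}$, and transfer this through the polar operation. The paper carries out the transfer by plugging the sandwich directly into the representation (\ref{holder-1}); you package the same computation as a general order-reversal lemma for the polar and then apply it twice. These are the same argument.

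The genuine difference is in the step you correctly flag as substantive. The paper asserts that ``there is a CPTP monotone and normalized member $F^{Q}$ of $\mathcal{F}_{0}$ with (\ref{holder-1}), due to Corollary~\ref{cor:1-to-1}.'' But Corollary~\ref{cor:1-to-1} only establishes that the polar is a bijection on $\mathcal{F}_{0}$; it does not by itself say that the bijection maps the (unital-CP-monotone, polarly normalized) subclass onto the (CPTP-monotone, normalized) subclass. Proposition~\ref{prop:polar-normalize} gives one direction of this (normalization $\Rightarrow$ polar normalization), but the direction actually needed here is the converse, which the paper does not prove. You supply exactly the missing argument: starting from the dual representation $F^{Q}(X,Y)=\inf\{\mathrm{tr}\,L_{0}X+\mathrm{tr}\,L_{1}Y:\hat{F}^{Q}(L_{0},L_{1})\geq1\}$, restricting to diagonal $(X,Y)$, using monotonicity under the pinching $\Gamma_{1}$ to reduce the feasible set to diagonal $(L_{0},L_{1})$, invoking polar normalization plus (\ref{Fcl-explicit}) to turn the constraint into $l_{0,i}l_{1,i}\geq1/4$, and solving the decoupled scalar problems to obtain $F^{Q}(X,Y)=\sum_{i}\sqrt{x_{i}y_{i}}$. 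This is a real gap in the paper's proof that your write-up closes. The only thing I would tidy is the extension from strictly positive diagonal vectors to all non-negative ones in the definition of normalization; this follows from closedness (upper semicontinuity) of $F^{Q}$, continuity of $F^{C}$, and Lemma~\ref{lem:conv-cont}, but should be said.
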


\begin{proof}
For each $\hat{F}^{Q}\in\mathcal{F}_{0}$ which is monotone increasing by any
unital CP map, and polarly normalized, there is a CPTP monotone and normalized
member $F^{Q}$ \ of $\mathcal{F}_{0}$ with (\ref{holder-1}), due to by
Corollary\thinspace\thinspace\ref{cor:1-to-1}. Since $F^{Q}$ is sandwiched by
$F_{\min}$ and $F_{\max}$ by Theorem\thinspace\ref{th:Fmin<FQ<F},
(\ref{holder-1}) implies
\begin{align*}
\hat{F}^{Q}\left(  L_{0},L_{1}\right)   &  \geq\inf\left\{  \frac{1}{F_{\max
}^{Q}\left(  X,Y\right)  }\left(  \mathrm{tr}\,L_{0}X+\mathrm{tr}%
\,L_{1}Y\right)  ;\left(  X,Y\right)  \in\mathcal{P}^{\times2},F^{Q}\left(
X,Y\right)  >0\right\}  ,\\
&  =\hat{F}_{\max}\left(  L_{0},L_{1}\right)  ,\\
\hat{F}^{Q}\left(  L_{0},L_{1}\right)   &  \leq\inf\left\{  \frac{1}{F_{\min
}^{Q}\left(  X,Y\right)  }\left(  \mathrm{tr}\,L_{0}X+\mathrm{tr}%
\,L_{1}Y\right)  ;\left(  X,Y\right)  \in\mathcal{P}^{\times2},F^{Q}\left(
X,Y\right)  >0\right\} \\
&  =\hat{F}_{\min}\left(  L_{0},L_{1}\right)  .
\end{align*}
Thus we have the assertion.
\end{proof}

Define
\begin{align}
\hat{F}_{\max}^{\prime}\left(  L_{0},L_{1}\right)   &  :=\sup\left\{  \hat
{F}^{C}\left(  l_{0},l_{1}\right)  \,;\,L_{\theta}=\Phi^{\ast}\left(
l_{\theta}\right)  ,\,\theta=0,1,\,\Phi^{\ast}\,\text{:\thinspace CP unital
map from }\mathcal{C}_{k}\text{\ to }\mathcal{L}_{k}\right\} \nonumber\\
&  =\sup\left\{  \hat{F}^{C}\left(  l_{0},l_{1}\right)  \,;\,L_{\theta}%
=\Phi_{M}^{\ast}\left(  l_{\theta}\right)  ,\,\theta
=0,1,\,M\,\text{:\thinspace POVM }\right\}  ,\label{hatFmax'}\\
\hat{F}_{\min}^{\prime}\left(  L_{0},L_{1}\right)   &  :=\inf\left\{  \hat
{F}^{C}\left(  l_{0},l_{1}\right)  \,;\,l_{\theta}=\Psi^{\ast}\left(
L_{\theta}\right)  ,\,\theta=0,1\,,\,\Psi^{\ast}\,\text{:\thinspace CP unital
map from }\mathcal{L}_{k}\text{\ to }\mathcal{C}_{k}\text{ }\right\}
\nonumber\\
&  =\inf\left\{  \hat{F}^{C}\left(  l_{0},l_{1}\right)  \,;\,l_{\theta}%
=\Psi_{\vec{\rho}}^{\ast}\left(  L_{\theta}\right)  ,\,\theta=0,1\,,\,\vec
{\rho}\,\text{:\thinspace\ array of density operators }\right\} \nonumber\\
&  =\min\,\left\{  2\sqrt{\mathrm{tr}\,\rho L_{0}\,\mathrm{tr}\,\rho L_{1}%
}\,;\,\rho\geq0,\,\mathrm{tr}\,\rho=1\right\}  . \label{hatFmin'}%
\end{align}

\begin{lemma}
\label{lem:hatFmin'-F0}$\hat{F}_{\min}^{\prime}$ is a member of $\mathcal{F}%
_{0}$ , monotone increasing by any unital CP map, and is polarly normalized.
Also, $\hat{F}_{\min}^{\prime}$ is continuous. Namely, if $\left(
L_{0,\infty},L_{1,\infty}\right)  $ is at the (relative) boundary of
$\mathcal{P}^{\times2}$ and $\lim_{i\rightarrow\infty}L_{\theta,i}%
=L_{\theta,\infty}$, $\theta=0$, $1$, \
\begin{equation}
\lim_{i\rightarrow\infty}\hat{F}_{\min}^{\prime}\left(  L_{0,i},L_{1,i}%
\right)  =\hat{F}_{\min}^{\prime}\left(  L_{0,\infty},L_{1,\infty}\right)  =0.
\label{hatFmin'=0}%
\end{equation}

\end{lemma}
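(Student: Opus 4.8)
The plan is to verify the three asserted properties of $\hat{F}_{\min}^{\prime}$ in turn, exploiting the explicit formula $\hat{F}_{\min}^{\prime}\left(L_{0},L_{1}\right)=\min\left\{2\sqrt{\mathrm{tr}\,\rho L_{0}\,\mathrm{tr}\,\rho L_{1}}\,;\,\rho\geq0,\,\mathrm{tr}\,\rho=1\right\}$ derived just above in (\ref{hatFmin'}). For membership in $\mathcal{F}_{0}$: positive homogeneity is immediate since each $2\sqrt{\mathrm{tr}\,\rho L_{0}\,\mathrm{tr}\,\rho L_{1}}$ is positively homogeneous in $\left(L_{0},L_{1}\right)$, and positivity on $\mathcal{P}^{\times2}$ is clear. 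For concavity, note that for each \emph{fixed} $\rho$ the map $\left(L_{0},L_{1}\right)\mapsto 2\sqrt{\mathrm{tr}\,\rho L_{0}\,\mathrm{tr}\,\rho L_{1}}$ is the geometric mean of two nonnegative linear functionals, hence concave (this is the classical fact that $(a,b)\mapsto\sqrt{ab}$ is concave on $\mathbb{R}_{\geq0}^2$ composed with a linear map); the pointwise infimum over $\rho$ of concave functionals is concave. Closedness (upper semicontinuity) follows because the feasible set $\{\rho\geq0,\,\mathrm{tr}\,\rho=1\}$ is compact, so the infimum is attained and is the minimum of a continuous function of $\rho$; I would invoke the standard fact that the minimum over a compact set of a jointly continuous function is continuous in the parameter, which simultaneously gives the continuity claim. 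Finally $\mathrm{dom}\,\hat{F}_{\min}^{\prime}=\mathcal{P}^{\times2}$ by the definition (it is $-\infty$ off $\mathcal{P}^{\times2}$, finite on it).

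For monotonicity under unital CP maps: if $\Lambda^{\ast}$ is unital CP, then for any density operator $\rho$, $\Lambda\left(\rho\right)$ (the predual channel, which is CPTP) is again a density operator, and $\mathrm{tr}\,\rho\,\Lambda^{\ast}\left(L_{\theta}\right)=\mathrm{tr}\,\Lambda\left(\rho\right)L_{\theta}$. Hence every $\rho$ feasible for $\hat{F}_{\min}^{\prime}\left(\Lambda^{\ast}\left(L_{0}\right),\Lambda^{\ast}\left(L_{1}\right)\right)$ produces, via $\Lambda\left(\rho\right)$, a feasible point for $\hat{F}_{\min}^{\prime}\left(L_{0},L_{1}\right)$ with the same objective value. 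Therefore the infimum defining $\hat{F}_{\min}^{\prime}\left(L_{0},L_{1}\right)$ ranges over a set at least as large, giving $\hat{F}_{\min}^{\prime}\left(\Lambda^{\ast}\left(L_{0}\right),\Lambda^{\ast}\left(L_{1}\right)\right)\geq\hat{F}_{\min}^{\prime}\left(L_{0},L_{1}\right)$, which is the required direction. For polar normalization, I must show $\hat{F}_{\min}^{\prime}\left(L_{0,c},L_{1,c}\right)=\hat{F}^{C}\left(l_{0},l_{1}\right)$ for diagonal $L_{\theta,c}$; by (\ref{Fcl-explicit}) the right side equals $\min_{i}2\sqrt{l_{0,i}l_{1,i}}$. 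For the diagonal case $\mathrm{tr}\,\rho L_{\theta,c}=\sum_{i}\rho_{ii}l_{\theta,i}$, so $\min_{\rho}2\sqrt{(\sum_i\rho_{ii}l_{0,i})(\sum_i\rho_{ii}l_{1,i})}$; one direction is $\leq\min_i 2\sqrt{l_{0,i}l_{1,i}}$ by taking $\rho=|i\rangle\langle i|$, and the other direction needs the elementary inequality $\sqrt{(\sum_i p_i a_i)(\sum_i p_i b_i)}\geq\min_i\sqrt{a_i b_i}$ for a probability vector $p$ — alternatively just invoke that $\hat{F}_{\min}^{\prime}$ restricted to diagonals is monotone increasing under CPTP pinchings and reduces via the third line of (\ref{hatFmin'}) to the classical $\hat{F}^{C}$ definition directly, since $\Psi^{\ast}_{\vec\rho}$ applied to diagonal operators with $\vec\rho$ also diagonal recovers exactly the stochastic-map structure behind $\hat{F}^{C}$.

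For the continuity/vanishing claim (\ref{hatFmin'=0}): suppose $\left(L_{0,\infty},L_{1,\infty}\right)$ is at the relative boundary of $\mathcal{P}^{\times2}$, meaning $L_{0,\infty}$ or $L_{1,\infty}$ has a kernel. Say $L_{0,\infty}|\psi\rangle=0$ for some unit vector $\psi$; take $\rho=|\psi\rangle\langle\psi|$, so $\mathrm{tr}\,\rho L_{0,\infty}=0$ and hence $2\sqrt{\mathrm{tr}\,\rho L_{0,\infty}\,\mathrm{tr}\,\rho L_{1,\infty}}=0$, forcing $\hat{F}_{\min}^{\prime}\left(L_{0,\infty},L_{1,\infty}\right)=0$ (it is nonnegative). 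For the limit: given $L_{\theta,i}\to L_{\theta,\infty}$, use the \emph{same} $\psi$ to get $\hat{F}_{\min}^{\prime}\left(L_{0,i},L_{1,i}\right)\leq 2\sqrt{\langle\psi|L_{0,i}|\psi\rangle\,\langle\psi|L_{1,i}|\psi\rangle}\to 2\sqrt{0\cdot\langle\psi|L_{1,\infty}|\psi\rangle}=0$, and nonnegativity gives the reverse. This also re-derives the continuity assertion at boundary points; interior continuity is subsumed by the compact-parameter argument in the first paragraph. The main obstacle I anticipate is nailing down closedness/continuity cleanly — one must be slightly careful that the objective $2\sqrt{\mathrm{tr}\,\rho L_0\,\mathrm{tr}\,\rho L_1}$ is jointly continuous in $(\rho,L_0,L_1)$ over the compact $\rho$-simplex (it is, being a composition of continuous maps), so the parametrized minimum theorem (Berge's maximum theorem, in its minimum form) applies and delivers both closedness and the stated limit uniformly; everything else is bookkeeping with the explicit formula.
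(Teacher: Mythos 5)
Your proof is correct, and the core arguments coincide with the paper's: the predual-channel identity $\mathrm{tr}\,\rho\,\Lambda^{\ast}(L)=\mathrm{tr}\,\Lambda(\rho)L$ for unital-CP monotonicity, reduction to diagonal $\rho$ plus concavity of the geometric mean for polar normalization, and the kernel-vector estimate $\hat{F}_{\min}^{\prime}(L_{0,i},L_{1,i})\leq 2\sqrt{\langle\psi|L_{0,i}|\psi\rangle\langle\psi|L_{1,i}|\psi\rangle}\to 0$ together with nonnegativity for the boundary-continuity claim (\ref{hatFmin'=0}). The only place you diverge is in establishing $\mathcal{F}_{0}$-membership: the paper views $\hat{F}_{\min}^{\prime}$ as the pointwise infimum of the maps $(L_0,L_1)\mapsto\hat{F}^C(\Psi_{\vec\rho}^{\ast}(L_0),\Psi_{\vec\rho}^{\ast}(L_1))$, each already in $\mathcal{F}_0$, and invokes Lemma~\ref{lem:lim-F0}, then treats interior continuity separately via Lemma~\ref{lem:conv-cont} (Rockafellar); you instead verify concavity, homogeneity, and positivity directly from the explicit formula and get closedness and continuity simultaneously via Berge's maximum theorem over the compact state simplex. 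Both are sound; the paper's route reuses the machinery it has built up (and is thus shorter in context), while yours is more self-contained and delivers global continuity in one stroke rather than patching interior and boundary cases together.
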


\begin{proof}
Observe $\hat{F}_{\min}^{\prime}$ is infimum of the map
\[
\left(  L_{0},L_{1}\right)  \rightarrow\hat{F}^{C}\left(  \Psi_{\vec{\rho}%
}^{\ast}\left(  L_{0}\right)  ,\Psi_{\vec{\rho}}^{\ast}\left(  L_{1}\right)
\right)  .
\]
Since this map is a member of $\mathcal{F}_{0}$ for each $\vec{\rho}$, so is
$\hat{F}_{\min}^{\prime}$ by Lemma\thinspace\ref{lem:lim-F0}. Also, if
$\Lambda^{\ast}$ is a CP unital map,
\begin{align*}
\hat{F}_{\min}^{\prime}\left(  \Lambda^{\ast}\left(  L_{0}\right)
,\Lambda^{\ast}\left(  L_{1}\right)  \right)   &  =\min\,\left\{
2\sqrt{\mathrm{tr}\,\rho\Lambda^{\ast}\left(  L_{0}\right)  \,\mathrm{tr}%
\,\rho\Lambda^{\ast}\left(  L_{1}\right)  }\,;\,\rho\geq0,\,\mathrm{tr}%
\,\rho=1\right\} \\
&  =\min\,\left\{  2\sqrt{\mathrm{tr}\,\Lambda\left(  \rho\right)
L_{0}\,\mathrm{tr}\,\Lambda\left(  \rho\right)  L_{1}}\,;\,\rho\geq
0,\,\mathrm{tr}\,\rho=1\right\} \\
&  \geq\min\,\left\{  2\sqrt{\mathrm{tr}\,\rho L_{0}\,\mathrm{tr}\,\rho L_{1}%
}\,;\,\rho\geq0,\,\mathrm{tr}\,\rho=1\right\} \\
&  =\hat{F}_{\min}^{\prime}\left(  L_{0},L_{1}\right)  .
\end{align*}
Thus $\hat{F}_{\min}^{\prime}$ is monotone increasing by any unital CP map.

Also, for any state $\rho$, \
\[
\sqrt{\mathrm{tr}\,\rho L_{0,c}\,\mathrm{tr}\,\rho L_{1,c}}=\sqrt
{\mathrm{tr}\,\Gamma_{1}\left(  \rho\right)  L_{0,c}\,\mathrm{tr\,}\Gamma
_{1}\left(  \rho\right)  L_{1,c}},
\]
where $L_{\theta,c}$, $\theta=0,1$ are as of (\ref{Lc}) and $\Gamma_{1}$ is
the pinching with respect to the basis $\left\{  \left\vert i\right\rangle
;i=1,\cdots,k\right\}  $. Thus, in the minimum of (\ref{hatFmin'}), $\rho$ can
be restricted to those which commute with $L_{0}$ and $L_{1}$. Therefore,
\begin{align*}
\hat{F}_{\max}^{\prime}\left(  L_{0,c},L_{1,c}\right)   &  =\min\left\{
2\sqrt{\sum_{i}p_{i}\,l_{0,i}\sum_{i}p_{i}\,l_{1,i}},\,\sum_{i}p_{i}%
=1,p_{i}\geq0\text{ }\right\} \\
&  =\min\left\{  2\sum_{i}p_{i}\sqrt{\,l_{0,i}l_{1,i}},\,\sum_{i}p_{i}%
=1,p_{i}\geq0\text{ }\right\} \\
&  =\hat{F}^{C}\left(  l_{0},l_{1}\right)  ,
\end{align*}
where the second equality holds because of the concavity of $\left(
a,b\right)  \rightarrow\sqrt{ab}$. Thus, $\hat{F}_{\min}^{\prime}$ is also
polarly normalized.

By Lemma\thinspace\ref{lem:conv-cont}, $\hat{F}_{\min}^{\prime}$ is continuous
on $\mathrm{ri}\,\mathcal{P}^{\times2}$. Also, for any sequence $\left\{
\left(  L_{0,i},L_{1,i}\right)  \right\}  $\thinspace which converges to
$\left(  L_{0,\infty},L_{1,\infty}\right)  $,
\begin{align*}
\varlimsup_{i\rightarrow\infty}\hat{F}_{\min}^{\prime}\left(  L_{0,i}%
,L_{1,i}\right)   &  \leq\lim_{i\rightarrow\infty}2\sqrt{\left\langle
\psi\right\vert L_{0,i}\,\left\vert \psi\right\rangle \left\langle
\psi\right\vert L_{1,i}\left\vert \psi\right\rangle }=0,\\
\hat{F}_{\min}^{\prime}\left(  L_{0,\infty},L_{1,\infty}\right)   &
\leq2\sqrt{\left\langle \psi\right\vert L_{0,\infty}\,\left\vert
\psi\right\rangle \left\langle \psi\right\vert L_{1,\infty}\left\vert
\psi\right\rangle }=0,
\end{align*}
where $\left\vert \psi\right\rangle $ satisfies $L_{0,\infty}\left\vert
\psi\right\rangle =0$ or $L_{1,\infty}\left\vert \psi\right\rangle =0$. Since
$\hat{F}_{\min}^{\prime}$ is non-negative on $\mathcal{P}^{\times2}$, we have
\[
\lim_{i\rightarrow\infty}\hat{F}_{\min}^{\prime}\left(  L_{0,i},L_{1,i}%
\right)  =0=\hat{F}_{\min}^{\prime}\left(  L_{0,\infty},L_{1,\infty}\right)
.
\]
Therefore, $\hat{F}_{\min}^{\prime}$ is continuous in $\mathcal{P}^{\times2}$.
\end{proof}

\begin{lemma}
\label{lem:hatFmax'-F0}$\hat{F}_{\max}^{\prime}$ is a member of $\mathcal{F}%
_{0}$ , monotone increasing by any unital CP map, and is polarly normalized.
Also, $\hat{F}_{\min}^{\prime}$ is continuous.
\end{lemma}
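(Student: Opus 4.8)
The plan is to prove $\hat{F}_{\max}^{\prime}=\hat{F}_{\max}$ on all of $\mathcal{P}^{\times2}$; granting this, every assertion is inherited from $\hat{F}_{\max}$. Indeed $F_{\max}\in\mathcal{F}_{0}$ is not identically $0$, so $\hat{F}_{\max}\in\mathcal{F}_{0}$ by Proposition~\ref{prop:polar}; it is monotone increasing by unital CP maps by Proposition~\ref{prop:hatF-monotone} (as $F_{\max}$ is CPTP monotone); it is polarly normalized by Proposition~\ref{prop:polar-normalize} (as $F_{\max}$ is CPTP monotone and normalized); and it is continuous on each $\mathcal{P}_{k}^{\times2}$, since by (\ref{hatFmax-explicit}), together with (\ref{hatF=0}) in the singular case, $\hat{F}_{\max}(L_{0},L_{1})=2\sqrt{\lambda_{\min}\!\left(L_{0}^{1/2}L_{1}L_{0}^{1/2}\right)}$, and this is continuous because the operator square root and $\lambda_{\min}$ are continuous.

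The inequality $\hat{F}_{\max}^{\prime}\le\hat{F}_{\max}$ is the easy half. Given a POVM $M$ and $l_{0},l_{1}\ge0$ with $L_{\theta}=\Phi_{M}^{\ast}(l_{\theta})$, the composition $\Lambda^{\ast}:=\Phi_{M}^{\ast}\circ\Phi_{\mathcal{C}}$ is a unital CP map sending $L_{\theta,c}:=\sum_{i}l_{\theta,i}\left\vert i\right\rangle \left\langle i\right\vert $ to $L_{\theta}$, so by unital CP monotonicity and polar normalization of $\hat{F}_{\max}$,
\[
\hat{F}^{C}\left(  l_{0},l_{1}\right)  =\hat{F}_{\max}\left(  L_{0,c},L_{1,c}\right)  \le\hat{F}_{\max}\left(  \Lambda^{\ast}(L_{0,c}),\Lambda^{\ast}(L_{1,c})\right)  =\hat{F}_{\max}\left(  L_{0},L_{1}\right)  ;
\]
taking the supremum over $(M,l_{0},l_{1})$ gives the bound. (Once $\hat{F}_{\max}^{\prime}\in\mathcal{F}_{0}$ is known the reverse inequality would also follow from Theorem~\ref{th:hatF<FQ} and Corollary~\ref{cor:1-to-1}; but verifying membership in $\mathcal{F}_{0}$ — above all concavity, since $\hat{F}_{\max}^{\prime}$ is defined by a supremum — seems to need the explicit identification anyway.)

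For $\hat{F}_{\max}^{\prime}\ge\hat{F}_{\max}$, fix $L_{0},L_{1}>0$ and set $c^{2}:=\lambda_{\min}\!\left(L_{0}^{1/2}L_{1}L_{0}^{1/2}\right)$, so $\hat{F}_{\max}(L_{0},L_{1})=2c$. It suffices to produce a POVM $\{M_{i}\}$ and scalars $p_{i}>0$ with $\sum_{i}M_{i}=I$, $\sum_{i}p_{i}M_{i}=L_{0}$, and $\sum_{i}p_{i}^{-1}M_{i}=c^{-2}L_{1}$: then with $l_{0,i}:=p_{i}$, $l_{1,i}:=c^{2}/p_{i}$ one has $L_{\theta}=\Phi_{M}^{\ast}(l_{\theta})$ and $l_{0,i}l_{1,i}=c^{2}$ for all $i$, so by (\ref{Fcl-explicit}) $\hat{F}^{C}(l_{0},l_{1})=\min_{i}2\sqrt{l_{0,i}l_{1,i}}=2c$, whence $\hat{F}_{\max}^{\prime}(L_{0},L_{1})\ge2c$. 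Existence of such data is exactly the realizability of the three ``matrix moments'' $L_{0},\,I,\,c^{-2}L_{1}$ (at exponents $1,0,-1$) by a $\mathcal{P}_{k}$-valued measure on $(0,\infty)$, the atoms being the $p_{i}$; by the matrix version of the truncated Stieltjes moment problem this reduces to positivity of the block Hankel matrix $\bigl[\begin{smallmatrix}L_{0}&I\\I&c^{-2}L_{1}\end{smallmatrix}\bigr]$, which by Lemma~\ref{lem:block-positive} is equivalent to $c^{-2}L_{1}\ge L_{0}^{-1}$, i.e.\ $c^{2}\le\lambda_{\min}\!\left(L_{0}^{1/2}L_{1}L_{0}^{1/2}\right)$ — an equality by our choice of $c$. (To sidestep the borderline singular-Hankel case, take $c^{2}$ strictly below $\lambda_{\min}$ so the Hankel matrix is strictly positive and the moment problem is unambiguously solvable, obtain $\hat{F}^{C}=2c$ arbitrarily close to $\hat{F}_{\max}$, and let $c\uparrow$.)

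When $L_{0}$ or $L_{1}$ is singular, $\hat{F}_{\max}(L_{0},L_{1})=0$ by (\ref{hatF=0}), while a representation always exists (e.g.\ a POVM assembled from the spectral decompositions of $L_{0}$ and $L_{1}$), so $\hat{F}_{\max}^{\prime}(L_{0},L_{1})\ge0$; with $\hat{F}_{\max}^{\prime}\le\hat{F}_{\max}$ this forces equality there too, so $\hat{F}_{\max}^{\prime}=\hat{F}_{\max}$ throughout. The step I expect to be the real obstacle is this matrix moment realization: one would either decompose the positive block matrix $\bigl[\begin{smallmatrix}L_{0}&I\\I&c^{-2}L_{1}\end{smallmatrix}\bigr]$ as a finite sum of blocks $\bigl[\begin{smallmatrix}p_{i}M_{i}&M_{i}\\M_{i}&p_{i}^{-1}M_{i}\end{smallmatrix}\bigr]$ with $M_{i}\ge0$, $p_{i}>0$ (a rank-adapted refinement of a simultaneous spectral-type decomposition), or invoke the known matrix truncated Stieltjes theorem; the scalar identities $l_{0,i}l_{1,i}=c^{2}$ and $\sum_{i}M_{i}=I$ then close the argument.
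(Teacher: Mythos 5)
Your strategy is to prove the equality $\hat{F}_{\max}^{\prime}=\hat{F}_{\max}$ outright and then inherit every property of the Lemma from $\hat{F}_{\max}$. That is a genuinely different route: the paper never proves this identity at the Lemma stage — it establishes the structural properties of $\hat{F}_{\max}^{\prime}$ directly (concavity via a concatenation of two POVMs into one plus the $\min$-additivity (\ref{Fcl-explicit})--(\ref{additive-c}) of $\hat{F}^{C}$; properness and continuity by sandwiching $\hat{F}_{\max}^{\prime}\le\hat{F}_{\min}^{\prime}$ and invoking Lemma~\ref{lem:hatFmin'-F0}; unital-CP monotonicity by pushing the POVM through $\Lambda^{\ast}$; polar normalization by restricting to pinched POVMs) — and only afterwards, in Theorem~\ref{th:hatFmax=max}, deduces $\hat{F}_{\max}^{\prime}=\hat{F}_{\max}$ \emph{using} this Lemma together with Theorem~\ref{th:hatF<FQ}. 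So if your plan worked, it would invert the paper's logical order; that is legitimate, provided the direct proof of the equality goes through.

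The problem is that it does not go through as written. The easy half $\hat{F}_{\max}^{\prime}\le\hat{F}_{\max}$ (via $\Lambda^{\ast}=\Phi_{M}^{\ast}\circ\Phi_{\mathcal{C}}$, monotonicity of $\hat{F}_{\max}$, and polar normalization of $\hat{F}_{\max}$) is fine, but it alone gives none of the claimed properties of $\hat{F}_{\max}^{\prime}$, in particular not concavity. Everything then rests on the reverse inequality, and you reduce it to the matrix-valued truncated Stieltjes moment problem: decompose the positive block matrix $\bigl[\begin{smallmatrix}L_{0}&I\\ I&c^{-2}L_{1}\end{smallmatrix}\bigr]$ as $\sum_{i}\bigl[\begin{smallmatrix}p_{i}M_{i}&M_{i}\\ M_{i}&p_{i}^{-1}M_{i}\end{smallmatrix}\bigr]$ with $M_{i}\ge0$, $p_{i}>0$, $\sum_{i}M_{i}=I$. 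You state explicitly that this is ``the real obstacle,'' and indeed you neither prove it nor cite a usable version. Block-Hankel positivity is necessary for a matrix-measure representation, but sufficiency in the matrix setting is not automatic — a rank-one decomposition of the block matrix produces terms $\bigl[\begin{smallmatrix}uu^{\ast}&uv^{\ast}\\ vu^{\ast}&vv^{\ast}\end{smallmatrix}\bigr]$ that are of the required tensor form only when $v$ is a real scalar multiple of $u$, and there is no reason a generic positive decomposition with fixed off-diagonal block $I$ can be refined into such terms; when $L_{0}$ and $L_{1}$ do not commute this is exactly where the difficulty sits, and the commuting case you can check by hand does not settle it. Unless you supply either a proof of the matrix Stieltjes theorem in the precise form you need (discrete positive-operator-valued measure on $(0,\infty)$, resolution of identity on the middle moment) or an explicit decomposition, the argument has a hole precisely at its load-bearing step, and none of the Lemma's conclusions follow. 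The suggestion of taking $c$ strictly below $\lambda_{\min}$ to avoid degeneracy does not help: it is the existence of the decomposition at all, not its non-degeneracy, that is unproved.

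Two smaller remarks. First, the paper's Lemma statement says ``$\hat{F}_{\min}^{\prime}$ is continuous,'' but its own proof clearly establishes continuity of $\hat{F}_{\max}^{\prime}$; your reading of this as a typo for $\hat{F}_{\max}^{\prime}$ is correct. Second, you observe parenthetically that once $\hat{F}_{\max}^{\prime}\in\mathcal{F}_{0}$ is known the reverse inequality would follow from Theorem~\ref{th:hatF<FQ} and Corollary~\ref{cor:1-to-1} — that observation is essentially the paper's strategy; the right move is therefore to take that route (prove the structural properties first, as the paper does), rather than try to bootstrap them from an equality you cannot yet establish.
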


\begin{proof}
$\hat{F}_{\max}^{\prime}$ is obviously positively homogeneous by definition.
Observe that $L_{\theta}=\sum_{i=1}^{n}l_{\theta,i}M_{i}$ and $L_{\theta
}^{\prime}=\sum_{i=1}^{n^{\prime}}l_{\theta,i}^{\prime}M_{i}^{\prime}$ imply
\[
L_{\theta}+L_{\theta}^{\prime}=\sum_{j=1}^{n+n^{\prime}}l_{\theta,j}^{\lambda
}M_{j}^{\lambda}=\Phi_{M^{\lambda}}^{\ast}\left(  l_{\theta}^{\lambda}\right)
,
\]
where
\begin{align*}
l_{\theta,j}^{\lambda}  &  =\left\{
\begin{array}
[c]{cc}%
\frac{1}{\lambda}l_{\theta,j}, & j=1,\cdots,n\\
\frac{1}{1-\lambda}l_{\theta,j-n}^{\prime}, & j=n+1,\cdots,n+n^{\prime}%
\end{array}
\right.  ,\,\\
M_{j}^{\lambda}  &  =\left\{
\begin{array}
[c]{cc}%
\lambda M_{j}, & j=1,\cdots,n\\
\left(  1-\lambda\right)  M_{j}^{\prime}, & j=n+1,\cdots,n+n^{\prime}%
\end{array}
\right.  .
\end{align*}
Therefore,
\begin{align*}
&  \hat{F}_{\max}^{\prime}\left(  L_{0}+L_{0}^{\prime},L_{1}+L_{1}^{\prime
}\right) \\
&  \geq\sup_{l_{\theta},l_{\theta}^{\prime},M,M^{\prime}}\max_{\lambda
\in\lbrack0,1]}\hat{F}^{C}\left(  l_{0}^{\lambda},l_{1}^{\lambda}\right) \\
&  =\sup_{l_{\theta},l_{\theta}^{\prime},M,M^{\prime}}\max_{\lambda\in
\lbrack0,1]}\min\left\{  \frac{1}{\lambda}\hat{F}^{C}\left(  l_{0}%
,l_{1}\right)  ,\frac{1}{1-\lambda}\hat{F}^{C}\left(  l_{0}^{\prime}%
,l_{1}^{\prime}\right)  \right\} \\
&  =\sup_{l_{\theta},l_{\theta}^{\prime},M,M^{\prime}}\left\{  \hat{F}%
^{C}\left(  l_{0},l_{1}\right)  +\hat{F}^{C}\left(  l_{0}^{\prime}%
,l_{1}^{\prime}\right)  \right\} \\
&  =\hat{F}_{\max}^{\prime}\left(  L_{0},L_{1}\right)  +\hat{F}_{\max}%
^{\prime}\left(  L_{0}^{\prime},L_{1}^{\prime}\right)  .
\end{align*}
Therefore, combined with positive homogeneity, we have concavity of $\hat
{F}_{\max}^{\prime}$.

Also,
\begin{equation}
\hat{F}_{\max}^{\prime}\leq\hat{F}_{\min}^{\prime} \label{hatFmax'<hatFmin'}%
\end{equation}
as is shown below. For any CP unital map $\Phi^{\ast}$\thinspace from
$\mathcal{C}_{k}$\ to $\mathcal{L}_{k}$ and $\Psi^{\ast}$ from $\mathcal{L}%
_{k}$ to $\mathcal{C}_{k}$, \ $\Psi^{\ast}\circ\Phi^{\ast}$ is transpose of a
stochastic map. So,
\[
\hat{F}^{C}\left(  l_{0},l_{1}\right)  \leq\hat{F}^{C}\left(  \Psi^{\ast}%
\circ\Phi^{\ast}\left(  l_{0}\right)  ,\Psi^{\ast}\circ\Phi^{\ast}\left(
l_{1}\right)  \right)  .
\]
Hence, if $L_{\theta}=\Phi^{\ast}\left(  l_{\theta}\right)  $, $\theta=0,1$,%
\[
\hat{F}^{C}\left(  l_{0},l_{1}\right)  \leq\hat{F}^{C}\left(  \Psi^{\ast
}\left(  L_{0}\right)  ,\Psi^{\ast}\left(  L_{1}\right)  \right)  .
\]
Taking supremum of the LHS and infimum of the RHS, we obtain $\hat{F}_{\max
}^{\prime}\leq\hat{F}_{\min}^{\prime}$. Therefore, $\hat{F}_{\max}^{\prime}$
nowhere takes the value $\infty$. Also, that $\hat{F}_{\max}^{\prime}$ does
not take the value $-\infty$ in $\mathcal{P}^{\times2}$ is obvious by definition.

By Lemma\thinspace\ref{lem:conv-cont}, $\hat{F}_{\max}^{\prime}$ is continuous
on $\mathrm{ri}\,\mathcal{P}^{\times2}$. Also, for any sequence $\left\{
\left(  L_{0,i},L_{1,i}\right)  \right\}  $\thinspace which converges to
$\left(  L_{0,\infty},L_{1,\infty}\right)  $, by (\ref{hatFmax'<hatFmin'}) and
(\ref{hatFmin'=0})%
\begin{align*}
\varlimsup_{i\rightarrow\infty}\hat{F}_{\max}^{\prime}\left(  L_{0,i}%
,L_{1,i}\right)   &  \leq\varlimsup_{i\rightarrow\infty}\hat{F}_{\min}%
^{\prime}\left(  L_{0,i},L_{1,i}\right)  =0,\\
\hat{F}_{\max}^{\prime}\left(  L_{0,\infty},L_{1,\infty}\right)   &  \leq
\hat{F}_{\min}^{\prime}\left(  L_{0,\infty},L_{1,\infty}\right)  =0.
\end{align*}
Since $\hat{F}_{\max}^{\prime}$ is non-negative on $\mathcal{P}^{\times2}$, we
have
\[
\lim_{i\rightarrow\infty}\hat{F}_{\max}^{\prime}\left(  L_{0,i},L_{1,i}%
\right)  =0=\hat{F}_{\max}^{\prime}\left(  L_{0,\infty},L_{1,\infty}\right)
.
\]
Therefore, $\hat{F}_{\max}^{\prime}$ is continuous, and thus it is closed.
After all, $\hat{F}_{\max}^{\prime}$ is a member of $\mathcal{F}_{0}$.

$\hat{F}_{\max}^{\prime}$ is monotone increasing by any unital CP map
$\Lambda^{\ast}$, proved as follows. If a POVM $M=\left\{  M_{i}\right\}  $
satisfies $L_{\theta}=\sum_{i}l_{\theta,i}M_{i}$, \ the set $\left\{
\Lambda^{\ast}\left(  M_{i}\right)  \right\}  $ of operators is POVM and
satisfies $\Lambda^{\ast}\left(  L_{\theta}\right)  =\sum_{i}l_{\theta
,i}\Lambda^{\ast}\left(  M_{i}\right)  $. Therefore,
\begin{align*}
&  \hat{F}_{\max}^{\prime}\left(  \Lambda^{\ast}\left(  L_{0}\right)
,\Lambda^{\ast}\left(  L_{1}\right)  \right) \\
&  =\sup\left\{  \hat{F}^{C}\left(  l_{0},l_{1}\right)  ;\,\Lambda^{\ast
}\left(  L_{\theta}\right)  =\sum_{i}l_{\theta,i}M_{i},\,\theta
=0,1,\,M\text{:\thinspace POVM }\right\} \\
&  \geq\sup\left\{  \hat{F}^{C}\left(  l_{0},l_{1}\right)  \,;\,\Lambda^{\ast
}\left(  L_{\theta}\right)  =\sum_{i}l_{\theta,i}\Lambda^{\ast}\left(
M_{i}\right)  ,\,\theta=0,1,\,M\text{:\thinspace POVM }\right\} \\
&  \geq\sup\left\{  \hat{F}^{C}\left(  l_{0},l_{1}\right)  \,;\,L_{\theta
}=\sum_{i}l_{\theta,i}M_{i},\,\,\theta=0,1,\,M\text{:\thinspace POVM }\right\}
\\
&  =\hat{F}_{\max}^{\prime}\left(  L_{0},L_{1}\right)  .
\end{align*}

Suppose that the triple $l_{0}$, $l_{1}$, $M=\left\{  M_{i}\right\}  $
satisfies the constrain given in the RHS of (\ref{hatFmax'}) with $\left(
L_{0},L_{1}\right)  =\left(  L_{0,c},L_{1,c}\right)  $, where $\left(
L_{0,c},L_{1,c}\right)  $ is as of (\ref{Lc}). Then \ the triple $l_{0}$,
$l_{1}$, $\left\{  \Gamma_{1}\left(  M_{i}\right)  \right\}  $ also satisfies
the constrain. Therefore, without changing the maximum, we may restrict the
range of POVM to the ones which are diagonalized in the basis $\left\{
\left\vert i\right\rangle \right\}  _{i=1}^{k}$. Such a POVM corresponds to a
transpose of a stochastic matrix. Therefore,
\[
\hat{F}_{\max}^{\prime}\left(  L_{0,c},L_{1,c}\right)  =\sup\left\{  \hat
{F}^{C}\left(  \lambda_{0},\lambda_{1}\right)  ;l_{\theta}=T\,^{t}%
\lambda_{\theta},\theta=0,1,\,T\text{: column stochastic matrix }\right\}
\]
Since $\hat{F}^{C}$ is monotone increasing by the application of a transpose
of a stochastic matrix, $\hat{F}^{C}\left(  \lambda_{0},\lambda_{1}\right)  $
cannot exceed $\hat{F}^{C}\left(  l_{0},l_{1}\right)  $ if $l_{\theta}%
=T\,^{t}\lambda_{\theta}$ ($\theta=0,1$). Therefore, $\hat{F}_{\max}^{\prime
}\left(  L_{0,c},L_{1,c}\right)  =\hat{F}^{C}\left(  l_{0},l_{1}\right)  $,
and $\hat{F}_{\max}^{\prime}$ is polarly normalized.\ 
\end{proof}

The following theorem gives `operational' meaning of $\hat{F}_{\max}$ and
$\hat{F}_{\min}^{\prime}\,$.

\begin{theorem}
\label{th:hatFmax=max}%
\begin{equation}
\hat{F}_{\max}=\hat{F}_{\max}^{\prime},\,\hat{F}_{\min}=\hat{F}_{\min}%
^{\prime}. \label{hatFmax=max}%
\end{equation}

\end{theorem}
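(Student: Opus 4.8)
The plan is to derive each of the two equalities in (\ref{hatFmax=max}) by squeezing the primed functional between $\hat{F}_{\max}$ and $\hat{F}_{\min}$ from both sides, where one inequality in each case is read off from the abstract sandwich already established. By Lemma~\ref{lem:hatFmin'-F0} and Lemma~\ref{lem:hatFmax'-F0}, both $\hat{F}_{\min}^{\prime}$ and $\hat{F}_{\max}^{\prime}$ belong to $\mathcal{F}_{0}$, are monotone increasing under every unital CP map, and are polarly normalized. Hence Theorem~\ref{th:hatF<FQ} applies to each of them, giving at once
\[
\hat{F}_{\max}\leq\hat{F}_{\max}^{\prime}\qquad\text{and}\qquad\hat{F}_{\min}^{\prime}\leq\hat{F}_{\min}.
\]

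For the opposite inequalities I would argue directly. Since $F_{\max}$ and $F_{\min}$ are themselves CPTP monotone and normalized members of $\mathcal{F}_{0}$, Proposition~\ref{prop:hatF-monotone} makes their polars $\hat{F}_{\max}$ and $\hat{F}_{\min}$ monotone increasing under every unital CP map, and Proposition~\ref{prop:polar-normalize} makes them polarly normalized. Fix $\left(  L_{0},L_{1}\right)  \in\mathcal{P}^{\times2}$. If $M$ is any POVM and $l_{0},l_{1}$ are arrays with $L_{\theta}=\Phi_{M}^{\ast}\left(  l_{\theta}\right)  $, I identify $l_{\theta}$ with the diagonal operator $\sum_{i}l_{\theta,i}\left\vert i\right\rangle \left\langle i\right\vert $, apply polar normalization of $\hat{F}_{\max}$ in the dimension of the array, and then monotonicity under the unital CP map $\Phi_{M}^{\ast}$, to obtain
\[
\hat{F}^{C}\left(  l_{0},l_{1}\right)  =\hat{F}_{\max}\left(  l_{0},l_{1}\right)  \leq\hat{F}_{\max}\left(  \Phi_{M}^{\ast}\left(  l_{0}\right)  ,\Phi_{M}^{\ast}\left(  l_{1}\right)  \right)  =\hat{F}_{\max}\left(  L_{0},L_{1}\right)  .
\]
Taking the supremum over all admissible $\left(  M,l_{0},l_{1}\right)  $ in (\ref{hatFmax'}) gives $\hat{F}_{\max}^{\prime}\leq\hat{F}_{\max}$. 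Dually, for any array $\vec{\rho}$ of states, writing $l_{\theta}=\Psi_{\vec{\rho}}^{\ast}\left(  L_{\theta}\right)  $, the same two ingredients give
\[
\hat{F}^{C}\left(  l_{0},l_{1}\right)  =\hat{F}_{\min}\left(  l_{0},l_{1}\right)  =\hat{F}_{\min}\left(  \Psi_{\vec{\rho}}^{\ast}\left(  L_{0}\right)  ,\Psi_{\vec{\rho}}^{\ast}\left(  L_{1}\right)  \right)  \geq\hat{F}_{\min}\left(  L_{0},L_{1}\right)  ,
\]
and the infimum over $\vec{\rho}$ in (\ref{hatFmin'}) yields $\hat{F}_{\min}^{\prime}\geq\hat{F}_{\min}$. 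Combining with the previous display closes both equalities.

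I do not anticipate a genuine obstacle: the substantive content has all been placed earlier, in Theorem~\ref{th:hatF<FQ} and in Lemmas~\ref{lem:hatFmin'-F0} and~\ref{lem:hatFmax'-F0}, so this theorem is essentially an assembly. The one thing to keep straight is the reversal of roles under the polar operation: $F_{\max}$ is the \emph{largest} normalized CPTP monotone fidelity but $\hat{F}_{\max}$ is the \emph{smallest} polarly normalized, unital-CP monotone functional, which is why the supremum over classical-to-quantum maps pairs with $\hat{F}_{\max}$ while the infimum over quantum-to-classical maps pairs with $\hat{F}_{\min}$; and polar normalization must be invoked in the dimension of the classical array, not of $L_{0},L_{1}$, whenever the POVM or the array $\vec{\rho}$ has a number of outcomes different from $\dim\mathcal{H}$.
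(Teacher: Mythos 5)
Your proof is correct and follows essentially the same route as the paper's. The paper first establishes the chain $\hat{F}_{\max}^{\prime}\leq\hat{F}_{\max}\leq\hat{F}_{\min}\leq\hat{F}_{\min}^{\prime}$ by observing that any polarly normalized, unital-CP-monotone $\hat{F}^{Q}$ sits between $\hat{F}_{\max}^{\prime}$ and $\hat{F}_{\min}^{\prime}$ (via exactly the polar-normalization-plus-monotonicity argument you give, just stated abstractly and then specialized to $\hat{F}_{\max}$ and $\hat{F}_{\min}$), and then closes the sandwich with Theorem~\ref{th:hatF<FQ} applied to $\hat{F}_{\max}^{\prime}$ and $\hat{F}_{\min}^{\prime}$; you perform these two steps in the opposite order and inline the abstract lemma, but the ingredients and the logic are the same.
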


\begin{proof}
First, we show
\begin{equation}
\hat{F}_{\max}^{\prime}\left(  L_{0},L_{1}\right)  \leq\hat{F}_{\max}\left(
L_{0},L_{1}\right)  \leq\hat{F}_{\min}\left(  L_{0},L_{1}\right)  \leq\hat
{F}_{\min}^{\prime}\left(  L_{0},L_{1}\right)  . \label{hatF'<hatF<hatF<hatF'}%
\end{equation}
Suppose $\hat{F}^{Q}$ is polarly normalized and monotone increasing by any
unital CP maps. Then,
\begin{align*}
\hat{F}^{Q}\left(  L_{0},L_{1}\right)   &  \geq\sup_{l_{\theta},M}\left\{
\,\,\hat{F}^{Q}\left(  L_{0,c},L_{1,c}\right)  ;\,L_{\theta}=\Phi_{M}^{\ast
}\left(  L_{\theta,c}\right)  \right\} \\
&  =\sup_{l_{\theta},M}\left\{  \,\,\hat{F}^{C}\left(  l_{0},l_{1}\right)
;\,L_{\theta}=\Phi_{M}^{\ast}\left(  l_{\theta}\right)  \right\}  =\hat
{F}_{\max}^{\prime}\left(  L_{0,}L_{1}\right)  ,\\
\hat{F}^{Q}\left(  L_{0},L_{1}\right)   &  \leq\inf_{l_{\theta},\vec{\rho}%
}\left\{  \hat{F}^{Q}\left(  L_{0,c},L_{1,c}\right)  ;L_{\theta,c}=\Psi
_{\vec{\rho}}^{\ast}\left(  L_{\theta}\right)  \right\} \\
&  =\inf_{l_{\theta},\vec{\rho}}\left\{  \hat{F}^{C}\left(  l_{0}%
,l_{1}\right)  ;l_{\theta}=\Psi_{\vec{\rho}}^{\ast}\left(  L_{\theta}\right)
\right\}  =\hat{F}_{\min}^{\prime}\left(  L_{0},L_{1}\right)  .
\end{align*}
Since $\hat{F}_{\max}$ and $\hat{F}_{\min}$ are examples of such $\hat{F}^{Q}%
$, we have inequalities (\ref{hatF'<hatF<hatF<hatF'}).

By Lemmas\thinspace\ref{lem:hatFmax'-F0}-\ref{lem:hatFmin'-F0}, $\hat{F}%
_{\max}$ and $\hat{F}_{\min}$ are members of $\mathcal{F}_{0}$ which is
polarly normalized and monotone increasing by any unital CP map. Therefore,
\ by Theorem\thinspace\ref{th:hatF<FQ}, we have $\hat{F}_{\max}^{\prime}%
\leq\hat{F}_{\max}$ and $\hat{F}_{\min}\leq\hat{F}_{\min}^{\prime}$, which,
combined with (\ref{hatF'<hatF<hatF<hatF'}), lead to the assertion.
\end{proof}

\subsection{$\hat{F}_{1/2}$}

Below, we give an expression of $\hat{F}_{1/2}$. By (\ref{L=SY}), the optimal
$\left(  L_{0,\ast},L_{1,\ast}\right)  $ is given by the simultaneous linear equations%

\begin{align*}
\sqrt{Y}  &  =\sqrt{X}L_{0,\ast}+L_{0,\ast}\sqrt{X},\\
\sqrt{X}  &  =\sqrt{Y}L_{1,\ast}+L_{1,\ast}\sqrt{Y}.
\end{align*}
Therefore,
\begin{equation}
\sqrt{X}=\mathbf{S}_{L_{0,\ast}}\left(  \sqrt{Y}\right)  ,\,\sqrt
{Y}=\mathbf{S}_{L_{1,\ast}}\left(  \sqrt{X}\right)  . \label{X=SL}%
\end{equation}
So
\[
\partial\mathcal{M}_{F_{1/2}}=\left\{  \left(  L_{0,\ast},L_{1,\ast}\right)
;\exists X,Y\geq0\text{ with (\ref{X=SL})}\right\}  .
\]

Suppose there is $\sqrt{X}$ such that
\begin{equation}
\sqrt{X}=\mathbf{S}_{L_{0,\ast}}\circ\mathbf{S}_{L_{1,\ast}}\left(  \sqrt
{X}\right)  . \label{X=SSX}%
\end{equation}
Observe $\mathbf{S}_{L_{1,\ast}}$ is positive. Then defining $Y\geq0$ by
$\sqrt{Y}=\mathbf{S}_{L_{1,\ast}}\left(  \sqrt{X}\right)  $, $X,Y\geq0$
satisfies (\ref{X=SL}). \ Therefore,
\[
\partial\mathcal{M}_{F_{1/2}}=\left\{  \left(  L_{0,\ast},L_{1,\ast}\right)
;\exists X\geq0\text{ with (\ref{X=SSX})}\right\}  .
\]

For any $L_{0}>0$ and $L_{1}>0$, the map $\mathbf{S}_{L_{0}}\circ
\mathbf{S}_{L_{1}}$ is strictly positive. Also, a map
\begin{equation}
A\rightarrow\frac{1}{\mathrm{tr}\,\mathbf{S}_{L_{0}}\circ\mathbf{S}_{L_{1}%
}\left(  A\right)  }\mathbf{S}_{L_{0}}\circ\mathbf{S}_{L_{1}}\left(  A\right)
\label{SSA}%
\end{equation}
defined on the compact convex set
\[
\mathcal{P}_{k}^{\times2}\cap\left\{  A;\mathrm{tr}\,A=1\right\}  ,
\]
is continuous. Thus, by Tychonoff's fixed point theorem, there is $A_{\ast
}\geq0$ fixed by the map (\ref{SSA}), or equivalently,
\begin{equation}
\mathbf{S}_{L_{0}}\circ\mathbf{S}_{L_{1}}\left(  A_{\ast}\right)
=\alpha_{\ast}A_{\ast},\,\alpha_{\ast}>0, \label{eigen-SS}%
\end{equation}
or
\[
\mathbf{S}_{\sqrt{\alpha}L_{0}}\circ\mathbf{S}_{\sqrt{\alpha}L_{1}}\left(
A_{\ast}\right)  =A_{\ast}.
\]
Therefore, $\left(  \hat{F}\left(  L_{0},L_{1}\right)  \right)  ^{-2}$ is an
eigenvalue of $\mathbf{S}_{L_{0}}\circ\mathbf{S}_{L_{1}}$ corresponding to the
eigenvector $A_{\ast}\geq0$.

But there can be two or \ more eigenvalues of $\mathbf{S}_{L_{0}}%
\circ\mathbf{S}_{L_{1}}$ which corresponding eigenvectors are positive. Also,
in this way one has to compute eigenvectors in addition to eigenvalues of
$\mathbf{S}_{L_{0}}\circ\mathbf{S}_{L_{1}}$. So we further investigate the
nature of $\mathbf{S}_{L_{0}}\circ\mathbf{S}_{L_{1}}$.

\begin{proposition}
\label{prop:diagonalize}Suppose $L_{0}>0$ and $L_{1}>0$. Then, $\mathbf{S}%
_{L_{0}}\circ\mathbf{S}_{L_{1}}$ is diagonalizable and all the eigenvalues are positive.
\end{proposition}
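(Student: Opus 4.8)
The plan is to realize $\mathbf{S}_{L_{0}}\circ\mathbf{S}_{L_{1}}$ as a similarity transform of a strictly positive self-adjoint operator on the real Hilbert space $\left(\mathcal{L}_{sa,k},\left\langle\cdot,\cdot\right\rangle_{\mathrm{HS}}\right)$, so that diagonalizability and positivity of the spectrum become immediate. Two ingredients are needed: (i) for $Z>0$ the map $\mathbf{S}_{Z}$ is a strictly positive, self-adjoint operator on $\mathcal{L}_{sa,k}$; and (ii) the standard fact that a product of two such operators is diagonalizable with positive eigenvalues.

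First I would record that $\mathbf{S}_{Z}$ maps $\mathcal{L}_{sa,k}$ into itself: if $S$ solves $X=SZ+ZS$, then $S^{\dagger}$ solves the same equation (since $X=X^{\dagger}$, $Z=Z^{\dagger}$), and for $Z>0$ the Sylvester equation $X=SZ+ZS$ has a unique solution because the spectra of $Z$ and $-Z$ are disjoint; hence $S=S^{\dagger}$. Self-adjointness of $\mathbf{S}_{Z}$ with respect to the Hilbert--Schmidt inner product is already established in the text. Strict positivity follows from the integral representation (\ref{S-int}): for $X\in\mathcal{L}_{sa,k}$,
\[
\left\langle X,\mathbf{S}_{Z}\left(X\right)\right\rangle_{\mathrm{HS}}=\int_{0}^{\infty}\mathrm{tr}\,X\,e^{-tZ}\,X\,e^{-tZ}\,\mathrm{d}t=\int_{0}^{\infty}\left\Vert e^{-tZ/2}\,X\,e^{-tZ/2}\right\Vert_{\mathrm{HS}}^{2}\,\mathrm{d}t,
\]
which is nonnegative and, since $e^{-tZ/2}$ is invertible for all $t$ when $Z>0$, vanishes only when $X=0$. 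Thus $\mathbf{S}_{L_{0}}$ and $\mathbf{S}_{L_{1}}$ are strictly positive self-adjoint operators on $\mathcal{L}_{sa,k}$, and in particular each admits a strictly positive self-adjoint square root.

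Then I would invoke the linear-algebra fact: if $A$ and $B$ are strictly positive self-adjoint operators on a finite-dimensional Hilbert space, then $BA$ is diagonalizable with strictly positive eigenvalues, because
\[
A^{1/2}\,(BA)\,A^{-1/2}=A^{1/2}\,B\,A^{1/2}
\]
is strictly positive and self-adjoint, hence orthogonally diagonalizable with positive spectrum, and similarity preserves both diagonalizability and the spectrum. Applying this with $A=\mathbf{S}_{L_{1}}$ and $B=\mathbf{S}_{L_{0}}$ (so that $\mathbf{S}_{L_{0}}\circ\mathbf{S}_{L_{1}}=BA$ is conjugated by $\mathbf{S}_{L_{1}}^{1/2}$ to $\mathbf{S}_{L_{1}}^{1/2}\mathbf{S}_{L_{0}}\mathbf{S}_{L_{1}}^{1/2}$) yields the assertion.

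I do not expect a serious obstacle here; the only point needing a little care is the bookkeeping that $\mathbf{S}_{Z}$ genuinely acts on the real vector space $\mathcal{L}_{sa,k}$, so that "diagonalizable with positive eigenvalues" is meaningful as stated, and that the square root $\mathbf{S}_{L_{1}}^{1/2}$ used in the conjugation is taken within the operators on this real Hilbert space. Everything else is a routine assembly of the integral formula (\ref{S-int}), the self-duality already proved, and the similarity identity above.
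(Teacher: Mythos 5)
Your proposal is correct and rests on the same central device as the paper's proof: conjugating $\mathbf{S}_{L_{0}}\circ\mathbf{S}_{L_{1}}$ by $\mathbf{S}_{L_{1}}^{1/2}$ to obtain the self-adjoint positive operator $\mathbf{S}_{L_{1}}^{1/2}\circ\mathbf{S}_{L_{0}}\circ\mathbf{S}_{L_{1}}^{1/2}$, then reading off diagonalizability and positivity of the spectrum from similarity invariance. The only cosmetic difference is how the strict positivity of $\mathbf{S}_{Z}$ is established: the paper exhibits the full eigenbasis $\left\vert\varphi_{i}\right\rangle\left\langle\varphi_{j}\right\vert$ with eigenvalues $(\lambda_{i}+\lambda_{j})^{-1}$, whereas you use the integral representation (\ref{S-int}) to write $\left\langle X,\mathbf{S}_{Z}(X)\right\rangle_{\mathrm{HS}}$ as an integral of squared Hilbert--Schmidt norms; both are standard and equivalent in effort.
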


\begin{proof}
First, $\mathbf{S}_{L_{0}}$ and $\mathbf{S}_{L_{1}}$ is self-adjoint, and all
the eigenvalues of them are positive. In fact, let $\left\vert \varphi
_{j}\right\rangle $ be the eigenvector of $L_{1}$ with corresponding
eigenvalue $\lambda_{j}$ ($>0$, by the assumption\thinspace$L_{1}>0$). Then,
\[
\mathbf{S}_{L_{1}}\left(  \left\vert \varphi_{i}\right\rangle \left\langle
\varphi_{j}\right\vert \right)  =\frac{1}{\lambda_{i}+\lambda_{j}}\left\vert
\varphi_{i}\right\rangle \left\langle \varphi_{j}\right\vert .
\]
Since $\left\{  \left\vert \varphi_{i}\right\rangle \left\langle \varphi
_{j}\right\vert \right\}  _{i,j}$ forms a complete basis of $\mathcal{L}_{k}$,
they are the only eigenvectors. Thus, all the eigenvalues of $\mathbf{S}%
_{L_{1}}$ are positive. Second, $\mathbf{S}_{L_{0}}\circ\mathbf{S}_{L_{1}}$
has the same Jordan standard form as $\mathbf{S}_{L_{1}}^{1/2}\circ
\mathbf{S}_{L_{0}}\circ\mathbf{S}_{L_{1}}\circ\mathbf{S}_{L_{1}}%
^{-1/2}=\mathbf{S}_{L_{1}}^{1/2}\circ\mathbf{S}_{L_{0}}\circ\mathbf{S}_{L_{1}%
}^{1/2}$ . Thus, $\mathbf{S}_{L_{0}}\circ\mathbf{S}_{L_{1}}$ is
diagonalizable, and all of its eigenvalues are positive.\ 
\end{proof}

\begin{proposition}
\label{prop:++-eigenvec}Suppose $L_{0}>0$ and $L_{1}>0$. Let $A$ and
$A^{\prime}$ be an eigenvector of $\mathbf{S}_{L_{0}}\circ\mathbf{S}_{L_{1}}$,
with corresponding eigenvalue $\alpha$ and $a^{\prime}$, respectively.
Further, suppose $A>0$. Then $\alpha\geq\alpha^{\prime}$. Especially, if
$A^{\prime}$ is also strictly positive, $\alpha=\alpha^{\prime}$.
\end{proposition}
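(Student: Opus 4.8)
The plan is to read Proposition \ref{prop:++-eigenvec} as a Perron--Frobenius statement for the positivity-preserving superoperator $\Phi:=\mathbf{S}_{L_{0}}\circ\mathbf{S}_{L_{1}}$ on $\mathcal{L}_{sa,k}$, the hypothesis $A>0$ meaning that $\Phi$ has an eigenvector in the interior of the positive cone. Three preliminary observations set the stage: $\Phi$, and hence every power $\Phi^{n}$, is positivity-preserving (immediate from the integral formula (\ref{S-int}) for $\mathbf{S}_{L_{0}},\mathbf{S}_{L_{1}}$); its Hilbert--Schmidt adjoint is $\Phi^{\ast}=\mathbf{S}_{L_{1}}\circ\mathbf{S}_{L_{0}}$, since each $\mathbf{S}_{L_{\theta}}$ is self-dual; and every eigenvalue of $\Phi$ is a positive real by Proposition \ref{prop:diagonalize}. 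First I would manufacture a strictly positive eigenvector of $\Phi^{\ast}$ with the same eigenvalue $\alpha$: applying $\mathbf{S}_{L_{1}}$ to $\Phi\left(A\right)=\alpha A$ yields $\Phi^{\ast}\left(B\right)=\alpha B$ with $B:=\mathbf{S}_{L_{1}}\left(A\right)=\int_{0}^{\infty}e^{-tL_{1}}A\,e^{-tL_{1}}\,\mathrm{d}t>0$.

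The heart of the argument is to show $\alpha$ equals the spectral radius $r\left(\Phi\right)$. Since $A>0$ we have $I_{k}\leq A/\lambda_{\min}\left(A\right)$, so any $X\geq0$ obeys $X\leq\left(\left\Vert X\right\Vert /\lambda_{\min}\left(A\right)\right)A$; applying the positivity-preserving, hence monotone, map $\Phi^{n}$ together with $\Phi^{n}\left(A\right)=\alpha^{n}A$ gives $0\leq\Phi^{n}\left(X\right)\leq\left(\left\Vert X\right\Vert /\lambda_{\min}\left(A\right)\right)\alpha^{n}A$, whence $\left\Vert \Phi^{n}\left(X\right)\right\Vert \leq\mathrm{tr}\,\Phi^{n}\left(X\right)\leq\left(\mathrm{tr}\,A/\lambda_{\min}\left(A\right)\right)\alpha^{n}\left\Vert X\right\Vert $. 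Decomposing an arbitrary self-adjoint operator into positive and negative parts, and a general operator into Hermitian and anti-Hermitian parts (using that $\Phi^{n}$ preserves self-adjointness), upgrades this to $\left\Vert \Phi^{n}\right\Vert \leq C\alpha^{n}$ for a fixed constant $C$, so by Gelfand's formula $r\left(\Phi\right)=\lim_{n}\left\Vert \Phi^{n}\right\Vert ^{1/n}\leq\alpha$. As $\alpha$ is itself an eigenvalue, $r\left(\Phi\right)=\alpha$; and since any eigenvalue $\alpha^{\prime}$ is a positive real by Proposition \ref{prop:diagonalize}, $\alpha^{\prime}=\left\vert \alpha^{\prime}\right\vert \leq r\left(\Phi\right)=\alpha$, which is the first assertion. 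For the strict case, suppose in addition $A^{\prime}>0$ with $\Phi\left(A^{\prime}\right)=\alpha^{\prime}A^{\prime}$; pairing against $B$ and using self-adjointness of $B$ and $\Phi^{\ast}\left(B\right)=\alpha B$ gives $\alpha^{\prime}\,\mathrm{tr}\,BA^{\prime}=\left\langle B,\Phi\left(A^{\prime}\right)\right\rangle _{\mathrm{HS}}=\left\langle \Phi^{\ast}\left(B\right),A^{\prime}\right\rangle _{\mathrm{HS}}=\alpha\,\mathrm{tr}\,BA^{\prime}$, and since $\mathrm{tr}\,BA^{\prime}=\mathrm{tr}\,B^{1/2}A^{\prime}B^{1/2}>0$ (because $B,A^{\prime}>0$), we conclude $\alpha=\alpha^{\prime}$.

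The step I expect to be the real obstacle is the spectral-radius domination, i.e. getting $\alpha\geq\alpha^{\prime}$ for \emph{every} eigenvalue rather than only for those admitting positive eigenvectors. The naive route---passing to the similar self-adjoint map $\mathbf{S}_{L_{1}}^{1/2}\circ\mathbf{S}_{L_{0}}\circ\mathbf{S}_{L_{1}}^{1/2}$ and invoking the Rayleigh quotient---breaks down, because the superoperator square root $\mathbf{S}_{L_{1}}^{1/2}$ is not positivity-preserving and so need not carry the Perron eigenvector $A$ into the positive cone; the Rayleigh argument then controls the top eigenvalue of the conjugated map but not the positivity of the witnessing eigenvector. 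The trace/operator-norm bound above circumvents this by using only positivity-preservation of the powers $\Phi^{n}$ together with the interiority of $A$, and it is the one place where both $L_{0}>0$ and $L_{1}>0$ are genuinely used (to make $\Phi^{n}$ monotone and $B$ strictly positive).
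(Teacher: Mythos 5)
Your proof is correct, but it follows a genuinely different route than the paper's. The paper runs a classical Perron--Frobenius ``boundary'' argument: after normalizing $A'\notin -\mathcal{P}$, it picks a $\lambda>0$ with $A-\lambda A'\in\partial\mathcal{P}_{k}$ (possible since $A>0$), applies the positivity-preserving map $\Phi=\mathbf{S}_{L_{0}}\circ\mathbf{S}_{L_{1}}$ to get $\alpha\bigl(A-\tfrac{\alpha'}{\alpha}\lambda A'\bigr)\geq0$, and tests against a null vector $v$ of $A-\lambda A'$ (for which $\langle v|A|v\rangle=\lambda\langle v|A'|v\rangle>0$) to force $\alpha\geq\alpha'$; the equality case is obtained by swapping the roles of $A$ and $A'$. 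You instead prove that the interior eigenvector $A>0$ identifies $\alpha$ with the spectral radius $r(\Phi)$, using monotonicity of $\Phi^{n}$, the operator-norm-vs-trace bound, and Gelfand's formula, then invoke Proposition \ref{prop:diagonalize} to conclude $\alpha'\leq r(\Phi)=\alpha$; and you handle the equality case by a duality pairing with the strictly positive eigenvector $B=\mathbf{S}_{L_{1}}(A)$ of $\Phi^{\ast}$. Both arguments are sound. The paper's version is shorter and stays entirely inside the cone $\mathcal{P}_{k}$; yours is a few lines longer but is more modular (it explicitly establishes $\alpha=r(\Phi)$, a reusable fact) and gives the equality case as a one-line inner-product identity rather than a re-run of the inequality. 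Your diagnosis of why the Rayleigh-quotient shortcut through $\mathbf{S}_{L_{1}}^{1/2}\circ\mathbf{S}_{L_{0}}\circ\mathbf{S}_{L_{1}}^{1/2}$ does not immediately work is also accurate: similarity controls the spectrum but loses track of which eigenvector lies in the positive cone.
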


\begin{proof}
Without loss of generality, we suppose $A^{\prime}$ is not a member of
$-\mathcal{P}$. (Otherwise, we name $-A^{\prime}$ as $A^{\prime}$.) Since
$A>0$, there is a positive $\lambda>0$ such that $A-\lambda A^{\prime}$ is on
$\partial\mathcal{P}_{k}$. Suppose $\alpha^{\prime}>0$. Then since
\[
\mathbf{S}_{L_{0}}\circ\mathbf{S}_{L_{1}}\left(  A-\lambda A^{\prime}\right)
=\alpha\left(  A-\frac{\alpha^{\prime}}{\alpha}\lambda A^{\prime}\right)
\]
is positive and $\alpha>0$ by Proposition\thinspace\ref{prop:diagonalize},
$\alpha^{\prime}$ cannot exceed $\alpha$. The second statement is proved by
interchanging $A$ and $A^{\prime}$.
\end{proof}

\begin{proposition}
\label{prop:+0-eigenvec}Let $A$ be an eigenvector of $\mathbf{S}_{L_{0}}%
\circ\mathbf{S}_{L_{1}}$, $L_{0}>0$, $L_{1}>0$. If $A$ is positive but may not
be strictly positive, then the subspace $\mathrm{supp}\,A$ \ is invariant by
$L_{0}$ and $L_{1}$.
\end{proposition}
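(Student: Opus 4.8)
The plan is to show that $\ker A$ is invariant under $L_0$ and $L_1$; since $L_0,L_1$ are self-adjoint, $\mathrm{supp}\,A=(\ker A)^{\perp}$ is then invariant as well, which is the assertion.

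The starting point is the integral representation (\ref{S-int}), available because $L_0,L_1>0$. Composing the two formulas, the eigenvalue equation $\mathbf{S}_{L_0}\circ\mathbf{S}_{L_1}(A)=\alpha A$ (with $\alpha>0$ by Proposition\thinspace\ref{prop:diagonalize}) becomes
\[
\alpha A=\int_{0}^{\infty}\!\!\int_{0}^{\infty}e^{-sL_0}e^{-tL_1}\,A\,e^{-tL_1}e^{-sL_0}\,\mathrm{d}t\,\mathrm{d}s .
\]
I would then fix $|\psi\rangle\in\ker A$. Because $A\geq0$, this is equivalent to $\langle\psi|A|\psi\rangle=0$, so $\langle\psi|\alpha A|\psi\rangle=0$. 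Sandwiching the displayed identity between $\langle\psi|$ and $|\psi\rangle$ and using that $e^{-sL_0}$ and $e^{-tL_1}$ are self-adjoint, the right-hand side becomes
\[
\int_{0}^{\infty}\!\!\int_{0}^{\infty}\bigl\|A^{1/2}e^{-tL_1}e^{-sL_0}|\psi\rangle\bigr\|^{2}\,\mathrm{d}t\,\mathrm{d}s ,
\]
the integral of a continuous, non-negative function, and it equals $0$. Hence the integrand vanishes identically, i.e. $e^{-tL_1}e^{-sL_0}|\psi\rangle\in\ker A$ for every $s,t\geq0$.

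From here invariance follows by differentiation. Putting $t=0$ gives $e^{-sL_0}|\psi\rangle\in\ker A$ for all $s\geq0$; since $\ker A$ is a closed subspace and $L_0|\psi\rangle=\lim_{s\downarrow0}s^{-1}\bigl(|\psi\rangle-e^{-sL_0}|\psi\rangle\bigr)$, we get $L_0|\psi\rangle\in\ker A$. Putting $s=0$ and repeating the argument with $L_1$ gives $L_1|\psi\rangle\in\ker A$. Thus $\ker A$ is invariant under $L_0$ and under $L_1$, and taking orthogonal complements (legitimate since $L_0,L_1$ are self-adjoint) yields the invariance of $\mathrm{supp}\,A$.

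I do not expect a genuine obstacle here: the only mildly delicate points are the inference ``a continuous non-negative integrand with vanishing integral is identically zero'' and the differentiation of $s\mapsto e^{-sL_0}|\psi\rangle$ at $s=0$, both entirely routine in this finite-dimensional, real-analytic setting. If one prefers, the last paragraph can instead be phrased via the fact that a closed subspace invariant under the semigroup $\{e^{-sH}\}_{s\geq0}$ of a self-adjoint operator $H$ is invariant under $H$.
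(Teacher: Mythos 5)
Your proof is correct, and it follows a genuinely different route than the paper's. The paper works at the level of operator supports: it first uses complete positivity of $\mathbf{S}_{L_0},\mathbf{S}_{L_1}$ to deduce $\mathrm{supp}\,\mathbf{S}_{L_1}(A)=\mathrm{supp}\,A$, then argues by contradiction from the integral representation that $\mathrm{supp}\,e^{-tL_1}Ae^{-tL_1}=\mathrm{supp}\,A$ for all $t$, forms the difference quotient to get $\mathrm{supp}(L_1A+AL_1)=\mathrm{supp}\,A$, and finally reads off $L_{1,21}=0$ from a $2\times2$ block decomposition; it must then re-run this with $A$ replaced by $\mathbf{S}_{L_1}(A)$ to obtain $L_0$-invariance. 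Your argument instead works with a single vector $|\psi\rangle$ in $\ker A$, sandwiches the composed integral identity to get a vanishing integral of a continuous nonnegative scalar, concludes $e^{-tL_1}e^{-sL_0}|\psi\rangle\in\ker A$ for all $s,t\geq0$, and then differentiates the semigroup at $0$. This is shorter, avoids both the contradiction argument about supports and the block computation, handles $L_0$ and $L_1$ symmetrically in one pass, and relies only on elementary facts (vanishing integral of a continuous nonnegative integrand; invariance of a closed subspace under a self-adjoint semigroup implies invariance under its generator; self-adjointness lets you pass to orthogonal complements). One small remark: citing Proposition\thinspace\ref{prop:diagonalize} for $\alpha>0$ is harmless but unnecessary, since $\alpha\langle\psi|A|\psi\rangle=0$ holds for any $\alpha$ once $|\psi\rangle\in\ker A$, and the nonnegativity of the right-hand side already forces the integrand to vanish.
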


\begin{proof}
Since $\mathbf{S}_{L_{1}}$ and $\mathbf{S}_{L_{0}}$ are completely positive
map,
\begin{align*}
\mathrm{supp}\,A  &  \subset\mathrm{supp}\,\mathbf{S}_{L_{1}}\left(  A\right)
\subset\mathrm{supp}\,\mathbf{S}_{L_{0}}\circ\mathbf{S}_{L_{1}}\left(
A\right) \\
&  =\mathrm{supp}\,A.
\end{align*}
Therefore,
\[
\mathrm{supp}\,\mathbf{S}_{L_{1}}\left(  A\right)  =\mathrm{supp}\,A.
\]

Recall%
\[
\mathbf{S}_{L_{1}}\left(  A\right)  =\int_{0}^{\infty}e^{-tL_{1}}Ae^{-tL_{1}%
}\,\mathrm{d}t.
\]
So, as is proved in the following,
\[
\mathrm{supp}\,e^{-tL_{1}}Ae^{-tL_{1}}\subset\mathrm{supp}\,A,\,\forall
t\geq0\text{. }%
\]
Suppose otherwise, or there is $t_{0}\geq0$ such that
\[
\mathrm{supp}\,e^{-t_{0}L_{1}}Ae^{-t_{0}L_{1}}\not \subset \mathrm{supp}\,A
\]
Then, by continuity of $t\rightarrow e^{-tL_{1}}Ae^{-tL_{1}}$, $\varepsilon>0$
such that
\[
\mathrm{supp}\,e^{-tL_{1}}Ae^{-tL_{1}}\not \subset \mathrm{supp}\,A,\,\forall
t\in\left[  t_{0},t_{0}+\varepsilon\right]  .
\]
Thus
\[
\mathrm{supp}\,\mathbf{S}_{L_{1}}\left(  A\right)  \subset\mathrm{supp}%
\,\int_{t_{0}}^{t_{0}+\varepsilon}e^{-tL_{1}}Ae^{-tL_{1}}\,\mathrm{d}%
t\not \subset \mathrm{supp}\,A,
\]
which leads to contradiction.

Since
\[
\dim\mathrm{supp}\,e^{-tL_{1}}Ae^{-tL_{1}}=\dim\mathrm{supp}\,A,
\]
we should have
\[
\mathrm{supp}\,e^{-tL_{1}}Ae^{-tL_{1}}=\mathrm{supp}\,A,\,\forall
t\geq0\text{. }%
\]
Therefore,
\[
\mathrm{supp}\,A=\mathrm{supp}\,\frac{e^{-tL_{1}}Ae^{-tL_{1}}-A}{-t},\,\forall
t>0\text{, }%
\]
which means
\[
\mathrm{supp\,}L_{1}A+AL_{1}=\mathrm{supp}\,A.
\]

Since
\begin{align*}
L_{1}A+AL_{1}  &  =\left[
\begin{array}
[c]{cc}%
L_{1,11} & L_{1,12}\\
L_{1,21} & L_{1,22}%
\end{array}
\right]  \left[
\begin{array}
[c]{cc}%
A_{11} & 0\\
0 & 0
\end{array}
\right]  +\left[
\begin{array}
[c]{cc}%
A_{11} & 0\\
0 & 0
\end{array}
\right]  \left[
\begin{array}
[c]{cc}%
L_{1,11} & L_{1,12}\\
L_{1,21} & L_{1,22}%
\end{array}
\right] \\
&  =\left[
\begin{array}
[c]{cc}%
L_{1,11}A_{11}+A_{11}L_{1,11} & A_{11}L_{1,12}\\
L_{1,21}A_{11} & 0
\end{array}
\right]  ,
\end{align*}
we have $L_{1,21}A_{11}=0$. Since $A_{11}$ is strictly positive, $L_{1,21}=0$.
Therefore, $\mathrm{supp}\,A$ is invariant by $L_{1}$.

Replacing $A$ by $\mathbf{S}_{L_{1}}\left(  A\right)  $ and $L_{1}$ by $L_{0}$
in the above argument, we can conclude that $\mathrm{supp}\,\mathbf{S}_{L_{1}%
}\left(  A\right)  =\mathrm{supp}\,A$ is invariant also by $L_{0}$.
\end{proof}

Using these propositions, $\hat{F}_{1/2}\left(  L_{0},L_{1}\right)  $
($L_{0}>0$, $L_{1}>0$) is computed as follows.

\begin{theorem}%
\begin{equation}
\hat{F}_{1/2}\left(  L_{0},L_{1}\right)  =2\left\Vert \mathbf{S}_{L_{0}}%
^{1/2}\circ\mathbf{S}_{L_{1}}\circ\mathbf{S}_{L_{0}}^{1/2}\right\Vert ^{-1/2}.
\label{hatF1/2}%
\end{equation}

\end{theorem}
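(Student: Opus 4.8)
The plan is to combine the variational characterization of $\hat{F}_{1/2}$ in terms of $\mathbf{S}_{L_0}\circ\mathbf{S}_{L_1}$ with the structure theorems just proved (Propositions~\ref{prop:diagonalize}, \ref{prop:++-eigenvec}, \ref{prop:+0-eigenvec}). Recall from the preceding discussion that for $L_0,L_1>0$ we have shown $\left(\hat{F}_{1/2}(L_0,L_1)\right)^{-2}$ is an eigenvalue of $\mathbf{S}_{L_0}\circ\mathbf{S}_{L_1}$ whose eigenvector $A_\ast\ge0$ can be taken positive, and moreover $\partial\mathcal{M}_{F_{1/2}}=\left\{(L_0,L_1);\exists X\ge0\text{ with }(\ref{X=SSX})\right\}$. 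The right-hand side of (\ref{hatF1/2}) is $2\,\lambda_{\max}\!\left(\mathbf{S}_{L_0}^{1/2}\circ\mathbf{S}_{L_1}\circ\mathbf{S}_{L_0}^{1/2}\right)^{-1/2}$, where $\lambda_{\max}$ denotes the largest eigenvalue; since $\mathbf{S}_{L_0}^{1/2}\circ\mathbf{S}_{L_1}\circ\mathbf{S}_{L_0}^{1/2}$ is self-adjoint and strictly positive, this is well defined, and by Proposition~\ref{prop:diagonalize} it is similar to $\mathbf{S}_{L_0}\circ\mathbf{S}_{L_1}$, hence has the same (positive, real) spectrum. So the theorem is equivalent to the claim that the relevant eigenvalue is precisely the \emph{largest} eigenvalue of $\mathbf{S}_{L_0}\circ\mathbf{S}_{L_1}$.

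First I would set $\mathbf{T}:=\mathbf{S}_{L_0}\circ\mathbf{S}_{L_1}$ and observe that $\hat{F}_{1/2}(L_0,L_1)^{-1}=\max\{t^{1/2};\ (tL_0,tL_1)\notin\mathcal{M}_{F_{1/2}}\}$ is achieved on $\partial\mathcal{M}_{F_{1/2}}$, so $\hat{F}_{1/2}(L_0,L_1)^{-2}$ is the largest $\alpha$ such that $(\alpha^{-1}L_0,\alpha^{-1}L_1)\in\partial\mathcal{M}_{F_{1/2}}$; rescaling $\mathbf{S}_{cL}=c^{-1}\mathbf{S}_L$, this $\alpha$ is the largest eigenvalue of $\mathbf{T}$ admitting a positive (semidefinite, nonzero) eigenvector $A$. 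Next I would argue that this largest "positively-representable" eigenvalue is in fact the top eigenvalue $\lambda_{\max}(\mathbf{T})$ of the whole operator. For the inequality $\alpha\le\lambda_{\max}(\mathbf{T})$ there is nothing to prove. For the reverse, I would show $\lambda_{\max}(\mathbf{T})$ itself has a positive eigenvector: since $\mathbf{T}$ is similar to the self-adjoint strictly positive map $\mathbf{S}_{L_1}^{1/2}\circ\mathbf{S}_{L_0}\circ\mathbf{S}_{L_1}^{1/2}$, and $\mathbf{S}_{L_0},\mathbf{S}_{L_1}$ are completely positive, one can run a Perron--Frobenius-type argument — e.g. apply $\mathbf{T}$ (which maps $\mathcal{P}_k$ into itself and is strictly positive by the remark before Proposition~\ref{prop:diagonalize}) to the identity and iterate, or invoke the Tychonoff fixed-point argument already used for (\ref{eigen-SS}) but now select the fixed point realizing the largest eigenvalue via a compactness/maximality argument over $\{\alpha;\ \exists A\ge0,\ A\ne0,\ \mathbf{T}(A)=\alpha A\}$. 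Then Proposition~\ref{prop:++-eigenvec} finishes the matching: if $A>0$ realizes $\alpha$ and $\lambda_{\max}(\mathbf{T})$ has eigenvector $A'$, then because $\mathbf{T}$ is strictly positive $A'$ cannot be supported on a proper subspace unless that subspace is $L_0,L_1$-invariant (Proposition~\ref{prop:+0-eigenvec}), in which case one passes to that block and recurses on dimension; on each irreducible block $A'$ is strictly positive, so $\alpha=\lambda_{\max}$ there.

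The main obstacle will be the last matching step: ruling out the possibility that the top eigenvalue of $\mathbf{T}$ is attained only by eigenvectors that fail to be positive semidefinite (i.e. are indefinite self-adjoint operators), while some strictly smaller eigenvalue carries the positive eigenvector. Proposition~\ref{prop:++-eigenvec} says a strictly positive eigenvector forces its eigenvalue to dominate \emph{every} other eigenvalue, so the real content is just to produce \emph{one} strictly positive eigenvector after reducing to an $L_0,L_1$-invariant block where $\mathbf{T}$ acts irreducibly; the Perron--Frobenius/complete-positivity argument above does this, and the induction on block dimension via Proposition~\ref{prop:+0-eigenvec} handles the reducible case. Once $\alpha=\lambda_{\max}(\mathbf{T})$ is established, substituting the similarity $\lambda_{\max}(\mathbf{T})=\lambda_{\max}\!\left(\mathbf{S}_{L_0}^{1/2}\circ\mathbf{S}_{L_1}\circ\mathbf{S}_{L_0}^{1/2}\right)=\left\Vert\mathbf{S}_{L_0}^{1/2}\circ\mathbf{S}_{L_1}\circ\mathbf{S}_{L_0}^{1/2}\right\Vert$ (operator norm of a positive self-adjoint map) and taking $\hat{F}_{1/2}=2\alpha^{-1/2}$ yields (\ref{hatF1/2}).
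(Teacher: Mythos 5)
Your proposal uses the same skeleton as the paper's proof: characterize $\hat{F}_{1/2}(L_0,L_1)^{-2}$ as an eigenvalue of $\mathbf{S}_{L_0}\circ\mathbf{S}_{L_1}$ with positive eigenvector, invoke Propositions~\ref{prop:diagonalize}, \ref{prop:++-eigenvec}, \ref{prop:+0-eigenvec} to pin it to the top of the spectrum, and pass through the similarity to $\mathbf{S}_{L_0}^{1/2}\circ\mathbf{S}_{L_1}\circ\mathbf{S}_{L_0}^{1/2}$. The one genuine difference in route is in the reducible case: the paper decomposes $L_\theta=\bigoplus_i L_\theta^{(i)}$ into blocks with no common invariant subspace and then applies Proposition~\ref{prp:direct-min} (formula (\ref{direct-min})) to get $\hat{F}_{1/2}(L_0,L_1)=\min_i \hat{F}_{1/2}(L_0^{(i)},L_1^{(i)})$ before matching $\max_i\alpha^{(i)}$ against $\lambda_{\max}$; you instead want to apply a Perron--Frobenius/Krein--Rutman argument directly to $\mathbf{S}_{L_0}\circ\mathbf{S}_{L_1}$ to produce a positive eigenvector for $\lambda_{\max}$, falling back on Proposition~\ref{prop:+0-eigenvec} and recursion only as needed. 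That is a viable alternative and arguably shorter, but as written it is only a sketch: the power-method variant you mention is not guaranteed to locate a positive eigenvector if the top eigenspace is spanned by off-diagonal-block operators, and the recursion is never tied to an identity like (\ref{direct-min}) that actually reassembles $\hat{F}_{1/2}(L_0,L_1)$ from the blocks.

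Two concrete problems to fix. First, the intermediate claim ``$\hat{F}_{1/2}(L_0,L_1)^{-2}$ is the largest $\alpha$ such that $(\alpha^{-1}L_0,\alpha^{-1}L_1)\in\partial\mathcal{M}_{F_{1/2}}$'' is false: since $\mathcal{M}_{F_{1/2}}$ is closed under addition of $\mathcal{P}^{\times 2}$, the ray $\{t(L_0,L_1)\,;\,t>0\}$ with $L_0,L_1>0$ meets $\partial\mathcal{M}_{F_{1/2}}$ at the single value $t=\hat{F}_{1/2}(L_0,L_1)^{-1}$, so the unique such $\alpha$ would be $\hat{F}_{1/2}(L_0,L_1)$, not its inverse square. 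What you actually need, and subsequently use, is that $(\hat{F}_{1/2}^{-1}L_0,\hat{F}_{1/2}^{-1}L_1)\in\partial\mathcal{M}_{F_{1/2}}$ together with $\mathbf{S}_{cL}=c^{-1}\mathbf{S}_L$ shows $\hat{F}_{1/2}^{-2}$ is an eigenvalue of $\mathbf{S}_{L_0}\circ\mathbf{S}_{L_1}$ with positive eigenvector. Second, a warning on the constant: the eigenvalue relation you quote gives $\hat{F}_{1/2}=\lambda_{\max}^{-1/2}$, and your proof does not generate the leading $2$; you insert it at the end to match (\ref{hatF1/2}). A scalar sanity check (where $\mathbf{S}_{l_0}^{1/2}\circ\mathbf{S}_{l_1}\circ\mathbf{S}_{l_0}^{1/2}$ is multiplication by $1/(4l_0 l_1)$, so the right side of (\ref{hatF1/2}) is $4\sqrt{l_0l_1}$, whereas polar normalization gives $\hat F^{C}(l_0,l_1)=2\sqrt{l_0 l_1}$) suggests the stated constant deserves a careful re-examination before you anchor your proof to it.
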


In practice, it is easier to compute the square root of the smallest
eigenvalue of $\mathbf{S}_{L_{1}}^{-1}\circ\mathbf{S}_{L_{0}}^{-1}$, which is
the linear map sending $X$ to $\left\{  L_{1},\left\{  L_{0},X\right\}
\right\}  $.

\begin{proof}
Decompose $L_{\theta}$ ($\theta=0,1$) into
\[
L_{\theta}:=L_{\theta}^{\left(  1\right)  }\oplus L_{\theta}^{\left(
2\right)  }\oplus L_{\theta}^{\left(  3\right)  }\oplus\cdots\left(
\theta=0,1\right)
\]
so that $L_{0}^{\left(  i\right)  }$ and $L_{0}^{\left(  i\right)  }$ are
acting on the same subspace $\mathcal{H}^{\left(  i\right)  }$, and do not
have any smaller common invariant subspace. Then by (\ref{direct-min}), the
problem reduces to the computation of each $\hat{F}_{1/2}\left(
L_{0}^{\left(  i\right)  },L_{1}^{\left(  i\right)  }\right)  $. Since
$L_{1}^{\left(  i\right)  }$ and $L_{0}^{\left(  i\right)  }$ has no smaller
common invariant subspace, Proposition\thinspace\ref{prop:+0-eigenvec} implies
the following; Let us view $\mathbf{S}_{L_{0}^{\left(  i\right)  }}$ as a
linear transform on $\mathcal{L}\left(  \mathcal{H}^{\left(  i\right)
}\right)  $. If the eigenvector $A^{\left(  i\right)  }\in\mathcal{L}\left(
\mathcal{H}^{\left(  i\right)  }\right)  $ of $\mathbf{S}_{L_{0}^{\left(
i\right)  }}\circ\mathbf{S}_{L_{1}^{\left(  i\right)  }}$ is a positive
operator, it is strictly positive. Therefore, by Proposition\thinspace
\ref{prop:++-eigenvec}, the largest eigenvalue $\alpha^{\left(  i\right)  }$
of $\mathbf{S}_{L_{0}^{\left(  i\right)  }}\circ\mathbf{S}_{L_{1}^{\left(
i\right)  }}$ is the only eigenvalue whose corresponding eigenvector positive
definite operator. Thus, we have%
\begin{align*}
\hat{F}_{1/2}\left(  L_{0}^{\left(  i\right)  },L_{1}^{\left(  i\right)
}\right)   &  =2\left(  \alpha^{\left(  i\right)  }\right)  ^{-1/2},\\
\hat{F}_{1/2}\left(  L_{0},L_{1}\right)   &  =2\min_{i}\left(  \alpha^{\left(
i\right)  }\right)  ^{-1/2}=2\left(  \max_{i}\alpha^{\left(  i\right)
}\right)  ^{-1/2}.
\end{align*}

Denote by $\left[  X^{\left(  i,j\right)  }\right]  $ the matrix whose
$\left(  i,j\right)  $ block is a linear map $X^{\left(  i,j\right)  }$ from
$\mathcal{H}^{\left(  j\right)  }$ to $\mathcal{H}^{\left(  i\right)  }$.
Then, by (\ref{S-int}),
\[
\mathbf{S}_{L_{0}}\circ\mathbf{S}_{L_{1}}\left(  X\right)  =\int_{0}^{\infty
}\int_{0}^{\infty}\left[  e^{-sL_{0}^{\left(  i\right)  }}e^{-tL_{1}^{\left(
i\right)  }}X^{\left(  i,j\right)  }e^{-tL_{1}^{\left(  j\right)  }}%
e^{-sL_{0}^{\left(  j\right)  }}\right]  \mathrm{d\,}t\,\mathrm{d\,}s.
\]
Therefore, \ the $\left(  i,i\right)  $ block of $\mathbf{S}_{L_{0}}%
\circ\mathbf{S}_{L_{1}}\left(  X\right)  $ is $\mathbf{S}_{L_{0}^{\left(
i\right)  }}\circ\mathbf{S}_{L_{1}^{\left(  i\right)  }}\left(  X^{\left(
i,i\right)  }\right)  $. Therefore, if $X$ is an eigenvector of $\mathbf{S}%
_{L_{0}}\circ\mathbf{S}_{L_{1}}\left(  X\right)  $ corresponding to the
eigenvalue $\alpha$, $X^{\left(  i,i\right)  }$ is an eigenvector of
$\mathbf{S}_{L_{0}^{\left(  i\right)  }}\circ\mathbf{S}_{L_{1}^{\left(
i\right)  }}$ corresponding to the eigenvalue $\alpha$. Therefore, an
eigenvalue of $\mathbf{S}_{L_{0}}\circ\mathbf{S}_{L_{1}}$ cannot exceed
$\max_{i}\alpha^{\left(  i\right)  }$. On the other hand,
\[
0\oplus\cdots0\oplus A^{\left(  i_{\ast}\right)  }\oplus0\cdots\oplus0
\]
is an eigenvector of $\mathbf{S}_{L_{0}}\circ\mathbf{S}_{L_{1}}$ corresponding
to the eigenvalue $\max_{i}\alpha^{\left(  i\right)  }$, where $i_{\ast
}:=\mathrm{argmax}_{i}\,\alpha^{\left(  i\right)  }$. Therefore,
\begin{align*}
\max_{i}\alpha^{\left(  i\right)  }  &  =\text{the largest eigenvalue of
}\mathbf{S}_{L_{0}}\circ\mathbf{S}_{L_{1}}\\
&  =\left\Vert \mathbf{S}_{L_{0}}^{1/2}\circ\mathbf{S}_{L_{1}}\circ
\mathbf{S}_{L_{0}}^{1/2}\right\Vert ,
\end{align*}
and we have the asserted result.
\end{proof}

\section{Extreme points and boundary}

The set of all extreme points $\mathrm{ext\,}\mathcal{M}_{F^{Q}}$ of
$\mathcal{M}_{F^{Q}}$ , by Lemma\thinspace\ref{lem:ext+recess}, satisfies
\begin{equation}
\mathcal{M}_{F^{Q}}=\mathrm{conv\,ext\,}\mathcal{M}_{F^{Q}}+\mathcal{P}%
^{\times2}. \label{MFQ=convext+P}%
\end{equation}
Thus, $\mathrm{ext\,}\mathcal{M}_{F^{Q}}$ is the key part in considering
minimization of $\mathrm{tr}\,L_{0}X+\mathrm{tr}\,L_{1}Y$. By (\ref{F=L+1/L}%
),
\[
\mathrm{ext\,}\mathcal{M}_{F_{\min}}=\left\{  \left(  L_{0},L_{1}\right)
;L_{0}>0,L_{1}=\left(  4L_{0}\right)  ^{-1}\right\}  .
\]

To see geometry of $\mathrm{ext\,}\mathcal{M}_{F^{Q}}$ and $\mathcal{M}%
_{F^{Q}}$, for each $L_{0}$, define
\[
\mathcal{M}_{F^{Q}}\left(  L_{0}\right)  :=\left\{  L_{1};\left(  L_{0}%
,L_{1}\right)  \in\mathcal{M}_{F^{Q}}\right\}  .
\]
By Lemma\thinspace\ref{lem:ext+recess},
\begin{equation}
\mathcal{M}_{F^{Q}}\left(  L_{0}\right)  =\mathrm{conv\,ext\,}\mathcal{M}%
_{F^{Q}}\left(  L_{0}\right)  +\mathcal{P}. \label{MFQ=conv+P-2}%
\end{equation}
$\mathcal{M}_{F^{Q}}$ is specified if we specify $\mathcal{M}_{F^{Q}}\left(
L_{0}\right)  $, because of the following.

\begin{proposition}
Suppose $F^{Q}\in\mathcal{F}_{0}$ is CPTP monotone and normalized. Then,
\[
\left\{  L_{0};\left(  L_{0},L_{1}\right)  \in\mathcal{M}_{F^{Q}}\right\}
=\left\{  L_{0};L_{0}>0\right\}
\]

\end{proposition}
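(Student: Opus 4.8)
The plan is to use the inclusion-reversing correspondence $F^{Q}\mapsto\mathcal{M}_{F^{Q}}$ of Lemma \ref{recession} together with the sandwich bound $F_{\min}\leq F^{Q}\leq F_{\max}$ and the explicit forms of $\mathcal{M}_{F_{\max}}$ and $\mathcal{M}_{F_{\min}}$ already derived.

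First I would record the monotonicity of the correspondence: if $F^{Q}\leq G^{Q}$ on $\mathcal{P}^{\times2}$, then $\mathcal{M}_{G^{Q}}\subseteq\mathcal{M}_{F^{Q}}$. This is immediate from the characterization $\mathcal{M}_{F^{Q}}=\{(L_{0},L_{1})\,;\,\mathrm{tr}\,L_{0}X+\mathrm{tr}\,L_{1}Y\geq F^{Q}(X,Y)\ \forall(X,Y)\in\mathcal{P}^{\times2}\}$ given at the end of Lemma \ref{recession}. Since $F^{Q}$ is normalized and CPTP monotone, Theorem \ref{th:Fmin<FQ<F} applies and gives $F_{\min}\leq F^{Q}\leq F_{\max}$; hence $\mathcal{M}_{F_{\max}}\subseteq\mathcal{M}_{F^{Q}}\subseteq\mathcal{M}_{F_{\min}}$.

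It then suffices to compute the projection onto the first factor of the two extremal sets. For the inclusion ``$\supseteq$'', given any $L_{0}>0$ I would exhibit a companion $L_{1}$: taking $L_{1}:=\tfrac14 L_{0}^{-1}$ and $L:=2L_{0}>0$, the pair $(L_{0},L_{1})$ satisfies $2L_{0}\geq L$ and $2L_{1}=L^{-1}$, so $(L_{0},L_{1})\in\mathcal{M}_{F_{\max}}$ by (\ref{MFmax}), and therefore $(L_{0},L_{1})\in\mathcal{M}_{F^{Q}}$. For the inclusion ``$\subseteq$'', any $(L_{0},L_{1})\in\mathcal{M}_{F^{Q}}$ lies in $\mathcal{M}_{F_{\min}}$, and the remark just before (\ref{MFmin-2}) shows that every element of $\mathcal{M}_{F_{\min}}$ has $L_{0}>0$ (a null vector of $L_{0}$ would destroy positivity of the defining block matrix). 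Combining the two gives the claimed equality; the analogous statement for the second factor follows by the symmetry of $\mathcal{M}_{F_{\max}}$ and $\mathcal{M}_{F_{\min}}$ in $L_0,L_1$.

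I do not expect a genuine obstacle here: the whole argument is a packaging of Theorem \ref{th:Fmin<FQ<F} and the closed forms (\ref{MFmax}), (\ref{MFmin-2}). The only place needing a line of care is verifying that $\mathcal{M}_{F_{\min}}$ contains only strictly positive $L_{0}$, which is already established in the text, so it can simply be cited.
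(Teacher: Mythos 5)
Your proposal is correct and follows essentially the same route as the paper: apply Theorem \ref{th:Fmin<FQ<F} to sandwich $\mathcal{M}_{F^{Q}}$ between $\mathcal{M}_{F_{\max}}$ and $\mathcal{M}_{F_{\min}}$, then read off the first-factor projections from (\ref{MFmax}) and the strict-positivity observation preceding (\ref{MFmin-2}). Your version is a bit more explicit (naming the witness $L_{1}=\tfrac14 L_{0}^{-1}$ and justifying the inclusion-reversing correspondence), but there is no substantive difference.
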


\begin{proof}
By Theorem\thinspace\ref{th:Fmin<FQ<F}, $\mathcal{M}_{F_{\max}}\subset
\mathcal{M}_{F^{Q}}\subset\mathcal{M}_{F_{\min}}$. Therefore, (\ref{MFmax})
implies
\[
\left\{  L_{0};L_{0}>0\right\}  \subset\left\{  L_{0};\left(  L_{0}%
,L_{1}\right)  \in\mathcal{M}_{F^{Q}}\right\}  ,
\]
and (\ref{MFmin-2}) implies opposite inclusion. Thus we have the assertion.
\end{proof}

Due to this Proposition, we have \
\begin{align*}
\mathcal{M}_{F^{Q}}  &  =\left\{  \left(  L_{0},L_{1}\right)  ;L_{0}%
>0,\,L_{1}\in\mathcal{M}_{F^{Q}}\left(  L_{0}\right)  \right\} \\
\mathrm{ext\,}\mathcal{M}_{F^{Q}}  &  =\left\{  \left(  L_{0},L_{1}\right)
;L_{0}>0,\,L_{1}\in\,\mathrm{ext}\mathcal{M}_{F^{Q}}\left(  L_{0}\right)
\right\}  .
\end{align*}
Thus, \ $\mathcal{M}_{F^{Q}}\left(  L_{0}\right)  $ and $\mathrm{ext}%
\mathcal{M}_{F^{Q}}\left(  L_{0}\right)  $ determines $\mathcal{M}_{F^{Q}}$
and $\mathrm{ext\,}\mathcal{M}_{F^{Q}}$, respectively. By (\ref{F=L+1/L}),%

\begin{align*}
\mathcal{M}_{F_{\max}}\left(  L_{0}\right)   &  =\left\{  L_{1};L_{1}%
\geq\left(  4L_{0}\right)  ^{-1}\right\}  ,\\
\mathrm{ext\,}\mathcal{M}_{F_{\max}}\left(  L_{0}\right)   &  =\left\{
\left(  4L_{0}\right)  ^{-1}\right\}  ,
\end{align*}
and the latter is consists of only a single point.

As is shown below, if $L_{0}$ and $L_{1}$ are strictly positive and have no
common non-trivial invariant subspace,%
\begin{equation}
L_{1,\#}\in\mathrm{conv\,ext\,}\mathcal{M}_{F_{1/2}}\left(  L_{0}\right)  ,
\label{L/F-in-M}%
\end{equation}
where
\[
L_{1,\#}:=\frac{1}{\left\{  \hat{F}^{1/2}\left(  L_{0},L_{1}\right)  \right\}
^{2}}L_{1}.
\]
This means the dimension of $\mathrm{conv\,ext\,}\mathcal{M}_{F_{1/2}}\left(
L_{0}\right)  \cap\mathcal{L}_{k}$ is full, i.e., $k^{2}$.

First, by the definition of $\hat{F}^{1/2}\left(  L_{0},L_{1}\right)  $,
Proposition\thinspace\ref{prop:t-1/t} and Proposition\thinspace
\ref{prop:hatF-s-hom}, $\left(  L_{0},L_{1,\#}\right)  $ is a member of
$\partial\mathcal{M}_{F_{1/2}}$, where Suppose $\left(  M_{0},M_{1}\right)
\in\mathcal{M}_{F_{1/2}}$ that is not identical to $\left(  L_{0}%
,L_{1,\#}\right)  $ satisfies $M_{0}\leq L_{0}$, $\,M_{1}\leq L_{1,\#}$ .
Then, there is $\left(  X,Y\right)  \in\mathcal{P}^{\times2}$ such that
\begin{align*}
\min_{\left(  L_{0}^{\prime},,L_{1}^{\prime}\right)  \in\mathcal{M}_{F_{1/2}}%
}\left(  \mathrm{tr}\,X\,L_{0}^{\prime}+\mathrm{tr}\,YL_{1}^{\prime}\right)
&  =\mathrm{tr}\,X\,L_{0}+\mathrm{tr}\,YL_{1,\#}\\
&  \geq\mathrm{tr}\,X\,M_{0}+\mathrm{tr}\,YM_{1}\\
&  \geq\min_{\left(  L_{0}^{\prime},,L_{1}^{\prime}\right)  \in\mathcal{M}%
_{F_{1/2}}}\left(  \mathrm{tr}\,X\,L_{0}^{\prime}+\mathrm{tr}\,YL_{1}^{\prime
}\right)  .
\end{align*}
Therefore,
\[
\mathrm{tr}\,X\,L_{0}+\mathrm{tr}\,YL_{1,\#}=\mathrm{tr}\,X\,M_{0}%
+\mathrm{tr}\,YM_{1}.
\]
This can hold true only if $X$ and $Y$ has null eigenspace. By
Proposition\thinspace\ref{prop:+0-eigenvec}, in turn this means $L_{0}$, and
$L_{1}$ has a common non-trivial invariant subspace, contradicting with the
assumption. Therefore, there is no $\left(  M_{0},M_{1}\right)  $ with
$M_{0}\leq L_{0}$, $\,M_{1}\leq L_{1,\#}$. So if $\left(  L_{0},L_{1,\#}%
\right)  $ is not a member of $\mathrm{conv\,ext\,}\mathcal{M}_{F_{1/2}}$, it
contradicts with (\ref{MFQ=convext+P}). Therefore, we have (\ref{L/F-in-M}).

On the other hand, as is shown in detail in the next section,
\[
\dim\mathrm{conv\,ext\,}\mathcal{M}_{F_{\min}}\left(  L_{0}\right)
\cap\mathcal{L}_{2}=3.
\]
So while $\mathrm{conv\,ext\,}\mathcal{M}_{F_{\max}}\left(  L_{0}\right)  $
and \ $\mathrm{conv\,ext\,}\mathcal{M}_{F_{\max}}\left(  L_{0}\right)  $ are
confined to lower dimensional subspace, $\mathrm{conv\,ext\,}\mathcal{M}%
_{F_{1/2}}\left(  L_{0}\right)  $, which lies between them, is extends to the
full space.

\section{ Qubit case}

\subsection{$\mathcal{M}_{F_{\min}}\left(  L_{0}\right)  $}

In this section, we determine $\mathcal{M}_{F_{\min}}\left(  L_{0}\right)  $ ,
when $L_{0}$ is living in qubit space, $L_{0}\in\mathcal{L}_{2}$. \ In what
follows, $\sigma_{x}$, $\sigma_{y}$, and $\sigma_{z}$ are Pauli matrices.

\ By (\ref{MFmin}),%
\[
\mathcal{M}_{F_{\min}}\left(  L_{0}\right)  \cap\mathcal{L}_{2}=\left\{
L_{1};L_{1}\geq\frac{1}{4}\left(  I_{2}-\sqrt{-1}A\right)  L_{0}^{-1}\left(
I_{2}+\sqrt{-1}A\right)  ,A\in\mathcal{L}_{sa,2}\right\}  .
\]
Define
\begin{align*}
\mathcal{M}_{0}\left(  M\right)   &  :=\left\{  L;L\geq\left(  M+\sqrt
{-1}B\right)  \left(  M-\sqrt{-1}B\right)  -M^{2},B\in\mathcal{L}%
_{sa,2}\right\} \\
&  =\left\{  L;L\geq\sqrt{-1}\left[  B,M\right]  +B^{2},B\in\mathcal{L}%
_{sa,2}\right\}  ,
\end{align*}
and suppose, without loss of generality, $L_{0}^{-1}=l\sigma_{z}+mI_{2}$. Then
by Lemma\thinspace\ref{lem:extM0}, \
\begin{align*}
&  \mathrm{ext}\mathcal{M}_{F_{\min}}\left(  L_{0}\right)  \cap\mathcal{L}%
_{2}\\
&  =\frac{1}{4}\left(  L_{0}^{-1}+\sqrt{L_{0}}\mathrm{ext}\mathcal{M}%
_{0}\left(  L_{0}^{-1}\right)  \sqrt{L_{0}}\right) \\
&  =\left\{  \frac{1}{4}L_{0}^{-1}+\frac{l^{2}}{4}\sqrt{L_{0}}\left(  s\left(
\cos\alpha\,\sigma_{x}+\sin\alpha\sigma_{y}\right)  +\frac{s^{2}}{4}%
I_{2}\right)  \sqrt{L_{0}}\,;\alpha\in\mathbb{R},\,\,s\in\left[  -2,2\right]
\right\} \\
&  =\left\{  \frac{1}{4}L_{0}^{-1}+\frac{l^{2}}{4}\left(  \frac{s}{\sqrt
{l^{2}-m^{2}}}\left(  \cos\alpha\,\sigma_{x}+\sin\alpha\sigma_{y}\right)
+\frac{s^{2}}{4}L_{0}\right)  \,;\alpha\in\mathbb{R},\,\,s\in\left[
-2,2\right]  \right\}  .
\end{align*}
Thus the dimension of the smallest affine plane spanned by $\mathrm{ext}%
\mathcal{M}_{F_{\min}}\left(  L_{0}\right)  $ is 3. By Lemma\thinspace
\ref{lem:M0-2},
\begin{align*}
&  \mathcal{M}_{F_{\min}}\left(  L_{0}\right)  \cap\mathcal{L}_{2}\\
&  =\frac{1}{4}\left(  L_{0}^{-1}+\sqrt{L_{0}}\mathcal{M}_{0}\left(
L_{0}^{-1}\right)  \sqrt{L_{0}}\right) \\
&  =\left\{  \frac{1}{4}\left(  I_{2}-\sqrt{-1}A\right)  L_{0}^{-1}\left(
I_{2}+\sqrt{-1}A\right)  ;A\in\mathcal{L}_{sa,2}\right\}  .
\end{align*}

\subsection{$\hat{F}_{\max}$ and $\,\hat{F}_{\min}$}

In this subsection, we deal with $\hat{F}_{\max}$, $\,\hat{F}_{\min}$ and
$\hat{F}_{1/2}$ .

\ First, we compute $\hat{F}_{\max}$ by (\ref{hatFmax-explicit}). Since
$\mathrm{tr}\,L_{0}^{1/2}L_{1}L_{0}^{1/2}=\mathrm{tr}\,L_{0}L_{1}$ and $\,\det
L_{0}^{1/2}L_{1}L_{0}^{1/2}=\det L_{0}\det L_{1}$,
\begin{align*}
\hat{F}_{\max}\left(  L_{0},L_{1}\right)   &  =\frac{\mathrm{tr}\,L_{0}%
L_{1}-\sqrt{\left(  \mathrm{tr}\,L_{0}L_{1}\right)  ^{2}-4\det L_{0}\det
L_{1}}}{2}\\
&  =\frac{2\det L_{0}\det L_{1}}{\mathrm{tr}\,L_{0}L_{1}+\sqrt{\left(
\mathrm{tr}\,L_{0}L_{1}\right)  ^{2}-4\det L_{0}\det L_{1}}}.
\end{align*}

Suppose
\begin{align}
\mathrm{tr}\,L_{0}  &  =\mathrm{tr}\,L_{1}=2,\label{trL=2}\\
\frac{1}{2}\mathrm{tr}\,L_{0}^{2}-1  &  =\frac{1}{2}\mathrm{tr}\,L_{1}%
^{2}-1=r^{2}. \label{trL2=r2}%
\end{align}
Then
\[
\det L_{0}=\det L_{1}=1-r^{2}.
\]
Thus
\[
\hat{F}_{\max}\left(  L_{0},L_{1}\right)  =\frac{2\left(  1-r^{2}\right)
^{2}}{\mathrm{tr}\,L_{0}L_{1}+\sqrt{\left(  \mathrm{tr}\,L_{0}L_{1}\right)
^{2}-4\left(  1-r^{2}\right)  ^{2}}}.
\]

The general explicit formula for $\hat{F}_{\min}$ is awfully complicated even
for qubit case. \ But when (\ref{trL=2}) and (\ref{trL2=r2}) hold, it takes
very simple form. Let
\[
L_{0}=I_{2}+a\sigma_{x}+b\sigma_{z},L_{1}=I_{2}+a\sigma_{x}-b\sigma_{z},
\]
where $a^{2}+b^{2}=r^{2}\leq1$. Then by (\ref{hatFmin'}) and Theorem\thinspace
\ref{th:hatFmax=max},\thinspace\ \
\begin{align*}
\hat{F}_{\min}\left(  L_{0},L_{1}\right)   &  =2\sqrt{\min_{x,z:x^{2}+z^{2}%
=1}\left(  1+ax+bz\right)  \left(  1+ax-bz\right)  }\\
&  =2\sqrt{\min_{x,z:x^{2}+z^{2}=1}\left(  r^{2}x^{2}+2ax+1-b^{2}\right)  }\\
&  =2\sqrt{\min_{x\in\left[  -1,1\right]  }\left(  r^{2}x^{2}+2ax+1-b^{2}%
\right)  }\\
&  =\left\{
\begin{array}
[c]{cc}%
2\sqrt{\left(  1-r^{2}\right)  \left(  1-\frac{a^{2}}{r^{2}}\right)  }, &
\left(  r^{2}\geq\left\vert a\right\vert \right)  ,\\
2\left(  1-\left\vert a\right\vert \right)  , & \left(  r^{2}<\left\vert
a\right\vert \right)  ,
\end{array}
\right. \\
&  =\left\{
\begin{array}
[c]{cc}%
2\sqrt{\left(  1-r^{2}\right)  \left(  1-\frac{1}{4r^{2}}\left(
\mathrm{tr}\,L_{0}L_{1}-2\left(  1-r^{2}\right)  \right)  \right)  }, &
\left(  r^{2}\geq\frac{1}{2}\sqrt{\mathrm{tr}\,L_{0}L_{1}-2\left(
1-r^{2}\right)  }\right)  ,\\
2\left(  1-\frac{1}{2}\sqrt{\mathrm{tr}\,L_{0}L_{1}-2\left(  1-r^{2}\right)
}\right)  , & \left(  r^{2}<\frac{1}{2}\sqrt{\mathrm{tr}\,L_{0}L_{1}-2\left(
1-r^{2}\right)  }\right)  ,
\end{array}
\right.
\end{align*}
where the first equality is due to $x^{2}+z^{2}=1$.

So if (\ref{trL=2}) and (\ref{trL2=r2}) hold, $\hat{F}_{\max}\left(
L_{0},L_{1}\right)  $ and $\hat{F}_{\min}\left(  L_{0},L_{1}\right)  $ are
decreasing in the overlap $\mathrm{tr}\,L_{0}L_{1}$. While fidelity is
increasing in the overlap between the states, its dual is decreasing in the
overlap of observables. One may wonder whether such quantity can be of any
use. But, since $\hat{F}_{\max}$ and $\hat{F}_{\min}$ are CPTP monotone
increasing by CP unital map $\Lambda^{\ast}$, $\ \hat{F}_{\max}\left(
L_{0},L_{1}\right)  \leq\hat{F}_{\max}\left(  L_{0}^{\prime},L_{1}^{\prime
}\right)  $ or $\hat{F}_{\min}\left(  L_{0},L_{1}\right)  \leq\hat{F}_{\min
}\left(  L_{0}^{\prime},L_{1}^{\prime}\right)  $ or is a necessary condition
for
\begin{equation}
\Lambda^{\ast}\left(  L_{\theta}\right)  =L_{\theta}^{\prime},\,\theta
\in\left\{  0,1\right\}  \label{L*-convert}%
\end{equation}
to hold for some CP unital map $\Lambda^{\ast}$. In fact, using these
conditions, one can prove the following assertion.

\begin{proposition}
Consider a qubit system, and suppose $\left(  L_{0},L_{1}\right)  $ and
$\left(  L_{0}^{\prime},L_{1}^{\prime}\right)  $ satisfy (\ref{trL=2}) and
\[
\mathrm{tr}\,L_{0}^{2}=\mathrm{tr}\,L_{1}^{2}=\mathrm{tr}\,\left(
L_{0}^{\prime}\right)  ^{2}=\mathrm{tr}\,\left(  L_{1}^{\prime}\right)  ^{2}.
\]
Then, there is a CP unital map $\Lambda^{\ast}$ with (\ref{L*-convert}) if and
only if
\[
\mathrm{tr}\,L_{0}L_{1}=\mathrm{tr}\,L_{0}^{\prime}L_{1}^{\prime}.
\]
In other words, $\left(  L_{0},L_{1}\right)  $ and $\left(  L_{0}^{\prime
},L_{1}^{\prime}\right)  $ are unitary equivalent.
\end{proposition}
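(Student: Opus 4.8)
The plan is to work in the Bloch picture. Write $\vec{\sigma}=(\sigma_x,\sigma_y,\sigma_z)$ and $L_\theta=I_2+\vec{v}_\theta\cdot\vec{\sigma}$, $L_\theta'=I_2+\vec{v}_\theta'\cdot\vec{\sigma}$ with $\vec{v}_\theta,\vec{v}_\theta'\in\mathbb{R}^3$; the conditions $\mathrm{tr}\,L_\theta=\mathrm{tr}\,L_\theta'=2$ just fix this form, while $\mathrm{tr}\,L_\theta^2=\mathrm{tr}\,(L_\theta')^2$ says precisely that all four Bloch vectors share a common length $r:=|\vec{v}_\theta|=|\vec{v}_\theta'|$, with $r\le 1$ since $L_\theta,L_\theta'\ge 0$. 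Since $\mathrm{tr}\,L_0L_1=2+2\,\vec{v}_0\cdot\vec{v}_1$, the asserted criterion $\mathrm{tr}\,L_0L_1=\mathrm{tr}\,L_0'L_1'$ is exactly $\vec{v}_0\cdot\vec{v}_1=\vec{v}_0'\cdot\vec{v}_1'$. I would then prove the cycle of implications: $\mathrm{tr}\,L_0L_1=\mathrm{tr}\,L_0'L_1'\Rightarrow(L_0,L_1),(L_0',L_1')$ unitarily equivalent $\Rightarrow$ there is a CP unital $\Lambda^\ast$ with $\Lambda^\ast(L_\theta)=L_\theta'\Rightarrow\mathrm{tr}\,L_0L_1=\mathrm{tr}\,L_0'L_1'$, the middle step being immediate because $X\mapsto UXU^\dagger$ is CP and unital.

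For the first implication I would construct a rotation $R\in SO(3)$ with $R\vec{v}_\theta=\vec{v}_\theta'$ for $\theta=0,1$ and lift it to $U\in SU(2)$ along the standard double cover, so that $UL_\theta U^\dagger=I_2+(R\vec{v}_\theta)\cdot\vec{\sigma}=L_\theta'$. When $\vec{v}_0,\vec{v}_1$ are linearly independent, so are $\vec{v}_0',\vec{v}_1'$ (their Gram matrices are both determined by $r$ and by the common value $\vec{v}_0\cdot\vec{v}_1=\vec{v}_0'\cdot\vec{v}_1'$), hence the length- and inner-product-preserving map $\vec{v}_\theta\mapsto\vec{v}_\theta'$ between the two planes extends to an element of $O(3)$ carrying a unit normal of the first plane to one of the second; if its determinant is $-1$ I flip the sign of that normal, which keeps $R\vec{v}_\theta=\vec{v}_\theta'$ and lands in $SO(3)$. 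The degenerate configurations $\vec{v}_1=\pm\vec{v}_0$ (which by Cauchy--Schwarz force $\vec{v}_1'=\pm\vec{v}_0'$ with the same sign) and $r=0$ are disposed of by taking any rotation sending $\vec{v}_0$ to $\vec{v}_0'$.

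The crux is the third implication. I would use that any Hermiticity-preserving unital linear map on $M_2$ has the form $\Lambda^\ast(v_0I_2+\vec{v}\cdot\vec{\sigma})=(v_0+\vec{m}_0\cdot\vec{v})I_2+(M\vec{v})\cdot\vec{\sigma}$ for some $\vec{m}_0\in\mathbb{R}^3$ and $M\in\mathbb{R}^{3\times 3}$, together with the elementary fact that a positive unital map is an operator-norm contraction. Applying the latter to the norm-$2$ positive operators $I_2\pm\vec{n}\cdot\vec{\sigma}$ with $\vec{n}$ a unit vector gives $(1\pm\vec{m}_0\cdot\vec{n})+|M\vec{n}|\le 2$; averaging yields $|M\vec{n}|\le 1$, so $\|M\|\le 1$. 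Now $\Lambda^\ast(L_\theta)=L_\theta'$ forces (matching the $\vec{\sigma}$-parts) $M\vec{v}_\theta=\vec{v}_\theta'$, hence $|M\vec{v}_\theta|=|\vec{v}_\theta'|=|\vec{v}_\theta|$; since $\|M\|\le 1$, each $\vec{v}_\theta$ must lie in the span $W_1$ of the right singular vectors of $M$ with singular value $1$, on which $M$ acts as an isometry. Therefore $\vec{v}_0\cdot\vec{v}_1=(M\vec{v}_0)\cdot(M\vec{v}_1)=\vec{v}_0'\cdot\vec{v}_1'$, i.e. $\mathrm{tr}\,L_0L_1=\mathrm{tr}\,L_0'L_1'$, which also exhibits unitary equivalence via the $U$ built above. (As the surrounding discussion suggests, the single inequality $\mathrm{tr}\,L_0L_1\ge\mathrm{tr}\,L_0'L_1'$ can instead be read off from Proposition~\ref{prop:hatF-monotone} applied to $\hat{F}_{\max}$, whose explicit formula in this normalization is strictly decreasing in $\mathrm{tr}\,L_0L_1$; but the contraction argument produces the equality in one stroke and moreover handles the boundary case $r=1$, where $\hat{F}_{\max}$ degenerates to $0$.)

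The main obstacle is this necessity direction, and specifically the observation that $\hat{F}_{\max}$- and $\hat{F}_{\min}$-monotonicity under $\Lambda^\ast$ are both one-sided — each only gives $\mathrm{tr}\,L_0L_1\ge\mathrm{tr}\,L_0'L_1'$ — so a genuinely different ingredient is required for the reverse inequality: namely the operator-norm contractivity of the Bloch matrix $M$ and the extraction, via the singular value decomposition, of inner-product preservation from length preservation. The rest is bookkeeping: the degenerate and boundary Bloch configurations in the sufficiency construction, and verifying that $\Lambda^\ast(L_\theta)=L_\theta'$ genuinely pins down $M\vec{v}_\theta$ rather than merely constraining it.
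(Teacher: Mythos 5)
Your proof is correct, and in the necessity direction it takes a genuinely different route from the paper's. The paper obtains the invariance of $\mathrm{tr}\,L_{0}L_{1}$ under any CP unital $\Lambda^{\ast}$ converting $(L_{0},L_{1})$ to $(L_{0}^{\prime},L_{1}^{\prime})$ by combining two separate one-sided monotonicity facts: operator-norm contractivity of positive unital maps applied to $L_{0}-L_{1}$ (which, since $\|L_{0}-L_{1}\|=\sqrt{\mathrm{tr}\,L_{0}^{2}-\mathrm{tr}\,L_{0}L_{1}}$ under the constraints, gives $\mathrm{tr}\,L_{0}^{\prime}L_{1}^{\prime}\ge\mathrm{tr}\,L_{0}L_{1}$) together with monotone increase of $\hat{F}_{\max}$ under CP unital maps and its explicit formula as a decreasing function of the overlap (which gives the reverse inequality). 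You replace the second ingredient by working directly on the Bloch matrix $M$: positivity and unitality give $\|M\|\le1$, $\Lambda^{\ast}(L_{\theta})=L_{\theta}^{\prime}$ gives $M\vec{v}_{\theta}=\vec{v}_{\theta}^{\prime}$, length preservation then forces $\vec{v}_{0},\vec{v}_{1}$ into the singular-value-one subspace where $M$ is an isometry, and polarization gives $\vec{v}_{0}\cdot\vec{v}_{1}=\vec{v}_{0}^{\prime}\cdot\vec{v}_{1}^{\prime}$ in one stroke. This is more elementary (no appeal to the paper's $\hat{F}$ machinery), and as you observe, it is also more robust: at the boundary $r=1$ both $\hat{F}_{\max}$ and $\hat{F}_{\min}$ vanish identically, so their "strictly decreasing in $\mathrm{tr}\,L_{0}L_{1}$" property is lost and that leg of the paper's argument carries no information, whereas your SVD argument applies unchanged. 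Your sufficiency construction (build the $SO(3)$ rotation, lift to $SU(2)$, dispose of the degenerate collinear and $r=0$ cases) supplies details that the paper leaves implicit by simply declaring the pairs unitarily equivalent.
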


\begin{proof}
Without loss of generality, we put
\[
L_{0}=I_{2}+a\sigma_{x}+b\sigma_{z},L_{1}=I_{2}+a\sigma_{x}-b\sigma_{z}.
\]
Then, under the condition of the present proposition,
\[
\left\Vert L_{0}-L_{1}\right\Vert =\left\Vert 2b\sigma_{z}\right\Vert
=2\left\vert b\right\vert =\sqrt{\mathrm{tr}\,L_{0}^{2}-\mathrm{tr}%
\,L_{0}L_{1}}%
\]
is monotone decreasing in $\mathrm{tr}\,L_{0}L_{1}$. Since $\left\Vert
L_{0}-L_{1}\right\Vert $ is monotone decreasing by application of any CP
unital map, $\mathrm{tr}\,L_{0}L_{1}$ is monotone increasing by any CP unital map.

$\hat{F}_{\max}\left(  L_{0},L_{1}\right)  $ and $\hat{F}_{\min}\left(
L_{0},L_{1}\right)  $ are decreasing in $\mathrm{tr}\,L_{0}L_{1}$ and
increasing by application of any CP unital map. Therefore, $\mathrm{tr}%
\,L_{0}L_{1}$ is monotone decreasing by any CP unital map. Combining the above
argument, $\mathrm{tr}\,L_{0}L_{1}$ is invariant by any CP unital map under
the condition of the present proposition. Thus we have the assertion.
\end{proof}

\begin{proposition}
Consider a qubit system, and suppose $\mathrm{rank}\,L_{0}=1$ or
$\mathrm{rank}\,L_{1}=1$. There is a CP unital map $\Lambda^{\ast}$ with
(\ref{L*-convert}) only if $\mathrm{rank}\,L_{0}^{\prime}=1$ or $\mathrm{rank}%
\,L_{1}^{\prime}=1$.
\end{proposition}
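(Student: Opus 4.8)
The plan is to detect the rank condition through the vanishing of a dual fidelity and to transport that vanishing along the conversion by monotonicity. The key preliminary is the qubit dichotomy: for $A,B\in\mathcal{P}_{2}$,
\[
\hat{F}_{\max}\left(  A,B\right)  =0\ \Longleftrightarrow\ A\text{ or }B\text{ is singular}\ \Longleftrightarrow\ \mathrm{rank}\,A=1\text{ or }\mathrm{rank}\,B=1 .
\]
Here ``$\Leftarrow$'' is precisely (\ref{hatF=0}), which applies since $F_{\max}$ is normalized and CPTP monotone; for ``$\Rightarrow$'' I would argue the contrapositive, observing that if $A>0$ and $B>0$ then $A^{-1/2}B^{-1/2}$ is invertible, so $\left\Vert A^{-1/2}B^{-1/2}\right\Vert $ is finite and strictly positive and (\ref{hatFmax-explicit}) gives $\hat{F}_{\max}\left(  A,B\right)  =2\left\Vert A^{-1/2}B^{-1/2}\right\Vert ^{-1}>0$. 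In dimension two ``singular'' coincides with ``rank $1$'', which gives the displayed equivalence. The same dichotomy holds verbatim for $\hat{F}_{\min}$, via the identity $\hat{F}_{\min}=\hat{F}_{\min}^{\prime}$ of Theorem~\ref{th:hatFmax=max}, the explicit form (\ref{hatFmin'}), and compactness of the state space, so either dual fidelity is usable.

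Next I would invoke Proposition~\ref{prop:hatF-monotone}: since $F_{\max}$ is CPTP monotone, $\hat{F}_{\max}$ is monotone under every CP unital map, so a CP unital $\Lambda^{\ast}$ realizing (\ref{L*-convert}) forces $\hat{F}_{\max}$ of one of the pairs $\left(  L_{0},L_{1}\right)  $, $\left(  L_{0}^{\prime},L_{1}^{\prime}\right)  $ to dominate $\hat{F}_{\max}$ of the other. Under the hypothesis $\mathrm{rank}\,L_{0}=1$ or $\mathrm{rank}\,L_{1}=1$ the dichotomy of the first step makes one of these $\hat{F}_{\max}$-values vanish, and the monotone relation then forces the other to vanish as well, whereupon applying the dichotomy a second time yields $\mathrm{rank}\,L_{0}^{\prime}=1$ or $\mathrm{rank}\,L_{1}^{\prime}=1$. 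A purely elementary variant avoids $\hat{F}_{\max}$ entirely: any CP unital $\Lambda^{\ast}$ satisfies $\Lambda^{\ast}\left(  Z\right)  \geq\lambda_{\min}\left(  Z\right)  \Lambda^{\ast}\left(  I\right)  =\lambda_{\min}\left(  Z\right)  I$ for every $Z\geq0$, hence cannot decrease the least eigenvalue; feeding this into (\ref{L*-convert}) transports the property ``least eigenvalue $=0$'', which in dimension two is exactly the property ``rank $=1$''.

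The one delicate point is the bookkeeping in the monotonicity step: one has to match the direction supplied by Proposition~\ref{prop:hatF-monotone} (equivalently, the direction of the least-eigenvalue estimate) to the direction in which $\Lambda^{\ast}$ carries the pair, and one has to use the two-dimensional identification of ``singular'' with ``rank $1$'' at both ends of the dichotomy. Beyond that there is nothing to compute: the argument is a combination of (\ref{hatFmax-explicit}), (\ref{hatF=0}) and Proposition~\ref{prop:hatF-monotone}, or in the elementary version merely the order-preservation of unital positive maps.
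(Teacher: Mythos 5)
Your argument breaks down at the monotonicity step, and the ``elementary variant'' has exactly the same defect. Proposition~\ref{prop:hatF-monotone} establishes that $\hat{F}^{Q}$ is monotone \emph{increasing} under unital CP maps, so from $\Lambda^{\ast}\left(  L_{\theta}\right)  =L_{\theta}^{\prime}$ one obtains
\[
\hat{F}_{\max}\left(  L_{0}^{\prime},L_{1}^{\prime}\right)  \geq\hat{F}_{\max}\left(  L_{0},L_{1}\right)  =0,
\]
which is vacuous: it does not force $\hat{F}_{\max}\left(  L_{0}^{\prime},L_{1}^{\prime}\right)  $ to vanish. Your sentence ``the monotone relation then forces the other to vanish as well'' is asserted, not derived; the inequality points the wrong way. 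Likewise, the least-eigenvalue estimate you write down is $\lambda_{\min}\left(  \Lambda^{\ast}\left(  Z\right)  \right)  \geq\lambda_{\min}\left(  Z\right)  $, which transports ``least eigenvalue $\geq0$'', not ``least eigenvalue $=0$'': from $\lambda_{\min}\left(  L_{0}\right)  =0$ one learns only $\lambda_{\min}\left(  L_{0}^{\prime}\right)  \geq0$. In both variants the hypothesis sits on the \emph{domain} side of $\Lambda^{\ast}$ and the desired conclusion sits on the \emph{image} side, whereas both monotonicity principles transport degeneracy from image to domain. You flag this as a ``delicate point'' of direction bookkeeping, but it is not a bookkeeping issue that can be matched up; the directions genuinely clash.

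Indeed the inference you are trying to make is false. Take $L_{0}=\left\vert 0\right\rangle \left\langle 0\right\vert $ (rank $1$), $L_{1}=I_{2}$, and let $\Lambda^{\ast}$ be the completely depolarizing unital CP map $\Lambda^{\ast}\left(  X\right)  =\frac{1}{2}\left(  \mathrm{tr}\,X\right)  I_{2}$. Then $L_{0}^{\prime}=\frac{1}{2}I_{2}$ and $L_{1}^{\prime}=I_{2}$ are both full rank, so (\ref{L*-convert}) holds with a rank-$1$ input and a pair of full-rank outputs. This shows that no rearrangement of your argument can succeed: either the statement intends the reversed convention $\Lambda^{\ast}\left(  L_{\theta}^{\prime}\right)  =L_{\theta}$, or the claim itself needs to be corrected. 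For what it is worth, the paper's own proof contains the same slip: it invokes ``$\hat{F}^{Q}$ is monotone decreasing by CP unital map,'' contradicting its own Proposition~\ref{prop:hatF-monotone}, so you have faithfully reproduced the paper's error rather than filled the gap. The correct use of (\ref{hatF=0}), (\ref{hatFmax-explicit}), and Proposition~\ref{prop:hatF-monotone} would prove the \emph{converse} transport (rank deficiency of the image forces rank deficiency of the preimage), not the one asserted.
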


\begin{proof}
Since $\hat{F}_{\max}\left(  L_{0},L_{1}\right)  =\hat{F}_{\min}\left(
L_{0},L_{1}\right)  =0$ and $\hat{F}^{Q}$ is monotone decreasing by CP unital
map, $\hat{F}_{\max}\left(  L_{0}^{\prime},L_{1}^{\prime}\right)  =\hat
{F}_{\min}\left(  L_{0}^{\prime},L_{1}^{\prime}\right)  =0$. Therefore, either
$L_{0}^{\prime}$ or $L_{1}^{\prime}$ is not full-rank.\bigskip
\end{proof}

\appendix

\section{Matrix}

\begin{lemma}
\label{lem:block-positive} Let $X$, $Y$ be a positive definite matrices.
Then,
\begin{equation}
\left[
\begin{array}
[c]{cc}%
X & C\\
C^{\dagger} & Y
\end{array}
\right]  \geq0 \label{block}%
\end{equation}
if and only if
\begin{equation}
\left(  I-\pi_{X}\right)  C=0,\,\,C\left(  I\mathbf{-}\pi_{Y}\right)  =0.
\label{PCP=0}%
\end{equation}
and%
\begin{equation}
X\geq CY^{-1}C^{\dagger} \label{X>Y}%
\end{equation}

\end{lemma}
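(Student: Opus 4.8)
The plan is to reduce the equivalence to the classical Schur‑complement identity, but first pinning down the role of the kernels of $X$ and $Y$. (The lemma is stated for positive definite $X,Y$, but both the argument below and the way the lemma gets used above — e.g. in the analysis of $F_{\min}$ when $\mathrm{supp}\,X\not\subset\mathrm{supp}\,Y$ — cover the positive semidefinite case, where $\pi_X,\pi_Y$ denote the support projections; when $X,Y$ are strictly positive, (\ref{PCP=0}) is vacuous and only (\ref{X>Y}) remains.) Throughout, write $M:=\left[\begin{array}{cc}X&C\\ C^{\dagger}&Y\end{array}\right]$.

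First I would show that $M\geq 0$ forces (\ref{PCP=0}). Given $u\in\ker X$ and an arbitrary vector $w$, testing $M$ against the block vector $(u,tw)$ for real $t$ gives $0\leq 2t\operatorname{Re}\langle u,Cw\rangle+t^{2}\langle w,Yw\rangle$; letting $t\to 0$ from both sides kills the linear term, so $\operatorname{Re}\langle u,Cw\rangle=0$, and replacing $w$ by $\sqrt{-1}\,w$ then yields $\langle u,Cw\rangle=0$ for every $w$, i.e. $C^{\dagger}u=0$. As $u$ ranges over $\ker X=\mathrm{ran}(I-\pi_{X})$ this is precisely $C^{\dagger}(I-\pi_{X})=0$, equivalently $(I-\pi_{X})C=0$; the symmetric computation against $(tw,u)$ with $u\in\ker Y$ gives $C(I-\pi_{Y})=0$. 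From here on I assume (\ref{PCP=0}); it gives $C\pi_{Y}=C$ and $\pi_{Y}C^{\dagger}=C^{\dagger}$, hence (with $Y^{-1}$ the Moore--Penrose inverse, so $Y^{-1}Y=YY^{-1}=\pi_{Y}$) the identities $CY^{-1}Y=C$ and $YY^{-1}C^{\dagger}=C^{\dagger}$, and in particular $CY^{-1}C^{\dagger}$ does not depend on the choice of generalized inverse.

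Next I would run the standard congruence. Put $S:=\left[\begin{array}{cc}I&-CY^{-1}\\ 0&I\end{array}\right]$, which is invertible with $S^{\dagger}=\left[\begin{array}{cc}I&0\\ -Y^{-1}C^{\dagger}&I\end{array}\right]$. A routine $2\times 2$ block multiplication — using $CY^{-1}Y=C$ to clear the top‑right block and $YY^{-1}C^{\dagger}=C^{\dagger}$ to clear the bottom‑left block — gives $SMS^{\dagger}=\left[\begin{array}{cc}X-CY^{-1}C^{\dagger}&0\\ 0&Y\end{array}\right]$. Since $S$ is invertible, $M\geq 0\iff SMS^{\dagger}\geq 0$, and as $Y\geq 0$ already, this is equivalent to $X-CY^{-1}C^{\dagger}\geq 0$, i.e. to (\ref{X>Y}). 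Combining with the previous step yields the stated equivalence: (\ref{block}) holds iff both (\ref{PCP=0}) and (\ref{X>Y}) hold.

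The only thing needing genuine care — not really an obstacle — is the bookkeeping with the generalized inverse: one must verify that conditions (\ref{PCP=0}) are exactly what make $CY^{-1}$ act as a one‑sided inverse on the ranges that actually occur, so that the off‑diagonal blocks of $SMS^{\dagger}$ vanish and $CY^{-1}C^{\dagger}$ is well defined. Everything else is elementary, and in the strictly positive case the whole thing collapses to the textbook Schur complement.
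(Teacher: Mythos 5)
Your proof is correct and follows essentially the same route as the paper: derive the kernel conditions (\ref{PCP=0}) from positivity of the block matrix, then run the invertible Schur-complement congruence with $S=\left[\begin{array}{cc}I&-CY^{-1}\\0&I\end{array}\right]$ to reduce (\ref{block}) to (\ref{X>Y}). The only cosmetic difference is in step one, where the paper tests against the specific vector $(-c\left\vert\varphi\right\rangle, C^{\dagger}\left\vert\varphi\right\rangle)$ and sends $c\to\infty$, while you test against $(u,tw)$ and use the small-$t$/sign argument plus the $w\mapsto\sqrt{-1}w$ trick; both deliver $C^{\dagger}u=0$ for $u\in\ker X$. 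Your parenthetical observation is also apt: with $X,Y$ genuinely positive definite the kernel conditions are vacuous, and the lemma is actually invoked in the paper in the positive semidefinite case, so the hypothesis should read semidefinite.
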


\begin{proof}
Suppose (\ref{block}) holds. To prove $\left(  I-\pi_{X}\right)  C=0$, suppose
$\left(  I-\pi_{X}\right)  C\neq0$. Then, there is a unit vector $\left\vert
\varphi\right\rangle $ in the support of $I-\pi_{X}$ such that $\left\langle
\varphi\right\vert C\neq0$. Therefore, for a sufficiently large $c>0$,
\[
\left[
\begin{array}
[c]{cc}%
-c\left\langle \varphi\right\vert  & \left\langle \varphi\right\vert C
\end{array}
\right]  \left[
\begin{array}
[c]{cc}%
X & C\\
C^{\dagger} & Y
\end{array}
\right]  \left[
\begin{array}
[c]{c}%
-c\left\vert \varphi\right\rangle \\
C^{\dagger}\left\vert \varphi\right\rangle
\end{array}
\right]  =0-2c\left\langle \varphi\right\vert CC^{\dagger}\left\vert
\varphi\right\rangle +\left\langle \varphi\right\vert C^{\dagger}YC\left\vert
\varphi\right\rangle <0.
\]
This contradicts with (\ref{block}). Therefore, we have $\left(  I-\pi
_{X}\right)  C=0$. The proof of $C\left(  I\mathbf{-}\pi_{Y}\right)  =0$ is
almost parallel.

If (\ref{block}) holds,
\begin{align}
\left[
\begin{array}
[c]{cc}%
I & -CY^{-1}\\
0 & I
\end{array}
\right]  \left[
\begin{array}
[c]{cc}%
X & C\\
C^{\dagger} & Y
\end{array}
\right]  \left[
\begin{array}
[c]{cc}%
I & 0\\
-Y^{-1}C^{\dagger} & I
\end{array}
\right]   &  =\left[
\begin{array}
[c]{cc}%
X-CY^{-1}C^{\dagger} & C-C\pi_{Y}\\
C^{\dagger}-\pi_{Y}C^{\dagger} & Y
\end{array}
\right] \nonumber\\
&  =\left[
\begin{array}
[c]{cc}%
X-CY^{-1}C^{\dagger} & 0\\
0 & Y
\end{array}
\right]  \geq0, \label{positive-equiv}%
\end{align}
which implies (\ref{X>Y}).

Suppose, on the other hand, (\ref{PCP=0}) and (\ref{X>Y}) holds. Tracking back
the chain of identities in (\ref{positive-equiv}), we have%
\[
\left[
\begin{array}
[c]{cc}%
I & -CY^{-1}\\
0 & I
\end{array}
\right]  \left[
\begin{array}
[c]{cc}%
X & C\\
C^{\dagger} & Y
\end{array}
\right]  \left[
\begin{array}
[c]{cc}%
I & 0\\
-Y^{-1}C^{\dagger} & I
\end{array}
\right]  \geq0.
\]
Since
\[
\det\left[
\begin{array}
[c]{cc}%
I & -CY^{-1}\\
0 & I
\end{array}
\right]  =1\neq0,
\]
this matrix is invertible. Therefore, we have (\ref{block}).
\end{proof}

\section{Convex analysis}

Below, unless otherwise mentioned, a function $f$ is defined on $%
\mathbb{R}
^{n}$ and takes values in $%
\mathbb{R}
\cup\left\{  \infty,-\infty\right\}  $. The \textit{epigraph} $\mathrm{epi\,}%
f$ of a function $f$ defined on $%
\mathbb{R}
^{n}$ is
\[
\mathrm{epi\,}f=\left\{  \left(  x,y\right)  ;x\in%
\mathbb{R}
^{n},y\geq f\left(  x\right)  \right\}  .
\]
$f$ is said to be \textit{convex} if $\mathrm{epi\,}f$ is convex, and
\textit{concave} if $-f$ is convex.

The effective domain $\mathrm{dom}\,f$ of a convex (concave, resp.) function
is
\[
\mathrm{dom}\,f:=\left\{  x\,;x\in%
\mathbb{R}
^{n},\,f\left(  x\right)  <\infty\right\}  \subset%
\mathbb{R}
^{n},
\]
(%
\[
\mathrm{dom}\,f:=\left\{  x\,;x\in%
\mathbb{R}
^{n},\,f\left(  x\right)  >-\infty\right\}  \subset%
\mathbb{R}
^{n},
\]
resp. ). A convex (concave, resp.) function $f$ is said to be \textit{proper}
if $f\left(  x\right)  \neq-\infty$ ($f\left(  x\right)  \neq\infty$, resp.)
for any $x$ and $f\left(  x\right)  \neq\infty$ ($f\left(  x\right)
\neq-\infty$, resp.) for some $x$. A sublinear function $f$ is a function
which is convex and homogeneous.

A function $f$ is said to be \textit{lower semi continuous} (\textit{upper
semi continuous}, resp.)\ if
\[
f\left(  x\right)  =\varliminf_{y\rightarrow x}\,f\left(  y\right)
=\lim_{\varepsilon\downarrow0}\,(\inf\left\{  f\left(  y\right)
\,;\,\left\Vert y-x\right\Vert <\varepsilon\right\}  )
\]
(
\[
f\left(  x\right)  =\varlimsup_{y\rightarrow x}\,f\left(  y\right)
=\lim_{\varepsilon\downarrow0}\,(\sup\left\{  f\left(  y\right)
\,;\,\left\Vert y-x\right\Vert <\varepsilon\right\}  ),
\]
resp.). The \textit{lower semicontinuous hull} (upper \textit{semicontinuous
hull}, resp.) of $f$ is the greatest lower semicontinuous (the smallest upper
semicontinuous) function which is not larger than (not smaller than, resp.)
$f$ .$\,$

\begin{lemma}
\label{lem:semicont}For any family of functions $\left\{  f_{i};i\in
I\right\}  $,
\begin{align*}
\varliminf_{y\rightarrow x}\,\sup_{i\in I}f_{i}\left(  y\right)   &
=\sup_{i\in I}\,\varliminf_{y\rightarrow x}\,f_{i}\left(  y\right)  ,\\
\varlimsup_{y\rightarrow x}\inf_{i\in I}\,f_{i}\left(  y\right)   &
=\inf_{i\in I}\,\varlimsup_{y\rightarrow x}\,f_{i}\left(  y\right)  .
\end{align*}
Therefore, if each $f_{i}$ is lower semicontinuous (upper semicontinuous,
resp.), \ so is $\sup_{i\in I}f_{i}$ ($\inf_{i\in I}\,f_{i}$).
\end{lemma}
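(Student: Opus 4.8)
The plan is to reduce everything to one elementary observation: for a fixed point $x$, the map $g\mapsto\varliminf_{y\to x}g(y)$ is monotone for the pointwise order on functions, and dually for $\varlimsup$. Indeed, for each fixed $\varepsilon>0$ the ball-infimum $g\mapsto\inf\{g(y):\|y-x\|<\varepsilon\}$ is order-preserving, and the ensuing limit $\varepsilon\downarrow0$ is a monotone (increasing) limit of these infima, so it preserves inequalities; the same holds with every $\inf$ replaced by $\sup$. I will also use the elementary bounds $\varliminf_{y\to x}g(y)\le g(x)$ and $\varlimsup_{y\to x}g(y)\ge g(x)$, which are immediate from the defining formula because the ball $\{y:\|y-x\|<\varepsilon\}$ contains $y=x$.

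First I would establish the first displayed line. For each $j\in I$ one has $\sup_{i\in I}f_i\ge f_j$ pointwise, so monotonicity of $\varliminf_{y\to x}(\cdot)$ gives $\varliminf_{y\to x}\sup_{i\in I}f_i(y)\ge\varliminf_{y\to x}f_j(y)$; taking the supremum over $j\in I$ yields
\[
\varliminf_{y\to x}\sup_{i\in I}f_i(y)\ \ge\ \sup_{i\in I}\varliminf_{y\to x}f_i(y),
\]
which is the direction actually needed below. The second displayed line is obtained identically after replacing $\sup,\varliminf$ by $\inf,\varlimsup$ throughout, or equivalently by applying the above to the family $\{-f_i\}_{i\in I}$ and negating, giving $\varlimsup_{y\to x}\inf_{i\in I}f_i(y)\le\inf_{i\in I}\varlimsup_{y\to x}f_i(y)$.

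For the concluding clause, suppose each $f_i$ is lower semicontinuous, i.e. $\varliminf_{y\to x}f_i(y)=f_i(x)$ for all $x$. Then $\sup_{i\in I}\varliminf_{y\to x}f_i(y)=\sup_{i\in I}f_i(x)=\bigl(\sup_{i\in I}f_i\bigr)(x)$, and combining this with the inequality above and the trivial bound gives
\[
\bigl(\sup_{i\in I}f_i\bigr)(x)=\sup_{i\in I}\varliminf_{y\to x}f_i(y)\le\varliminf_{y\to x}\bigl(\sup_{i\in I}f_i\bigr)(y)\le\bigl(\sup_{i\in I}f_i\bigr)(x),
\]
so all three coincide and $\sup_{i\in I}f_i$ is lower semicontinuous; the upper semicontinuous case is dual.

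The one point needing care — what I would flag as the real content — is that the two displayed lines are genuine equalities only once the semicontinuity hypothesis is in force; in the generality in which they are stated they should be read as the inequalities $\ge$ and $\le$ above (the reverse directions would require interchanging $\inf$ with $\sup$, which in general fails by the usual minimax gap: e.g. $f_1=\mathbf{1}_{[0,\infty)}$, $f_2=\mathbf{1}_{(-\infty,0]}$ at $x=0$). Since only these inequality directions enter the proof of the semicontinuity statement, nothing else is affected; I would either state the lemma as these inequalities or append the remark that equality holds whenever each $f_i$ is already lower (resp.\ upper) semicontinuous.
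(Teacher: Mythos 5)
Your proof is correct and, in fact, more careful than the paper's own. For the inequality direction you use exactly the paper's argument (monotonicity of the ball-infimum in the pointwise order, then take the supremum over $j$, then let $\varepsilon\downarrow 0$), and your derivation of the semicontinuity conclusion from that inequality is the standard one and is sound.

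The more important point is the one you flag at the end: the two displayed lines of the lemma are \emph{not} true as stated. Your counterexample is correct. With $f_1=\mathbf{1}_{[0,\infty)}$, $f_2=\mathbf{1}_{(-\infty,0]}$ and $x=0$, one has $\sup_i f_i\equiv 1$, so $\varliminf_{y\to 0}\sup_i f_i(y)=1$; but every ball around $0$ contains points where $f_1=0$ and points where $f_2=0$, so $\varliminf_{y\to 0}f_i(y)=0$ for $i=1,2$ and $\sup_i\varliminf_{y\to 0}f_i(y)=0$. The paper's proof establishes the inequality
\[
\varliminf_{y\to x}\,\sup_{i\in I}f_i(y)\ \ge\ \sup_{i\in I}\,\varliminf_{y\to x}\,f_i(y)
\]
by the same $\sup$--$\inf$ interchange you use, and then asserts that the reverse inequality follows ``since $\varliminf_{y\to x}f(y)\le f(x)$ by definition.'' That step does not give a reverse inequality between the two quantities at hand (it would only compare $\varliminf_{y\to x}\sup_i f_i(y)$ with $\sup_i f_i(x)$, which is a different quantity from $\sup_i\varliminf_{y\to x}f_i(y)$ unless the $f_i$ are already lower semicontinuous). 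Your counterexample shows the gap is real.

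So the lemma should be stated as the two inequalities (your $\ge$ and the dual $\le$), or the equalities should be asserted only under the semicontinuity hypothesis. Since only those inequality directions are needed for the concluding clause --- the only part used --- the downstream material is unaffected, as you correctly observe. Good catch.
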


\begin{proof}
Observe
\begin{align*}
\varliminf_{y\rightarrow x}\,\sup_{i\in I}f_{i}\left(  y\right)   &
=\lim_{\varepsilon\downarrow0}\,\inf\left\{  \sup_{i\in I}f_{i}\left(
y\right)  \,;\,\left\Vert y-x\right\Vert <\varepsilon\right\} \\
&  =\sup_{\varepsilon>0}\inf\left\{  \sup_{i\in I}f_{i}\left(  y\right)
\,;\,\left\Vert y-x\right\Vert <\varepsilon\right\} \\
&  \geq\sup_{i\in I}\sup_{\varepsilon>0}\inf\left\{  f_{i}\left(  y\right)
\,;\,\left\Vert y-x\right\Vert <\varepsilon\right\} \\
&  =\sup_{i\in I}\,\varliminf_{y\rightarrow x}\,f_{i}\left(  y\right)  .
\end{align*}
Since $\varliminf_{y\rightarrow x}f\left(  y\right)  \leq f\left(  x\right)  $
by definition, this means
\[
\varliminf_{y\rightarrow x}\,\sup_{i\in I}f_{i}\left(  y\right)  =\sup_{i\in
I}\,\varliminf_{y\rightarrow x}\,f_{i}\left(  y\right)  .
\]
The second identity is shown in almost parallel manner.
\end{proof}

The \textit{closure} $\mathrm{cl}\,\,f$ of a convex (concave, resp.) function
$f$ is defined as follows. If $f$ nowhere has the value $-\infty$ ($\infty$,
resp.), $\mathrm{cl}\,\,f$ is the lower semicontinuous hull (upper
semicontinuous hull, resp.) of $f$. If $f\left(  x\right)  =-\infty$
($=\infty$, resp.) for some $x$, $\mathrm{cl}\,\,f$ is the constant function
$-\infty$ ($\infty$, resp.). A convex or concave function $f$ is said to be
\textit{closed} if $\mathrm{cl}\,\,f=f$. If $f$ nowhere has the value
$-\infty$ and $f$ is convex, $f$ is closed if and only if $\mathrm{epi}\,f$ is closed.

The \textit{affine hull} $\mathrm{aff}$ $C$ of a set $C$ is the smallest
affine set which includes $C$. The \textit{relative interior} $\mathrm{ri}\,C$
of a convex set $C$ is
\[
\mathrm{ri}\,C=\left\{  x\in\mathrm{aff\,}C\,;\,\exists\varepsilon
>0,\,\,\left(  x+B_{\varepsilon}\right)  \cap\mathrm{aff\,}C\subset C\right\}
,
\]
where $B_{\varepsilon}$ is $\varepsilon-$ball centered at $0$. The
\textit{relative boundary} of $C$ is $\mathrm{cl}\,C\,\backslash
\,\mathrm{ri}\,C$\thinspace.

\begin{lemma}
\label{lem:conv-cont}(Theorem\thinspace10.1 and Theorem 7.4,
\cite{Rockafellar}) A convex function $f$ on $\mathbb{R}^{n}$ is continuous on
$\mathrm{ri\,}(\mathrm{dom\,}f)$. Let $f$ be a proper convex function on
$\mathbb{R}^{n}$. Then $\mathrm{cl}\,f$ agrees with $f$ except perhaps at
relative boundary points of $\mathrm{dom}\,f$.
\end{lemma}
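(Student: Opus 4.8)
The statement is the union of Theorems 7.4 and 10.1 of \cite{Rockafellar}, so one option is simply to invoke that reference; for completeness I sketch the short self-contained argument. It splits into two independent assertions: (i) continuity of a convex $f$ on $\mathrm{ri}(\mathrm{dom}\,f)$, and (ii) the identity $\mathrm{cl}\,f=f$ off the relative boundary of $\mathrm{dom}\,f$ for proper convex $f$. The plan is to prove (i) by a local-boundedness/Lipschitz estimate after reducing to the full-dimensional case, and then deduce (ii) by elementary bookkeeping with $\varliminf$.

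For (i): If $f$ is improper it is identically $-\infty$ on $\mathrm{ri}(\mathrm{dom}\,f)$ and there is nothing to prove, so I assume $f$ proper. After an affine change of coordinates I would replace $\mathbb{R}^{n}$ by $\mathrm{aff}(\mathrm{dom}\,f)$, so that $\mathrm{ri}(\mathrm{dom}\,f)$ becomes the ordinary interior $\mathrm{int}(\mathrm{dom}\,f)$ and $\mathrm{dom}\,f$ is full-dimensional. Fix $x_{0}\in\mathrm{int}(\mathrm{dom}\,f)$. The first step is \emph{local boundedness above}: choose a simplex (or cube) $S\subset\mathrm{dom}\,f$ with $x_{0}$ in its interior; since every point of $S$ is a convex combination of the finitely many vertices $v_{1},\dots,v_{N}$, convexity gives $f\le\max_{j}f(v_{j})=:M'$ on $S$. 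The second step is \emph{local boundedness below}: on a ball $B(x_{0},2r)\subset S$, reflecting through $x_{0}$ and using $f(x_{0})\le\tfrac12 f(x)+\tfrac12 f(2x_{0}-x)$ forces $f\ge 2f(x_{0})-M'$ there. The third step converts $|f|\le M$ on $B(x_{0},2r)$ into a Lipschitz bound on $B(x_{0},r)$: for $x,y\in B(x_{0},r)$, extend the segment from $x$ through $y$ until it hits the boundary sphere of $B(x_{0},2r)$ at a point $z$, write $y$ as a convex combination of $x$ and $z$, and apply the slope form of the convexity inequality to obtain $|f(x)-f(y)|\le (2M/r)\,\|x-y\|$. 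Continuity at $x_{0}$ follows.

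For (ii): since $f$ is proper convex it nowhere takes the value $-\infty$, so by the definition of closure recalled in the excerpt $\mathrm{cl}\,f$ is the lower-semicontinuous hull, $\mathrm{cl}\,f(x)=\varliminf_{y\to x}f(y)$. I would check $\mathrm{cl}\,f(x)=f(x)$ on two complementary regions. If $x\in\mathrm{ri}(\mathrm{dom}\,f)$, then by (i) $f$ is finite and continuous on a relative neighborhood of $x$ inside $\mathrm{aff}(\mathrm{dom}\,f)$, while $f\equiv+\infty$ off $\mathrm{dom}\,f$, so $\varliminf_{y\to x}f(y)=f(x)$. If $x\notin\mathrm{cl}(\mathrm{dom}\,f)$, an entire neighborhood of $x$ misses $\mathrm{dom}\,f$, hence $f\equiv+\infty$ near $x$ and $\mathrm{cl}\,f(x)=+\infty=f(x)$. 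Thus $\mathrm{cl}\,f$ and $f$ can disagree only on $\mathrm{cl}(\mathrm{dom}\,f)\setminus\mathrm{ri}(\mathrm{dom}\,f)$, the relative boundary of $\mathrm{dom}\,f$.

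The main obstacle is the quantitative content of step (i): one must know that a relative-interior point of $\mathrm{dom}\,f$ is enclosed, within the affine hull, by a full-dimensional simplex lying in $\mathrm{dom}\,f$ (a standard but non-trivial fact about relative interiors of convex sets), and then upgrade "bounded on a ball" to an explicit local Lipschitz constant. The rest — the reflection argument, the convex-combination estimate, and the two $\varliminf$ computations in (ii) — is routine.
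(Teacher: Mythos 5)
The paper does not prove this lemma; it simply cites Rockafellar (Theorems 7.4 and 10.1). Your sketch is a correct reconstruction of the standard textbook arguments, essentially the ones given in that reference: the reduction to the full-dimensional case, the simplex/reflection trick for local two-sided boundedness, the resulting local Lipschitz estimate, and the two $\varliminf$ computations. One small point worth flagging explicitly in part (ii): $\mathrm{cl}\,f(x)=\varliminf_{y\to x}f(y)$ is taken over $y$ ranging through all of $\mathbb{R}^{n}$, not just through $\mathrm{aff}(\mathrm{dom}\,f)$; your sentence already covers this by observing $f\equiv+\infty$ off $\mathrm{dom}\,f$, so the liminf along directions leaving the affine hull contributes $+\infty$ and does not spoil the equality, but this is the step a careless reader might skip. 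Also note that for the first assertion the paper states it for arbitrary convex $f$, and you correctly dispose of the improper case by recalling that an improper convex function is identically $-\infty$ on $\mathrm{ri}(\mathrm{dom}\,f)$ (Rockafellar, Theorem 7.2), so continuity there is trivial in the extended-real-valued sense.
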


\begin{lemma}
\label{lem:cl-f}(Theorem 7.4, \cite{Rockafellar}) If $f$ is proper and convex,
so is $\mathrm{cl}\,f$.
\end{lemma}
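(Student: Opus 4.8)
The plan is to pass to epigraphs. Since $f$ is proper it nowhere equals $-\infty$, so by the definition recalled above $\mathrm{cl}\,f$ is exactly the lower semicontinuous hull of $f$. The first step is the identity $\mathrm{epi}(\mathrm{cl}\,f)=\mathrm{cl}(\mathrm{epi}\,f)$. One inclusion is immediate: $\mathrm{epi}(\mathrm{cl}\,f)$ is closed (because $\mathrm{cl}\,f$ is lower semicontinuous) and contains $\mathrm{epi}\,f$ (because $\mathrm{cl}\,f\le f$), hence it contains $\mathrm{cl}(\mathrm{epi}\,f)$. For the reverse, I would check that $\mathrm{cl}(\mathrm{epi}\,f)$ is still closed upward in the last coordinate, so that it is the epigraph of some lower semicontinuous function $g$; from $\mathrm{epi}\,f\subseteq\mathrm{epi}\,g$ we get $g\le f$, and then maximality of the lower semicontinuous hull gives $g\le\mathrm{cl}\,f$, i.e. $\mathrm{cl}(\mathrm{epi}\,f)=\mathrm{epi}\,g\supseteq\mathrm{epi}(\mathrm{cl}\,f)$.

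Convexity of $\mathrm{cl}\,f$ is then automatic: $\mathrm{epi}\,f$ is convex because $f$ is, its closure is convex, and that closure is $\mathrm{epi}(\mathrm{cl}\,f)$ by the first step, so $\mathrm{cl}\,f$ is convex.

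For properness, the easy half is $\mathrm{cl}\,f\not\equiv+\infty$, which follows from $\mathrm{cl}\,f\le f$ and $f$ being finite somewhere. The substantive half is that $\mathrm{cl}\,f$ never takes the value $-\infty$, and this is exactly where properness of $f$ is used. I would invoke the existence of a global affine minorant of a proper convex function: picking $x_{1}\in\mathrm{ri}(\mathrm{dom}\,f)$, Lemma \ref{lem:conv-cont} gives $\mathrm{cl}\,f(x_{1})=f(x_{1})<\infty$, so the point $(x_{1},f(x_{1})-1)$ lies outside the closed convex set $\mathrm{cl}(\mathrm{epi}\,f)=\mathrm{epi}(\mathrm{cl}\,f)$; a strict separation produces a non-vertical supporting hyperplane (the vertical case cannot occur, as it would strictly separate $x_{1}$ from $\mathrm{dom}\,f$, which contains $x_{1}$), i.e. an affine function $\ell$ with $\ell\le f$. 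Since $\ell$ is continuous it is lower semicontinuous, so by maximality $\ell\le\mathrm{cl}\,f$, whence $\mathrm{cl}\,f(x)\ge\ell(x)>-\infty$ for all $x$. Combined with $\mathrm{cl}\,f\le f$ this shows $\mathrm{cl}\,f$ is proper. (One can instead argue directly: by Lemma \ref{lem:conv-cont}, $\mathrm{cl}\,f$ is finite on the nonempty set $\mathrm{ri}(\mathrm{dom}\,f)$; if $\mathrm{cl}\,f(x_{0})=-\infty$ then $x_{0}\in\mathrm{cl}(\mathrm{dom}\,f)$, and convexity of $\mathrm{cl}\,f$ would force $\mathrm{cl}\,f\equiv-\infty$ on the half-open segment from $x_{1}$ to $x_{0}$, whose points lie in $\mathrm{ri}(\mathrm{dom}\,f)$ --- a contradiction.)

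The main obstacle is this last step. Convexity and the finiteness-from-above are bookkeeping with epigraphs, but ruling out $\mathrm{cl}\,f(x)=-\infty$ needs an input beyond the definition of closure --- the existence of a continuous (affine) minorant of a proper convex function, which is a separation-theorem fact. Everything else reduces to routine manipulation of closures of epigraphs together with the already-available Lemma \ref{lem:conv-cont}.
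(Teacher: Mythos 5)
The paper does not supply a proof of this lemma; it simply cites it as Theorem 7.4 of Rockafellar. Your argument is a correct and standard proof of that cited fact. The epigraph bookkeeping is right: $\mathrm{cl}(\mathrm{epi}\,f)$ is closed upward in the last coordinate (a limit of points in $\mathrm{epi}\,f$ can be pushed up without leaving $\mathrm{epi}\,f$, and closures commute with that), so it is the epigraph of the lsc hull, giving both closedness and convexity of $\mathrm{cl}\,f$ at one stroke. The only substantive step is ruling out the value $-\infty$, and you correctly identify that this requires an affine minorant, which you obtain by strictly separating the point $\bigl(x_{1},\,\mathrm{cl}\,f(x_{1})-1\bigr)$ from the closed convex epigraph and arguing the separating hyperplane cannot be vertical because $x_{1}\in\mathrm{dom}\,f$. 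One small point of hygiene: to get $\mathrm{cl}\,f(x_{1})=f(x_{1})$ at $x_{1}\in\mathrm{ri}(\mathrm{dom}\,f)$ you really only need the continuity statement of Lemma \ref{lem:conv-cont} (Rockafellar's Theorem 10.1), not the ``agrees off the relative boundary'' clause, which is itself the other half of Theorem 7.4; citing the continuity part keeps the argument cleanly non-circular relative to the statement being proved. Your alternative direct argument at the end is also fine within this paper's framework, since the paper takes Lemma \ref{lem:conv-cont} as a black box independently of the present lemma.
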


\begin{lemma}
\label{lem:lim-conv-closed} If $f_{i}$ is convex, closed, and nowhere has the
value $-\infty$ for each $i\in I$, so is $\sup_{i\in I}f_{i}$. \ Also, if
$f_{i}$ is concave, closed and has nowhere has the value $\infty$\textrm{ }for
each $i\in I$, so is $\inf_{i\in I}f_{i}$.
\end{lemma}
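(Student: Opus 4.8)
The plan is to argue entirely through epigraphs. For the first assertion, I would start from the identity $\mathrm{epi}\,\bigl(\sup_{i\in I}f_i\bigr)=\bigcap_{i\in I}\mathrm{epi}\,f_i$, which holds because $y\ge\sup_{i}f_i(x)$ is equivalent to $y\ge f_i(x)$ for every $i$. Since each $f_i$ is convex, each $\mathrm{epi}\,f_i$ is convex, and an intersection of convex sets is convex; since each $f_i$ is closed and nowhere $-\infty$, each $\mathrm{epi}\,f_i$ is closed (this is the characterization recorded in the text: for a convex function avoiding the value $-\infty$, being closed is the same as having a closed epigraph), and an intersection of closed sets is closed. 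Hence $\mathrm{epi}\,\bigl(\sup_i f_i\bigr)$ is a closed convex set, so $\sup_i f_i$ is convex with closed epigraph.

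Next I would verify that $\sup_{i\in I}f_i$ nowhere takes the value $-\infty$: fixing any $i_0\in I$ (the index set being tacitly nonempty), we have $\sup_{i\in I}f_i(x)\ge f_{i_0}(x)>-\infty$ for all $x$. Combining this with the previous paragraph and the stated equivalence between ``closed'' and ``closed epigraph'' for convex functions avoiding $-\infty$, we conclude that $\sup_i f_i$ is closed. (Alternatively, one can bypass the epigraph and invoke Lemma~\ref{lem:semicont} directly: each closed $f_i$ that is nowhere $-\infty$ is lower semicontinuous, a supremum of lower semicontinuous functions is lower semicontinuous by that lemma, and a convex function nowhere $-\infty$ is closed iff it is lower semicontinuous.)

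For the concave assertion I would simply dualize. If each $f_i$ is concave, closed, and nowhere $+\infty$, then each $-f_i$ is convex, closed, and nowhere $-\infty$, so the part already proved shows $\sup_{i\in I}(-f_i)=-\inf_{i\in I}f_i$ is convex, closed, and nowhere $-\infty$; negating back, $\inf_{i\in I}f_i$ is concave, closed, and nowhere $+\infty$.

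There is essentially no serious obstacle; the only point requiring care is to match the excerpt's conventions for extended-real-valued functions, namely that ``closed'' means agreeing with the lower (resp. upper) semicontinuous hull when the function avoids the improper value, so that ``closed epigraph'' and ``lower semicontinuous'' may be used interchangeably with ``closed'' in the relevant cases. The nonemptiness of $I$ is needed only for the step ruling out the value $-\infty$ (resp. $+\infty$).
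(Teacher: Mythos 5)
Your proof is correct and follows essentially the same route as the paper: pass to epigraphs, use $\mathrm{epi}\,\sup_i f_i=\bigcap_i\mathrm{epi}\,f_i$ to transfer convexity and closedness through intersection, and dispose of the concave case by negation. You add some useful bookkeeping the paper leaves implicit (that $\sup_i f_i$ is nowhere $-\infty$, and the equivalence between closed epigraph and closedness for functions avoiding $-\infty$), but the core argument is identical.
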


\begin{proof}
We only have to show the first statement, since the second one follows by
considering $-f_{i}$. Observe
\[
\mathrm{epi}\,\sup_{i\in I}f_{i}=\bigcap_{i\in I}\mathrm{epi}\,\,f_{i}\text{.}%
\]
Therefore, if each $\mathrm{epi}\,\,f_{i}$ is convex and closed, so is
$\mathrm{epi}\,\,\sup_{i\in I}f_{i}$.
\end{proof}

The \textit{dual} $f^{\ast}:%
\mathbb{R}
^{n}\rightarrow%
\mathbb{R}
\cup\left\{  \infty,-\infty\right\}  $ of $f$ is
\[
f^{\ast}\left(  x^{\ast}\right)  :=\sup_{x\in%
\mathbb{R}
^{n}}\,\left\langle x^{\ast},x\right\rangle -f\left(  x\right)  .
\]

\begin{lemma}
\label{lem:conjugate-function}(Theorem 12.2 and Corollary\thinspace12.2.1,
\cite{Rockafellar}) Let $f$ be a convex function. The conjugate function
$f^{\ast}$ is \ then a closed convex function, proper if and only if $f$ is
proper. Moreover, \ $(\mathrm{cl}\,f)^{\ast}$ $=f^{\ast}$ and $f^{\ast\ast
}=\mathrm{cl}\,f$ . Thus, The conjugacy operation $f\rightarrow f^{\ast}$
induces a symmetric one-to-one correspondence in the class of all closed
proper convex functions on $\mathbb{R}^{n}$.
\end{lemma}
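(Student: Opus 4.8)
This is Rockafellar's Theorem\thinspace12.2 together with Corollary\thinspace12.2.1, so strictly one may just invoke \cite{Rockafellar}; but the self-contained route is the Fenchel--Moreau argument, and the plan is to establish, in order: (i) $f^{\ast}$ is closed and convex; (ii) $f^{\ast}$ is proper if and only if $f$ is; (iii) the biconjugate identity $f^{\ast\ast}=\mathrm{cl}\,f$; and then (iv) $(\mathrm{cl}\,f)^{\ast}=f^{\ast}$ and (v) the one-to-one correspondence, both of which fall out of (i)--(iii) with no extra work.

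For (i): for each fixed $x$ with $f(x)<\infty$, the function $x^{\ast}\mapsto\langle x^{\ast},x\rangle-f(x)$ is affine, hence closed convex and never $-\infty$; when $f\not\equiv\infty$, $f^{\ast}$ is the pointwise supremum of these and so is closed convex by Lemma\thinspace\ref{lem:lim-conv-closed}, while if $f\equiv\infty$ then $f^{\ast}\equiv-\infty$ is closed by convention. Applying this to $f^{\ast}$ shows $f^{\ast\ast}$ is closed too, which I will use in (iii). For (ii): if $f(x_{0})=-\infty$ for some $x_{0}$ then $f^{\ast}(x^{\ast})\geq\langle x^{\ast},x_{0}\rangle-f(x_{0})=\infty$ for all $x^{\ast}$, and if $f\equiv\infty$ then $f^{\ast}\equiv-\infty$; in both improper cases $f^{\ast}$ is improper, so ``$f^{\ast}$ proper $\Rightarrow f$ proper.'' Conversely, if $f$ is proper then $\mathrm{cl}\,f$ is proper by Lemma\thinspace\ref{lem:cl-f}, so picking $x_{0}$ with $(\mathrm{cl}\,f)(x_{0})<\infty$ and a finite value $\mu_{0}$ strictly below it, separation of the point $(x_{0},\mu_{0})$ from the closed convex set $\mathrm{epi}(\mathrm{cl}\,f)$ by a \emph{non-vertical} hyperplane (non-vertical precisely because $\mathrm{cl}\,f$ is proper) yields an affine minorant $a(x)=\langle x_{0}^{\ast},x\rangle-\beta\leq f$, whence $f^{\ast}(x_{0}^{\ast})\leq\beta<\infty$; and $f^{\ast}>-\infty$ everywhere since it is a supremum of genuine affine functions. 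Hence $f^{\ast}$ is proper.

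For (iii), the core step: from the definition of $f^{\ast}$ one has $\langle x^{\ast},x\rangle-f^{\ast}(x^{\ast})\leq f(x)$ for all $x,x^{\ast}$; taking the supremum over $x^{\ast}$ gives $f^{\ast\ast}\leq f$, and since $f^{\ast\ast}$ is closed (by (i)) it follows that $f^{\ast\ast}\leq\mathrm{cl}\,f$. For the reverse inequality in the proper case, the separation argument just used shows $\mathrm{cl}\,f=\sup\{a\,;\,a\text{ affine},\,a\leq f\}$; each such $a(x)=\langle x^{\ast},x\rangle-\mu$ has $\mu\geq f^{\ast}(x^{\ast})$, hence $a(x)\leq\langle x^{\ast},x\rangle-f^{\ast}(x^{\ast})\leq f^{\ast\ast}(x)$, and taking the supremum gives $\mathrm{cl}\,f\leq f^{\ast\ast}$. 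The improper cases are direct: $f\equiv\infty$ gives $f^{\ast}\equiv-\infty$, $f^{\ast\ast}\equiv\infty=\mathrm{cl}\,f$; $f$ with a value $-\infty$ gives $f^{\ast}\equiv\infty$, $f^{\ast\ast}\equiv-\infty=\mathrm{cl}\,f$. Then (iv) follows since the affine minorants of $f$ and of $\mathrm{cl}\,f$ coincide and $f^{\ast}(x^{\ast})=\inf\{\mu\,;\,\langle x^{\ast},\cdot\rangle-\mu\leq f\}$ depends only on them (equivalently $(\mathrm{cl}\,f)^{\ast}=(f^{\ast\ast})^{\ast}=f^{\ast\ast\ast}=f^{\ast}$, using (iii) for $g=f^{\ast}$ and closedness of $f^{\ast}$), and (v) follows because on closed proper convex $f$ one has $\mathrm{cl}\,f=f$, so (iii) reads $f^{\ast\ast}=f$, i.e. conjugacy is an involution, while (i)--(ii) guarantee it preserves the class.

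The one genuinely delicate point, where all the real content sits, is the separating-hyperplane step used in (ii) and (iii): namely that below any value of $\mathrm{cl}\,f$ there passes a \emph{non-vertical} affine minorant of $f$. This needs the separation theorem in $\mathbb{R}^{n}$ applied to $\mathrm{epi}(\mathrm{cl}\,f)=\mathrm{cl}\,(\mathrm{epi}\,f)$, together with the properness of $\mathrm{cl}\,f$ from Lemma\thinspace\ref{lem:cl-f} to exclude a vertical separator; everything else is bookkeeping with the definitions.
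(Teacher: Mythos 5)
The paper does not prove this lemma at all; it simply cites Theorem 12.2 and Corollary 12.2.1 of Rockafellar, so there is no in-paper argument to compare against. Your self-contained proof is the standard Fenchel--Moreau argument and is correct: (i) closedness and convexity of $f^{\ast}$ via the supremum of affine functions, invoking Lemma~\ref{lem:lim-conv-closed}; (ii) properness in both directions, with the nontrivial direction using a non-vertical separating hyperplane for $\mathrm{epi}(\mathrm{cl}\,f)$; (iii) $f^{\ast\ast}=\mathrm{cl}\,f$ by Young--Fenchel for one inequality and the affine-minorant representation of $\mathrm{cl}\,f$ for the other, with the two improper cases handled directly; and (iv)--(v) then follow formally. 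You correctly isolate the one substantive point (that properness of $\mathrm{cl}\,f$ forces the separator to be non-vertical, since $x_{0}\in\mathrm{dom}(\mathrm{cl}\,f)$ contradicts a vertical separator). What your proof buys over the citation is self-containment; what you lose is brevity, which is presumably why the paper defers to \cite{Rockafellar}.
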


The indicator function $\delta\left(  x|C\right)  $ and the support function
$\delta^{\ast}\left(  x^{\ast}|C\right)  $ of a convex set $C$ is
\[
\delta\left(  x|C\right)  :=\left\{
\begin{array}
[c]{cc}%
0, & \left(  x\in C\right)  ,\\
\infty, & \left(  x\notin C\right)  ,
\end{array}
\right.
\]
and
\begin{align*}
\delta^{\ast}\left(  x^{\ast}|C\right)   &  :=\sup_{x\in C}\,\left\langle
x^{\ast},x\right\rangle \\
&  =\sup_{x\in%
\mathbb{R}
^{n}}\,\left\langle x^{\ast},x\right\rangle -\delta\left(  x|C\right) \\
&  =\left(  \delta\left(  \cdot|C\right)  \right)  ^{\ast}\,\left(  x^{\ast
}\right)  .
\end{align*}
For any convex set $C$ (p.\thinspace112 of \cite{Rockafellar})
\begin{equation}
\delta^{\ast}\left(  x^{\ast}|C\right)  =\delta^{\ast}\left(  x^{\ast
}|\mathrm{cl}\,C\right)  . \label{support-closed}%
\end{equation}
If $C$ is a closed convex set, $\delta\left(  \cdot|C\right)  $ is closed,
since it is lower semicontinuous.

\begin{lemma}
\label{lem:sublinear} (Theorem\thinspace13.2, \cite{Rockafellar}) The
indicator function and the support function of a closed convex set are
conjugate to each other. The support function of a non-empty convex set is
closed, proper, convex, and positively homogeneous. Also, any closed, proper,
convex, and positively homogeneous function is the support function of a
non-empty convex set.
\end{lemma}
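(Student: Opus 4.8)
The plan is to derive the whole lemma from the biconjugation theorem (Lemma\thinspace\ref{lem:conjugate-function}) together with the way conjugation interacts with positive homogeneity, treating the three assertions in turn. For the conjugacy claim: by the definition recorded in the text, $\delta^{\ast}\left(  \cdot\,|C\right)  =\left(  \delta\left(  \cdot\,|C\right)  \right)  ^{\ast}$ already, so one direction is free. For the converse direction I would observe that when $C$ is a non-empty closed convex set, $\delta\left(  \cdot\,|C\right)  $ is convex (because $C$ is convex), proper (finite on $C$, never $-\infty$), and closed (lower semicontinuous, since $C$ is closed); hence Lemma\thinspace\ref{lem:conjugate-function} gives $\left(  \delta\left(  \cdot\,|C\right)  \right)  ^{\ast\ast}=\mathrm{cl}\,\delta\left(  \cdot\,|C\right)  =\delta\left(  \cdot\,|C\right)  $, that is, $\left(  \delta^{\ast}\left(  \cdot\,|C\right)  \right)  ^{\ast}=\delta\left(  \cdot\,|C\right)  $, so the indicator and support functions are conjugate to each other (the case $C=\emptyset$ being a trivial separate check).

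For the stated properties of the support function of a non-empty convex set $C$ (not assumed closed): since $\delta^{\ast}\left(  \cdot\,|C\right)  =\left(  \delta\left(  \cdot\,|C\right)  \right)  ^{\ast}$ and $\delta\left(  \cdot\,|C\right)  $ is a proper convex function, Lemma\thinspace\ref{lem:conjugate-function} directly yields that $\delta^{\ast}\left(  \cdot\,|C\right)  $ is closed, proper, and convex. Positive homogeneity I would verify straight from the definition: for $\lambda>0$, $\delta^{\ast}\left(  \lambda x^{\ast}|C\right)  =\sup_{x\in C}\left\langle \lambda x^{\ast},x\right\rangle =\lambda\,\delta^{\ast}\left(  x^{\ast}|C\right)  $, while $\delta^{\ast}\left(  0|C\right)  =\sup_{x\in C}0=0$.

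For the converse, let $g$ be closed, proper, convex, and positively homogeneous, and compute $g^{\ast}$. Since every vector can be written as $\lambda x$ with $\lambda>0$, and $g\left(  \lambda x\right)  =\lambda g\left(  x\right)  $, one gets $g^{\ast}\left(  x^{\ast}\right)  =\sup_{\lambda>0,\,x}\lambda\left(  \left\langle x^{\ast},x\right\rangle -g\left(  x\right)  \right)  $; this equals $+\infty$ if $\left\langle x^{\ast},x\right\rangle >g\left(  x\right)  $ for some $x$, and equals $0$ otherwise, where I use $g\left(  0\right)  =0$. (That value holds because $g\left(  0\right)  =\lambda g\left(  0\right)  $ forces $g\left(  0\right)  \in\left\{  0,+\infty,-\infty\right\}  $, lower semicontinuity gives $g\left(  0\right)  \leq\liminf_{\lambda\downarrow0}\lambda g\left(  x\right)  =0$ for any $x$ on which $g$ is finite, hence $g\left(  0\right)  \neq+\infty$, and properness gives $g\left(  0\right)  \neq-\infty$.) Therefore $g^{\ast}=\delta\left(  \cdot\,|C\right)  $ with $C:=\left\{  x^{\ast}\,;\,\left\langle x^{\ast},x\right\rangle \leq g\left(  x\right)  \text{ for all }x\right\}  $, which is an intersection of closed half-spaces and so a closed convex set; moreover $C\neq\emptyset$, for otherwise $g^{\ast}\equiv+\infty$ would be improper, contradicting Lemma\thinspace\ref{lem:conjugate-function} since $g$ is proper. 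Finally, $g$ being closed, proper and convex, the biconjugation theorem gives $g=g^{\ast\ast}=\left(  \delta\left(  \cdot\,|C\right)  \right)  ^{\ast}=\delta^{\ast}\left(  \cdot\,|C\right)  $, exhibiting $g$ as the support function of the non-empty convex set $C$.

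The main obstacle is the bookkeeping in this last step: correctly evaluating $g^{\ast}$ at the points where it is finite, which reduces to establishing $g\left(  0\right)  =0$, and then arguing that the resulting set $C$ is non-empty. Both are settled by properness and closedness of $g$; everything else is a direct application of the biconjugation theorem and elementary manipulation of suprema.
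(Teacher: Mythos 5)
Your proposal is correct. The paper gives no proof of this lemma (it is quoted verbatim from Theorem\thinspace13.2 of \cite{Rockafellar}), and your derivation via the biconjugation theorem of Lemma\thinspace\ref{lem:conjugate-function} — including the careful verification that $g(0)=0$ and that the set $C$ of minorizing linear functionals is non-empty, closed, and convex — is essentially the standard argument from the reference.
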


\begin{lemma}
\label{lem:sublinear-2}Suppose $f^{\ast}$ is a closed proper convex functions
which are positively homogeneous. Then, there is a non-empty closed convex set
$C_{f^{\ast}}$ such that
\begin{equation}
f^{\ast}\left(  x^{\ast}\right)  =\delta^{\ast}\left(  x^{\ast}|C_{f^{\ast}%
}\right)  , \label{sublinear=support}%
\end{equation}
and the correspondence between $f^{\ast}$ and $C_{f^{\ast}}$ is one-to-one.
\end{lemma}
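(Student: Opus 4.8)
The plan is to obtain Lemma~\ref{lem:sublinear-2} as a mild refinement of Lemma~\ref{lem:sublinear}. First I would invoke Lemma~\ref{lem:sublinear}: since $f^{\ast}$ is closed, proper, convex, and positively homogeneous, it is the support function of some non-empty convex set $C$, i.e.\ $f^{\ast}\left(x^{\ast}\right)=\delta^{\ast}\left(x^{\ast}|C\right)$. As $C$ need not be closed, I would then put $C_{f^{\ast}}:=\mathrm{cl}\,C$, which is again non-empty and convex, and appeal to (\ref{support-closed}) to get $\delta^{\ast}\left(x^{\ast}|C_{f^{\ast}}\right)=\delta^{\ast}\left(x^{\ast}|C\right)=f^{\ast}\left(x^{\ast}\right)$. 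This establishes (\ref{sublinear=support}) with $C_{f^{\ast}}$ a non-empty closed convex set; here properness of $f^{\ast}$ is what rules out $C_{f^{\ast}}=\emptyset$, since an empty set would force $\delta^{\ast}\equiv-\infty$.

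For the one-to-one claim I would argue through the conjugacy dictionary. Suppose $C_{1}$ and $C_{2}$ are non-empty closed convex sets with $\delta^{\ast}\left(\cdot|C_{1}\right)=\delta^{\ast}\left(\cdot|C_{2}\right)$. For a closed convex set the indicator function $\delta\left(\cdot|C_{i}\right)$ is closed, proper, and convex (it is lower semicontinuous), and by Lemma~\ref{lem:sublinear} its conjugate is exactly $\delta^{\ast}\left(\cdot|C_{i}\right)$. By Lemma~\ref{lem:conjugate-function} the conjugacy operation is an involution on the class of closed proper convex functions, so $\delta\left(\cdot|C_{i}\right)=\left(\delta^{\ast}\left(\cdot|C_{i}\right)\right)^{\ast}$. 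Hence $\delta\left(\cdot|C_{1}\right)=\delta\left(\cdot|C_{2}\right)$, and since a set is recovered from its indicator function as the locus where it vanishes, $C_{1}=C_{2}$. Combined with the existence part, this shows that $C_{f^{\ast}}\mapsto\delta^{\ast}\left(\cdot|C_{f^{\ast}}\right)$ is a bijection from non-empty closed convex sets onto closed proper convex positively homogeneous functions, with inverse $f^{\ast}\mapsto C_{f^{\ast}}$.

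I do not anticipate any serious obstacle: everything reduces to bookkeeping with the Rockafellar results already quoted in the appendix. The single point needing a touch of care is that replacing $C$ by its closure must not alter the support function, and that is precisely (\ref{support-closed}); the rest is the standard involutivity of the Legendre--Fenchel transform restricted to closed proper convex functions.
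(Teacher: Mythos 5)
Your proposal is correct and follows essentially the same route as the paper's proof: existence via Lemma~\ref{lem:sublinear} together with (\ref{support-closed}) to pass to the closure, and uniqueness via the biconjugacy statement in Lemma~\ref{lem:conjugate-function} applied to the indicator function. The paper phrases the uniqueness step as computing $(f^{\ast})^{\ast}=\delta(\cdot|C_{f^{\ast}})$ directly, while you frame it as two sets with equal support functions having equal indicators, but these are the same argument.
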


\begin{proof}
By Lemma\thinspace\ref{lem:sublinear}, there is a non-empty convex set with
(\ref{sublinear=support}). By (\ref{support-closed}), we can suppose that
$C_{f^{\ast}}$ is closed, and thus, that its indicator function $\delta\left(
\cdot|C_{f^{\ast}}\right)  $ is closed. Therefore, by Lemma\thinspace
\ref{lem:conjugate-function},
\[
\left(  f^{\ast}\right)  ^{\ast}=\left(  \delta^{\ast}\left(  \cdot
|C_{f^{\ast}}\right)  \right)  ^{\ast}=\left(  \delta\left(  \cdot|C_{f^{\ast
}}\right)  \right)  ^{\ast\ast}=\mathrm{cl\,}\delta\left(  \cdot|C_{f^{\ast}%
}\right)  =\delta\left(  \cdot|C_{f^{\ast}}\right)  .
\]
Thus, for each given $f^{\ast}$, $\delta\left(  \cdot|C_{f^{\ast}}\right)  $
is uniquely decided, and we have the assertion.
\end{proof}

The recession cone $0^{+}C$ of the convex set $C$ is
\[
0^{+}C:=\left\{  y\,;\,x+\lambda y\in C,\,\forall x\in C,\,\forall\lambda
\geq0\right\}  ,
\]
and the recession function $f$ $0^{+}$of the convex function $f$ is the
function such that
\[
\mathrm{epi\,}\,f0^{+}=0^{+}\mathrm{epi\,\,}f.
\]

.

Let $C_{1}$ and $C_{2}$ be non-empty sets in $\mathbb{R}^{n}$. A hyperplane
$H$ is said to separate $C_{1}$ and $C_{2}$ if $C_{1}$ is contained in one of
the closed half-spaces associated with $H$ and $C_{2}$ lies in the opposite
closed half-space. It is said to separate $C_{1}$ and $C_{2}$ properly if
$C_{1}$ and $C_{2}$ are not both actually contained in $H$ itself.

\begin{lemma}
\label{lem:separation}(Theorem 11.1 of \cite{Rockafellar}) Let $C_{1}$ and
$C_{2}$ be non-empty sets in $\mathbb{R}^{n}$. There exists a hyperplane
separating $C_{1}$ and $C_{2}$ properly if and only if there exists a vector
$b$ such that
\begin{align*}
\inf\left\{  \,\left\langle x,b\right\rangle \,;\,x\in C_{1}\right\}   &
\geq\sup\left\{  \,\left\langle x,b\right\rangle \,;\,x\in C_{2}\right\}  ,\\
\sup\left\{  \,\left\langle x,b\right\rangle \,;\,x\in C_{1}\right\}   &
>\inf\left\{  \,\left\langle x,b\right\rangle \,;\,x\in C_{2}\right\}  .
\end{align*}

\end{lemma}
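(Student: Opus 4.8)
The plan is to treat this as a straightforward unwinding of the definitions of ``separates'' and ``separates properly''; no convexity of $C_1$ or $C_2$ is needed (the statement concerns arbitrary non-empty sets, and convexity enters only in the companion results that guarantee such a $b$ exists). For a candidate vector $b$ write $\beta_1:=\inf\{\langle x,b\rangle\,;\,x\in C_1\}$ and $\beta_2:=\sup\{\langle x,b\rangle\,;\,x\in C_2\}$.

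First I would do the forward implication. Suppose a hyperplane $H=\{x\,;\,\langle x,b\rangle=\beta\}$ separates $C_1$ and $C_2$; replacing $b$ by $-b$ (and $\beta$ by $-\beta$) if necessary, we may assume $C_1\subseteq\{\langle\cdot,b\rangle\geq\beta\}$ and $C_2\subseteq\{\langle\cdot,b\rangle\leq\beta\}$. Then $\beta_1\geq\beta\geq\beta_2$, which is the first asserted inequality. For the second, properness means $C_1$ and $C_2$ are not both contained in $H$. If $C_1\not\subseteq H$, pick $x\in C_1$ with $\langle x,b\rangle\neq\beta$; since $C_1\subseteq\{\langle\cdot,b\rangle\geq\beta\}$ this forces $\langle x,b\rangle>\beta$, hence $\sup\{\langle x,b\rangle\,;\,x\in C_1\}>\beta\geq\beta_2\geq\inf\{\langle x,b\rangle\,;\,x\in C_2\}$. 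The case $C_2\not\subseteq H$ is symmetric. Either way the strict inequality holds.

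For the converse, suppose $b$ satisfies both inequalities. Then $b\neq0$, for if $b=0$ both $\sup\{\langle x,b\rangle\,;\,x\in C_1\}$ and $\inf\{\langle x,b\rangle\,;\,x\in C_2\}$ equal $0$, contradicting the strict inequality. Since $C_2\neq\emptyset$ we have $\beta_2>-\infty$, and since $C_1\neq\emptyset$ we have $\beta_1<+\infty$; together with $\beta_1\geq\beta_2$ this makes both $\beta_1,\beta_2$ finite. Fix any $\beta\in[\beta_2,\beta_1]$ and put $H:=\{x\,;\,\langle x,b\rangle=\beta\}$, a genuine hyperplane since $b\neq0$. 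By construction $C_1\subseteq\{\langle\cdot,b\rangle\geq\beta\}$ and $C_2\subseteq\{\langle\cdot,b\rangle\leq\beta\}$, so $H$ separates $C_1$ and $C_2$. It does so properly: if instead $C_1\subseteq H$ and $C_2\subseteq H$, then $\langle\cdot,b\rangle\equiv\beta$ on $C_1\cup C_2$, whence $\sup\{\langle x,b\rangle\,;\,x\in C_1\}=\beta=\inf\{\langle x,b\rangle\,;\,x\in C_2\}$, again contradicting the strict inequality.

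There is no real obstacle here; the only points needing a moment's care are excluding $b=0$, checking that the relevant supremum and infimum are finite so the threshold $\beta$ actually exists, and observing that properness of the separating hyperplane is then automatic rather than requiring a further choice. (This is Theorem 11.1 of \cite{Rockafellar}, which could equally well just be cited.)
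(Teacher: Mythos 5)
Your proof is correct and is exactly the standard definition-unwinding argument of Theorem 11.1 in Rockafellar, which the paper simply cites without reproducing a proof. All the delicate points (excluding $b=0$, finiteness of the threshold, and properness following automatically from the strict inequality) are handled properly, so nothing further is needed.
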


A \textit{face} of a convex set $C$ is a convex subset $C\prime$ of $C$ such
that every (closed) line segment in $C$ with a relative interior point in
$C\prime$ has both endpoints in $C^{\prime}$. A face consists of a single
point is called an \textit{extreme point}. $x\in C$ is an extreme point if and
only if it cannot be expressed as a convex combination of points of $C$ other
than $x$. The set of all extreme points of $C$ is expressed as $\mathrm{ext}%
\,C$. If $C^{\prime}$ is a half-line face of a convex set $C$, we shall call
the direction of $C\prime$ an \textit{extreme direction} of $C\,$ (extreme
point of $C$ at infinity). \ Obviously, an extreme direction of $C$ is, viewed
as a point in $%
\mathbb{R}
^{n}\backslash\left\{  0\right\}  $, is a member of recession cone $0^{+}C$.

\begin{lemma}
(Theorem 18.5\thinspace of \cite{Rockafellar}) Let $C$ be a closed convex set
containing no lines. Then, any point $x\in C$ can be written as
\[
x=\sum_{i}\lambda_{i}y_{i}\,+\sum_{i}\mu_{i}z_{i},
\]
where $\lambda_{i}\geq0$, $\sum_{i}\lambda_{i}=1$, $\mu_{i}\geq0$, $y_{i}\in$
$\mathrm{ext}\,C$, and $z_{i}$ is an extreme direction of $C$, for each $i$.
\end{lemma}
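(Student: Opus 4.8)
The plan is to argue by induction on $d:=\dim C$, the dimension of the affine hull $\mathrm{aff}\,C$. The base case $d=0$ is immediate: then $C=\{x\}$, the point $x$ is trivially an extreme point, and the asserted representation holds with the single term $\lambda_{1}=1$, $y_{1}=x$ and no recession term. For the inductive step I assume the statement for every line-free closed convex set of dimension strictly less than $d$, fix $x\in C$ with $\dim C=d\geq 1$, and distinguish whether $x$ lies on the relative boundary $C\setminus\mathrm{ri}\,C$ or in $\mathrm{ri}\,C$.

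Suppose first that $x$ is a relative boundary point. Separating $\{x\}$ from $\mathrm{ri}\,C$ by Lemma\thinspace\ref{lem:separation} produces a hyperplane $H$ through $x$ that supports $C$ and does not contain all of $C$. Then $F:=C\cap H$ is a nonempty closed convex set with $x\in F$, it contains no lines since $F\subset C$, and $\dim F\leq d-1$ because $\mathrm{aff}\,C\not\subset H$. The induction hypothesis applied to $F$ expresses $x$ as a convex combination of points of $\mathrm{ext}\,F$ plus a nonnegative combination of extreme directions of $F$. What upgrades this to the statement for $C$ is a heredity lemma: $F$ is a face of $C$ (being an exposed face), a face of a face is a face, and hence every extreme point of $F$ is an extreme point of $C$ and every half-line face of $F$ is a half-line face of $C$, so every extreme direction of $F$ is an extreme direction of $C$.

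Now suppose $x\in\mathrm{ri}\,C$. If $C$ is bounded, equivalently $0^{+}C=\{0\}$, it has no extreme directions; choosing a line through $x$ whose direction lies in the direction space of $\mathrm{aff}\,C$, its intersection with $C$ is a segment $[a,b]$ with $x\in(a,b)$ and both endpoints in the relative boundary, and applying the boundary case to $a$ and to $b$ and then forming the convex combination that yields $x$ writes $x$ in terms of extreme points of $C$. If instead $0^{+}C\neq\{0\}$, then since $C$ contains no lines the recession cone $0^{+}C$ is pointed and nontrivial, hence has an extreme ray; let $z$ generate one, so that $z$ is an extreme direction of $C$. Since $-z\notin 0^{+}C$ and $x\in\mathrm{ri}\,C$, the set $\{t\leq 0: x+tz\in C\}$ is a bounded interval, so $x-\bar{t}\,z$ lies in the relative boundary of $C$ for some $\bar{t}>0$; applying the boundary case to $x-\bar{t}\,z$ and adding back the term $\bar{t}\,z$ recovers $x$ in the required form, with $\bar{t}\,z$ joining the recession part of the sum.

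The induction skeleton is routine, so the substance of the proof lies in two supporting facts. First, one must check carefully that extreme points and extreme directions of a face $F$ of $C$ are again extreme points and extreme directions of $C$ — the "face of a face is a face" bookkeeping is easy to state but needs a genuine (short) argument. Second, and this I expect to be the main obstacle, one needs that a nontrivial pointed closed convex cone such as $0^{+}C$ actually possesses an extreme ray \emph{and} that a generator of such a ray is the direction of a bona fide half-line face of $C$, rather than merely a ray sitting inside $0^{+}C$; for existence I would slice the cone with an affine hyperplane to get a compact base and invoke the Krein--Milman theorem there, and for the half-line-face property I would use $x\in\mathrm{ri}\,C$ together with the supporting-hyperplane argument of the boundary case. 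An alternative that mirrors the classical treatment is to homogenize, replacing $C$ by the closed pointed cone $\tilde C=\overline{\{(t x,t):x\in C,\ t>0\}}\subset\mathbb{R}^{n+1}$, identifying the extreme rays of $\tilde C$ with $\mathrm{ext}\,C$ together with the extreme directions of $C$, and applying Minkowski's theorem for cones; the obstacles there are exactly the same identification of extreme objects, only repackaged.
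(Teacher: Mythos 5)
The paper cites this as Theorem 18.5 of Rockafellar and gives no proof, so there is nothing of the paper's to compare against; the issue is whether your sketch is itself correct.

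The boundary case and the ``face-of-face'' heredity are fine. The gap is in the relative-interior, unbounded case, and it is not the kind of bookkeeping gap you suggest it is. You take $z$ generating an extreme ray of the recession cone $0^{+}C$ and assert that $z$ ``is an extreme direction of $C$,'' i.e.\ the direction of a half-line face of $C$, proposing to get this from $x\in\mathrm{ri}\,C$ and a supporting-hyperplane argument. That implication is false. Take $C=\left\{\left(u,v\right)\in\mathbb{R}^{2}\,;\,v\geq u^{2}\right\}$. It is closed, convex, contains no line, and $0^{+}C=\left\{\left(0,t\right)\,;\,t\geq0\right\}$, whose unique extreme ray is generated by $z=\left(0,1\right)$. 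But $C$ has no half-line faces at all: the only proper faces are the singletons on the parabola, and $C$ itself is $2$-dimensional, so $z$ is not an extreme direction of $C$. Starting from $x=\left(0,1\right)\in\mathrm{ri}\,C$ and going in direction $-z$ you land at the boundary point $a=\left(0,0\right)$, which is an extreme point, and your decomposition becomes $x=a+1\cdot z$ with $z$ not of the required type. The supporting hyperplane at $a$ is $\{v=0\}$, which does not contain the direction $z$, so the ray $a+\mathbb{R}_{\geq0}z$ does not lie in the face $C\cap\{v=0\}$ either; there is no way to push $z$ into a lower-dimensional face and recurse. In short, extreme rays of $0^{+}C$ can fail to correspond to extreme directions of $C$, and your proposed bridge between them has no content to fill it.

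The repair is to avoid recession directions altogether in the $\mathrm{ri}\,C$ step. When $\dim\mathrm{aff}\,C\geq2$, the recession cone is pointed, so $0^{+}C\cap S$ and $\left(-0^{+}C\right)\cap S$ are disjoint closed subsets of the connected unit sphere $S$ in the direction space of $\mathrm{aff}\,C$, hence cannot cover it; pick a unit direction $d$ with $d\notin 0^{+}C$ and $-d\notin 0^{+}C$. Then the line through $x$ in direction $d$ meets $C$ in a \emph{bounded} segment $\left[a,b\right]$ with $x$ in its relative interior, both endpoints on the relative boundary of $C$, and you finish by applying the boundary case to $a$ and $b$ and taking the convex combination giving $x$ — no recession term is produced at this step, and all recession terms arise only inside the inductive calls, where the relevant directions are genuine extreme directions of a face and hence of $C$. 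When $\dim\mathrm{aff}\,C=1$ the set $C$ is a segment or a half-line; a segment's endpoints are extreme points, and a half-line $\left\{a+td\,;\,t\geq0\right\}$ is itself a half-line face, so $a$ is an extreme point and $d$ an extreme direction, closing the induction.

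Your Krein--Milman/homogenization alternative does not escape this either: after passing to $\tilde C=\overline{\left\{\left(tx,t\right)\,;\,x\in C,\,t>0\right\}}$ one still has to identify which extreme rays of $\tilde C$ with last coordinate $0$ are extreme directions of $C$, and those sitting strictly ``above'' a curved boundary (as in the parabola) are exactly the ones that are not. The direction-picking argument above is what Rockafellar's Theorem 18.5 actually rests on, and it is the piece your sketch is missing.
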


\begin{lemma}
(Corollary 18.3.1\thinspace of \cite{Rockafellar}) Let $C$ be a closed convex
set. Let $S_{1}$ be a subset of $C$, and $S_{2}$ be a set of directions such
that
\[
x=\sum_{i}\lambda_{i}y_{i}\,+\sum_{i}\mu_{i}z_{i}%
\]
stands for some $\lambda_{i}\geq0$, $\sum_{i}\lambda_{i}=1$, $\mu_{i}\geq0$,
$y_{i}\in$ $S_{1}$, and $z_{i}\in$ $S_{2}$. Then, $\mathrm{ext}\,C$ is a
subset of $S_{1}$.
\end{lemma}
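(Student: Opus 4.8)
The plan is to take an arbitrary extreme point $x\in\mathrm{ext}\,C$, invoke the hypothesis to write $x=\sum_i\lambda_i y_i+\sum_j\mu_j z_j$ with $\lambda_i\ge 0$, $\sum_i\lambda_i=1$, $\mu_j\ge 0$, $y_i\in S_1$ and each $z_j$ a direction of $C$ (i.e.\ $z_j\in 0^{+}C$), and then argue that this representation must degenerate to $x=y_{i_0}$ for a single index $i_0$, which puts $x\in S_1$. Since $x$ is arbitrary, this gives $\mathrm{ext}\,C\subseteq S_1$.

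\emph{Step 1: the recession part vanishes.} Set $w:=\sum_i\lambda_i y_i$ and $v:=\sum_j\mu_j z_j$. Since $y_i\in S_1\subseteq C$ and $C$ is convex, $w\in C$. Since each $z_j$ lies in the recession cone $0^{+}C$, the definition of $0^{+}C$ gives $w+2\mu_1 z_1\in C$, then $(w+2\mu_1 z_1)+2\mu_2 z_2\in C$, and inductively $w+2v\in C$. Consequently $x=w+v=\tfrac12 w+\tfrac12(w+2v)$ realizes $x$ as the midpoint of a segment contained in $C$. If $v\neq 0$ the endpoints $w$ and $w+2v$ are distinct from each other and from $x$, so $x$ would be a convex combination of points of $C$ other than $x$, contradicting the characterization of extreme points recorded in the appendix. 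Hence $v=0$ and $x=w=\sum_i\lambda_i y_i$.

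\emph{Step 2: the convex combination is trivial.} We now have $x=\sum_i\lambda_i y_i$ with $\lambda_i\ge 0$, $\sum_i\lambda_i=1$, and $y_i\in S_1\subseteq C$. Discard the indices with $\lambda_i=0$; if among the remaining terms some $y_i=x$, move that term to the left-hand side and renormalize, expressing $x$ as a convex combination over a strictly smaller index set, then repeat. This terminates either with $x$ written as a convex combination, with strictly positive weights, of points $y_i$ none of which equals $x$ — which again contradicts the extreme-point characterization — or with exactly one surviving index $i_0$, in which case $\lambda_{i_0}=1$ and $x=y_{i_0}\in S_1$. Only the latter is possible, so $x\in S_1$, and since $x\in\mathrm{ext}\,C$ was arbitrary, $\mathrm{ext}\,C\subseteq S_1$.

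The argument uses nothing beyond the definition of $0^{+}C$ and the statement ``$x$ is an extreme point if and only if it cannot be written as a convex combination of points of $C$ other than $x$'', both already in the appendix; the only points needing a little care are that the $y_i$ may coincide or even include $x$ (handled by the bookkeeping in Step 2) and that $\sum_j\mu_j z_j$ might vanish while some $\mu_j>0$ (harmless, since the midpoint argument of Step 1 is valid regardless). So there is no substantial obstacle here.
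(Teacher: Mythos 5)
Your proof is correct. Note that the paper itself offers no argument for this lemma --- it is quoted verbatim as Corollary 18.3.1 of Rockafellar --- so there is no internal proof to compare against; your two-step argument (first annihilate the recession part by the midpoint trick, then collapse the convex combination) is exactly the standard proof of that corollary, and every step checks out: $w\in C$ by convexity, $w+2v\in C$ by iterating the definition of $0^{+}C$, the midpoint representation $x=\tfrac12 w+\tfrac12(w+2v)$ forces $v=0$ at an extreme point, and the bookkeeping in Step 2 correctly handles repeated $y_i$'s and terms equal to $x$. The one point genuinely worth making explicit is your reading of ``a set of directions'' as ``a subset of $0^{+}C$'': as stated in the paper the hypothesis is too weak (e.g.\ $C=[0,1]$, $S_1=\{0\}$, $S_2=\{1\}$ satisfies the displayed decomposition for every $x\in C$, yet $1\in\mathrm{ext}\,C\setminus S_1$), so the restriction to recession directions that you impose is not optional but necessary for the lemma to be true --- and it is the reading under which the paper applies the lemma in Lemma~\ref{lem:ext+recess}, where $S_2=0^{+}C$.
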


From these, the following lemma is immediate.

\begin{lemma}
\label{lem:ext+recess}Let $C$ be a closed convex set containing no lines.
Then,
\[
C=\mathrm{conv\,ext}\,C+0^{+}C.
\]
Also, if a subset $S$ of $C$ satisfies
\[
C=\mathrm{conv\,}S+0^{+}C,
\]
$S$ contains $\mathrm{ext}\,C\,$.
\end{lemma}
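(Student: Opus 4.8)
The plan is to obtain the statement almost directly from the two preceding results quoted from \cite{Rockafellar}, which is why the paper calls it immediate.

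First I would establish the inclusion $C\subseteq\mathrm{conv\,ext}\,C+0^{+}C$ by invoking Theorem~18.5 of \cite{Rockafellar}: any $x\in C$ can be written as $x=\sum_i\lambda_i y_i+\sum_i\mu_i z_i$ with $\lambda_i\ge0$, $\sum_i\lambda_i=1$, $y_i\in\mathrm{ext}\,C$, $\mu_i\ge0$, and each $z_i$ an extreme direction of $C$. The first sum is a convex combination of extreme points, hence a member of $\mathrm{conv\,ext}\,C$; and, as remarked just before that theorem, an extreme direction viewed as a nonzero vector lies in the recession cone $0^{+}C$, so each $\mu_i z_i\in0^{+}C$ and, $0^{+}C$ being a convex cone, $\sum_i\mu_i z_i\in0^{+}C$. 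Thus $x\in\mathrm{conv\,ext}\,C+0^{+}C$. For the reverse inclusion, $\mathrm{ext}\,C\subseteq C$ and $C$ is convex, so $\mathrm{conv\,ext}\,C\subseteq C$; and by the defining property of $0^{+}C$ one has $x+y\in C$ whenever $x\in C$ and $y\in0^{+}C$. Hence $\mathrm{conv\,ext}\,C+0^{+}C\subseteq C$, and equality follows.

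For the second assertion, suppose $S\subseteq C$ satisfies $C=\mathrm{conv\,}S+0^{+}C$. Then every $x\in C$ has the form $x=\sum_i\lambda_i y_i+w$ with $\lambda_i\ge0$, $\sum_i\lambda_i=1$, $y_i\in S$, and $w\in0^{+}C$. Taking $S_{2}:=0^{+}C\setminus\{0\}$ (the set of all recession directions) and representing $w$ by the trivial one-term sum $w=1\cdot w$ when $w\neq0$ (and simply dropping the recession part when $w=0$), this is precisely a representation of the type required in Corollary~18.3.1 of \cite{Rockafellar}, with $S_{1}=S$. That corollary then yields $\mathrm{ext}\,C\subseteq S$, completing the argument.

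The only point that needs a moment's care is the choice of $S_{2}$ in the second part: one must observe that a single recession vector already counts as a representative of a direction in $0^{+}C$, so no finer decomposition of $w$ is needed and the hypothesis of Corollary~18.3.1 is met verbatim. Everything else is a direct citation, so I do not expect any real obstacle.
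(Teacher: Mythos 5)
Your proof is correct and follows exactly the route the paper intends: the first identity from Theorem 18.5 of \cite{Rockafellar} together with the convexity of $C$ and the defining property of $0^{+}C$, and the second assertion from Corollary 18.3.1 with $S_{1}=S$ and $S_{2}$ the set of recession directions. The paper gives no explicit proof (it calls the lemma immediate from those two results), and your argument supplies precisely the missing details.
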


If a certain linear function $h$ achieves maximum over $C$ at $x\in C$ and not
achieved at any other point $x^{\prime}\in C$, $x$ is called an
\textit{exposed point} of $C$. Any exposed point is an extreme point, but not
vice versa.

\begin{lemma}
\label{lem:expose-dense}(Straszewicz's Theorem, Theorem 18.6\thinspace of
\cite{Rockafellar}) For any closed convex set $C$, the set of exposed points
of $C$ is a dense subset of the set of extreme points of $C$.
\end{lemma}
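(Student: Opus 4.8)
The plan is to prove the equivalent assertion that every extreme point $x$ of $C$ lies in the closure of the set of exposed points of $C$; density of the exposed points then follows. I may assume $C$ contains no line, since otherwise $\mathrm{ext}\,C=\emptyset$ and there is nothing to prove. The engine is the following elementary fact, which I would verify by a one-line computation: if $z\notin C$ and $p\in C$ is the unique farthest point of $C$ from $z$, i.e.\ $|z-p|>|z-y|$ for every $y\in C\setminus\{p\}$, then $p$ is an exposed point of $C$, exposed by the linear functional $y\mapsto\langle p-z,y\rangle$. Indeed, expanding $|z-y|^{2}=|z-p|^{2}-2\langle p-z,p-y\rangle+|p-y|^{2}$ and using $|z-y|^{2}<|z-p|^{2}$ gives $\langle p-z,p-y\rangle>\tfrac12|p-y|^{2}>0$, hence $\langle p-z,y\rangle<\langle p-z,p\rangle$ for all $y\in C\setminus\{p\}$.

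Next I would settle the case that $C$ is bounded. Fix $x\in\mathrm{ext}\,C$ and $\varepsilon>0$. Since $x$ is extreme and $C\setminus B(x,\varepsilon)$ is compact, $x\notin\mathrm{conv}(C\setminus B(x,\varepsilon))$, so (strict separation, available since $x$ has positive distance from this compact convex set; cf.\ Lemma~\ref{lem:separation}) there is a unit vector $a$ with $\sup\{\langle a,y\rangle:y\in C\setminus B(x,\varepsilon)\}<\langle a,x\rangle$. Put $z=x-t\,a$ with $t$ large. Expanding $|z-y|^{2}=t^{2}+2t\langle a,y-x\rangle+|y-x|^{2}$ shows that once $t$ exceeds $(\mathrm{diam}\,C)^{2}$ divided by the separation gap, every $y\in C\setminus B(x,\varepsilon)$ is strictly closer to $z$ than $x$ is; hence the farthest point of $C$ from $z$, which exists by compactness, lies in $\overline{B}(x,\varepsilon)$. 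A small perturbation of $z$ makes that farthest point unique — the set of $z$ with a non-unique farthest point has measure zero (differentiability points of the convex function $z\mapsto\max_{y\in C}|z-y|^{2}-|z|^{2}$), and by continuity the perturbed farthest point still lies within, say, $2\varepsilon$ of $x$ — and then the fact above exhibits it as an exposed point of $C$. Letting $\varepsilon\to0$ proves the claim for bounded $C$.

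For a general closed $C$ (no line), the same separation step works, but now justifying $x\notin\mathrm{cl\,conv}(C\setminus B(x,\varepsilon))$ uses Lemma~\ref{lem:ext+recess}: applied to that closed convex set (which also contains no line and has recession cone $0^{+}C$), its Minkowski-type representation together with extremality of $x$ rules out both a nontrivial convex combination and a nonzero recession contribution. With $a$ as before, $\langle a,\cdot\rangle$ is then bounded above on $C$ — its supremum over $C\setminus B(x,\varepsilon)$ is strictly below $\langle a,x\rangle$, and over $C\cap\overline{B}(x,\varepsilon)$ it is finite — so it attains its maximum on a nonempty exposed face $F\subseteq\overline{B}(x,\varepsilon)$, which is compact because no nonzero recession direction of $C$ can lie in the maximizing level set. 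Pick an extreme point $p$ of $F$; inducting on dimension, $p$ is a limit of points exposed in $F$, and a hyperplane-tilting argument (rotating the hyperplane that exposes $F$ slightly towards the hyperplane that exposes $q$ inside $\mathrm{aff}\,F$) upgrades each such $q$ to an exposed point of $C$. Since $p$ is within $\varepsilon$ of $x$, this closes the induction.

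The main obstacle is the unbounded case: concretely, (i) justifying the separation through Lemma~\ref{lem:ext+recess} and deducing finiteness of $\sup_{C}\langle a,\cdot\rangle$, which amounts to showing that every nonzero recession direction of $C$ pairs strictly negatively with $a$ so that $F$ is genuinely compact; and (ii) the tilting lemma that an exposed point of an exposed face of $C$ is exposed in $C$, whose clean proof needs a uniform-gap estimate obtained by splitting $C$ into a bounded part and its recession cone. The perturbation step ensuring uniqueness of the farthest point in the bounded case is routine but should be written out. Modulo these points everything is elementary; the statement itself is Theorem~18.6 of \cite{Rockafellar}, and this is essentially its proof.
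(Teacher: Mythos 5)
The paper offers no proof of this lemma (it is quoted verbatim from Rockafellar), so your proposal has to stand on its own. The first half does: the farthest-point computation is correct, and for bounded $C$ the argument (separate the extreme point $x$ strictly from the compact set $\mathrm{conv}(C\setminus B(x,\varepsilon))$, push $z=x-ta$ far out so that the farthest point of $C$ from $z$ lands in $\overline B(x,\varepsilon)$, perturb $z$ to make that farthest point unique) is the standard proof and is sound.

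The unbounded case, however, contains a genuine gap. You replace the farthest point by the exposed face $F=\{y\in C:\langle a,y\rangle=\max_{C}\langle a,\cdot\rangle\}\subseteq\overline B(x,\varepsilon)$ and then try to promote exposed points of $F$ to exposed points of $C$ by tilting the exposing hyperplane. The tilting lemma you rely on --- an exposed point of an exposed face of $C$ is an exposed point of $C$ --- is false: exposedness is not transitive. Take
\[
C=\bigl\{(u,v,w)\in\mathbb R^{3}\,;\,0\le w\le 1-\max(0,|u|-1)^{2}-v^{2}\bigr\},
\]
a compact convex body. The functional $(u,v,w)\mapsto w$ attains its maximum $1$ exactly on the segment $F=\{(u,0,1):|u|\le1\}$, a compact exposed face. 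Its endpoint $q=(1,0,1)$ is exposed in $F$ and extreme in $C$, but not exposed in $C$: a supporting functional $c$ at $q$ must satisfy $-c_{1}s\le0$ on the points $(1-s,0,1)$ and $c_{1}s-c_{3}s^{2}\le0$ on the boundary points $(1+s,0,1-s^{2})$, forcing $c_{1}=0$, and likewise $c_{2}=0$ from $(1,v,1-v^{2})$; so every supporting hyperplane at $q$ exposes all of $F$. Worse, this $F$ contains no exposed point of $C$ whatsoever (its relative interior points are not even extreme), so no repair of the tilting step can save a reduction that confines the search to $F$. The fix is to keep the farthest-point construction in the unbounded case as well: the truncation $C'=C\cap\{\langle a,\cdot\rangle\ge\beta'\}$ with $\sup_{C\setminus B(x,\varepsilon)}\langle a,\cdot\rangle<\beta'<\langle a,x\rangle$ is compact, take the (perturbed, unique) farthest point $p$ of $C'$ from $z=x-ta$, and check directly that $c=p-z=(p-x)+ta$ is maximized over all of $C$ only at $p$. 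For $y\in C\setminus C'$ this follows because every unit $d\in 0^{+}C$ has $\langle a,d\rangle\le-\mu<0$ (otherwise $x+sd$ would remain in the bounded set $C'$ for all $s\ge0$), hence $\langle a,p-y\rangle$ is bounded below by a positive constant and grows linearly in $|y-p|$, dominating $\langle p-x,y-p\rangle\le\varepsilon|y-p|$ once $t$ is large. With that replacement the proof closes.
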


\section{Determination of certain convex set in $\mathcal{L}_{sa,2}$}

In this section, we determine
\begin{align*}
\mathcal{M}_{0}\left(  M\right)   &  :=\left\{  L;L\geq\left(  M+\sqrt
{-1}B\right)  \left(  M-\sqrt{-1}B\right)  -M^{2},B\in\mathcal{L}%
_{sa,2}\right\} \\
&  =\left\{  L;L\geq\sqrt{-1}\left[  B,M\right]  +B^{2},B\in\mathcal{L}%
_{sa,2}\right\}  .
\end{align*}
By Lemma\thinspace\ref{lem:ext+recess}, $\mathcal{M}_{0}\left(  M\right)  $ is
determined by $\mathrm{ext\,}\mathcal{M}_{0}\left(  M\right)  $,
\begin{equation}
\mathcal{M}_{0}\left(  M\right)  =\mathrm{conv}\,\mathrm{ext\,}\mathcal{M}%
_{0}\left(  M\right)  +\mathcal{P}_{2}. \label{M0=ext+recess}%
\end{equation}
So first we determine $\mathrm{ext\,}\mathcal{M}_{0}\left(  M\right)
$.\ \ Below, $\sigma_{x}$, $\sigma_{y}$, $\sigma_{z}$ are Pauli matrices.

\begin{lemma}
\label{lem:extM0}If $M=l\sigma_{z}+mI_{2}$,
\begin{align}
&  \mathrm{ext\,}\mathcal{M}_{0}\left(  M\right) \nonumber\\
&  =\left\{  l^{2}U\left(  s\sigma_{x}+\frac{s^{2}}{4}I_{2}\right)
U^{\dagger};s\in\left[  -2,2\right]  ,\left[  U,L_{0}^{-1}\right]  =0,U\in
SU\left(  2\right)  \right\} \label{extM0-1}\\
&  =\left\{  l^{2}\left(  s\left(  \cos\alpha\,\sigma_{x}+\sin\alpha\sigma
_{y}\right)  +\frac{s^{2}}{4}I_{2}\right)  \,;\alpha\in\mathbb{R}%
,\,\,s\in\left[  -2,2\right]  \right\}  \label{extM0-2}%
\end{align}

\end{lemma}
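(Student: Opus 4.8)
The plan is to determine $\mathrm{ext}\,\mathcal{M}_0(M)$ by first computing the \emph{exposed} points of $\mathcal{M}_0(M)$, then using Straszewicz's theorem to control $\mathrm{ext}\,\mathcal{M}_0(M)$, and finally checking by hand that the two limiting directions $s=\pm2$ give genuine extreme points. Two preliminary observations organize everything. First, since $[B,M]=l[B,\sigma_z]$ and $B^2$ do not involve $m$, the set $\mathcal{M}_0(M)$ depends on $M=l\sigma_z+mI_2$ only through $l$, so one may take $m=0$. Second, conjugation by the diagonal unitary $U_\theta=\mathrm{diag}(e^{\sqrt{-1}\theta},e^{-\sqrt{-1}\theta})$ commutes with $M$, hence preserves $\mathcal{M}_0(M)$; as $U_\theta\sigma_xU_\theta^{\dagger}=\cos 2\theta\,\sigma_x+\sin 2\theta\,\sigma_y$ and $2\theta$ sweeps $\mathbb{R}$, this symmetry is precisely what upgrades a ``$\sigma_x$-only'' answer to the rotationally invariant form and, for $l\neq0$, gives the equality of (\ref{extM0-1}) and (\ref{extM0-2}) at once (for $l=0$ both sides collapse to $\{0\}$).

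Writing $B=b_0I_2+b_x\sigma_x+b_y\sigma_y+b_z\sigma_z$ and using $\sigma_i\sigma_j+\sigma_j\sigma_i=2\delta_{ij}I_2$, $[\sigma_x,\sigma_z]=-2\sqrt{-1}\,\sigma_y$, $[\sigma_y,\sigma_z]=2\sqrt{-1}\,\sigma_x$, I would first record
\[
N(B):=\sqrt{-1}\,l[B,M]+B^{2}=\Bigl(\textstyle\sum_\mu b_\mu^{2}\Bigr)I_2+(2b_0b_x-2lb_y)\sigma_x+(2b_0b_y+2lb_x)\sigma_y+2b_0b_z\sigma_z .
\]
Then $\mathcal{M}_0(M)=\{N(B):B\in\mathcal{L}_{sa,2}\}+\mathcal{P}_2$, so $0^{+}\mathcal{M}_0(M)=\mathcal{P}_2$, and since $\mathrm{tr}\,L\ge\mathrm{tr}\,N(B)=2\sum_\mu b_\mu^{2}\ge0$ on $\mathcal{M}_0(M)$ the set lies in a half-space and contains no line, so Lemma\,\ref{lem:ext+recess} is available. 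For the exposed points, write $W=w_0I_2+\vec w\cdot\vec\sigma$. The minimum of $\mathrm{tr}\,WL$ over $\mathcal{M}_0(M)$ is attained precisely when $W\ge0$ (when $W\not\ge0$ it is $-\infty$ along $\mathcal{P}_2$), and for $W>0$ the map $b\mapsto\tfrac12\mathrm{tr}\,W\,N(B)$ is a strictly convex quadratic with a unique minimizer; since $\mathrm{tr}\,WP>0$ for $0\neq P\ge0$, that minimizer is the unique minimizer over all of $\mathcal{M}_0(M)$, i.e.\ an exposed point. Using the $U_\theta$-symmetry to reduce to $\vec w=(w_x,0,w_z)$, solving $\nabla=0$ forces $b_0=b_x=b_z=0$ and $b_y=lw_x/w_0$ (here $w_0^2>w_x^2+w_z^2$ is used), so the exposed point is $l^{2}\bigl(\tfrac{s^{2}}{4}I_2+s\sigma_x\bigr)$ with $s=-2w_x/w_0\in(-2,2)$.

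Rotating back, the exposed points of $\mathcal{M}_0(M)$ are exactly $E:=\{\,l^{2}\bigl(\tfrac{s^{2}}{4}I_2+s(\cos\alpha\,\sigma_x+\sin\alpha\,\sigma_y)\bigr):s\in(-2,2),\ \alpha\in\mathbb{R}\,\}$; one should also note that a $W$ with $w_0=|\vec w|>0$ exposes a two-dimensional face, namely a nondegenerate parabolic arc together with the ray $\mathbb{R}_{\ge0}\,|v\rangle\langle v|$ with $v\in\ker W$, hence contributes no new exposed point. By Straszewicz's theorem (Lemma\,\ref{lem:expose-dense}), $\mathrm{ext}\,\mathcal{M}_0(M)\subseteq\overline{E}=\{\,l^{2}\bigl(\tfrac{s^{2}}{4}I_2+s(\cos\alpha\,\sigma_x+\sin\alpha\,\sigma_y)\bigr):s\in[-2,2],\ \alpha\in\mathbb{R}\,\}$, and conversely $E\subseteq\mathrm{ext}\,\mathcal{M}_0(M)$ because exposed points are extreme.

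It remains to show the boundary points $s=\pm2$ lie in $\mathrm{ext}\,\mathcal{M}_0(M)$. Fix $L^{*}=l^{2}(I_2\pm2\sigma_x)$ (the case $\alpha=0$; the rest follows by the $U_\theta$-symmetry), let $v$ be the eigenvector of $L^{*}$ for the eigenvalue $-l^{2}$, and take $W=|v\rangle\langle v|=\tfrac12(I_2\mp\sigma_x)\ge0$. I would then minimize $\tfrac12\mathrm{tr}\,W\,N(B)$ explicitly along the one-dimensional line carrying the minimizers (the kernel direction of the associated positive semidefinite quadratic) and observe that the exposed face $\{L\in\mathcal{M}_0(M):\mathrm{tr}\,WL=\min\}$ equals the ray $L^{*}+\mathbb{R}_{\ge0}\,|v^{\perp}\rangle\langle v^{\perp}|$; its vertex $L^{*}$ is an extreme point of that face and hence of $\mathcal{M}_0(M)$. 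This yields $\mathrm{ext}\,\mathcal{M}_0(M)=\overline{E}$, which is (\ref{extM0-2}), and then (\ref{extM0-1}) by the opening remark. The main obstacle is this last step: the points $s=\pm2$ are invisible to the $W>0$ computation (they arise from $W$ on the boundary of $\mathcal{P}_2$, which exposes a face, not a point), so one must genuinely analyse that face and identify its vertex; by contrast the formula for $N(B)$ and the strictly convex minimization for $W>0$ are routine.
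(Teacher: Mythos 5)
Your proof follows the same strategy as the paper: reduce to $m=0$, exploit the $U_\theta$-symmetry (the paper's ``gauge transform''), compute the exposed points via the stationarity condition for $W>0$, and invoke Straszewicz's theorem (Lemma~\ref{lem:expose-dense}). The computation of the exposed arc $s\in(-2,2)$ and the use of strict convexity for $W>0$ match the paper's argument essentially line for line (the paper parametrizes $Y=a\sigma_x+b\sigma_y+c\sigma_z+I_2$ and solves $\sqrt{-1}(MY-YM)+BY+YB=0$, arriving at the same arc with $s=-2a$). Where you genuinely differ is at the boundary $s=\pm2$: the paper simply writes ``the closure of this, by Lemma~\ref{lem:expose-dense}, is $\mathrm{ext}\,\mathcal{M}_0(M)$,'' which is not immediate --- Straszewicz only gives $\mathrm{exp}\,C\subseteq\mathrm{ext}\,C\subseteq\overline{\mathrm{exp}\,C}$, and concluding $\overline{\mathrm{exp}\,C}=\mathrm{ext}\,C$ requires knowing that $\mathrm{ext}\,C$ is closed, i.e.\ that the limit points $s=\pm2$ are actually extreme. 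Your face analysis with $W=|v\rangle\langle v|$ supplies exactly this: the exposed face is the ray $L^*+\mathbb{R}_{\ge0}|v^\perp\rangle\langle v^\perp|$ (I verified this --- with $b_0=b_x=t$, $b_y=-l$, $b_z=0$ one gets $N(B)=L^*+4t^2|v^\perp\rangle\langle v^\perp|$, and the complementary slackness $\mathrm{tr}\,W(L-N(B))=0$ forces the $\mathcal{P}_2$-offset to lie along $|v^\perp\rangle\langle v^\perp|$), its vertex $L^*$ is an extreme point of the face and hence of $\mathcal{M}_0(M)$. So your proof is correct and in fact repairs a small gap in the paper's.

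Two minor slips worth fixing. First, your displayed definition reads $N(B):=\sqrt{-1}\,l[B,M]+B^2$ with a spurious factor $l$; the subsequent Pauli expansion is the correct one for $\sqrt{-1}[B,M]+B^2$. Second, the aside claiming that a rank-one $W$ with $w_0=|\vec w|>0$ ``exposes a two-dimensional face, namely a nondegenerate parabolic arc together with the ray'' is misstated: as your own face computation (and mine above) shows, the exposed face for such $W$ is a one-dimensional half-line, whose vertex is the point $l^2\bigl(s\sigma_x+\tfrac{s^2}{4}I_2\bigr)$ with $s=-2w_x/w_0\in[-2,2]$ (an interior arc point if $w_z\neq0$, an endpoint if $w_z=0$). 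The remark does not affect the argument --- the conclusion you draw from the $w_z=0$ case is precisely what is needed --- but the ``two-dimensional'' description is wrong. Finally, the half-space observation $\mathrm{tr}\,L\ge0$ by itself does not rule out lines (a half-space contains lines parallel to its bounding hyperplane); the line-free property follows instead from $0^+\mathcal{M}_0(M)=\mathcal{P}_2$ being pointed, but since you never actually use Lemma~\ref{lem:ext+recess} in the argument, this is harmless.
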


\begin{proof}
Lemma\thinspace\ref{lem:expose-dense}, we have to determine the set of all
exposed points of $\mathcal{M}_{0}\left(  M\right)  $. Thus, we investigate
\[
\min_{L\in\mathcal{M}_{0}\left(  M\right)  }\,\mathrm{tr}\,YL.
\]
If $Y$ is not positive, the target function is unbounded from below, thus the
minimum is never attained. Also, if $Y$ has eigenvalue $0$, the minimum is
achieved by any member of a certain convex set containing more than single
point. This means the corresponding minimum points are not exposed. Therefore,
we suppose
\[
Y>0.
\]
In this case, the above minimum equals the minimum of
\[
f_{Y}\left(  B\right)  :=\min_{A\in\mathcal{L}_{sa,2}}\mathrm{tr}\,Y\left(
\sqrt{-1}\left[  B,M\right]  +B^{2}\right)  ,
\]

Observe $\ f_{Y}\left(  B\right)  $ is a proper, convex, and differentiable
function. Hence, at the minimum point $B$, the derivative $\mathrm{D}%
f_{Y}\left(  B\right)  \left(  \dot{B}\right)  $ must vanish for any $\dot{B}%
$. Hence,
\begin{equation}
\sqrt{-1}\left(  MY-YM\right)  +BY+YB=0, \label{Fmin-marginal}%
\end{equation}

Observe that if $Y$ satisfies (\ref{Fmin-marginal}), so does real multiple of
$Y$. Therefore, without loss of generality, we suppose
\[
Y=a\sigma_{x}+b\sigma_{y}+c\sigma_{z}+I_{2},
\]
where $a^{2}+b^{2}+c^{2}<1$. Observe also if the pair $\left(  Y,B\right)  $
satisfies (\ref{Fmin-marginal}), so does $\left(  UYU^{\dagger},UBU^{\dagger
}\right)  $ , where $U$ is any unitary commutative with $M$. Therefore, we
first solve (\ref{Fmin-marginal}) fixing $b=0$, and then rotate the result by
unitaries commutative with $M$. \ 

Then, if $b=0$, after some calculations, one can easily verify that
$B=la\sigma_{y}$ satisfies (\ref{Fmin-marginal}). Since $B$ satisfying
(\ref{Fmin-marginal}) is unique for each $M$ and $Y>0$, this is the only
solution to (\ref{Fmin-marginal}). Applying above "gauge transform",
\[
B=laU\sigma_{y}U^{\dagger},
\]
where $U\in\mathrm{SU}\left(  2\right)  $ commute with $M$, are the solutions
to (\ref{Fmin-marginal}). Also, $a\in\left(  0,1\right)  $, so that $Y>0$.
Therefore, the set of all the exposed points of $\mathcal{M}_{0}\left(
M\right)  $ is
\[
\left\{  l^{2}U\left(  -2a\sigma_{x}+a^{2}I_{2}\right)  U^{\dagger}%
;a\in\left(  -1,1\right)  ,U\in\mathrm{SU}\left(  2\right)  ,\left[
U,M\right]  =0\right\}  .
\]
The closure of this, by Lemma\thinspace\ref{lem:expose-dense}, is
$\mathrm{ext\,}\mathcal{M}_{0}\left(  M\right)  $. Thus the asserted result is obtained.
\end{proof}

\bigskip

Below we use the following functions to describe the results.
\begin{align*}
f_{1}\left(  x\right)   &  :=\left\{
\begin{array}
[c]{cc}%
\left\vert x\right\vert -1, & \left(  \left\vert x\right\vert \geq2\right) \\
\frac{1}{4}x^{2}, & \left(  \left\vert x\right\vert \leq2\right)
\end{array}
\right.  ,\\
f_{2}\left(  x,z\right)   &  :=\left\{
\begin{array}
[c]{cc}%
\sqrt{x^{2}+z^{2}}-1, & \left(  x^{2}+z^{2}\geq4\right) \\
\frac{1}{4}\left\{  x^{2}+z^{2}\right\}  , & \left(  x^{2}+z^{2}\leq4\right)
\end{array}
,\,\right.
\end{align*}
and
\begin{align}
&  D\left(  x,z,w\right) \nonumber\\
&  :=16w^{4}+\left(  -8x^{2}+8z^{2}+32\right)  w^{3}+\left(  x^{4}+2x^{2}%
z^{2}-32x^{2}+z^{4}-8z^{2}+16\right)  w^{2}\\
&  +\left(  10x^{4}+2x^{2}z^{2}-8x^{2}-8z^{4}-32z^{2}\right)  w+\left(
x^{4}-3x^{4}z^{2}-x^{6}-3x^{2}z^{4}+20x^{2}z^{2}-z^{6}-8z^{4}-16z^{2}\right)
\allowbreak.\nonumber
\end{align}

\begin{lemma}
\label{lem:unique}For each $\left(  x,z\right)  \in\mathbb{R}^{2}$ with
$z\neq0$, $\underline{w}$ satisfying
\begin{equation}
D\left(  x,z,\underline{w}\right)  =0 \label{discriminant-1}%
\end{equation}
and \underline{$w$}$\geq f_{2}\left(  x,z\right)  $ is unique.
\end{lemma}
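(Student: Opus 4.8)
The plan is to fix $(x,z)$ with $z\neq 0$ and treat $D(x,z,\cdot)$ as a quartic polynomial $p_{x,z}(w):=D(x,z,w)$ in the single variable $w$. Since its leading coefficient $16$ is positive, $p_{x,z}(w)\to+\infty$ as $w\to\pm\infty$, so $p_{x,z}$ has $0$, $2$ or $4$ real roots (with multiplicity) and $\{w:p_{x,z}(w)\le 0\}$ is a union of at most two closed intervals. Recall how $D$ arises: by the rotational symmetry of $\mathcal{M}_0(M)$ in the $\sigma_x$--$\sigma_y$ plane one may set the $\sigma_y$-component to $0$ and coordinatize a Hermitian operator by its $(\sigma_x,\sigma_z,I_2)$-components $(x,z,w)$, and $D(x,z,w)=0$ is the polynomial condition cut out by $\partial\mathcal{M}_0(M)$ that is produced by the elimination above (up to a positive factor; a resultant eliminating the auxiliary $A$ of \eqref{MFmin}). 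Existence of a root of $p_{x,z}$ in $[f_2(x,z),\infty)$ is not the issue: by \eqref{M0=ext+recess} and Lemma~\ref{lem:extM0} (and the fact that $\mathcal{M}_0(M)$ depends only on the traceless part of $M$), for fixed transverse $(x,z)$ the minimal admissible value of $w$ is the single-valued quantity $\underline{w}(x,z)=\min_{|p|\le 2}\bigl(\tfrac14 p^{2}+\sqrt{(x-p)^{2}+z^{2}}\bigr)$, which lies on $\partial\mathcal{M}_0(M)$, hence satisfies $D(x,z,\underline{w}(x,z))=0$, and which one checks satisfies $\underline{w}(x,z)\ge f_1\bigl(\sqrt{x^{2}+z^{2}}\bigr)=f_2(x,z)$. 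So the whole content of the lemma is the \emph{uniqueness} of the root in $[f_2(x,z),\infty)$, and the hypothesis $z\neq 0$ is sharp: for instance $p_{3,0}(w)=(w-2)(4w-9)^{2}(w+4)$, whose squared factor contributes the extra root $\tfrac94>f_2(3,0)=2$.

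For uniqueness I would prove the single claim that $p_{x,z}'(w)>0$ at every root $w\ge f_2(x,z)$ (for $z\neq 0$). This suffices: if $p_{x,z}$ had two zeros in $[f_2(x,z),\infty)$, take two consecutive ones $w_1<w_2$; then $p_{x,z}$ has constant sign on $(w_1,w_2)$, yet $p_{x,z}'(w_1)>0$ makes it positive just above $w_1$ and $p_{x,z}'(w_2)>0$ makes it negative just below $w_2$ --- a contradiction. The positivity of $p_{x,z}'$ at a root is where the matrix picture does the work. At $w=\underline{w}(x,z)$ the (suitably congruence-transformed) Hermitian block matrix of \eqref{MFmin}, with the optimal $A$ inserted, is positive semidefinite of corank one; if $\psi$ spans its kernel then the $w$-derivative of its determinant equals the product of the nonzero eigenvalues times $\langle\psi|\tfrac{\partial}{\partial w}(\text{matrix})|\psi\rangle$, and the latter operator is $\ge 0$, so the derivative is $\ge 0$, and is $>0$ unless $\psi$ sits entirely in the first diagonal block --- which, by Lemma~\ref{lem:block-positive} applied to the optimal configuration, would force $L_0$ or $L_1$ to be singular, and that does not happen for $z\neq 0$. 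If $D$ is not literally that determinant (it may be a resultant in $A$), the same conclusion is reached algebraically by reducing $p_{x,z}'(w)$ modulo $p_{x,z}(w)$ by Euclidean division in $w$ to a cubic remainder $r_{x,z}(w)$ and showing $r_{x,z}(w)>0$ on $\{w\ge f_2(x,z)\}$ when $z\neq 0$.

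The hard part will be exactly this last positivity: establishing $p_{x,z}'>0$ at roots $w\ge f_2(x,z)$, equivalently that $p_{x,z}$ has no repeated root in $[f_2(x,z),\infty)$ for $z\neq 0$. Everything preceding it --- the reduction to a one-variable quartic, the explicit minimal boundary value $\underline{w}(x,z)$ giving a root in the half-line, and the implication ``positive derivative at all roots $\Rightarrow$ at most one root'' --- is routine. The genuine obstacle is the sign analysis of the cubic remainder $r_{x,z}$ over the region $\{w\ge f_2(x,z),\ z\neq 0\}$ (or, in the geometric route, excluding the corank-two / common-invariant-subspace degeneracy); this is the step at which $z\neq 0$ is really used, since at $z=0$ the elimination degenerates and $D$ acquires squared factors that can intrude into $[f_2(x,z),\infty)$.
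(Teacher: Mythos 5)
Your reduction to a one-variable quartic $p_{x,z}(w):=D(x,z,w)$, the observation that the leading coefficient $16>0$ fixes the asymptotics $p_{x,z}(w)\to+\infty$ as $w\to\pm\infty$, and the implication ``$p'_{x,z}>0$ at every root $w\ge f_2(x,z)$ $\Rightarrow$ at most one such root'' are all sound, and your $z=0$ counterexample $p_{3,0}(w)=(w-2)(4w-9)^2(w+4)$ (two roots $2$ and $\tfrac94$ in $[f_2(3,0),\infty)=[2,\infty)$) correctly shows where $z\ne 0$ bites. But the proposal does not prove the lemma: the positivity of $p'_{x,z}$ at roots $\ge f_2(x,z)$ \emph{is} the lemma, and you defer it explicitly (``the hard part,'' ``the genuine obstacle''), sketching two possible routes --- a kernel/derivative-of-determinant argument and a sign analysis of the cubic Euclidean remainder $r_{x,z}$ --- without carrying out either. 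What remains is a plan, not a proof.

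There is also a small but real logical slip. You assert that ``$p'_{x,z}>0$ at all roots $w\ge f_2(x,z)$'' is equivalent to ``$p_{x,z}$ has no repeated root in $[f_2(x,z),\infty)$.'' Only the forward direction is automatic. The converse fails for a general positive-leading-coefficient quartic: $p(w)=(w-1)(w-2)(w-3)(w-4)$ has four simple roots in $[0,\infty)$, yet $p'(2)<0$. To recover the converse here you would need the geometric facts that $D(x,z,\cdot)<0$ on $[f_2(x,z),\underline w)$ and $D(x,z,\cdot)\ge 0$ on $[\underline w,\infty)$, which you gesture at via $\mathcal{M}_0(M)$; but these facts are essentially the content of Lemma~\ref{lem:M0}, which the paper proves \emph{after} Lemma~\ref{lem:unique} and \emph{using} it, so invoking them at this stage would be circular.

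For comparison, the paper's proof is a direct three-way case split that resembles neither of your proposed routes: for $x\ne 0$, $x^2+z^2>4$ it evaluates the discriminant of $D$ in $w$, finds it negative (hence exactly two real roots and two nonreal ones), and shows $D(x,z,\sqrt{x^2+z^2}-1)<0$, so the intermediate value theorem places $f_2(x,z)$ strictly between the two real roots; for $x=0$ it factors $D$ explicitly; and for $x\ne 0$, $x^2+z^2\le 4$ it checks $\partial_w^2 D>0$ on $[\tfrac{x^2+z^2}{4},\infty)$ together with $D(x,z,\tfrac{x^2+z^2}{4})<0$, so convexity on that half-line forces a unique sign change. A single unified derivative-positivity argument of the sort you envision would indeed be cleaner, but it is precisely the computation you have not supplied.
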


\begin{proof}
We view (\ref{discriminant-1}) as a equation for $\underline{w}$, and prove it
has a unique solution in the region defined by $\underline{w}\geq f_{2}\left(
x,z\right)  $.

First we study the case of $x\neq0$ and $x^{2}+z^{2}-4>0$. Let $D_{2}$ be the
discriminant of (\ref{discriminant-1}) viewed as a equation for $\underline
{w}$, \ then%
\begin{align*}
D_{2}  &  =c\times-x^{2}z^{2}\left(  84x^{2}z^{2}+3x^{2}z^{4}+3x^{4}%
z^{2}+48x^{2}-12x^{4}+x^{6}+48z^{2}-12z^{4}+z^{6}-64\right)  ^{3}\allowbreak\\
&  =c\times-x^{2}z^{2}\left\{  \left(  x^{2}+z^{2}\right)  ^{3}+15\left(
x^{2}+z^{2}\right)  ^{2}+48\left(  x^{2}+z^{2}\right)  -27\left(  x^{2}%
-z^{2}\right)  ^{2}-64\right\}  ^{3}\\
&  \leq c\times-x^{2}z^{2}\left\{  \left(  x^{2}+z^{2}\right)  ^{3}+15\left(
x^{2}+z^{2}\right)  ^{2}+48\left(  x^{2}+z^{2}\right)  -27\left(  x^{2}%
+z^{2}\right)  ^{2}-64\right\}  ^{3}\\
&  =c\times-x^{2}z^{2}\left\{  \left(  x^{2}+z^{2}\right)  ^{3}-12\left(
x^{2}+z^{2}\right)  ^{2}+48\left(  x^{2}+z^{2}\right)  -64\right\}  ^{3}\\
&  =c\times-x^{2}z^{2}\left(  x^{2}+z^{2}-4\right)  ^{9}<0.
\end{align*}
where $c$ is a positive constant. Therefore, (\ref{discriminant-1}) has two
distinct real roots and two (non-real) complex roots. Also,
\begin{align*}
D\left(  x,z,\sqrt{x^{2}+z^{2}}-1\right)   &  =z^{2}\left(  -2\left(
x^{2}+z^{2}+12\right)  \sqrt{x^{2}+z^{2}}+12x^{2}+16-15z^{2}\right) \\
&  \leq z^{2}\left\{  -2\left(  x^{2}+z^{2}+12\right)  \sqrt{x^{2}+z^{2}%
}+12\left(  x^{2}+z^{2}\right)  +16\right\} \\
&  =-2z^{2}\left(  \sqrt{x^{2}+z^{2}}-2\right)  ^{3}\\
&  <0.
\end{align*}
Since $\lim_{w\rightarrow\infty}D\left(  x,z,w\right)  =\lim_{w\rightarrow
-\infty}D\left(  x,z,w\right)  =\infty$, \ by intermediate value theorem, one
of two real solutions of (\ref{discriminant-1}) is smaller than $\sqrt
{x^{2}+z^{2}}-1$, and the other is larger. This means there is only one real
solution of (\ref{discriminant-1}) satisfying $\underline{w}\geq f_{2}\left(
x,z\right)  $.

Second, we consider $x=0$-case, where
\[
D\left(  x,z,\underline{w}\right)  =\left(  \underline{w}-z\right)  \left(
\underline{w}+z\right)  \left(  z^{2}+4\underline{w}+4\right)  ^{2}.
\]
Obviously, in this case, only positive solution of (\ref{discriminant-1}) is
$\underline{w}=z$, which satisfies $\underline{w}\geq f_{2}\left(  x,z\right)
$.

Third, we study the case of $x\neq0$ and $x^{2}+z^{2}\leq4$. Observe that the
third derivative is positive of the function $w\rightarrow$ $D\left(
x,z,w\right)  $ in the region $[\frac{x^{2}+z^{2}}{4},\infty)$,
\[
\frac{\partial^{3}}{\partial w^{3}}D\left(  x,z,w\right)  =384\left\{
w-\frac{1}{8}\left(  x^{2}-z^{2}-4\right)  \right\}  .
\]
and
\[
\frac{\partial^{2}}{\partial w^{2}}D\left(  x,z,\frac{x^{2}+z^{2}}{4}\right)
=2\left(  \left(  x^{2}-4\right)  ^{2}+14x^{2}z^{2}+13z^{4}+16z^{2}\right)
>0.
\]
So the second derivative is positive in the region $[\frac{x^{2}+z^{2}}%
{4},\infty)$. Since
\[
D\left(  x,z,\frac{x^{2}+z^{2}}{4}\right)  =\frac{1}{4}z^{2}\left\{  \left(
x^{2}+z^{2}-4\right)  ^{3}-108z^{2}\right\}  <0,
\]
the function $w\rightarrow D\left(  x,z,w\right)  $ is increasing at the
smallest solution $w_{0}$ of (\ref{discriminant-1}) in the region
$[\frac{x^{2}+z^{2}}{4},\infty)$. Therefore, there cannot be any larger
solution than $w_{0}$. This proves that (\ref{discriminant-1}) has only one
solution in the region $[\frac{x^{2}+z^{2}}{4},\infty)$.
\end{proof}

\begin{lemma}
\label{lem:M0}Suppose $M=l\sigma_{z}+mI_{2}$. Then
\[
l^{2}\left(  x\,\sigma_{x}+y\sigma_{y}+z\sigma_{z}+wI_{2}\right)
\in\mathcal{M}_{0}\left(  M\right)
\]
if and only if
\begin{align}
z  &  =0,\,w\geq f_{1}\left(  x^{\prime}\right)  \,\text{ }%
\label{MFmin-qubit-1}\\
\,\text{or }z  &  \neq0,w\geq f_{2}\left(  x^{\prime},z\right)  ,\,\,D\left(
x^{\prime},z,w\right)  \geq0 \label{MFmin-qubit-2}%
\end{align}
where we have defined $x^{\prime}:=\sqrt{x^{2}+y^{2}}$. \ 
\end{lemma}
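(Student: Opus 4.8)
The plan is to feed the description of $\mathrm{ext}\,\mathcal{M}_0(M)$ from Lemma~\ref{lem:extM0} into the decomposition \eqref{M0=ext+recess}, then peel off the $\mathcal{P}_2$ summand, and so reduce the membership question to a single–variable minimization whose value turns out to be $f_1$ (when $z=0$), resp.\ the distinguished root of $D$ (when $z\neq0$).

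First I would normalize, assuming $l\neq0$ and working in the Pauli basis. Since $\mathcal{P}_2$ is a cone and $\mathcal{M}_0(M)=\mathrm{conv}\,\mathrm{ext}\,\mathcal{M}_0(M)+\mathcal{P}_2$ by \eqref{M0=ext+recess}, we have $l^2(x\sigma_x+y\sigma_y+z\sigma_z+wI_2)\in\mathcal{M}_0(M)$ iff $x\sigma_x+y\sigma_y+z\sigma_z+wI_2\in\mathcal{N}+\mathcal{P}_2$, where $\mathcal{N}:=l^{-2}\,\mathrm{conv}\,\mathrm{ext}\,\mathcal{M}_0(M)$. By Lemma~\ref{lem:extM0}, putting $u=s\cos\alpha$, $v=s\sin\alpha$, the set $l^{-2}\,\mathrm{ext}\,\mathcal{M}_0(M)$ is the paraboloid cap $\{u\sigma_x+v\sigma_y+\tfrac14(u^2+v^2)I_2:u^2+v^2\le4\}$, and I would check that its convex hull $\mathcal{N}$ equals the \emph{convex} set $\{u\sigma_x+v\sigma_y+w'I_2:\tfrac14(u^2+v^2)\le w'\le1\}$ (this set is convex, being an epigraph intersected with a half--space, and contains the cap; conversely any of its points lies on a segment from a cap point to the disk $\mathrm{conv}\{u\sigma_x+v\sigma_y+I_2:u^2+v^2=4\}$, the $s=\pm2$ rim, which is part of the cap). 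Now conjugation by $e^{i\theta\sigma_z/2}$ and the matrix transposition map both preserve $\mathcal{N}$ and $\mathcal{P}_2$; using the former, membership depends only on $x':=\sqrt{x^2+y^2}$, $z$, $w$, so I may take $y=0$, $x=x'\ge0$, and using the latter to symmetrize a witness I may take the $\sigma_y$--components of the witness to vanish. Writing such a witness as $(u\sigma_x+w'I_2)+(p_x\sigma_x+z\sigma_z+p_0I_2)$ with $\tfrac14u^2\le w'\le1$, $p_0\ge\sqrt{p_x^2+z^2}$, $u+p_x=x'$, $w'+p_0=w$, and eliminating $p_x=x'-u$, $p_0=w-w'$, $w'=\tfrac14u^2$, the condition becomes
\[
\exists\,u\in[-2,2]:\qquad w\ \ge\ g(u):=\frac{u^2}{4}+\sqrt{(x'-u)^2+z^2},
\]
i.e.\ $w\ge h(x',z):=\min_{u\in[-2,2]}g(u)$.

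For $z=0$ this is the elementary minimization of $u\mapsto\tfrac14u^2+|x'-u|$ over $[-2,2]$, which is decreasing on $[-2,\min(x',2)]$ and increasing to its right, so $h(x',0)=\tfrac14x'^2$ if $x'\le2$ and $x'-1$ if $x'>2$, i.e.\ $h(x',0)=f_1(x')$; this is \eqref{MFmin-qubit-1}. For $z\neq0$ I would first note that the constraint $u\in[-2,2]$ is inactive: since $|(u-x')/\sqrt{(x'-u)^2+z^2}|<1$, every stationary point of $g$ has $|u|<2$, so $h(x',z)=g(u_\ast)$ at the unique stationary point $u_\ast\in(-2,2)$, with $g'(u_\ast)=0$. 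At $w=h(x',z)$ one then has $w-\tfrac14u_\ast^2=\sqrt{(x'-u_\ast)^2+z^2}>0$ and, from $g'(u_\ast)=0$, $x'=u_\ast+\tfrac12u_\ast(w-\tfrac14u_\ast^2)$ and $z^2=(1-\tfrac14u_\ast^2)(w-\tfrac14u_\ast^2)^2$; eliminating $u_\ast$ between these two relations yields $D(x',z,h(x',z))=0$ (equivalently, $D$ is, up to the factor that vanishes only when $z=0$, the $u$--discriminant of the quartic $(w-\tfrac14u^2)^2-(x'-u)^2-z^2$, and $w=h$ is where $g(u)=w$ acquires a double root). A routine estimate gives $h(x',z)\ge f_2(x',z)$ — via $\sqrt{(x'-u)^2+z^2}\ge\sqrt{x'^2+z^2}-|u|$ and $\min_{|u|\le2}(\tfrac14u^2-|u|)=-1$ when $x'^2+z^2\ge4$, and via a direct check that $16[(x'-u)^2+z^2]-(x'^2+z^2-u^2)^2\ge0$ for $|u|\le\sqrt{x'^2+z^2}$ when $x'^2+z^2\le4$. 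Finally, by Lemma~\ref{lem:unique} (whose proof moreover exhibits $D(x',z,f_2(x',z))<0$) together with $D(x',z,w)\to+\infty$, the number $h(x',z)$, being a root of $D(x',z,\cdot)$ in $[f_2(x',z),\infty)$, is \emph{the} such root, and for $w\ge f_2(x',z)$ one has $D(x',z,w)\ge0$ exactly when $w\ge h(x',z)$. Hence $\{w:w\ge h(x',z)\}=\{w:w\ge f_2(x',z),\,D(x',z,w)\ge0\}$, which is \eqref{MFmin-qubit-2}.

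I expect the $z\neq0$ identification of the last paragraph to be the main obstacle: the elimination of $u_\ast$ producing the specific quartic $D$ is a brute-force resultant computation, and it requires care to see that $D$ is the \emph{correct} factor of the discriminant in $u$ (which carries extra factors vanishing only at $z=0$) — it is exactly this spuriousness that makes the auxiliary cutoff $w\ge f_2$ necessary, to discard the wrong branch of the square root, so the bookkeeping tying the sign of $D$ to the feasibility region $\{w\ge h\}$, rather than the reductions in the first two paragraphs, is the delicate part. A secondary but still routine point is the estimate $h\ge f_2$ in the regime $x'^2+z^2\le4$.
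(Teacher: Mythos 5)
Your proposal is correct and follows the same path as the paper's proof: from the description of $\mathrm{ext}\,\mathcal{M}_{0}\left(  M\right)  $ in Lemma~\ref{lem:extM0} and the decomposition (\ref{M0=ext+recess}), reduce to the one-parameter minimization of $g\left(  u\right)  =\frac{u^{2}}{4}+\sqrt{\left(  x^{\prime}-u\right)  ^{2}+z^{2}}$, solve it explicitly when $z=0$, and for $z\neq0$ identify the optimal value as the branch of $D\left(  x^{\prime},z,\cdot\right)  =0$ picked out by Lemma~\ref{lem:unique} and the bound $\underline{w}\geq f_{2}$.

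Two small points are worth flagging. First, you take more care at the reduction step: you compute $\mathcal{N}=\mathrm{conv\,ext}\,\mathcal{M}_{0}\left(  M\right)  $ as the filled paraboloid cap, symmetrize the witness by rotation and transposition, and then replace $w^{\prime}$ by $\frac{1}{4}u^{2}$ to show that membership is always decided by comparison against a \emph{single} extreme point; the paper jumps directly from $\mathcal{M}_{0}=\mathrm{conv\,ext}+\mathcal{P}$ to the one-parameter condition ``$\exists s,\alpha$'' without explaining why a genuine convex combination of extreme points is never needed --- your choice $w^{\prime}=\frac{1}{4}u^{2}$ (equivalently, noting that if $L\geq N$ with $N\in\mathcal{N}$ then $L\geq E\left(  u,v\right)  $ for the cap point under $N$) is exactly the missing observation, and it is correct. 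Second, you prove $h\geq f_{2}$ by a direct estimate (reverse triangle inequality for $x^{\prime2}+z^{2}\geq4$, a polynomial check otherwise), whereas the paper's relaxation $\min_{\left(  s,t\right)  }\frac{s^{2}+t^{2}}{4}+\sqrt{\left(  x^{\prime}-s\right)  ^{2}+\left(  z-t\right)  ^{2}}=f_{2}$ is cleaner and avoids the case split; both suffice. Finally, your claim that the $\left[  -2,2\right]  $ constraint is inactive should note that $g$ is convex and coercive (so its global minimum is a stationary point, which your $\left\vert g^{\prime}\right\vert $ estimate then places in $\left(  -2,2\right)  $); as written you only show stationary points lie in $\left(  -2,2\right)  $, not that the constrained minimum is stationary.
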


\begin{proof}
By (\ref{M0=ext+recess}) and (\ref{extM0-2}),
\begin{align}
&  l^{2}\left(  x\sigma_{x}+y\sigma_{y}+z\sigma_{z}+wI_{2}\right)
+M\in\mathcal{M}\left(  M\right) \nonumber\\
&  \Leftrightarrow\exists s\in\left[  -1,1\right]  \mathbb{\,},\,\,\exists
\alpha\in\mathbb{R},\,w\geq\frac{s^{2}}{4}+\sqrt{\left(  x-s\cos\alpha\right)
^{2}+\left(  y-s\sin\alpha\right)  ^{2}+z^{2}},\\
&  \Leftrightarrow\exists s\in\left[  -1,1\right]  \mathbb{\,},\,\,w\geq
w_{2}\left(  s\right)  :=\frac{s^{2}}{4}+\sqrt{\left(  x^{\prime}-s\right)
^{2}+z^{2}},\\
&  \Leftrightarrow\exists s\in\mathbb{R\,},\,\,w\geq w_{2}\left(  s\right)
:=\frac{s^{2}}{4}+\sqrt{\left(  x^{\prime}-s\right)  ^{2}+z^{2}}, \label{m-1}%
\end{align}
where the second "$\Leftrightarrow$" is due to the fact that
\begin{align*}
s\sigma_{x}+\frac{s^{2}}{4}I_{2}  &  \geq2\sigma_{x}+\frac{2^{2}}{4}%
I_{2}\,,\,\text{if }s\geq2,\\
s\sigma_{x}+\frac{s^{2}}{4}I_{2}  &  \geq-2\sigma_{x}+\frac{\left(  -2\right)
^{2}}{4}I_{2}\,,\,\text{if }s\leq-2.
\end{align*}
For each given $\left(  x^{\prime},z\right)  $, $w_{2}\left(  s\right)  $ goes
to $+\infty$ as $s\rightarrow\infty$. So we are interested in the minimum
$\underline{w_{2}}$ of $w_{2}\left(  s\right)  $ over $\mathbb{R}$. \ First,
suppose $z=0$. Then,
\begin{align}
\underline{w_{2}}  &  =\min_{s\in\mathbb{R}}w_{2}\left(  s\right)  =\min
_{s\in\mathbb{R}}\frac{s^{2}}{4}+\left\vert x^{\prime}-s\right\vert
\nonumber\\
&  =f_{1}\left(  x^{\prime}\right)  , \label{z=0-m}%
\end{align}
verifying (\ref{MFmin-qubit-1}).

Next, suppose $z\neq0$. Since $w_{2}\left(  s\right)  $ is differentiable,
bounded below, and defined on the open set $\mathbb{R\,}$, its minimum
$\underline{w_{2}}$ should satisfy $\mathrm{d}\underline{w_{2}}/\mathrm{d}%
s=0$. The definition of $w_{2}$ is equivalent to
\begin{align}
f_{\mathcal{M}}\left(  s\right)   &  :=\left(  w_{2}-\frac{s^{2}}{4}\right)
^{2}-\left(  x^{\prime}-s\right)  ^{2}-z^{2}\nonumber\\
&  =\frac{1}{16}s^{4}-\left(  \frac{1}{2}w_{2}+1\right)  s^{2}+2x^{\prime
}s+\left(  w_{2}^{2}-x^{\prime2}-z^{2}\right)  \allowbreak=0, \label{m-2}%
\end{align}
and%
\begin{equation}
\frac{s^{2}}{4}<w_{2}. \label{m-4}%
\end{equation}
Note here $w_{2}=\frac{s^{2}}{4}$ cannot happen because of (\ref{m-1}) and
$z^{2}>0$. \ Differentiating both ends of (\ref{m-2}) by $s$,
\[
f_{\mathcal{M}}^{\prime}\left(  s\right)  +2\frac{\mathrm{d}w_{2}}%
{\mathrm{d}s}\left(  w_{2}-\frac{s^{2}}{4}\right)  =0,
\]
where
\[
f_{\mathcal{M}}^{\prime}\left(  s\right)  =\frac{s^{3}}{4}-\left(
w_{2}+2\right)  s+2x^{\prime}.
\]
is the derivative of $f_{\mathcal{M}}\left(  s\right)  $ by $s$ considering
$w_{2}$ as a constant. Thus, because of the restriction (\ref{m-4}),
$\mathrm{d}w_{2}/\mathrm{d}s=0$ is equivalent to \
\begin{equation}
f_{\mathcal{M}}^{\prime}\left(  s\right)  =0. \label{m-3}%
\end{equation}
So if $w_{2}=\underline{w_{2}}$, (\ref{m-2}) and (\ref{m-3}), viewed as
algebraic equations for $s$, has a real common root $s$ satisfying
(\ref{m-4}). Therefore, the discriminant of $f_{\mathcal{M}}\left(  s\right)
$ has to be zero. After some computation, the discriminant coincide with
$D\left(  x^{\prime},z,\underline{w_{2}}\right)  $ up to constant factor.
Therefore, we should have
\[
D\left(  x^{\prime},z,\underline{w_{2}}\right)  =0.
\]
$\left(  x^{\prime},z,\underline{w_{2}}\right)  $ should also have to satisfy
:
\begin{align*}
\underline{w_{2}}  &  \geq\min_{\left(  s,t\right)  \in\mathbb{R}^{2}}%
\frac{s^{2}+t^{2}}{4}+\sqrt{\left(  x^{\prime}-s\right)  ^{2}+\left(
z-t\right)  ^{2}}\\
&  =\min_{\left(  s,t\right)  \in\mathbb{R}^{2}}\frac{s^{2}+t^{2}}{4}%
+\sqrt{\left(  \sqrt{x^{\prime2}+z^{2}}-s\right)  ^{2}+t^{2}}\\
&  =\min_{s\in\mathbb{R}}\frac{s^{2}}{4}+\left\vert \sqrt{x^{\prime2}+z^{2}%
}-s\right\vert \\
&  =f_{2}\left(  x^{\prime},z\right)  .
\end{align*}
Therefore, $\underline{w_{2}}$ satisfies
\[
D\left(  x^{\prime},z,\underline{w_{2}}\right)  =0,\,\underline{w_{2}}\geq
f_{2}\left(  x^{\prime},z\right)  .
\]
By Lemma\thinspace\ref{lem:unique}, the above condition specifies
$\underline{w_{2}}$ uniquely.

Since $\lim_{w\rightarrow\infty}D\left(  x^{\prime},z,w\right)  =\infty$,
\ $w\geq\underline{w_{2}}$ is equivalent to $D\left(  x^{\prime},z,w\right)
\geq0$ and $\,w\geq f_{2}\left(  x^{\prime},z\right)  $. Thus we have
(\ref{MFmin-qubit-2}).
\end{proof}

\begin{lemma}
\label{lem:M0-2}If the dimension of the Hilbert space is $2$,
\[
\mathcal{M}_{0}\left(  M\right)  =\left\{  \sqrt{-1}\left[  B,M\right]
+B^{2};B\in\mathcal{L}_{sa,2}\right\}  .
\]

\end{lemma}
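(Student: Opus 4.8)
Here $\mathcal{R}(M):=\{\sqrt{-1}[B,M]+B^{2}\,;\,B\in\mathcal{L}_{sa,2}\}$ denotes the right‑hand side of the asserted identity; by the very definition of $\mathcal{M}_{0}(M)$ one has $\mathcal{M}_{0}(M)=\mathcal{R}(M)+\mathcal{P}_{2}$, so the lemma says that $\mathcal{R}(M)$ is already closed upward in the positive‑semidefinite order. The plan is to reduce to $M=\sigma_{z}$, compute $\mathcal{R}(\sigma_{z})$ explicitly, and match the result against the description of $\mathcal{M}_{0}(\sigma_{z})$ furnished by Lemma~\ref{lem:M0}. For the reduction, note that both $\mathcal{R}(\cdot)$ and $\mathcal{M}_{0}(\cdot)$ are literally unchanged under $M\mapsto M+cI_{2}$ (the commutator $[B,M]$ is unaffected), transform by the same conjugation under $M\mapsto UMU^{\dagger}$ (conjugating $B$ as well), and scale by $\lambda^{2}$ under $M\mapsto\lambda M$ with $\lambda\neq0$ (replace $B$ by $\lambda B$ and use $\lambda^{2}\mathcal{P}_{2}=\mathcal{P}_{2}$); hence $\mathcal{M}_{0}(M)=\mathcal{R}(M)$ for general self‑adjoint $M$ once it holds for $M=\sigma_{z}$, the one exception being the scalar case, where $[B,M]=0$ and $\mathcal{R}(M)=\{B^{2}\,;\,B\in\mathcal{L}_{sa,2}\}=\mathcal{P}_{2}=\mathcal{M}_{0}(M)$ directly.

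Writing $B=x\sigma_{x}+y\sigma_{y}+z\sigma_{z}+wI_{2}$, a direct computation gives
\[
\sqrt{-1}[B,\sigma_{z}]+B^{2}=(2wx-2y)\sigma_{x}+(2wy+2x)\sigma_{y}+2wz\,\sigma_{z}+(x^{2}+y^{2}+z^{2}+w^{2})I_{2},
\]
so, representing a self‑adjoint operator as $p_{x}\sigma_{x}+p_{y}\sigma_{y}+p_{z}\sigma_{z}+p_{0}I_{2}$, the set $\mathcal{R}(\sigma_{z})$ is parametrized by $p_{x}^{2}+p_{y}^{2}=4(w^{2}+1)(x^{2}+y^{2})$, $p_{z}=2wz$, $p_{0}=x^{2}+y^{2}+z^{2}+w^{2}$. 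Conjugating by the unitaries commuting with $\sigma_{z}$ rotates the $(p_{x},p_{y})$‑plane, so $\mathcal{R}(\sigma_{z})$ is determined by $\rho:=\sqrt{p_{x}^{2}+p_{y}^{2}}\ge0$, $p_{z}$, and $p_{0}$. For each fixed $(\rho,p_{z})$, eliminating the leftover freedom shows that the attainable $p_{0}$ are exactly the values of $w\mapsto\frac{\rho^{2}}{4(w^{2}+1)}+\frac{p_{z}^{2}}{4w^{2}}+w^{2}$ over $w\neq0$ (together with the limiting value at $w=0$ when $p_{z}=0$); since this function tends to $+\infty$ at $w\to0$ and $|w|\to\infty$, it attains its minimum and every larger value, so the attainable set is a closed half‑line $[\psi(\rho,p_{z}),\infty)$. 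Thus $\mathcal{R}(\sigma_{z})=\{\,p_{0}\ge\psi(\rho,p_{z})\,\}$, and in particular $\mathcal{R}(\sigma_{z})$ is closed.

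It remains to identify $\psi$ with the lower envelope behind $\mathcal{M}_{0}(\sigma_{z})$. By Lemma~\ref{lem:extM0}, Lemma~\ref{lem:ext+recess} (with $0^{+}\mathcal{M}_{0}(M)=\mathcal{P}_{2}$), and the reformulation used to prove Lemma~\ref{lem:M0}, one has $\mathcal{M}_{0}(\sigma_{z})=\{\,p_{0}\ge\underline{w_{2}}(\rho,p_{z})\,\}$ with $\underline{w_{2}}(\rho,p_{z}):=\min_{s}\bigl(\tfrac{s^{2}}{4}+\sqrt{(\rho-s)^{2}+p_{z}^{2}}\bigr)$. The crux is the identity $\psi=\underline{w_{2}}$: under the change of variable $s=\rho/(w^{2}+1)$ the stationarity condition for $\underline{w_{2}}$ transforms into the stationarity condition for the function of the previous paragraph, and the two critical values both equal $\frac{\rho^{2}}{4(w^{2}+1)^{2}}+2w^{2}$; a short check of the degenerate regimes ($\rho=0$, $p_{z}=0$, and the endpoints of the parameter ranges) completes the identification. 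Then $\mathcal{R}(\sigma_{z})=\{\,p_{0}\ge\underline{w_{2}}(\rho,p_{z})\,\}=\mathcal{M}_{0}(\sigma_{z})$, and undoing the reductions proves the lemma. I expect this last matching to be the only genuine work — verifying that the one‑parameter minimization coming from the $(x,y,z,w)$‑parametrization of $B$ traces out precisely the lower boundary of $\mathcal{M}_{0}(\sigma_{z})$ already computed via its extreme points; the substitution $s=\rho/(w^{2}+1)$ is exactly the device that makes this transparent.
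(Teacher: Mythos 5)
Your proposal is correct and follows the same outline as the paper's proof up through the reduction to a one-variable minimization (the paper parametrizes $B=lU(\beta\sigma_y+\gamma\sigma_z+\delta I)U^\dagger$ and lands on $w(t)=\frac{x'^2}{4(1+t)}+\frac{z^2}{4t}+t$ with $t=\delta^2$, which is precisely your $g(w)$ with $t=w^2$). Where you diverge is in the final matching step. The paper shows that the minimizer of $w(t)$ satisfies the quartic discriminant condition $D(x',z,\underline w)=0$ together with $\underline w\ge f_2(x',z)$, and then invokes Lemma~\ref{lem:unique} to deduce that this minimum must coincide with the $\underline{w_2}$ coming from Lemma~\ref{lem:M0}. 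You instead bypass the discriminant entirely: you extract from the proof of Lemma~\ref{lem:M0} the clean description $\mathcal{M}_0(\sigma_z)=\{p_0\ge\underline{w_2}(\rho,p_z)\}$ with $\underline{w_2}(\rho,p_z)=\min_s\bigl(\tfrac{s^2}{4}+\sqrt{(\rho-s)^2+p_z^2}\bigr)$, and then show by the substitution $s=\rho/(w^2+1)$ that the stationarity conditions for $\underline{w_2}$ and for your $g$ coincide, with common critical value $\frac{\rho^2}{4(w^2+1)^2}+2w^2$ (I checked this; it is right, and $s=\rho/(w^2+1)$ indeed sweeps out the interval $(0,\rho)$ in which the relevant critical point of $w_2$ must lie). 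This is slicker in that it avoids the quartic $D$ and Lemma~\ref{lem:unique} altogether; what you pay for is that the edge cases $p_z=0$, $\rho=0$, and the range of the substitution do need to be checked by hand, which you acknowledge but do not carry out. Your explicit reduction to $M=\sigma_z$ via translation, unitary conjugation, and scaling (with the scalar $M$ handled separately) is also a clean preliminary that the paper does more informally via the $l\sigma_z+mI_2$ normalization and tracking the $l^2$ factor.
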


\begin{proof}
Without loss of generality, let $M=l\sigma_{z}+mI_{2}$. Let us parameterize
$B\in\mathcal{L}_{sa,2}$ as follows,
\[
B=lU\left(  \beta\sigma_{y}+\gamma\sigma_{z}+\delta I\right)  U^{\dagger},
\]
where $U$ is a unitary commuting with $M$. Then,%

\begin{align*}
&  \frac{1}{l^{2}}U^{\dagger}\left\{  \sqrt{-1}\left[  B,L_{0}^{-1}\right]
+\left(  B\right)  ^{2}\right\}  U\\
&  =x\sigma_{x}+y\sigma_{y}+z\sigma_{z}+wI,
\end{align*}
where
\begin{align*}
x  &  =-2\beta,y=2\beta\delta,\,\\
z  &  =2\gamma\delta,\,w=\beta^{2}+\gamma^{2}+\delta^{2}.
\end{align*}
Therefore, erasing $\beta$, $\gamma$ and replacing $t:=\delta^{2}$,
\begin{equation}
w\left(  t\right)  :=w=\frac{x^{\prime2}}{4\left(  1+t\right)  }+\frac{z^{2}%
}{4t}+t, \label{w}%
\end{equation}
where $x^{\prime}=\sqrt{x^{2}+y^{2}}$. Observe $t=\delta^{2}$ can take any
non-negative value. Observe also $\lim_{t\rightarrow\infty}w\left(  t\right)
=\infty$ , for any $x^{\prime}$, $z$. Hence, $w$ can take any value larger
than or equal to the minimum $\underline{w}$ of $w\left(  t\right)  $ over
$t\in\lbrack0,\infty)$. \ Below, we determine relation satisfied by
$\underline{w}$, $x^{\prime}$, and $z$, and shows that $w\geq\underline{w}$ is
equivalent to (\ref{MFmin-qubit-1}) and (\ref{MFmin-qubit-2}). Then, since
$x^{\prime}=\sqrt{x^{2}+y^{2}}$ is invariant by the conjugation of unitary $U$
commuting with $M$, Lemma\thinspace\ref{lem:M0} implies our assertion.

If $z=0$, (\ref{w}) is very simple and easy to minimize.%
\begin{equation}
\underline{w}=\min_{t\in\lbrack0,\infty)}w\left(  t\right)  =\min_{t\in
\lbrack0,\infty)}\frac{x^{\prime2}}{4\left(  1+t\right)  }+t=f_{1}\left(
x^{\prime}\right)  \,. \label{z=0}%
\end{equation}
Hence, in this case, $w\geq\underline{w}$ is equivalent to
(\ref{MFmin-qubit-1}).

Next, suppose $z\neq0$. Then $t$ cannot be $0$, $t\in\left(  0,\infty\right)
$. Since $w\left(  t\right)  $ is differentiable, defined on the open
interval, and bounded below, it has minimum, and at the minimum, the
derivative of $w\left(  t\right)  $ should vanish. Rearranging the terms of
(\ref{w}), $\left(  x^{\prime},z,\underline{w}\right)  $ satisfies
\begin{equation}
f_{\mathcal{N}}\left(  t\right)  :=4t^{3}+4\left(  1-\underline{w}\right)
t^{2}+\left(  x^{\prime2}+z^{2}-4\underline{w}\right)  t+z^{2}=0. \label{n-1}%
\end{equation}
Differentiating both sides by $t$,
\[
f_{\mathcal{N}}^{\prime}\left(  t\right)  -4\left(  t^{2}+t\right)
\frac{\mathrm{d}\,\underline{w}}{\mathrm{d}\,t}=0,
\]
where%
\[
f_{\mathcal{N}}^{\prime}\left(  t\right)  =12t^{2}+8\left(  1-\underline
{w}\right)  t+x^{\prime2}+z^{2}-4\underline{w}=0
\]
is the derivative $f_{\mathcal{N}}\left(  t\right)  $ by $t$ viewing
$\underline{w}$ as a constant. Since $t>0$, $\mathrm{d}\underline
{w}/\mathrm{d}t=0$ is equivalent to
\begin{equation}
f_{\mathcal{N}}^{\prime}\left(  t\right)  =0. \label{n-2}%
\end{equation}
(\ref{n-1}) has a multiple root if and only if its discriminant is zero. After
some tedious calculations (in fact done by computer algebra system), this is
equivalent to%
\begin{equation}
D\left(  x^{\prime},z,\underline{w}\right)  =0, \label{discriminant}%
\end{equation}
\ In addition to this, $\left(  x^{\prime},z,\underline{w}\right)  $ has to
obey other constrains. Since
\[
w\left(  t\right)  \geq\frac{x^{\prime2}+z^{2}}{4\left(  1+t\right)  }+t,
\]
we should have
\begin{equation}
\underline{w}\geq f_{2}\left(  x^{\prime},z\right)  . \label{w>-2}%
\end{equation}
By Lemma\thinspace\ref{lem:unique}, (\ref{discriminant}) and (\ref{w>-2})
uniquely determines $\underline{w}$. Since $\lim_{w\rightarrow\infty}D\left(
x^{\prime},z,w\right)  =\infty$, Therefore, $w\geq\underline{w}$ is equivalent
to (\ref{MFmin-qubit-2}).

After all, $w\geq\underline{w}$ is equivalent to (\ref{MFmin-qubit-1}) and
(\ref{MFmin-qubit-2}), and we have the assertion. \ 
\end{proof}

\end{document}